    \pgfplotsset{compat=1.8}
\definecolor{myred}{RGB}{200, 36, 35}
\definecolor{mygreen}{RGB}{27, 121, 57}
\definecolor{myblue}{RGB}{47, 127, 193}
\definecolor{mypink}{RGB}{116, 40, 129}
\definecolor{mypurple}{RGB}{181, 83, 132}
\newtheorem{theorem}{Theorem}
\newtheorem*{theorem*}{Theorem}
\newtheorem*{definition*}{Definition}
\newtheorem{atheorem}{Theorem}[section]
\newtheorem{adefinition}[atheorem]{Definition}
\newtheorem{alemma}[atheorem]{Lemma}
\renewcommand{\Vec}[1]{\mathbf{#1}}
\newcommand{\PP}[1][]{
  \ifthenelse{\isempty{#1}}
    {\mathbbm{P}}
    {\mathbbm{P}\left[#1\right]}
}
\newcommand{\EE}[1][]{
  \ifthenelse{\isempty{#1}}
    {\mathbbm{E}}
    {\mathbbm{E}\left[#1\right]}
}
\begin{document}
\setlength{\columnsep}{25pt}
\title{A Pathway to Practical Quantum Advantage in Solving Navier-Stokes Equations}

\author{Xi-Ning Zhuang}
\affiliation{Laboratory of Quantum Information, University of Science and Technology of China, Hefei, 230026, China}
\author{Zhao-Yun Chen}
\email{chenzhaoyun@iai.ustc.edu.cn}
\affiliation{Institute of Artificial Intelligence, Hefei Comprehensive National Science Center}
\author{Ming-Yang Tan}
\affiliation{Laboratory of Quantum Information, University of Science and Technology of China, Hefei, 230026, China}
\author{Jiaxuan Zhang}
\affiliation{Institute of Artificial Intelligence, Hefei Comprehensive National Science Center}
\author{Chuang-Chao Ye}
\affiliation{Origin Quantum Computing,  Hefei,  China}
\author{Tian-Hao Wei}
\affiliation{Laboratory of Quantum Information, University of Science and Technology of China, Hefei, 230026, China}
\author{Teng-Yang Ma}
\affiliation{Origin Quantum Computing,  Hefei,  China}
\author{Cheng Xue}
\affiliation{Institute of Artificial Intelligence, Hefei Comprehensive National Science Center}
\author{Huan-Yu~Liu}
\affiliation{Institute of Artificial Intelligence, Hefei Comprehensive National Science Center}
\author{Qing-Song Li}
\affiliation{Laboratory of Quantum Information, University of Science and Technology of China, Hefei, 230026, China}
\author{Tai-Ping Sun}
\affiliation{Laboratory of Quantum Information, University of Science and Technology of China, Hefei, 230026, China}
\author{Xiao-Fan Xu}
\affiliation{Laboratory of Quantum Information, University of Science and Technology of China, Hefei, 230026, China}
\author{Yun-Jie Wang}
\affiliation{Institute of Advanced Technology, University of Science and Technology of China, Hefei, Anhui, 230031, China}
\author{Yu-Chun Wu}
\email{wuyuchun@ustc.edu.cn}
\affiliation{Laboratory of Quantum Information, University of Science and Technology of China, Hefei, 230026, China}
\affiliation{Anhui Province Key Laboratory of Quantum Network, University of Science and Technology of China, Hefei, 230026, China}
\affiliation{Institute of Artificial Intelligence, Hefei Comprehensive National Science Center}
\author{Guo-Ping Guo}
\email{gpguo@ustc.edu.cn}
\affiliation{Laboratory of Quantum Information, University of Science and Technology of China, Hefei, 230026, China}
\affiliation{Anhui Province Key Laboratory of Quantum Network, University of Science and Technology of China, Hefei, 230026, China}
\affiliation{Institute of Artificial Intelligence, Hefei Comprehensive National Science Center}
\affiliation{Origin Quantum Computing,  Hefei,  China}

\begin{abstract}
The advent of fault-tolerant quantum computing (FTQC) promises to tackle classically intractable problems. A key milestone is solving the Navier-Stokes equations (NSE), which has remained formidable for quantum algorithms due to their high input-output overhead and nonlinearity. Here, we establish a full-stack framework that charts a practical pathway to a quantum advantage for large-scale NSE simulation. Our approach integrates a spectral-based input/output algorithm, an explicit and synthesized quantum circuit, and a refined error-correction protocol. The algorithm achieves an end-to-end exponential speedup in asymptotic complexity, meeting the lower bound for general quantum linear system solvers. Through symmetry-based circuit synthesis and optimized error correction, we reduce the required logical and physical resources by two orders of magnitude. Our concrete resource analysis demonstrates that solving NSE on a $2^{80}$-grid is feasible with 8.71 million physical qubits (at an error rate of $5\times10^{-4}$) in 42.6 days—outperforming a state-of-the-art supercomputer, which would require over a century. This work bridges the gap between theoretical quantum speedup and the practical deployment of high-performance scientific computing.

\end{abstract}

\maketitle

\section{Introduction}

Quantum computing is rapidly emerging as a revolutionary computational paradigm with transformative potential for solving classically intractable problems~\cite{shor1999polynomial, arute2019quantum, kim2023evidence}. The field is now witnessing early evidence of a transition into the fault-tolerant quantum computing (FTQC) regime, where large-scale, error-corrected qubit arrays enable the reliable execution of deep quantum circuits, fueling ambitions for a clear pathway to practical quantum advantage~\cite{shor1996fault,preskill1998fault,gottesman2022opportunities, bluvstein2024logical, google2025quantum, katabarwa2024early, yoshioka2024hunting}. 
At this crossroad, one must ask whether and where an FTQC quantum computer can outperform a classical one in solving more practical problems beyond quantum simulation and factorization~\cite{lee2021even, gidney2021factor, yoshioka2024hunting, gidney2025factor}.
The fundamental difficulty often originates from the non-linear behaviors in the classical world and from a limited information conversion bandwidth.

Nevertheless, the pursuit of practical quantum advantage faces significant challenges since potential speedups are often undermined by inefficient input/output (I/O) protocols, suboptimal circuit synthesis, costly QEC, or unfavorable algorithmic prefactors and problem hyperparameters~\cite{dalzell2023end}. 
Bridging the gap between asymptotic speedups and a tangible quantum-classical crossover in the FTQC era requires a concerted full-stack effort encompassing:
Firstly, in the algorithm design phase to reduce complexity and quantum-classical data conversion overhead~\cite{gottesman2022opportunities, katabarwa2024early}, the I/O bottleneck is particularly critical: theoretical lower bounds show an exponential scaling with problem size, revealing an unavoidable spacetime trade-off for encoding large-scale classical data into quantum states~\cite{zhang2022quantum,clader2023quantum}. Conventional approaches, such as quantum random access memory (qRAM), mitigate time overhead at the cost of an exponential increase in qubit number~\cite{giovannetti2008quantum,di2020fault}. Furthermore, Holevo's bound fundamentally limits the amount of classical information extractable from a quantum state, necessitating a large number of repetitions of the underlying algorithm~\cite{holevo1973bounds}. This scalability barrier is exacerbated in I/O-intensive applications (e.g., iterative algorithms, quantum machine learning), where the no-cloning theorem necessitates costly state re-preparation and tomography between computational stages~\cite{dalzell2023end,kerenidis2020quantum}.
Secondly, in the circuit deployment phase to optimize logical and physical resources, FTQC imposes stringent constraints: non-Clifford gates—essential for universality—dominate resource overheads, while the search space for quantum architecture grows exponentially. Implementing high-fidelity non-Clifford gates requires costly magic state distillation and non-transversal implementations~\cite{ge2024quantum, van2023optimising, fowler2012surface,gottesman2022opportunities}. 
Finally, in the resource analysis phase to integrate these components into a realistic estimate, one must account for problem-specific prefactors and hyperparameters, often requiring extensive numerical experiments and multi-objective mixed-integer optimization~\cite{lee2021even, dalzell2023end, yoshioka2024hunting}.

A concrete quantum advantage in computational fluid dynamics (CFD) represents one of the most exciting yet challenging milestones for practical quantum advantage in the FTQC era. On one hand, solving the fundamental Navier-Stokes equations (NSE) is a central challenge in mathematics and physics, with critical applications across aerospace, automotive, energy, and environmental engineering~\cite{blazek2015computational}. On the other hand, NSE embodies all the aforementioned challenges: its nonlinear dynamics and exceptionally heavy I/O requirements make it significantly more demanding than FTQC benchmarks established in condensed matter physics, quantum chemistry, or cryptography~\cite{lee2021even, gidney2021factor, yoshioka2024hunting, gidney2025factor}.

\begin{figure}
\centering
\begin{tikzpicture}
\node[] at (0, 0) {\includegraphics[width=0.975\textwidth]{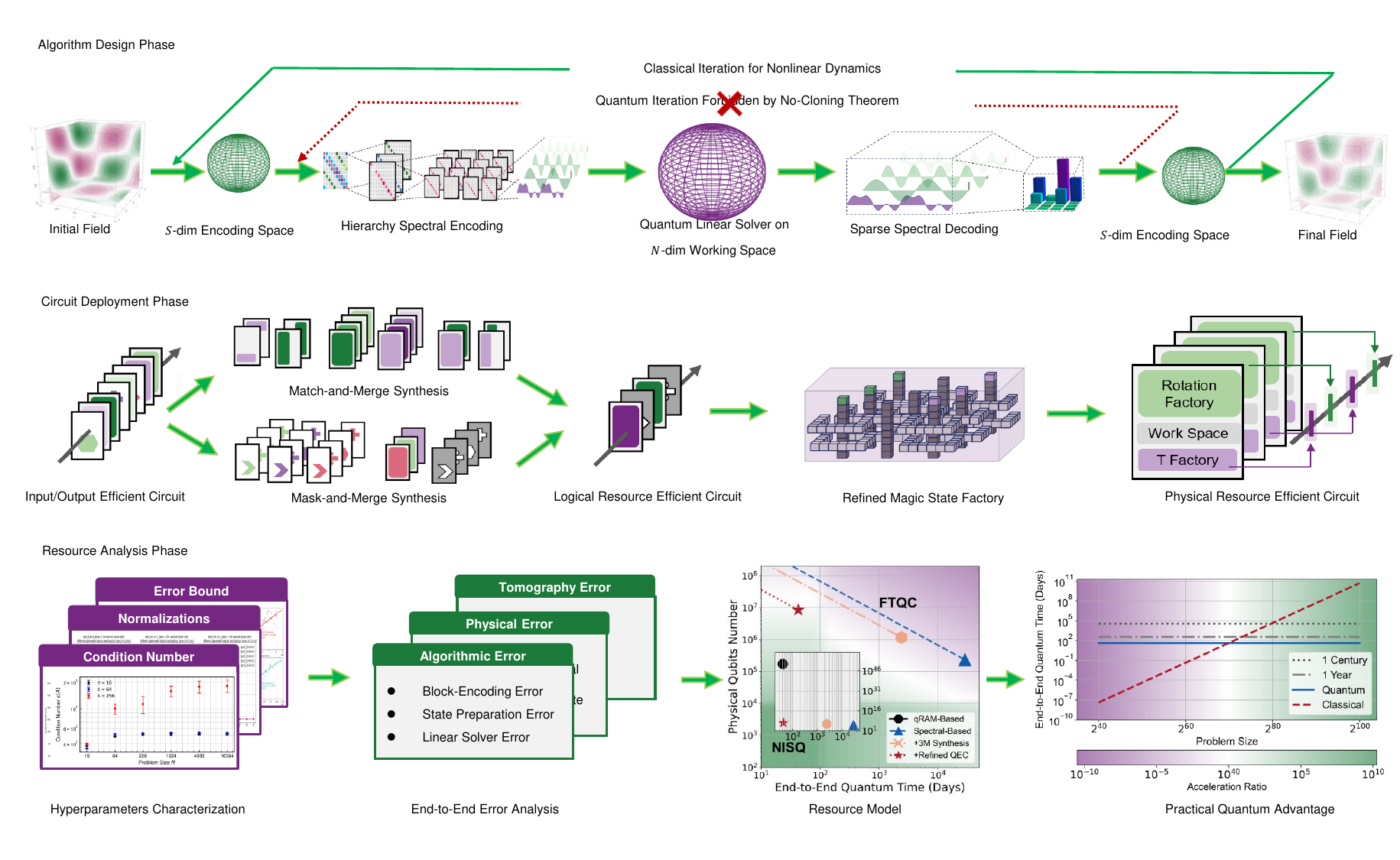}};
\node[] at (-8.5, 4.85) {\textbf{a}};
\node[] at (-8.5, 1.64) {\textbf{b}};
\node[] at (-8.5, -1.52) {\textbf{c}};
\end{tikzpicture}
\caption{Hierarchical Spectral Quantum Navier-Stokes Solver (QNSS) Framework. (\textbf{a}) \textbf{Algorithm Design Phase:} An initial field is translated into an $\mathcal{S}$-dimensional encoding space to mitigate the input bottleneck. Our hierarchical spectral encoding method expands this Hilbert space into an $N$-dimensional working space suitable for a quantum linear solver. A sparse spectral decoding method translates the dense $N$-dim state back to an $\mathcal{S}$-dim state, respecting Holevo's bound. Due to nonlinearity, copies of the state are required; while economic quantum iteration is forbidden by the no-cloning theorem, classical iteration is permitted. The final field is read out from the encoding space. (\textbf{b}) \textbf{Circuit Deployment Phase:} We apply match-and-merge and mask-and-merge synthesis (3M Synthesis) methods to reduce logical resources by exploiting global and artificial symmetries. A gate-teleportation-based QEC protocol, combined with a refined magic state factory, further reduces physical resources. (\textbf{c}) \textbf{Resource Analysis Phase:} We perform extensive numerical experiments to validate hyperparameters, and based on  an end-to-end error analysis (algorithmic, physical, and tomography errors), we derive a resource model for practical quantum advantage: QNSS outperforms state-of-the-art supercomputers on $2^{70}$-grids, and can solve the $2^{80}$-grids case within 42.6 days, while a supercomputer will utilize more than a century.
Inset: Spectral-Based I/O enables $10^{42}\times$ physical qubits number reduction due to an exponential reduction in both the number of logical qubits and rotation counts.}
\label{fig:schematic}
\end{figure}

In this work, we establish a concrete pathway to practical quantum advantage for CFD in the FTQC era by addressing these critical bottlenecks. We present an end-to-end quantum algorithm that efficiently solves NSE on large-scale grids—a task that is classically intractable due to prohibitive spacetime complexity. The NSE is transformed into a series of iterative linear systems via the implicit Euler method for temporal discretization and the finite volume method (FVM) for spatial discretization~\cite{blazek2015computational}, and then is solved exponentially faster with a quantum linear solver, provided the I/O bottleneck is overcome.
In fact, we prove that our algorithm's end-to-end complexity reaches the theoretical lower bound for a general iterative quantum linear system model.

In the algorithm design phase (Fig.~\ref{fig:schematic}a), we circumvent the classical-quantum conversion bandwidth limitation by designing a qRAM-free I/O protocol that leverages prior knowledge of the problem's structure—including the variant's spectral structure and the matrix's hierarchical block structure induced by numerical schemes. The core idea is to expand the Hilbert space using a quantum circuit that encodes prior spectral and structural information, rather than relying on a costly qRAM architecture. Consequently, the information that must be injected into or extracted from a quantum state scales only with the spectral sparsity $\mathcal{S}$, rather than the full grid size $N$. This allows us to solve the NSE in $\mathcal{O}(\mathcal{S}\log N)$ time using only $\mathcal{O}(\mathcal{S},\mathrm{Poly}\log N)$ logical qubits, a significant improvement over the $\mathcal{O}(N^2)$ qubits required in previous approaches for $N$-dimensional linear systems~\cite{dalzell2023end}.

In the circuit deployment phase (Fig.~\ref{fig:schematic}b), we develop a match-mask-and-merge circuit synthesis strategy to reduce logical resources. The symmetry inherent in our I/O method's subcircuits enables us to mask, match, and merge components, significantly reducing gate counts and circuit depth. We further introduce a hybrid QEC method with a refined magic state factory to reduce physical resource overhead.

In the resource analysis phase (Fig.~\ref{fig:schematic}c), we numerically validate our method, characterizing hyperparameters and modeling errors to demonstrate scalability. Our full-stack software framework automates these optimizations, enabling rigorous resource estimation. For a $2^{40} \times 2^{40}$ grid (corresponding to a $\mathbf{2^{82}}$-dimensional linear system), our approach solves the NSE using \textbf{8.71 million} physical qubits (at a physical error rate of $5\times10^{-4}$) within \textbf{42.6 days}. This promises a $\mathbf{1100\times}$ speedup compared to a state-of-the-art supercomputer, which would require an estimated \textbf{130 years}.

Our work provides an optimistic and concrete step toward practical quantum advantage in the FTQC era, significantly bringing forward the quantum-classical crossover for CFD. The novel techniques developed here—including block encoding, state tomography, circuit synthesis, and error correction—are of independent interest. Furthermore, the resource bottlenecks identified in this study highlight key areas requiring future breakthroughs.

\section{Results}

\subsection{End-to-End Quantum Navier-Stokes Solver}
We study the $D$-dimensional compressible NSE, which comprise the conservation laws for mass, momentum, and energy. 
As a representative, the momentum balance can be written as
\begin{equation}\nonumber
    \frac{\partial}{\partial t} (\rho \Vec{v}) +\nabla\cdot(\rho\Vec{v}\otimes\Vec{v}) = \nabla\cdot\boldsymbol{\sigma},
\end{equation}
where $\rho$ is the density, $\Vec{v}=(u, v, ...)$ is the $D$-dimensional velocity, and $\boldsymbol{\sigma}$ is the viscosity tensor.
Due to the underlying nonlinear dynamics, the NSE is often transformed into an iterative $(D+2)N$-dimensional linear system, by FVM spatial discretization on $N$ grids and implicit Euler  temporal discretization~\cite{blazek2015computational}.
While Harrow-Hassidim-Lloyd-type quantum linear system solvers (HHL-QLSS) promise an exponential quantum speed-up on the linear system's size~\cite{harrow2009quantum, dalzell2024shortcut}, the application to an iterative quantum linear system solver (QLSS) is heavily limited for two primary reasons:
Firstly, the encoding and read-out of classical information from real-world scenarios are both proven hard as a direct result of the state-preparation complexity lower bound and the Holevo's bound, respectively~\cite{holevo1973bounds}.
Secondly, and even more seriously, as an economic quantum-only iteration is forbidden by the No-Cloning Theorem~\cite{wootters1982single}, a classical iteration for nonlinear dynamics is necessary, wherein the costly oracles for the large-scale matrix and vector are repeated an enormous times in each single iteration, as illustrated in Fig.~\ref{fig:schematic} (a). 

We propose a general iterative QLSS model to quantitatively study these limitations and find that the efficient spectral sparsity plays a key role in its complexity lower bound.
Assume an iterative linear system $A^{(k)}W^{(k)}=b^{(k)} (0\leq k<\tau)$, where the $(k+1)$-th iteration's left-hand-side matrix $A$ and right-hand-side vector $b$ are determined by the $k$-th solution, and $\tau$ is the total number of iterations.
The corresponding (generalized) iterative QLSS model can be formulated as
\begin{equation}
    {\Tilde{W}}^{(k)}\xrightarrow{\mathcal{E}}\mathcal{O}_A^{(k+1)}, \mathcal{O}_b^{(k+1)}
    \xrightarrow{\mathrm{QLSS}}{W^{(k+1)}} = \mathcal{U}_\mathrm{QLSS}^{(k+1)}\ket{0}\bra{0}\mathcal{U}_\mathrm{QLSS}^{\dagger(k+1)}
    \xrightarrow{\mathcal{D}}
    {\Tilde{W}}^{(k+1)}=\mathrm{Tomo}\left[U_\mathcal{D}W^{(k+1)}U_\mathcal{D}^\dagger\right],
\end{equation}
wherein $\mathcal{E}$ is an encoding process for the $(k+1)$-th iteration's block-encoding and state preparation oracles $\mathcal{O}_A^{(k+1)} , \mathcal{O}_b^{(k+1)}$ given the $k$-th iteration's output ${\Tilde{W}}^{(k)}$, 
$\mathcal{U}_\mathrm{QLSS}$ is an HHL-QLSS to solve large-scale linear systems with $\Tilde{\mathcal{O}}(\kappa)$ queries of oracles $\mathcal{O}_A, \mathcal{O}_b$,
and $\mathcal{D}$ is a decoding process transforming the $(k+1)$-th QLSS's output state $W^{(k+1)}$ into the $(k+1)$-th classical tomoraphy ${\Tilde{W}}^{(k+1)}$.
For conventional QLSS, we have that $U_\mathcal{D}= \mathcal{I}$ and ${\Tilde{W}}^{(k)}$ is exactly the classical (normalized) solution  $W^{(k)}$. 
Formally, we define the iteration-tolerant bandwidth
\begin{equation}
        \mathcal{S}(A, b, \tau) \vcentcolon = \min_{\mathcal{E},\mathcal{D}}\sup_{0\leq k<\tau}\left\{ \mathrm{dim}_\mathbb{C}(\Tilde{W}^{(k)}_{\mathcal{E},\mathcal{D}}) \middle| \lvert W^{(\tau)}_{\mathcal{E},\mathcal{D}} - W_\mathrm{ideal} \rvert<\epsilon \right\}
    \end{equation}
to be the minimum number of independent components within ${\Tilde{W}}^{(k)}$ so that the end-to-end iterative QLSS converges with error bound $\epsilon$.
Further, we prove that the iteration-tolerant bandwidth is one critical parameter to the iterative QLSS's complexity:
\begin{theorem}[Iterative QLSS complexity lower bound]\label{thm:1}
    Given the iteration-tolerant bandwidth $\mathcal{S}$, the iterative quantum linear system solver's time complexity is lower-bounded by $\Tilde{\mathcal{O}}(\tau\kappa\mathrm{Poly}(\mathcal{S}, \frac{1}{\epsilon}))$.
    In particular, an exponential speed-up can not be achieved unless $\mathcal{S}\in\mathcal{O}(\log N)$.
\end{theorem}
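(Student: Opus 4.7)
The plan is to decompose one iteration $k \to k+1$ of the iterative QLSS model into its three constituent stages—encoding $\mathcal{E}$, linear solve $\mathcal{U}_\mathrm{QLSS}$, and decoding $\mathcal{D}$—lower bound each stage separately in terms of $\mathcal{S}$, $\kappa$, and $\epsilon$, and then multiply by the $\tau$ outer classical iterations. To obtain a bound that holds simultaneously for every choice of encoder/decoder, I would fix an arbitrary $(\mathcal{E}, \mathcal{D})$ achieving the minimum in the definition of $\mathcal{S}(A, b, \tau)$; by construction the intermediate classical description $\tilde{W}^{(k)}$ then carries exactly $\mathcal{S}$ independent complex components along the trajectory, and reducing this dimension further must violate the $\epsilon$-convergence requirement.

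Next I would invoke three known lower bounds, one per stage. \textbf{Encoding:} synthesizing the block-encoding and state-preparation oracles $\mathcal{O}_A^{(k+1)}, \mathcal{O}_b^{(k+1)}$, which depend on $\mathcal{S}$ generic independent amplitudes, requires $\Omega(\mathcal{S})$ elementary gates per oracle call by the state-preparation/unitary-synthesis lower bounds of~\cite{zhang2022quantum, clader2023quantum}. \textbf{Solve:} the HHL-type QLSS needs $\tilde{\Omega}(\kappa)$ queries to these oracles by the Harrow--Hassidim--Lloyd matrix-inversion query lower bound. \textbf{Decode:} recovering the $\mathcal{S}$ independent components of $\tilde{W}^{(k+1)}$ to accuracy $\epsilon$ from copies of the output state $W^{(k+1)}$ requires $\Omega(\mathcal{S}/\epsilon^2)$ shots, by Holevo's bound combined with pure-state tomography lower bounds~\cite{holevo1973bounds}. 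Composing the oracle cost with the query lower bound, and adding the tomography cost—each shot of which re-executes a full preparation—yields a per-iteration lower bound of $\tilde{\Omega}(\kappa\cdot\mathrm{Poly}(\mathcal{S}, 1/\epsilon))$; summing over $\tau$ iterations gives the theorem. The corollary is then immediate: since any classical iterative NSE solver runs in time $\mathrm{Poly}(N)$, an exponential quantum speedup requires the lower-bound expression to be $\mathrm{polylog}(N)$ in its $N$-dependence, which forces $\mathrm{Poly}(\mathcal{S})\in\mathrm{polylog}(N)$, i.e., $\mathcal{S}\in\mathcal{O}(\log N)$.

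The main obstacle I anticipate is the interaction between the encoding lower bound and the min--max structure of the bandwidth definition: a priori an $(\mathcal{E}, \mathcal{D})$ pair could choose highly structured $\tilde{W}^{(k)}$ whose induced oracle admits an $o(\mathcal{S})$ synthesis, potentially evading the generic state-preparation bound. Closing this gap calls for a compression-style argument—if such a cheaper synthesis existed, the exploitable structure could be folded back into the encoder to further compress $\tilde{W}^{(k)}$, contradicting the minimality of $\mathcal{S}$ in its definition. This reduction-to-minimality, together with the routine query-model composition of the three sub-bounds and the separate accounting for tomography repetitions that cannot share state (the no-cloning obstruction already flagged in the text), is the technical heart of the proof; the remaining steps are standard complexity bookkeeping.
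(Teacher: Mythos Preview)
Your core argument---combining the $\tilde{\Omega}(\kappa)$ QLSS query lower bound with the $\Omega(\mathcal{S}/\epsilon^2)$ tomography lower bound on the output side, then multiplying by $\tau$---is exactly what the paper does, and is correct. The paper frames it as a short contradiction: assume a faster algorithm, divide by $\tau$ to get a per-iteration bound, divide by the $\Omega(\kappa\log(1/\epsilon))$ query lower bound to bound the sample count, then observe that the bandwidth definition guarantees at least one iteration whose decoded state has effective dimension $\geq\mathcal{S}$, so the tomography lower bound (the paper cites Proposition~49 of~\cite{van2023quantum}) forces $\Omega(\mathcal{S}/\epsilon^2)$ samples and yields the contradiction.

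Where you diverge is in also invoking an \emph{encoding} lower bound---$\Omega(\mathcal{S})$ gates to synthesize the oracles---and then worrying, correctly, that a structured $\tilde{W}^{(k)}$ might evade it, which leads you to sketch a compression-to-minimality reduction. The paper simply does not use the input side at all: the output tomography bottleneck alone already delivers the $\mathrm{Poly}(\mathcal{S},1/\epsilon)$ factor, so the encoding bound and the attendant compression argument are unnecessary. Dropping that branch removes the only obstacle you flagged and leaves a proof that is both shorter and free of the subtlety you were concerned about.
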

\noindent Note that the iteration-tolerant bandwidth is equivalent to the efficient spectral sparsity of the working space, defined as
\begin{equation}
        \mathcal{S}_\mathrm{spectral}(A, b, \tau) \vcentcolon= \min_{\mathcal{B}}\sup_{0\leq k<\tau}\left\{ \left\lVert\mathrm{Spec}_{\mathcal{B}}({W}^{(k)})\right\rVert_0 \middle| \lvert W^{(\tau)}_{\mathcal{B}} - W_\mathrm{ideal} \rvert<\epsilon \right\},
\end{equation}
and hence is independent of the quantum algorithm.
Consequently, the expected exponential speed-up is achieved if and only if the encoding-and-decoding pair $(\mathcal{E}, \mathcal{D})$ is well-designed with low iteration-tolerant bandwidth $\mathcal{S}$.

On the other hand, we show that the NSE can be solved with an exponential speed-up on asymptotic complexity can be achieved for practical problems.
Herein, we upper bound the iteration-tolerant bandwidth by the underlying variable's spectral sparsity and propose a framework for designing $(\mathcal{E}, \mathcal{D})$ by exploitation of the spectral and structural information to reduce the end-to-end complexity:
\begin{theorem}[Quantum Navier-Stokes Solver, Informal]\label{thm:2}
    The Navier-Stokes equation discretized by the implicit Euler and finite volume methods on $N$ cells can be solved within $\Tilde{\mathcal{O}}\left( \left(\kappa D\mathcal{S}+\log^2 N\right)\frac{\tau\mathcal{S}}{\epsilon^2}\right)$ if these variables have an $\mathcal{S}$ time on $\mathcal{O}(\mathcal{S}D+\log N)$ qubits, where $\epsilon$ is the precision and the density $\rho$, velocities $u, v$, and the internal energy per unit mass $e$ have an $\mathcal{S}$ Fourier spectral sparsity.
    Alternatively, the circuit depth turns out to be $\Tilde{\mathcal{O}}\left(\frac{\tau\kappa D \mathcal{S}^2}{\epsilon^2}\mathrm{Poly}\log N\right)$ polynomial spectral sparsity.
\end{theorem}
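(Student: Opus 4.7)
The plan is to exhibit an explicit encoding-decoding pair $(\mathcal{E},\mathcal{D})$ whose iteration-tolerant bandwidth coincides with the Fourier spectral sparsity $\mathcal{S}$ of the conserved variables, and then pipeline it through an HHL-type QLSS with careful per-iteration accounting. First I would fix the working-space representation: the $(D{+}2)$ conserved fields $\rho,\rho\Vec{v},\rho e$ are stored only through their $\mathcal{S}$-sparse spectral coefficients, so one iterate $\Tilde{W}^{(k)}$ lives on $\mathcal{O}(\mathcal{S}D+\log N)$ qubits. The encoding map $\mathcal{E}$ builds the block-encoding $\mathcal{O}_A^{(k+1)}$ and the state-preparation oracle $\mathcal{O}_b^{(k+1)}$ directly from these $\mathcal{S}$ classical amplitudes by exploiting two structural facts: the FVM stencil is translation-invariant on the Cartesian grid, hence diagonal in the Fourier basis up to a hierarchical $\mathrm{Poly}\log N$ spatial transform, and the implicit-Euler time step only adds a diagonal shift plus a sparse convolutional nonlinearity determined by the $\mathcal{S}$ retained modes of the previous iterate.

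Next I would quantify the per-iteration cost. The block-encoding of $A^{(k+1)}$ splits into (i) a hierarchical spatial operator of $\mathcal{O}(\log^2 N)$ cost in the Fourier grid and (ii) a convolution with the $\mathcal{S}$-sparse momentum/energy coefficients contributing $\mathcal{O}(\kappa D\mathcal{S})$ when amplified through the QLSS's $\Tilde{\mathcal{O}}(\kappa)$ oracle queries. Thus one preparation of $\mathcal{U}_{\mathrm{QLSS}}^{(k+1)}\ket{0}$ costs $\Tilde{\mathcal{O}}(\kappa D\mathcal{S}+\log^2 N)$. The decoding map $\mathcal{D}$ then applies a QFT and performs sparse-vector tomography in the spectral basis, which recovers the $\mathcal{S}$ dominant Fourier components from $\Tilde{\mathcal{O}}(\mathcal{S}/\epsilon^2)$ amplitude samples by standard $\ell_2$ tomography / compressed-sensing bounds. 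Multiplying the single-preparation cost by this sample count and by the $\tau$ outer iterations yields the stated time bound $\Tilde{\mathcal{O}}\!\left((\kappa D\mathcal{S}+\log^2 N)\tau\mathcal{S}/\epsilon^2\right)$. The alternative depth statement follows by parallelizing the $\mathcal{S}/\epsilon^2$ tomography queries into a single pipeline, turning one factor of $\mathcal{S}$ from a sequential cost into an in-circuit cost and absorbing $\log^2 N$ into $\mathrm{Poly}\log N$.

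The main obstacle is controlling the interaction between the encoding-decoding pair and the nonlinear iteration, and this is where I expect the bulk of the technical work. I need to show, simultaneously, that (a) the Fourier sparsity assumed for $\rho,u,v,e$ is preserved, up to the tolerance $\epsilon$, under the implicit-Euler-FVM update, so that $\mathcal{S}$ remains a valid bandwidth for every $k\le\tau$; and (b) the tomography error does not compound catastrophically across the $\tau$ classical iterations forced by the no-cloning theorem. For (a) I would use the regularity of the conserved fields together with Fourier tail bounds to justify hard thresholding at level $\epsilon$, absorbing the discarded modes into the truncation error of the discretization. For (b) I would exploit the bounded amplification $\kappa$ per implicit-Euler step and rescale $\epsilon\to\epsilon/\mathrm{Poly}(\tau)$ inside the $\Tilde{\mathcal{O}}$ notation. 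Once these two ingredients are in place, the saturation of Theorem~\ref{thm:1} is immediate: the constructed $(\mathcal{E},\mathcal{D})$ has bandwidth $\mathcal{S}$ and its total complexity is polynomial in $\mathcal{S}$ and polylogarithmic in $N$, so the exponential speedup holds precisely in the regime $\mathcal{S}\in\mathcal{O}(\log N)$ permitted by the lower bound.
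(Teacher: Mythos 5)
Your overall pipeline is the same as the paper's: a spectral encoding of the $\mathcal{S}$-sparse conserved variables, a structured block-encoding of $A$ and $b$ built from those coefficients, $\Tilde{\mathcal{O}}(\kappa)$ QLSS queries, a parallel inverse QFT followed by sparse tomography with $\Tilde{\mathcal{O}}(\mathcal{S}/\epsilon^2)$ shots, multiplied over $\tau$ classical iterations. The arithmetic assembles to the stated bound exactly as in the paper's (very short) proof, and your two "obstacles" (sparsity preservation and error compounding) are real but are handled in the paper by hypothesis and by numerical validation of an iteration-tolerant error threshold, not by a Fourier-tail argument, so they are not needed for the conditional statement being proved.

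Two specific points in your accounting do not match the construction and one is a genuine gap. First, the FVM Jacobian is \emph{not} diagonal in the Fourier basis: its off-diagonal bands carry spatially varying flux-Jacobian blocks, so the paper block-encodes it in the cell-index basis via a three-level hierarchy (diagonal shifts by arithmetic, inter-block state-preparation pairs, element-level sparse spectral block-encodings); the QFT appears only once, in the decoder $\mathcal{D}$. This matters for your bookkeeping: if the $\mathcal{O}(\log^2 N)$ piece really lived inside the $A$-oracle it would be multiplied by $\kappa$, giving $\kappa\log^2 N$ rather than the additive $\log^2 N$ in the claimed bound; it is additive precisely because it is a one-shot decoding cost. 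Second, the "alternatively" clause does not refer to parallelizing tomography queries. It is the same algorithm run with a \emph{monomial (polynomial) spectral basis} instead of the Fourier basis: each monomial basis block-encoding costs $\mathcal{O}(\mathcal{S}\log N)$ depth (via QSVT on a linear-function block-encoding) rather than $\mathcal{O}(\mathcal{S})$, which inflates the per-query cost to $\Tilde{\mathcal{O}}(\kappa D\mathcal{S}\,\mathrm{Poly}\log N)$ and yields $\Tilde{\mathcal{O}}(\tau\kappa D\mathcal{S}^2\,\mathrm{Poly}\log N/\epsilon^2)$ after the same $\tau\mathcal{S}/\epsilon^2$ sampling factor. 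Your parallelization argument would not produce this bound (it would reduce, not preserve, the $\mathcal{S}/\epsilon^2$ factor), so that part of the proposal needs to be replaced by the basis-dependent cost analysis.
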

\noindent In the input phase, our hierarchy spectral encoding method can blow up the Hilbert space of interest, from a low $\mathcal{O}(\mathcal{S})$-dimensional spectral space into a high $\mathcal{O}(N)$-dimensional working space for the quantum linear solver.
In the output phase, we design a corresponding sparse spectral tomography method to read out classical information for the next loop, as depicted in Fig.~\ref{fig:schematic}~(a).

    \begin{figure}
        \centering
        \begin{tikzpicture}
            \node{\includegraphics[width=\textwidth]{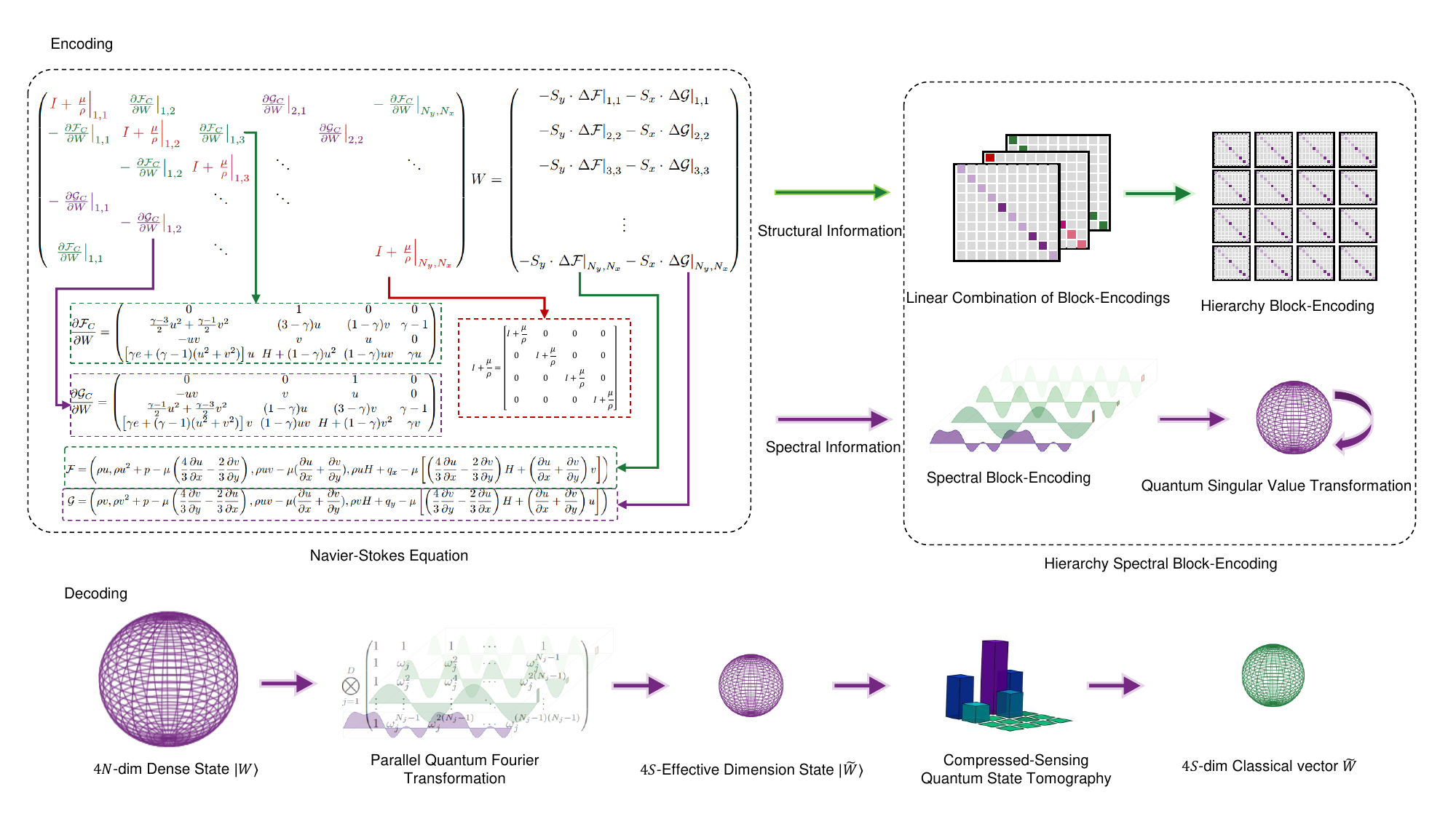}};
            \node[] at (-8.5, 4.51) {\textbf{a}};
            \node[] at (-8.4, -2.21) {\textbf{b}};
        \end{tikzpicture}
        \caption{Input and Output Schemes in Quantum Navier Stokes Solver: (a) Hierarchy Spectral Encoding. Left: the linear system $AW=b$ in $2$-dim NSE. The left-hand-side Jacobian matrix consists of five diagonals of $4\times4$ matrices. The red primary diagonal represents the numerical dispassion, and the green and purple sub- and super-diagonal correspond to the horizontal and vertical convective fluxes, respectively. Each block is evaluated at the cell labeled by the subscripts and consequently is a matrix-valued function on $N$ cells. The right-hand-side residual vector is also a weighted summation of four fluxes. Right: the matrix and vector's structural information is encoded by the hierarchy block-encoding. Both the matrix and vector are linear combinations of diagonals of sub-matrices. Each diagonal can be converted into $16$ submatrices of size $N\times N$ by \textit{conceptually} interchanging the cell index and the inter-block index. Each submatrix corresponds to a variable's value at $N$ cells. This spectral information can be encoded by the spectral block-encoding. (b) Sparse Spectral Decoding. A $4N$-dim dense state can be translated, by a parallel quantum Fourier transformation according to axis, into a state of low effective dimensions so that the state tomography can be applied to derive a classical vector for the next iteration.}
        \label{fig:algorithm}
    \end{figure}

We sketch the Hilbert space blow-up of the $4N$-dimensional linear system  in  Fig.~\ref{fig:algorithm}(a).
Given a linear system $AW=b$, HHL-QLSS usually utilizes an $(\alpha, a, 0)$-block-encoding of a non-unitary matrix $A$ as its upper left corner of an $(a+n)$-qubit extended unitary operator as that $A/\alpha=\bra{0}^{\otimes a}U_A\ket{0}^{\otimes a}$.
One main obstacle to block-encoding this Jacobian matrix $A$ is its complicated structure induced by the numerical method, comprising a blockwise multi-diagonal structure of diagonals $A^{(K)}  (-D\leq K\leq D)$ determined by topological connectivity and an inter-block structure dictated by the governing equation, as depicted in the upper left corner of Fig.~\ref{fig:algorithm}(a).
We address this issue by developing a hierarchy block-encoding
technique containing three levels as follows:
At the highest \textbf{diagonal-level}, we implement  a linear combination of $(2D+1)$ block-encodings followed by an arithmetic on the cell index $I, I'$  to shift each sub- and super-diagonal
\begin{equation}
    U_A = \sum_{K=-D}^{D} \left(\sum_I\ket{J+N_K}\bra{J}\right)U_A^{(K)}, 
\end{equation}
wherein $U_A^{(K)} $ is the block-encoding of the $K^{th}$ blockwise diagonal matrix $A^{(K)}$ and $N_K=\pm1, \pm N_x, ...$ is the corresponding shift, as depicted in the upper middle of Fig.~\ref{fig:algorithm}(a)
By conceptually interchanging the inter-block indices $(I, I')$ and the cell indices $(J, J')$, each $A^{(K)}$ can be regarded as $(D+2)^2$ sub-matrices $A^{(K)}_{I, I'} \in \mathbb{R}^{N\times N}$, wherein each sub-matrix is a function of cell index $J, J'$ on $N\times N$ sites so that $A^{(K)}_{I, I'}(J, J') = A^{(K)}_{I, I'; J, J'}$, as depicted in the upper right corner of Fig.~\ref{fig:algorithm}(a).
Consequently, we can implement a block-encoding unitary pair $U_L^{(K)}$ and $U_R^{\dagger(K)}$ of $\alpha^{(K)}$ at the \textbf{inter-block level}, and then insert a sequence of controlled block-encoding unitaries $C_{I, I'}-U^{(K)}_{I, I'}$ at the lowest \textbf{element-level} as
\begin{equation}
    U_A^{(K)} = U_R^{\dagger(K)} \left( \prod_{I, I' = 1}^{D+2}  C_{I, I'}-U^{(K)}_{I, I'} \right) U_L^{(K)},
\end{equation}
wherein $U^{(K)}_{I, I'}$ is some $(\alpha^{(K)}_{I, I'}, a, 0)$-block-encoding of $A^{(K)}_{I, I'}$ and
\begin{equation}
    \alpha^{(K)} = \begin{pmatrix}
   \alpha^{(K)}_{I, I'} 
\end{pmatrix}_{(D+2)\times(D+2)}
\end{equation}
is the normalization constant matrix.

The remaining obstacle to block-encoding $A$ is to encode those input and (intermediate) variables appearing in each $A^{(K)}_{I, I'} \in \mathbb{R}^{N\times N}$.
We apply sparse spectral block-encoding (SS-BE) to overcome this by exploiting the problem-originated spectral structure so that our concerned variables can be efficiently approximated in the subspace spanned by at most $\mathcal{S}\in \mathcal{O}(\mathrm{Poly}(\log{N}))$ basis, as depicted in the lower right corner of Fig.~\ref{fig:algorithm}(a).
By assumption, the independent input variables $\rho, u, v, e$ can be approximated by a linear combination of at most $\mathcal{S}$ spectral basis.
Consequently, those dependent variables derived by compactness-preserving operations (e.g., polynomials and partial derivatives), including total enthalpy $H = \gamma e + \frac{1}{2}u^2 +\frac{1}{2}v^2$, temperature $T=\frac{\gamma-1}{R}e$, etc., preserve the spectral sparsity upper bounded by $\mathcal{S}^d$ with polynomial degree $d\leq3$ (see Supplementary Information).
Such a linear combination of spectral bases can be block-encoded by
\begin{equation}
    U_\mathrm{SS-BE}^\mathrm{C} = (P_R^\dagger\otimes I_n)\left(\prod_{k=1}^\mathcal{S} C_k-U_{\mathrm{BE}(F_k)}\right)(P_L\otimes I_n),
\end{equation}
wherein $U_{\mathrm{BE}(F_k)}$ are \textit{static} block-encoding unitaries of basis function $F_k$  loading the public information and $\left( P_L, P_R^\dagger\right)$ is a state preparation pair to inject the \emph{varying} spectral information.
We prove in Supplementary Information that, for the common Fourier or monomial spectra, the SS-BE can be accomplished within $\mathcal{O}(\mathcal{S}\log N)$ time.
As for other variables derived by non-compact operators, such as the viscosity $\mu=\frac{T^{1.5}}{T+T_\mathrm{S}}$, the finite spectral sparsity is no longer guaranteed.
Alternatively, we approximate such a non-compact operator by a $d'$-degree polynomial $P'$ and utilize quantum singular value transformation 
\begin{equation}
    U_\mathrm{SS-BE}^\mathrm{NC} = QSVT\left(U_\mathrm{SS-BE}^\mathrm{C}, \phi_{P'}\right)
\end{equation}
to block-encode these variables in $\mathcal{O}(\mathcal{S}d')$ depth.
We highlight that the circuit depth and ancillary qubits depend mainly on spectral sparsity $\mathcal{S}$, removing the linear dependency on problem size $N$.
Further, non-compact operators only appear in the non-linear (numerical) viscosity terms of the governing equation, releasing an interesting theoretical relation between the quantum algorithm complexity theory and the classical PDE theory.
We also emphasize that the sparse spectra can be updated classically in $\mathcal{O}(\mathcal{S})$ time, and the polynomial approximation of non-compact operators can be pre-computed and is irrelevant to the varying spectra.
A direct consequence is the classical computational efficiency.
Equipped with these two block-encoding methods, the Jacobian matrix $A$ can be directly block-encoded, and the residual vector $b$ can be prepared by conducting a similar diagonal block-encoding circuit on the superposition of the computational basis.
The circuit depth for both these two subroutines are $\mathcal{O}( D \mathcal{S}\log\frac{1}{\epsilon}\log\log N)$.
To also consider the query complexity in quantum linear solvers, the total circuit depth turns out to be $\mathcal{O}(\kappa D \mathcal{S}\log\frac{1}{\epsilon}\log\log N)$.

Since the solution $W$ is encoded in a dense quantum state, a direct state tomography turns out to be too expensive.
Unlike previous works, our sampling complexities can also be reduced under the spectral sparsity condition as illustrated in Fig.~\ref{fig:algorithm}(b).
Indeed, since the efficient spectral sparsity of elements in $W$ is upper bounded by $\mathcal{O}(\mathcal{S})$,  the $4N$-dimensional dense vector $W$ can be translated into a vector $\Tilde{W}$ of $4\mathcal{S}$ sparsity by a parallel quantum Fourier transformation
\begin{equation}
    \left(\bigotimes_{j=1}^D QFT_{n_j}\right)\ket{W}=\ket{\Tilde{W}},
\end{equation}
where $QFT_{n_j}$ acts on the quantum register corresponding to the cell's $j$-th index.
Consequently, the sampling complexity of sparse spectral state tomography scales as $\Tilde{\mathcal{O}}(\mathcal{S}/\epsilon^2)$ instead of $\Tilde{\mathcal{O}}(N/\epsilon^2)$.
To consider both circuit depth and sampling complexity, we accomplish the end-to-end quantum Navier-Stokes solver with time complexity bounded by $\Tilde{\mathcal{O}}\left( \left(\kappa D\mathcal{S}+\log^2 N\right)\frac{\tau\mathcal{S}}{\epsilon^2}\right)$.
Notably, in some practical CFD problems, the boundary condition can be more complicated, and the residual vector $b$ can be violent with  discontinuities, also known as the Riemann problem. 
Our block-encoding method still applies to these scenarios through a quantum boundary condition mechanism or a quantum Riemann solver, as detailed in Supplementary Information.

\subsection{Quantum Resource Optimization}

    \begin{figure}
        \centering
        \begin{tikzpicture}
            \node{\includegraphics[width=\textwidth]{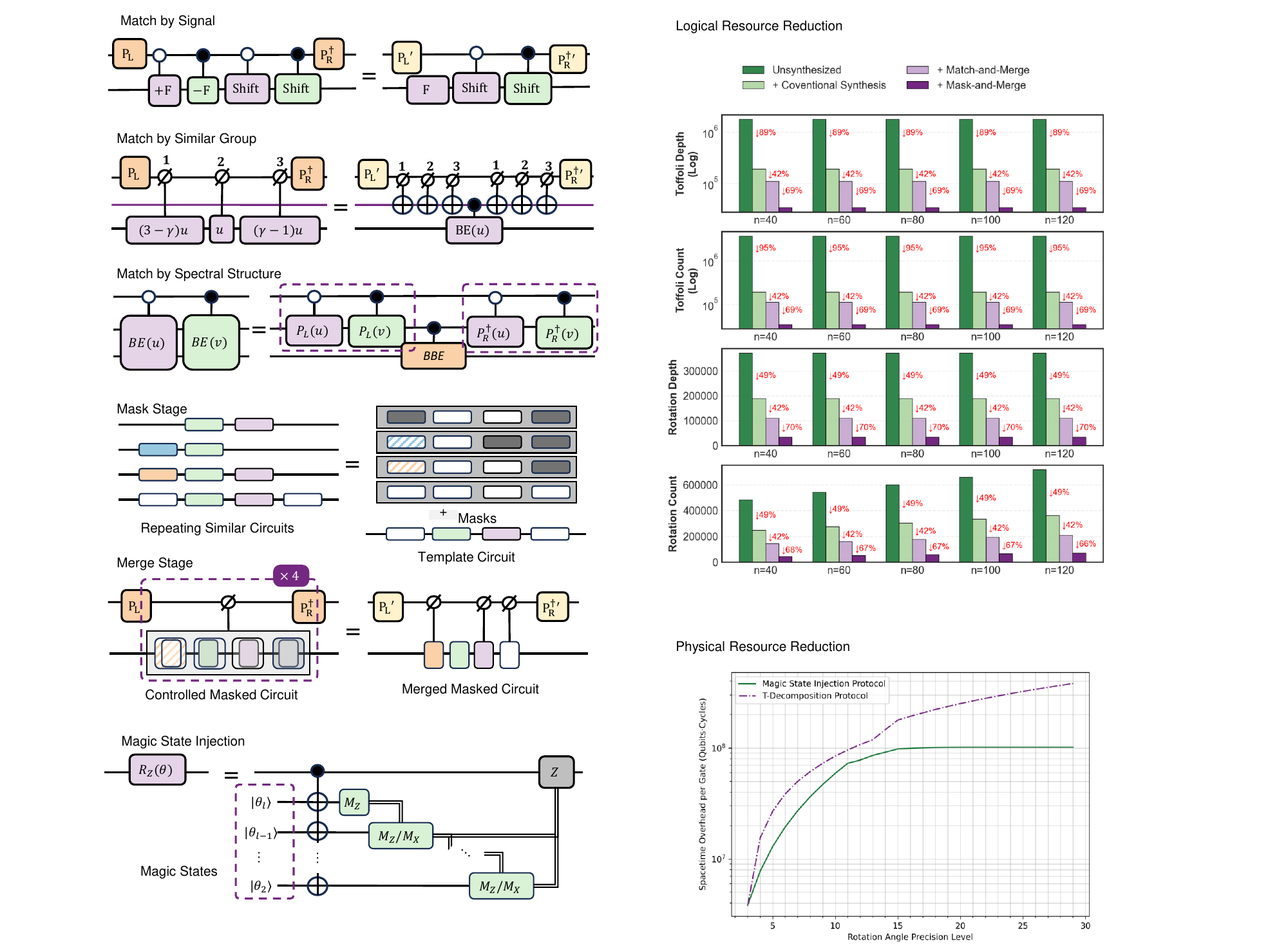}};
            \node[] at (-7.5, 6.35) {\textbf{a}};
            \node[] at (-7.5, 4.72) {\textbf{b}};
            \node[] at (-7.5, 2.83) {\textbf{c}};
            \node[] at (-7.5, 0.95) {\textbf{d}};
            \node[] at (-7.5, -1.2) {\textbf{e}};
            \node[] at (-7.5, -3.69) {\textbf{f}};
            \node[] at (0.15, 6.25) {\textbf{g}};
            \node[] at (0.15, -2.35) {\textbf{h}};
        \end{tikzpicture}
        \caption{Logical and Physical Resource Reduction: (a-c) Match-and-Merge synthesis by signal, similar group, and spectral structure patterns.  (d-e) Mask-and-Merge synthesis: (d) In the mask stage, repeating circuits with similar structure are regarded as a public template circuit plus varying masks; (e) In the merge stage, $4$ controlled masked circuits are merged, wherein redundant sub-circuits and controls are removed. 
        (f) The rotation gate with magic states injection utilized in this work. (g) Logical resource reduction for Toffoli depth, Toffoli count, rotation depth, and rotation count. (h) Physical resource reduction of the magic state injection protocol in comparison to the conventional T-decomposition protocol.}
        \label{fig:resource}
    \end{figure}
While our end-to-end algorithm is asymptotically efficient, reaching the crosspoint still needs more careful quantum circuit synthesis and QEC protocol to generate a logical-resource-efficient circuit and to implement a physical-resource-efficient deployment, respectively. 
In the conventional FTQC context, the logical quantum resource is characterized by the number of logical qubits and the count and depth of T gates.
However, recent works suggest that rotation gates can be implemented more efficiently through an ancilla state injection, instead of a T-based gate compilation~\cite{campbell2016efficient}.
Henceforth, we first consider the circuit synthesis targeted to reduce the logical resource of rotation and Toffoli counts/depths, and then consider the QEC implementation to minimize the physical resources in terms of cycles and physical qubit number.

For logical resource reduction, conventional circuit synthesis techniques often utilize \textbf{local symmetries} among a group of gates to remove redundant gates and unnecessary multi-control behaviour.
For example, we use uniformly controlled rotation (UCR) to reduce the Toffoli gate count in state preparation subroutines and use the fan-out $X$ gate, which is native for surface code lattice surgery, to in Fourier basis block-encoding subroutines~\cite{mottonen2004transformation, fowler2018low}.
Besides conventional problem-agnostic synthesis techniques, we highlight that the utilization of global symmetries among a group of subcircuits can further remove many costly subcircuits or their control qubits by exploiting public information in the algorithm.
More specifically, we develop two new circuit synthesis techniques, including the match-and-merge to directly use \textbf{intrinsic symmetries} between subroutines and the mask-and-merge to adaptively modify the subroutines for \textbf{artificial symmetries}.

The match-and-merge method matches subroutines by symmetry patterns and then merges them accordingly. 
The first pattern is that two or more subroutines differ only by a global sign and are organized by a linear combination of unitaries, as depicted in Fig.~\ref{fig:resource}(a).
For this pattern, we can interchange the subroutine order and merge the symmetrical pair into a single unitary, and then alternate the corresponding amplitude's sign in the state preparation pair.
The second pattern is that two or more subroutines differ only by a  strength coefficient and are organized by a linear combination of unitaries, as depicted in Fig.~\ref{fig:resource}(b).
For these unitaries within a common similar group, we can introduce group flag ancillaries to match their corresponding block-encoding circuits into groups, and then merge the subroutines within the same group to remove redundant controls and subroutines.
For example, the elements in each horizontal or vertical flux Jacobian can be classified into seven groups of $\Tilde{u}=\{u, \gamma u, (3-\gamma)u\}$, $\Tilde{v}= \{v, (1-\gamma)v\}$, etc.
For each group, the $U_\mathrm{BBE}$ is called only once, and the $U_\mathrm{SP/CSP}(\mathcal{S}(\cdot))$ are merged to remove one control qubit for each Toffoli gate.
The third pattern is more specific to our hierarchy spectral block-encoding.
Note that our block-encoding circuit can be decomposed into two classes of subroutines: those (controlled) state preparation subroutines $\left(P_L,P_R^\dagger\right)$ to load the variable spectrum, and the problem-agnostic basis function block-encoding subroutines $U_\mathrm{BBE}$.
Particularly, for those elements with common spectral structure yet different spectra, as illustrated in Fig.~\ref{fig:resource}(c).
For this pattern, we can remove a redundant $U_\mathrm{BBE}$ as well as its control qubits, and the remaining $\left(P_L,P_R^\dagger\right)$ unitaries can also be merged by further UCR technique, as circled in Fig.~\ref{fig:resource}(c).
In the quantum Navier-Stokes solver, these patterns appear when block-encoding flux components of symmetrical interfaces and variables of similar spectrums. 
For potential applications, they can also be found in other partial derivative equation numerical solution problems with spatial discretization and Hamiltonian simulation problems with molecular orbital integrals of different spins~\cite{helgaker2013molecular}.

The mask-and-merge method is a more complicated yet powerful generalization that can synthesize quite different subroutines.
The basic idea is to embed these subroutines into a least common circuit named \textit{template circuit} so that we can mask or modify several parts therein to adaptively generate the desired subroutine, and then merge these masked template circuits into an integration.
In the mask stage depicted in Fig.~\ref{fig:resource}(d), prior information is exploited to loosen the former symmetry requirement:
Those gates that only appear in several subroutines (purple-colored) can be turned on or off by resetting the rotation angles in the state preparation pair.
Those subroutines with identical Toffoli and rotation structure, yet different rotation angles configuration (blue- and orange- colored), can be masked by additional controlled rotation gates afterwards.
In this way, we construct artificial symmetry among different subroutines and translate them into a common template circuit that only differs by their rotation angles.
In the merge stage depicted in Fig.~\ref{fig:resource} (e), since these subroutines are symmetrical, we can apply the aforementioned techniques to remove redundant subroutines and costly control behaviors.
In extensive numerical experiments, we make $22.72\times$ and $49.98\times$ reductions on Toffoli gate depth and count, as well as $5.00\times$ reductions on both rotation gate depth and count, as depicted in Fig.~\ref{fig:resource}(g).
Consequently, the logical quantum resources and time are reduced by more than an order of magnitude.

For physical resource reduction, we apply lattice surgery to implement Clifford operations and gate teleportation with ancillary magic states for non-Clifford gates~\cite{litinski2019magic}.
In particular, the rotation gates with ancillary rotation states can be implemented in one non-Clifford layer instead of more than one hundred layers of $T$ gates, as illustrated in Fig.~\ref{fig:resource}(f), reducing the time by two orders of magnitude.
A direct result is that the accumulative logical error in the circuit is also suppressed by two orders of magnitude, inducing a smaller code distance and much fewer physical qubits for the circuit. 
Further, we refine the rotation state distillation protocol with a further $3.5\times$ reduction on the magic state factory's physical qubit number when compared to the conventional T-decomposition protocol, as depicted in Fig.~\ref{fig:resource}(h)~\cite{mlti}.

\subsection{Characterizing practical quantum advantage}

In principle, the optimal quantum resources to solve an NSE are determined by a multi-objective mixed-integer constrained programming problem
\begin{equation}
    \min_{\theta\in\Theta}\left\{(\mathcal{QN}_\mathrm{physics}(\theta), \mathcal{T}_\mathrm{quantum}(\theta))|\epsilon_\mathrm{quantum}(\theta, N)<\epsilon_\mathrm{threshold}(N)\right\},
\end{equation}
wherein $\theta$ are the hyperparameters involved in the framework, $\epsilon_\mathrm{quantum}$ is the end-to-end error in each iteration of quantum N-S solver and $\epsilon_\mathrm{threshold}$ is the iteration-tolerant error threshold for our problem. 

To validate the quantum N-S solver and to characterize problem-specified hyperparameters, we implement extensive numerical experiments of varying problem size $N$ and efficient spectral sparsity $\mathcal{S}$. We consider the two-dimensional Taylor–Green vortex in our numerical experiments. The flow field is initialized with a constant temperature initial condition, and the numerical solutions obtained from our quantum N-S solver are systematically compared against the available incompressible analytical solution.
The numerical experiment suggests that our algorithm's solution succeeds in converging to the benchmarking solution quickly as the error decays:
The fluid field information can be faithfully encoded and processed by our algorithm as depicted in Fig.~\ref{fig:numerical}(a).
The final solution's relative error rate converges to zero quickly as the output quantum state's error rate decays, as illustrated in Fig.~\ref{fig:numerical}(b).
In particular, given an end-to-end error bound, the single-iteration error threshold turns out to be loose as the problem size grows, as a partial result of that a larger problem size has less discretization error, higher spatial resolution, and mitigated numerical diffusion.
Further, this law promises a robust (and even conservative) estimation of the iteration-tolerant error threshold $\epsilon_\mathrm{threshold}$ for large $N$.

In the numerical simulation, we monitor those problem-specific hyperparameters $\theta_\mathrm{problem}$ for a reasonable extrapolation in large-scale cases.
The condition number is a key factor in circuit depth,  and accumulative block-encoding and logical error.
The numerical experiment shows that the condition number $\kappa$ saturates as the problem size $N$ grows once the spectral sparsity $\mathcal{S}$ is fixed, as illustrated in Fig.~\ref{fig:numerical}(c). This \textit{$\kappa$-saturation} phenomenon not only highlights our method's scalability for large-scale cases but also  reveals a scaling law between the variable's spectral sparsity and the Jacobian matrix's condition number that may have its own interest in CFD from a quantum perspective.
We also track those normalization constants in the quantum N-S solver, as depicted in Fig.~\ref{fig:numerical}(d), which plays a key role in block-encoding and state preparation error bounds.
The numerical stability suggests a reasonable estimation of these normalization constants for large $N$ cases.

We consider the end-to-end error analysis and allocation of three fundamental sources: 
The algorithmic error $\epsilon_\mathrm{alg}$ from the polynomial approximation of nonlinear behavior, including the imperfect kernel-reflection/projection (eigen-filtering) in QLSS and the accumulative block-encoding and state preparation error of the viscosity term.
The deployment error $\epsilon_\mathrm{deploy}$ is caused mainly by the accumulative logical error in the circuit and the ancillary magic state preparation error.
And the tomographic error $\epsilon_\mathrm{tomography}$ is a direct result of the finite sampling number to reconstruct the quantum state.
Due to the discontinuous and multiple dependencies of these errors  on the remaining quantum N-S solver-relevant hyperparameters $\theta_\mathrm{solver}$, the evaluation process is computationally costly, and we apply a heuristic error allocation to minimize the search and optimization overhead, as detailed in Supplementary Information.

Based on problem size $N$, problem-specified hyperparameters $\epsilon_\mathrm{threshold}$ and $\theta_\mathrm{problem}$ estimated by numerical experiments, and free hyperparameters $\theta_\mathrm{solver}$ optimized by error allocation process, we get a precise estimation of the end-to-end time and quantum resources of the quantum Navier-Stokes solver that a $N=2^{80}$-grid $2$-dimensional discretized Navier-Stokes equation can be solved within
\begin{equation}
    \mathcal{T}_\mathrm{quantum} = \mathcal{N}_\mathrm{sample}\times\mathcal{D}\times d\times\mathcal{T}_\mathrm{cycle}\leq3.68\times10^6 \text{ s}\simeq42.6\text{ days}
\end{equation}
time, utilizing  less than
\begin{equation}
    \mathcal{QN}_\mathrm{physics} = \mathcal{QN}_\mathrm{circuit} + \mathcal{QN}_\mathrm{routing} + \mathcal{QN}_\mathrm{factory} \leq 8.71\times10^6
\end{equation}
physical qubits of $5\times 10^{-4}$-error-rate,
wherein $\mathcal{N}$, $\mathcal{D}$, $d$, and $\mathcal{T}_\mathrm{cycle}$ are the sampling number, non-Clifford depth, surface code distance, and the standard QEC per-cycle time on a superconducting quantum processor, respectively. The classical counterpart time is estimated in Supplementary Information to be
\begin{equation}
    \mathcal{T}_\mathrm{classical}=\mathcal{C}/R_\mathrm{max}\simeq129.85\text{ years},
\end{equation}
where $\mathcal{C}$ is the total floating-point operations, and $R_\mathrm{max}=1742.00\text{ PFlop}\cdot\text{s}^{-1}$ is the maximum LINPACK performance achieved for current supercomputer~\cite{top500}.
Suppose a perfect parallel execution, achieving theoretical peak performance $R_\mathrm{peak}=2746.38\text{ PFlop}\cdot\text{s}^{-1}$ with no diminishing marginal returns effect, we still require $720$ copies of the most powerful supercomputer available today to complete the same task within an equal time.

\begin{figure}
    \centering
    \begin{tikzpicture}
        \node[]{\includegraphics[width=0.75\textwidth]{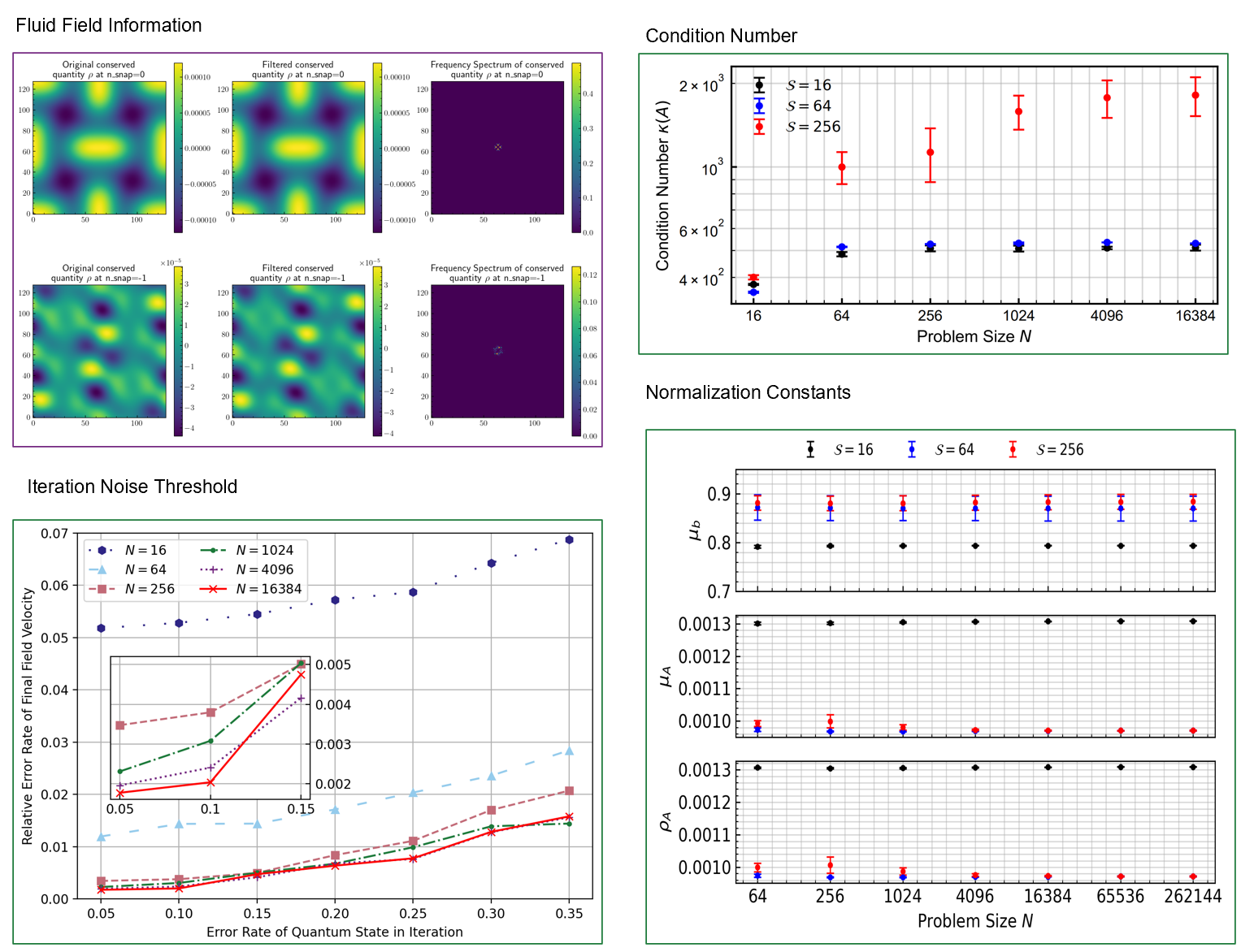}};
        \node[] at (-6.75, 4.9) {a};
        \node[] at (-6.75, -0.11) {b};
        \node[] at (0, 4.8) {c};
        \node[] at (0, 0.95) {d};
    \end{tikzpicture}
    
    \caption{
    Numerical experiments for the quantum Navier-Stokes solver. (a) The fluid field information at the initial (upside) and final iteration (downside). From left to right are the original field, the approximated field of given spectral sparsity, and the spectra in the frequency domain, respectively.
    (b) The relative error rate of the final iteration's field velocity converges to zero quickly as the quantum state error decays.
    (c) The condition number saturates when $N$ grows, and the saturation level is determined by the efficient spectral sparsity.
    (d) The normalization constants are stably bounded when the problem size $N$ grows.
    }
    \label{fig:numerical}
\end{figure}

\section{Discussion}

In this work, we have proposed a comprehensive framework for solving practical computational fluid dynamics (CFD) problems on a fault-tolerant quantum computer, supported by a rigorous end-to-end resource analysis. We derived a general complexity lower bound for iterative quantum linear system solvers, governed by the efficient spectral sparsity of the problem, and designed an algorithm that achieves an exponential quantum speedup for the Navier-Stokes equations (NSE) through a hierarchical spectral input method and a sparse spectral output protocol. Furthermore, we developed global symmetry-based circuit synthesis techniques (match-mask-and-merge) and a refined quantum error correction (QEC) protocol to drastically reduce logical and physical resource demands. Our numerical validation characterized critical hyperparameters for large-scale instances and incorporated a full error analysis. The results demonstrate that solving NSE on a $2^{80}$-grid is feasible within 42.6 days using 8.71 million physical qubits at an error rate of $5\times 10^{-4}$. We present this finding as compelling evidence for a practical quantum advantage in CFD and high-performance scientific computing.

It is important to emphasize that our conclusion of a practical quantum advantage rests on several key assumptions, which also delineate the limitations of this study. First, we assume an efficient spectral sparsity that grows substantially slower than the problem size. Although this sparsity is fixed in our specific case, its prevalence in other practical scenarios requires further numerical validation. The complexity lower bound established here implies that a quantum advantage may be unattainable for problems lacking such an efficient spectral representation. Second, our efficient quantum state tomography protocol relies on numerically validated sample numbers rather than a theoretical upper bound, which is too loose for practical purposes. While we have robustly tested this numerically for our case, establishing a general theoretical bound with a tighter prefactor for large spectral sparsity $\mathcal{S}$ remains an open problem for future work.

From a quantum algorithms perspective, our block-encoding and circuit synthesis methods are of independent interest. They offer a promising approach for loading real-world data into quantum states, with potential applications in quantum machine learning, financial modeling, and beyond. Notably, our algorithm's end-to-end complexity is dictated by spectral sparsity—a distinct and often more favorable scaling than the conventional matrix sparsity that governs the query complexity of HHL-style linear solvers.

Our resource analysis underscores that circuit synthesis is critical not only for reducing logical resources but also for minimizing physical overheads as a key insight for achieving practical quantum advantage. The symmetry-based techniques introduced here could be adapted to block-encoding circuits in quantum computational chemistry and Hamiltonian simulation, though their generalization to other domains warrants further rigorous analysis and numerical support. Furthermore, our results highlight the rotation gate count as a major bottleneck determining physical qubit requirements. Since rotation gates implemented via ancillary state injection hold great potential for resource reduction, developing more efficient magic state distillation factories is essential for broader practicality.

From a CFD perspective, our numerical experiments reveal a suggestive correlation between the spectral sparsity of flow variables and the condition number of the Jacobian matrix, which is a known determinant of computational cost in both classical and quantum solvers. While the role of spectral sparsity in CFD has been underexplored, our work highlights its importance. We note that realistic CFD scenarios often involve greater complexity, such as turbulent boundary conditions, unstructured grids, and high-order numerical schemes, which may introduce additional costs but also present exploitable structures (see Supplementary Information). We posit that the identification and utilization of such problem-specific intrinsic structures will be a central task for achieving practical quantum advantage in the fault-tolerant era.

\section*{acknowledgement}
This work has been supported by the National Key Research and Development Program of China (Grant No. 2023YFB4502500).

\appendix

\section{Related Works}

\subsection{Related Works on Block-Encoding}

\subparagraph{On block-encoding of structured matrix.} 

Ref.~\cite{clader2023quantum} studies the block-encoding resources of an arbitrary matrix based on a qRAM architecture.
Ref.~\cite{wan2021block} considers an efficient block-encoding of matrices with the displacement structures of circulant, Toeplitz, and Hankel types based on either a black-box model or a QRAM data structure model.
However, as we have discussed in the main text, qRAM-based block-encoding consumes lots of logical qubits and is often too costly for large-scale classical data.

Ref.~\cite{nguyen2022block} utilizes hierarchical matrices to block-encode displacement kernel matrices. Both their work and ours use a divide-and-conquer approach to block-encode a structured matrix, but the suitable structure is quite different. 
The block-encoding in Ref.~\cite{nguyen2022block} divides a dense and full-rank kernel matrix into hierarchically off-diagonal blocks and approximates each block by low-rank approximation, and this method is natural for displacement kernel matrices such as N-body gravitational force problems.
Our hierarchy block-encoding divides the large matrix into a sequence of equal-sized sub-matrices and block-encodes each block precisely, and our method is general for numerical PDE problems with local topological connectivity, such as fluid dynamics problems.
It will be very interesting  to combine their fine block-encoding techniques with ours for more complicated and fancy structured matrices, and we leave this as future work.

Ref.~\cite{camps2024explicit} shows an explicit quantum circuit to block encode a banded circulant matrix, an extended binary tree, and a symmetric stochastic matrix.
There are mainly two differences between this article and our hierarchy block-encoding: Firstly, this work shows how to block encode a multi-diagonal structure matrix, wherein the elements lying on the same diagonal are exactly the same. However, this method does not apply to our case since in our case the elements are not constant but a function evaluated on $N$ sites for large $N$. Secondly, while both this work and our hierarchy block-encoding utilize an arithmetic approach to implement the diagonal's shift, we reduce its quantum resource under control by a refined circuit synthesis.

Ref.~\cite{sunderhauf2024block} designs an efficient block-encoding algorithm for matrices of structure computable by arithmetic, including Checkerboard matrix, Toeplitz matrix, tridiagonal matrix, extended binary tree matrix, and 2-dimensional Laplacian matrix. Therein, they apply multiplexed rotation gates to block-encode $D$ different elements within $\mathcal{O}(D)$ circuit depth.
Nevertheless, for CFD applications, there is usually no guarantee that the flux Jacobian matrix only contains a very few number of different elements. Indeed, an $N$-grid discretization of the fluid field often has $\Theta(N)$ different elements, and hence the block-encoding method in Ref.~\cite{sunderhauf2024block} turns out to be inefficient for a QLSS algorithm.

\subparagraph{On spectral-based block-encoding.}
In our previous work~\cite{zhuang2024statistics}, we developed a general state preparation framework based on the maximum-entropy principle.
Therein, we consider the preparation of the weighted mixture of a family of statistical distributions with a latent space to encode their coefficients and a work space to encode those \textit{static} components.
Indeed, this weighted distribution mixture is an efficient block encoding in the spectral space, where the distribution family is exactly the underlying basis, and the latent space therein corresponds to the spectral space.
Also, Ref.~\cite{rosenkranz2025quantum} considers the multivariable function preparation problem, wherein they also construct an efficient block-encoding of the weighted sum of Fourier-basis and Chebyshev-basis.
In particular, our spectral block-encoding can be regarded as a reformulation and combination of these two works on state preparation from an end-to-end and spectral method perspective.

It should be mentioned that, although these two works are powerful ingredients in many quantum algorithms, they can not be directly applied to CFD and many other scientific numerical computing problems. This is mainly because the matrix in these problems often contains numerous independent variables, and there is usually no common spectral structure or statistical distribution family among these variables. A structured block-encoding architecture that can flexibly assemble varying spectra and basis block-encoding is still required. Moreover, for an end-to-end application, it is also fundamental to design a corresponding state tomography protocol as we did in this work.

\subsection{Related Works on Practical Quantum Advantage in (Early) Fault-Tolerant Quantum Computing Era}
\subparagraph{On (early) fault-tolerant quantum computing.}
We are inspired by and benefited from a sequence of works to characterize practical quantum advantage in the (early) fault-tolerant quantum computing era~\cite{katabarwa2024early}.
Many works have been proposed to give a rigorous resource analysis for Shor's algorithm, one of the most celebrated quantum algorithms with potential exponential advantage to factor large  RSA integers  (see Refs.~\cite{gidney2021factor, gidney2025factor} and the references therein). Refs.~\cite{gidney2021factor, gidney2025factor} demonstrate an end-to-end quantum resource analysis framework including quantum error correction overhead. More efficient quantum error correction protocols like magic state distillation and cultivation have been developed to significantly reduce the quantum resources for non-Clifford quantum gates~\cite{fowler2012time, gidney2019efficient, gidney2024magic}.
In the deployment and quantum resource optimization phase of this work, we utilize these improved T-factory designs.
In the resource analysis phase of this work, we follow the quantum error correction resources analysis framework therein for a rigorous physical qubit number analysis.
In fact, we regard these efforts as bringing the practical application of fault-tolerant quantum applications much closer to reality.

\subparagraph{On end-to-end resource analysis of HHL-based quantum interior point method portfolio optimization ~\cite{dalzell2023end}.}
The previous work on the end-to-end quantum resource analysis for quantum interior point methods with applications to portfolio optimization (QIPM-PO) is one of the earliest end-to-end resource analysis for practical HHL-like applications, establishing a standard analysis framework for logical quantum resource and providing rigorous theoretical and numerical analysis, including the quantum resource to block-encode a matrix by qRAM, to prepare quantum states, to solve the quantum linear system, to tomograph the quantum state, and to synthesize the non-Clifford gates~\cite{dalzell2023end}. Based on numerical extrapolation of historical stock data, it is reported by Alexander M. Dalzell et al. that $8\times10^6$ logical qubits and $7\times 10^{29}$ $T$ gates are required to solve the portfolio optimization problem for $100$ stocks. Their analysis suggests that fundamental improvements on large prefactors, costly quantum state tomography, etc., are required for practical quantum advantage.

While we are strongly inspired, in the resource analysis phase, by Ref.~\cite{dalzell2023end}, our result seems to be more positive. We believe that this significant difference is a consequence of the recent progress of the whole quantum computing community, and can be summarized as follows:
Firstly, in the algorithm phase, we design a qRAM-free input model and a corresponding low effective-dimension output model. In QIPM-PO, most of the logical quantum resources, including the logical qubits and the $T$ gates, are consumed by the qRAM-based block-encoding architecture to load an arbitrary and large-scale matrix. Indeed, the quantum resources characterized by $T$ gate count and logical qubit number both grow linearly with the problem's size and are multiplied by a sequence of large prefactors. Besides, we apply a more efficient QLSS that significantly reduces the large prefactor in QLSS~\cite{dalzell2024shortcut}. Moreover, since the QIPM-PO assumes zero prior knowledge of the problem, the output state is always dense and requires a sampling number that seems unaffordable for practical quantum advantage. This can be viewed as a direct result of the Holevo bound, as we analyzed earlier. By the hierarchy spectral block-encoding techniques developed in this work, the logical qubit number and non-Clifford gate count are both exponentially reduced. Further, by the sparse spectral tomography, the amount of information that needs to be extracted from the quantum states is also significantly reduced. We believe this improved encoding/decoding pair is one dominating contribution to the reduction of quantum resources. In particular, an asymptotic exponential quantum speed-up is excluded by QIPM-PO due to the linear-growing sampling complexity to determine the signatures, but is promised by our spectral-based quantum N-S solver.
Secondly, in the deployment phase, we implement an in-depth circuit synthesis that significantly reduces the logical quantum resources by more than two orders of magnitude. This result supports the previous work's opinion that a large constant prefactor is one critical obstacle to practical quantum advantage. And this also highlights the importance of an efficient circuit synthesis protocol for FTQC applications.
Thirdly, our quantum error correction protocol admits a further time and quantum resources reduction by more than two orders of magnitude. We utilize a gate teleportation with ancillary rotation states, rather than a more costly $T$ gate decomposition protocol, to implement the non-Clifford rotation gates. Our numerical analysis shows that this will reduce the non-Clifford layer number by approximately two orders of magnitude. Besides, 

Based on these improvements, we establish a more complete resource analysis framework to also consider the physical quantum resources. In our analysis, not only algorithmic and tomographic errors but also accumulative logical errors and imperfect magic states' errors are considered. We have a more precise estimation of the surface code distance and magic state factory size to characterize the condition for  practical and fault-tolerant quantum advantage.

\subparagraph{On sparse oracle-based HHL resource analysis~\cite{tu2025towards}.}
In the preparation of this work, we also noticed this sparse oracle-based HHL resource analysis that gives an estimation of the potential Fault-tolerant advantage on space, time, and energy. 
In this work, it is reported that a potential quantum advantage on the HHL solver can be achieved in $10^6$ seconds on $10^5$ physical qubits with error bounded by $P_\mathrm{physics}<1\times10^{-5}$ to solve a linear system of condition number $\kappa\simeq10\sim100$ and dimension $N\simeq2^{33}\sim2^{48}$.
In comparison, our resource analysis suggests a practical quantum advantage within $4\times10^6$ seconds on $4\times10^6$ physical qubits with error bounded by $P_\mathrm{physics}<5\times10^{-4}$ on $N\simeq2^{50}$.
The main difference in logical resources between our work and this one is the input model. In Ref.~\cite{tu2025towards}, assume a sparse oracle $M$ that can output the digital encoding of the value and position as
\begin{equation}\nonumber
    \ket{x}\ket{0}\ket{0}\ket{0}\xrightarrow{M}\ket{x}\ket{m(x)}\ket{0}\ket{w(x)}.
\end{equation}
In this work, we give an explicit quantum circuit construction of such an oracle with reduced quantum resources.
Besides, the difference in physical resources estimation is partially because we pose a looser limitation on the physical error rate of qubits. The physical qubit number can be significantly reduced in our work if the physical error rate is bounded by $P_\mathrm{physics}<1\times10^{-5}$.
We follow the classical computing time estimation in Ref.~\cite{tu2025towards}.
The energy estimation therein is also of fundamental interest from an environmental and economic perspective. We are pleased to consider the possible energy reduction in the context of quantum computational fluid dynamics.

\subsection{Related Works on Quantum Computational Fluid Dynamics.}

\subparagraph{On digital quantum Navier-Stokes solver~\cite{gaitan2020finding}.}
Frank Gaitan proposed a quantum algorithm to solve the Navier-Stokes equation with application to find the steady-state inviscid, compressible flow through a convergent-divergent nozzle~\cite{gaitan2020finding}. Therein, Ref.~\cite{gaitan2020finding} transforms the original NSE into a set of ordinary derivative equations and then solves it by  a quantum amplitude estimation algorithm to evaluate the numerical integration average.
In comparison to Frank's early yet excellent work on a quantum Navier-Stokes solver, there are two main differences in this work from a quantum algorithm perspective:
Firstly, Ref.~\cite{gaitan2020finding} considers a quantum oracle as the input model, as detailed in Eq.~(50) of its supplementary information, where the additional and usually costly quantum resources overhead is not considered.
In this work, we consider an end-to-end quantum algorithm with an explicit quantum circuit input model instead. Consequently, it is plausible to give a concrete quantum resource estimation.
Secondly, Ref.~\cite{gaitan2020finding} admits an at-most quadratic speedup with
\begin{equation}
    \begin{split}
        \mathrm{comp}^\mathrm{random}&=\mathcal{O}\left(\left(\frac{1}{\epsilon}\right)^{\frac{1}{1/2+q-\gamma}}\right),\\
        \mathrm{comp}^\mathrm{quantum}&=\mathcal{O}\left(\left(\frac{1}{\epsilon}\right)^{\frac{1}{1+q-\gamma}}\right),
    \end{split}
\end{equation}
at the limit case of $q,\gamma\ll1$,
while our algorithm ensures an exponential speedup.
One possible reason is that we apply the more powerful HHL-QLSS instead of quantum amplitude estimation only.

\subparagraph{On quantum simulation of Navier-Stokes flows}

There are several elegant works on transforming the original Navier-Stokes equation and many other partial derivative equations into a Hamiltonian simulation problem.
In Ref.~\cite{itani2024quantum}, Wael Itani, Katepalli R. Sreenivasan, and Sauro Succi design a framework to bridge the nonlinear dynamics to the bosonic modes evolution, as well as a Carleman linearization truncation in the bosonic Fock space.
In Ref.~\cite{meng2024quantum}, Zhaoyuan Meng and Yue Yang derive a quantum spin representation for the Navier-Stokes equation.
Therein, they derive the mapping to Schr\"{o}dinger-Pauli equation to facilitate quantum computing of fluid dynamics through Hamiltonian simulation.
Recently, Shi Jin, Nana Liu, and Yue Yu established a novel, powerful framework named Schr\"{o}dingerization to transform linear ordinary derivative equations and partial derivative equations into a system of Schr\"{o}dinger’s equations by the warped phase transformation~\cite{jin2023quantum, jin2024quantum}. As noted in Ref.~\cite{jin2024quantum}, it has no explicit complexity dependency on the condition number. While the Navier-Stokes equation is nonlinear, it is interesting to consider whether this framework can be extended to the nonlinear case.
Since block-encoding of the Hamiltonian is one key ingredient in quantum computational Hamiltonian simulation, we have confidence that these quantum algorithms can be enhanced by our hierarchy spectral block-encoding method.
It will be very interesting to combine these transformations and our block-encoding techniques for an improved or even optimal quantum simulation on the NSE and many other partial derivative equations.

The Navier-Stokes equation can also be simulated on an annealing quantum device.
In Ref.~\cite{ray2019towards}, N.Ray et al.  translate the Navier-Stokes equation into a quadratic unconstrained binary optimization problem and solve this problem on an adiabatic annealing-based quantum device.
Besides, Ref.~\cite{meena2024towards} studies the digital quantum simulation of the Navier-Stokes equation by a matrix-product-state (MPS) method. Therein, they apply MPS states to encode the linear and non-linear terms, as well as a sequence of MPS operators with a Krylov subspace method to simulate the non-linear dynamics. We believe that the MPS method is a powerful tool to solve complicated CFD problems, including the Navier-Stokes equation. However, since there is a sequence of complicated optimization procedures in both Ref.~\cite{ray2019towards} and Ref.~\cite{meena2024towards}, it is hard to derive a theoretical complexity upper bound for comparison.

\section{Iterative QLSS: Generalized Definition, Iteration-Tolerant Bandwidth and Complexity Lower Bound}
Solving linear systems of equations is of fundamental importance and general interest to both the science and engineering communities.
Given a matrix $A\in\mathbb{C}^{M\times N}$ and a vector $b\in\mathbb{C}^M$, the solution (if it exists) is supposed to be the vector $x\in\mathbb{C}^N$ such that $Ax=b$.
While the fastest classical linear solver can numerically solve this equation within $\mathcal{O}(\mathrm{Poly}(N))$, quantum linear systems solvers (QLSS) have been proposed to solve large linear systems within $\mathcal{O}(n\kappa\log{(1/\epsilon)})$, promising a potentially exponential speedup given efficient access to the matrix $A$ and vector $b$~\cite{morales2024quantum,harrow2009quantum,ambainis2010variable,childs2017quantum,subacsi2019quantum,costa2022optimal,an2022quantum,jennings2023efficient,costa2023discrete,dalzell2024shortcut,low2024quantum}.
Herein, $n=\log_2{(\max{M, N})}$ is the logarithm of the system's variable number, $\kappa$ is the condition number, and $\epsilon$ is the precision of the solution.
Aram Harrow, Avinatan Hassidim, and Seth Lloyd propose the first QLSS, by combining the Hamiltonian simulation algorithm and quantum phase estimation subroutine to achieve an exponential speedup over its classical counterpart ~\cite{harrow2009quantum}.
Then the query complexity dependence on the condition number $\kappa$ and the precision $\epsilon$ is improved by variable-time amplitude amplification and the linear combination of unitaries ~\cite{ambainis2010variable,childs2017quantum}.
Based on adiabatic quantum computing, the complexity is further improved to be $\mathcal{O}(\kappa\log{(\frac{1}{\epsilon})})$,  and is proven to be optimal ~\cite{subacsi2019quantum,costa2022optimal}.
Following this, subsequent works are proposed to reduce the constant prefactor from an order of $10^5$ to less than $10^2$~\cite{an2022quantum,jennings2023efficient,costa2023discrete}.
Specifically, an augmented linear system is introduced to avoid the difficult-to-analyze adiabatic path-following and to realize a low query complexity $\sim 80\kappa$ for $(\kappa, \epsilon) = (10^5, 10^{-10})$~\cite{dalzell2024shortcut}.
Since this shortcut QLSS is easy to analyze with a small prefactor, we will utilize it in this work.
\begin{atheorem}[{Shortcut QLSS, Theorem 4 in Ref.~\cite{dalzell2024shortcut}}]\label{theorem: qlss}
    
    Suppose that $\left\lVert b \right\rVert = 1$ and that all nonzero singular values of $A\in\mathbb{R}^{N \times N}$ lie in the interval $[\kappa^{-1}, 1]$, as that $x$ is the unique solution satisfying $Ax=b$. Further suppose an $(\alpha, a, 0)$-block-encoding of $A$ to be
    \begin{equation}\label{eq:block_encoding}
        \frac{A}{\alpha} = (\bra{0}^{\otimes a}\otimes I_{2^n}) U_A (\ket{0}^{\otimes a}\otimes I_{2^n})
    \end{equation}
    and a state preparation unitary for $b$ to be
    \begin{equation}\label{eq:state_preparation}
        U_b (\ket{0}^{\otimes n}) = \sum_{j=0}^{2^n-1}b_j\ket{j},
    \end{equation}
    with $n=\log_2{N}$.
    Then the numerical solution $\Tilde{x}$ with error bounded by $\frac{1}{2}\left\lVert \ket{\Tilde{x}}\bra{\Tilde{x}} - \ket{x}\bra{x} \right\rVert_1\leq\epsilon$ can be numerically solved by $\mathcal{Q}$ queries to (controlled) $U_A$ or $U_A^\dagger$ and $2\mathcal{Q}$ queries to (controlled) $U_b$ or $U_b^\dagger$ with
    \begin{equation}
        \mathcal{Q} \leq 56.0\kappa + 1.05\kappa\ln{\left( \frac{\sqrt{1-\epsilon^2}}{\epsilon} \right)} + 2.78{\ln{(\kappa)}}^3 + 3.17.
    \end{equation}
\end{atheorem}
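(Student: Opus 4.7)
The plan is to reconstruct the shortcut QLSS argument, whose whole purpose is to sidestep the heavy discrete-adiabatic analysis while retaining the optimal $\tilde{\mathcal{O}}(\kappa)$ scaling with tractable constants. First I would replace the original linear system by an augmented block-Hermitian operator
\begin{equation*}
    \mathcal{A} = \begin{pmatrix} 0 & \bar{Q} \\ \bar{Q}^\dagger & 0 \end{pmatrix},
\end{equation*}
acting on one extra qubit, where $\bar{Q}$ is a $\ket{b}$-dependent modification of $A$ (for example $\bar{Q} = A(I - \ket{b}\bra{b}) + \alpha\ket{b}\bra{b}$) engineered so that (i) the unique null vector, restricted to the solution register, is proportional to $A^{-1}\ket{b}$; (ii) the spectral gap around zero is still at least $1/\kappa$; and (iii) the resulting $(\alpha', a+1, 0)$-block-encoding of $\mathcal{A}$ uses only one call each to $U_A$, $U_A^\dagger$ and two calls each to $U_b$, $U_b^\dagger$ per reflection, which is where the factor of two in the $b$-query count originates.

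Second, I would apply an eigenstate-filtering step through quantum singular-value transformation. Given the block-encoding of $\mathcal{A}$, construct an odd polynomial $P_\ell$ of degree $\ell$ that approximates $\mathrm{sgn}(x)$ uniformly on $[-1,-1/\kappa]\cup[1/\kappa,1]$ to error $\delta$; then $(I-P_\ell(\mathcal{A}))/2$ realized by QSVT projects onto the null space of $\mathcal{A}$ up to error $\delta$ and costs exactly $\ell$ queries to the block-encoding. Using the Low--Chuang / Eremenko--Yuditskii construction of the sign-function approximation yields $\ell = \lceil c_1\kappa\ln(1/\delta)\rceil$ with an explicit, small $c_1$, and this constant is what eventually pins down the $1.05\,\kappa\ln(\cdot)$ coefficient. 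Starting from the easily prepared seed state $\ket{0}\otimes\ket{b}$ and applying the filter produces an output whose overlap with the exact solution is bounded away from zero uniformly in $\kappa$.

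Third, to boost this constant-overlap state to near-unit success probability I would append one round of fixed-point amplitude amplification, again implemented via QSVT with a polynomial of degree $\mathcal{O}(\ln(1/\epsilon))$. The sum of the two polynomial degrees gives the total number of queries to $U_A$; the block-encoding contributes the doubling of the $U_b$ count. Choosing $\delta$ and the amplification tolerance so that the cumulative trace-distance error is at most $\epsilon$, and adding the $\mathcal{O}((\ln\kappa)^3)$ overhead incurred when computing the QSVT phase factors with finite-precision arithmetic, one obtains the advertised bound $\mathcal{Q}\le 56.0\kappa + 1.05\kappa\ln(\sqrt{1-\epsilon^2}/\epsilon) + 2.78(\ln\kappa)^3 + 3.17$.

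The main obstacle is not the asymptotics but the constant accounting. The leading $56.0\kappa$ emerges from composing the sign-function polynomial with the amplification polynomial against the worst-case spectral-gap bound $1/\kappa$; a small loss in any one of these factors multiplies the prefactor by orders of magnitude. Concretely, the delicate substeps are (i) choosing $\bar{Q}$ so that the block-encoding normalization stays exactly at $\alpha$ rather than $\sqrt{2}\alpha$ or $2\alpha$; (ii) replacing the asymptotic sign-function series by a Chebyshev-accelerated Remez polynomial to sharpen the logarithmic coefficient; and (iii) bounding the rounding error in phase-factor computation so that it only contributes to the sub-leading $(\ln\kappa)^3$ term. These constant-tracking tasks, rather than any new deep structural result, are where I expect the bulk of the effort to lie.
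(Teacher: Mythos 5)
This statement is not proved in the paper at all: it is imported verbatim (as ``Theorem 4 in Ref.~\cite{dalzell2024shortcut}'') and used as a black box, so there is no internal proof to compare your reconstruction against. What the paper does reveal about the cited construction, in its end-to-end-complexity section and Algorithm~3, is enough to see that your route diverges from the actual one in a way that matters for the stated constants. The shortcut QLSS does not use a Hermitian dilation plus sign-function eigenstate filtering plus fixed-point amplitude amplification; it builds the augmented operator $G_t = \left(I-\tfrac{1}{2}(\ket{b}+\ket{e_m})(\bra{b}+\bra{e_m})\right)\left(A+t^{-1}\ket{e_m}\bra{e_n}\right)$ with a \emph{randomly guessed norm} $t\approx\lVert x\rVert$, applies a repeat-until-success ``kernel reflection'' QSVT with the explicit Chebyshev polynomial $K_{\kappa^{-1},l}$, and then a repeat-until-success ``kernel projection'' with $F_{\kappa^{-1},l'}$. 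Your pipeline is closer to the Lin--Tong / Costa-et-al.\ eigenstate-filtering lineage than to the shortcut itself.

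The concrete gap is the missing norm search. Your claim that filtering the seed state $\ket{0}\otimes\ket{b}$ yields an overlap with the solution ``bounded away from zero uniformly in $\kappa$'' is false without knowledge of $\lVert A^{-1}b\rVert$: the overlap of the augmented system's kernel state with the solution block depends on the ratio $t/\lVert x\rVert$, which is unknown a priori and can make the success probability arbitrarily small. Handling this is precisely what produces the repeat-until-success structure and the $2.78(\ln\kappa)^3$ term, which you instead misattribute to finite-precision phase-factor arithmetic --- a classical preprocessing cost that does not enter the query count. Relatedly, the leading constants $56.0$ and $1.05$ arise from the expected number of reflection/projection rounds over the norm-search distribution combined with the degrees $l,l'$ in Eqs.~\eqref{aeq:l} and \eqref{aeq:lkp}, not from composing a sign polynomial with a fixed-point-amplification polynomial; your accounting scheme would not reproduce them. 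Your observation about the origin of the factor of two in the $U_b$ query count is correct in spirit, since each query to the block-encoding of $G_t$ consumes one $U_A$-type call and two $U_b$-type calls through the projector $(\ket{b}+\ket{e_m})(\bra{b}+\bra{e_m})$.
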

\noindent In a typical quantum linear solver,  there are usually two kinds of input models to translate classical data into quantum information. The first input model, state preparation, prepares an amplitude-encoded quantum state for a real-valued vector.
Formally speaking, we have the following definition of state preparation:
\begin{adefinition}[{State Preparation}]
     Given an $2^n$-length vector b, the state preparation subroutine $U_b$ is an $(n+a)$-qubit unitary operator acting on initial state $\ket{0}^{\otimes (n+a)}$ to prepare the amplitude-encoded state
    \begin{equation}
        U_b\ket{0}^{\otimes (n+a)}=\frac{1}{\mathcal{N}_b}\ket{0}^{\otimes a}_\mathrm{anc}\otimes(\sum_{j=0}^{2^n-1}b_j\ket{j}),
    \end{equation}
    where $\mathcal{N}_b=\sqrt{\sum_{j=0}^{2^n-1}b_j^2}$ is the normalization constant.
\end{adefinition}
\noindent Preparing a general quantum state is proven hard as that exponentially growing quantum resource is consumed~\cite{zhang2022quantum,sun2023asymptotically,gui2024spacetime}.
It has been proven that an efficient implementation of state preparation in $\mathcal{O}(n)$ time preserves the potential for exponential speedup, yet necessitates $\Omega{(2^n)}$ qubits~\cite{zhang2022quantum, gui2024spacetime}.
Araujo et. al. introduce $\mathcal{O}(2^n)$ ancillary qubits to store the 'pre-rotated' angles for state preparation~\cite{araujo2021divide}. 
Clader et. al. utilize a flag mechanism to uncompute the garbage~\cite{clader2022quantum}.
A parallel copy and swap mechanism is proposed to also reduce the Clifford depth to $\mathcal{O}(n)$,
and the overall consumption is quantitatively characterized by spacetime allocation, the sum of the active-qubit number in each layer~\cite{gui2024spacetime}.
The second input model, block-encoding, implements a unitary operator to encode an arbitrary matrix as its upper left block.
Formally speaking, we have the following definition of block-encoding of a matrix:
\begin{adefinition}[{Block-Encoding}]
     Given an $2^n\times2^n$ matrix $A$, the $(n+a)$-qubit unitary operator $U_A$ is an $(\alpha, a, 0)$-block-encoding of $A$ if
    \begin{equation}
        U_A = \begin{pmatrix}
            \frac{A}{\alpha} & *\\
            * & * 
        \end{pmatrix}_{2^{n+a}\times2^{n+a}},
    \end{equation}
    where $\alpha$ is the subnormalization constant and $a$ is the ancillary qubit number.
\end{adefinition}
\noindent To the best of our knowledge, to block-encode a general matrix is costly as $\mathcal{O}(2^{2n})$ qubits and gates are required to block-encode a $2^{n}\times 2^{n}$ matrix ~\cite{clader2022quantum}.
Further, a simple calculation shows that the quantum resource can not be significantly improved:
preparing an arbitrary $n$-qubit quantum state with $a$ ancillary qubits by the operator $U_\mathrm{prepare}\ket{0}^{\otimes (n+a)} = \ket{\psi}\ket{0}^{\otimes a}$ can be viewed as a $(1, a, 0)$ block-encoding of some matrix 
\begin{equation*}
    \begin{pmatrix}
    \psi_0 & \psi_1 & \cdots & \psi_{2^n-1}\\
    * & * & \cdots & * \\
    \vdots & \vdots & \ddots & \vdots\\
    * & * & \cdots & *
\end{pmatrix}_{2^{n}\times 2^{n}},
\end{equation*}
and hence the block-encoding spacetime allocation is also lower bounded by $\Omega(2^n)$.
Besides input models, a typical QLSS also utilizes a quantum state tomography to extract the classical information from the QLSS's output states. It is proven that at least $\Tilde{\mathcal{O}}(d/\epsilon^2)$ copies of states or $\Tilde{\mathcal{O}}(d/\epsilon)$ queries of oracles have to be utilized to derive an estimation of a $D$-dimensional state with $l_2$-error bounded by $\epsilon$. Consequently, even with exponentially many qubits, the time complexity still grows linearly with system size $2^n$ when the solution is dense~\cite{dalzell2023end}.

The quantum resource and time overhead can be even worse for practical scenarios, such as computational fluid dynamics and structural mechanics, wherein the underlying dynamics can be non-linear or ill-conditioned, and an iterative method is introduced.
Formally, we study 
\begin{adefinition}[Iterative linear system]
    An iterative linear system $(\overrightarrow{A}, \overrightarrow{b}, \tau)$ is a sequence of linear systems $A^{(k)}W^{(k)}=b^{(k)} (0\leq k<\tau)$, where $A^{(0)}$ and $b^{(0)}$ are initial value and $A^{(k+1)}$ and $b^{(k+1)}$ are uniquely determined by the $k$-th solution $W^{(k)}$.
\end{adefinition}
\noindent This iterative system can be solved by an end-to-end hybrid quantum algorithm defined as 
\begin{adefinition}[Iterative quantum linear system model]
    Suppose a $\tau$-cycle iterative linear system $(\overrightarrow{A}, \overrightarrow{b}, \tau)$,
    then an iterative quantum linear system model consists of, for each iteration, (1) an encoding process to block encode $A^{(k+1)}$ and prepare $b^{(k+1)}$ $\mathcal{E}:{\Tilde{W}}^{(k)}\xrightarrow{}\mathcal{O}_A^{(k+1)}, \mathcal{O}_A^{(k+1)}$, (2) a quantum linear system solver $U_\mathrm{QLSS}: \mathcal{O}_A^{(k+1)}, \mathcal{O}_b^{(k+1)}\xrightarrow{} {W^{(k+1)}} =\mathcal{U}_\mathrm{QLSS}^{(k+1)}\ket{0}\bra{0}\mathcal{U}_\mathrm{QLSS}^{\dagger(k+1)}$, and (3) a decoding process to $\mathcal{D}:{W^{(k+1)}}\xrightarrow{}{\Tilde{W}}^{(k+1)}=\mathrm{Tomo}\left[U_\mathcal{D}W^{(k+1)}U_\mathcal{D}^\dagger\right]$.
\end{adefinition}
\noindent Note that we consider a more general model rather than the typical iterative quantum linear system, where $\mathcal{D}$ is a quantum state tomography protocol and ${\Tilde{W}}^{(k)}$ is indeed a classical vector estimation of $W^{(k+1)}$.
Herein, we allow an additional decoding process to also take any potential efficient \textit{quantum post-process} method into consideration.
In this model, we quantitatively characterize the restriction between quantum and classical information, including both input and output processes, by a uniform definition
\begin{adefinition}[Iteration-tolerant bandwidth]\label{def:bandwidth}
    Given a $\tau$-iteration linear system (A, b), we define its iteration-tolerant bandwidth to be the minimal supremum of complex dimensionality of $\Tilde{W}^{(k)}_{\mathcal{E},\mathcal{D}}$ over all possible encoding and decoding pairs $(\mathcal{E},\mathcal{D})$ so that the final solution $W^{(\tau)}_{\mathcal{E},\mathcal{D}}$'s $l_2$-error is bounded by $\epsilon$:
    \begin{equation}
        \mathcal{S}(A, b, \tau) = \min_{\mathcal{E},\mathcal{D}}\sup_{0\leq k<\tau}\left\{ \mathrm{dim}_\mathbb{C}(\Tilde{W}^{(k)}_{\mathcal{E},\mathcal{D}}) \middle| \lvert W^{(\tau)}_{\mathcal{E},\mathcal{D}} - W_\mathrm{ideal} \rvert<\epsilon \right\}
    \end{equation}
    wherein the subscripts of $\Tilde{W}$ is for the specified encoding-and-decoding pair and $W_\mathrm{ideal}$ is the normalized target solution. 
\end{adefinition}
\noindent We emphasize that the iteration-tolerant bandwidth is indeed defined for an iterative linear system $(A, b, \tau)$ rather than for a specified iterative QLSS algorithm $(\mathcal{E}, U_\mathrm{QLSS}, \mathcal{D})$. Moreover, while Definition~\ref{def:bandwidth} is defined in the quantum computing context, it is indeed equivalent to the following \textit{classical-only} definition:
\begin{adefinition}[Efficient Spectral Sparsity]
    Given a $\tau$-iteration linear system (A, b), we define its efficient spectral sparsity to be the minimal supremum of spectral sparsity of ${W}^{(k)}$ over all possible spectral basis $\mathcal{B}$ so that the final solution $W^{(\tau)}$'s $l_2$-error is bounded by $\epsilon$:
    \begin{equation}
        \mathcal{S}_\mathrm{spectral}(A, b, \tau) = \min_{\mathcal{B}}\sup_{0\leq k<\tau}\left\{ \left\lVert\mathrm{Spec}_{\mathcal{B}}({W}^{(k)})\right\rVert_0 \middle| \lvert W^{(\tau)}_{\mathcal{B}} - W_\mathrm{ideal} \rvert<\epsilon \right\}
    \end{equation}
    wherein $\mathrm{Spec}_{\mathcal{B}}(\cdot)$ is the $\mathcal{B}$-basis spectrum, $\left\lVert\cdot\right\rVert_0$ is the $l_0$-norm. 
\end{adefinition}
\noindent The equivalence is formulated as:
\begin{alemma}[Equivalance of efficient iteration-tolerant bandwidth and efficient spectral sparsity]
    Given a $\tau$-iteration linear system (A, b), we have:
    \begin{equation}
        \mathcal{S}(A, b, \tau) = \mathcal{S}_\mathrm{spectral}(A, b, \tau).
    \end{equation}
\end{alemma}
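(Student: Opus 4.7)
The plan is to establish the equivalence by proving the two inequalities $\mathcal{S} \leq \mathcal{S}_\mathrm{spectral}$ and $\mathcal{S}_\mathrm{spectral} \leq \mathcal{S}$ via explicit, mutually inverse constructions between admissible spectral bases and admissible encoding-decoding pairs. The forward inequality is the easy direction. I would fix an optimal spectral basis $\mathcal{B}$ that attains $s := \mathcal{S}_\mathrm{spectral}$ together with its distinguished $s$-element support, and let $U_\mathcal{B}$ be the unitary change-of-basis from the computational basis to $\mathcal{B}$. Setting $U_\mathcal{D} := U_\mathcal{B}^\dagger$ and defining the tomography step to read off only the $s$ canonical amplitudes on the support of $\mathrm{Spec}_\mathcal{B}(W^{(k)})$ yields $\tilde{W}^{(k)} \in \mathbb{C}^s$ that stores exactly the $s$ nonvanishing spectral coefficients. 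The encoding $\mathcal{E}$ inverts this by classically applying $U_\mathcal{B}$ on a sparse vector to reconstruct $W^{(k)}$ and then rebuilding the oracles for $A^{(k+1)},b^{(k+1)}$; since the spectral truncation already meets the $\epsilon$ error bound by assumption, the constructed pair is admissible in Definition~\ref{def:bandwidth}, so $\mathcal{S} \leq s$.

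For the reverse inequality $\mathcal{S}_\mathrm{spectral} \leq \mathcal{S}$, I would begin with an optimal pair $(\mathcal{E}^*,\mathcal{D}^*)$ achieving $s := \mathcal{S}$ and extract from $\mathcal{D}^*$ a spectral basis whose iterate sparsity is at most $s$. The decoding first applies $U_{\mathcal{D}^*}$ and then returns an $s$-dimensional classical vector through tomography. I would argue that, up to absorbing reversible classical post-processing and redefining $U_{\mathcal{D}^*}$, one may take the tomography step to be the canonical measurement of the first $s$ amplitudes of $U_{\mathcal{D}^*} W^{(k)} U_{\mathcal{D}^*}^\dagger$; then this post-decoding state is supported on the coordinate subspace $V := \mathrm{span}(|0\rangle,\dots,|s-1\rangle)$ up to error $\epsilon$. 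Pulling back, $W^{(k)}$ lies in $U_{\mathcal{D}^*}^\dagger V$, and choosing $\mathcal{B}$ to be any orthonormal basis aligned with $U_{\mathcal{D}^*}^\dagger V$ (completed arbitrarily on its orthogonal complement) certifies $\|\mathrm{Spec}_\mathcal{B}(W^{(k)})\|_0 \leq s$ uniformly in $k$, while the end-to-end error bound is preserved by construction, giving $\mathcal{S}_\mathrm{spectral} \leq s$.

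The main obstacle is the canonical-form reduction in the reverse direction: one must justify rigorously that any tomographic protocol extracting $s$ complex numbers can be refactored into an equivalent protocol consisting of a unitary rotation followed by $s$ amplitude readouts, without enlarging the classical dimension or violating the convergence constraint. The cleanest route is to exploit the linearity of quantum measurement statistics, showing that the $s$ classical outputs are linear functionals of the density operator and therefore pick out an $s$-dimensional subspace of observable directions that can be unitarily aligned with a coordinate block. A secondary technical point is the transfer of the $\epsilon$-error budget between the two formulations; since $U_\mathcal{D}$ is an isometry and the truncation/tomography errors both act on amplitudes in the same subspace, a direct triangle-inequality argument suffices, so once the canonical-form reduction is established the remaining bookkeeping is routine.
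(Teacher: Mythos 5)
Your proposal is correct and follows essentially the same route as the paper: both directions are handled by explicit constructions that convert an admissible spectral basis into an encoding--decoding pair (via a change-of-basis unitary for $U_\mathcal{D}$) and, conversely, pull the $s$-dimensional support of $\tilde{W}^{(k)}$ back through $U_{\mathcal{D}}^\dagger$ to obtain a basis certifying the sparsity bound. The one place you go beyond the paper is the ``canonical-form reduction'' in the reverse direction, which you correctly flag as the delicate step; the paper simply asserts that $\mathrm{dim}_\mathbb{C}(\tilde{W}^{(k)})\leq \mathcal{S}(\mathcal{E},\mathcal{D})$ forces $\tilde{W}^{(k)}$ into a fixed $\mathcal{S}(\mathcal{E},\mathcal{D})$-dimensional subspace, whereas your linear-functional argument is the justification that assertion implicitly relies on.
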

\begin{proof}
    On the one hand, for an arbitrary encoding and decoding pair $(\mathcal{E},\mathcal{D})$, we consider the corresponding supereme
    \begin{equation}
        \mathcal{S}(\mathcal{E},\mathcal{D}) = \sup_{0\leq k<\tau}\left\{ \mathrm{dim}_\mathbb{C}(\Tilde{W}^{(k)}_{\mathcal{E},\mathcal{D}}) \middle| \lvert W^{(\tau)}_{\mathcal{E},\mathcal{D}} - W_\mathrm{ideal} \rvert<\epsilon \right\}.
    \end{equation}
    By definition, we have that 
    \begin{equation}
        \mathrm{dim}_\mathbb{C}(\Tilde{W}^{(k)}_{\mathcal{E},\mathcal{D}}) \leq \mathcal{S}(\mathcal{E},\mathcal{D})
    \end{equation}
    holds for all $0\leq k<\tau$.
    Consequently, there are orthogonal vectors $\{\Tilde{w}_1, \cdots, \Tilde{w}_{\mathcal{S}(\mathcal{E},\mathcal{D})}\}$ so that  $\Tilde{W}^{(k)}$ lies in the subspace spanned by them.
    Since $U_\mathcal{D}$ is a unitary, ${W}^{(k)}$ lies in the subspace spanned by linear-independent vectors 
    \begin{equation}
        \{U_\mathcal{D}^\dagger \Tilde{w}_1 U_\mathcal{D}, \cdots, U_\mathcal{D}^\dagger \Tilde{w}_{\mathcal{S}(\mathcal{E},\mathcal{D})} U_\mathcal{D}  \}.
    \end{equation}
    We can extend this to a complete basis $\mathcal{B}$ so that
    \begin{equation}
        \left\lVert\mathrm{Spec}_{\mathcal{B}}({W}^{(k)})\right\rVert_0 \leq \mathcal{S}(\mathcal{E},\mathcal{D})
    \end{equation}
    also holds or all $0\leq k<\tau$, and so does their supereme.
    By the definition of efficient spectral sparsity, we have that
    \begin{equation}
        \mathcal{S}_\mathrm{spectral}(A, b, \tau) \leq \sup_{0\leq k<\tau}\left\{ \left\lVert\mathrm{Spec}_{\mathcal{B}}({W}^{(k)})\right\rVert_0 \middle| \lvert W^{(\tau)}_{\mathcal{B}} - W_\mathrm{ideal} \rvert<\epsilon \right\} \leq \mathcal{S}(\mathcal{E},\mathcal{D}).
    \end{equation}
    Due to the arbitrariness of the choice of $(\mathcal{E},\mathcal{D})$, we have that
    \begin{equation}
        \mathcal{S}_\mathrm{spectral}(A, b, \tau) \leq \min_{\mathcal{E},\mathcal{D}} \mathcal{S}(\mathcal{E},\mathcal{D}) = \mathcal{S}(A, b, \tau).
    \end{equation}
    On the other hand, for an arbitray basis $\mathcal{B}$, we consider the following supereme
    \begin{equation}
        \mathcal{S}_\mathrm{Spec}(\mathcal{B})=\sup_{0\leq k<\tau}\left\{ \left\lVert\mathrm{Spec}_{\mathcal{B}}({W}^{(k)})\right\rVert_0 \middle| \lvert W^{(\tau)}_{\mathcal{B}} - W_\mathrm{ideal} \rvert<\epsilon \right\}.
    \end{equation}
    Henceforth, there are orthogonal states in $\mathcal{B} = \mathrm{span}\{w_1, \cdots, w_{\mathcal{S}_\mathrm{spectral}(\mathcal{B})}, \cdots\}$ so that  ${W}^{(k)}$ lies in the subspace spanned by the first $\mathcal{S}_\mathrm{spectral}(\mathcal{B})$ basis. 
    Then the encoding process $\mathcal{E}$ can be constructed by a block-encoding unitary and a state preparation unitary for $\mathcal{B}$, for example, by a polynomial or Chebyshev expansion as in the later sections.
    By applying the basis transformation unitary
    \begin{equation}
        U_\mathcal{D} = \sum_{j} \ket{j}\bra{w_j},
    \end{equation}
    we can also derive the corresponding output states $\Tilde{W}^{(k)}$. It is easy to verify that their dimensionality and their supremum are also bounded by 
    \begin{equation}
        \mathrm{dim}_\mathbb{C}(\Tilde{W}^{(k)}) \leq \mathcal{S}_\mathrm{spectral}(\mathcal{B}) (0\leq k< \tau).
    \end{equation}
    By definition of efficient iteration-tolerant bandwidth, we have that
    \begin{equation}
        \mathcal{S}(A, b, \tau) \leq \sup_{0\leq k<\tau}\left\{ \mathrm{dim}_\mathbb{C}(\Tilde{W}^{(k)}_{\mathcal{E},\mathcal{D}}) \middle| \lvert W^{(\tau)}_{\mathcal{E},\mathcal{D}} - W_\mathrm{ideal} \rvert<\epsilon \right\} \leq \mathcal{S}_\mathrm{spectral}(\mathcal{B}).
    \end{equation}
    Since the choice of basis $\mathcal{B}$ is arbitrary, we have that 
    \begin{equation}
        \mathcal{S}(A, b, \tau) \leq \min_{\mathcal{B}} \mathcal{S}_\mathrm{spectral}(\mathcal{B}) = \mathcal{S}_\mathrm{spectral}(A, b, \tau).
    \end{equation}
\end{proof}
\noindent Given these definitions, we can now prove:
\begin{proof}[Proof of Theorem~\ref{thm:1}]
    We prove the first claim by contradiction and suppose that we have an iterative QLSS $(\mathcal{E}, U_\mathrm{QLSS}, \mathcal{D})$ with end-to-end time complexity significantly lower than $\Tilde{\mathcal{O}}(\tau\kappa\mathrm{Poly}(\mathcal{S}, \frac{1}{\epsilon}))$. Since in each iteration the linear system is explicitly dependent on the previous iteration's solution, the time complexity of each iteration should be significantly lower than $\Tilde{\mathcal{O}}(\kappa\mathrm{Poly}(\mathcal{S}, \frac{1}{\epsilon}))$. 
    Further, it is reported that a quantum algorithm must make at least $\Omega(\kappa\log{(1/\epsilon)})$ queries in general to solve a quantum linear system problem~\cite{costa2022optimal, dalzell2024shortcut}.
    So the sampling times for each iteration should be significantly less than
    \begin{equation}\label{aeq:contradiction}
        \mathcal{N}_\mathrm{sampling}\ll \Tilde{\mathcal{O}}(\mathrm{Poly}(\mathcal{S}, \frac{1}{\epsilon})).
    \end{equation}
    On the other hand, by the definition of iteration-tolerant bandwidth, the supremum of complex dimensionality satisfies 
    \begin{equation}
        \sup_{0\leq k<\tau}\left\{ \mathrm{dim}_\mathbb{C}(\Tilde{W}^{(k)}_{\mathcal{E},\mathcal{D}}) \middle| \lvert W^{(\tau)}_{\mathcal{E},\mathcal{D}} - W_\mathrm{ideal} \rvert<\epsilon \right\} \geq \mathcal{S}.
    \end{equation}
    Consequently, we have at least one iteration $k_0$ whose output has complex dimensionality satisfying
    \begin{equation}
        \mathrm{dim}_\mathbb{C}(\Tilde{W}^{(k_0)}_{\mathcal{E},\mathcal{D}})\geq \mathcal{S}.
    \end{equation}
    In other words, there are $\mathcal{O}(\mathcal{S})$ non-zero complex components within the random vector $\Tilde{W}^{(k_0)}$ from the tomography of the corresponding $\mathcal{O}(\mathcal{S})$-dimensional quantum state $U_\mathcal{D}W^{(k+1)}U_\mathcal{D}^\dagger$. 
    By proposition 49 of Ref.~\cite{van2023quantum}, at least $\Omega(\mathcal{S}/\epsilon^2)$ samples are required to learn such a state with error $\epsilon$, which is a contradiction to Eq.~\eqref{aeq:contradiction}.    
\end{proof}
\noindent  In some ways, the end-to-end complexity lower bound given in Theorem~\ref{thm:1} can not be significantly improved unless a new iterative QLSS model is proposed. Alternatively, we argue that one central task is to find practical problems with low iteration-tolerant $\mathcal{S}$ with corresponding end-to-end algorithm $(\mathcal{E}, U_\mathrm{QLSS}, \mathcal{D})$.

\section{End-to-End Spectral-Based Quantum Navier-Stokes Solver}

In this section, we detail the end-to-end spectral-based quantum Navier-Stokes solver.
We first review the basics of CFD and the block-encoding arithmetic utilized in this work.
Following that, we detail our technical results about hierarchy spectral block-encoding.
Equipped with this method, we show how to construct the desired block-encodings of the flux Jacobian matrix $A$ and the residual vector $b$ via the hierarchy spectral block-encoding methods.
Then we introduce our sparse spectral state tomography.
Finally, we integrate all of the technical blocks with a rigorous logical resource analysis.

\subsection{Computational Fluid Dynamics Preliminaries}
\subparagraph{Compressible Navier-Stokes Equations} 
The compressible Navier-Stokes (N-S) equations are a set of partial differential equations, consist of the continuity, momentum, and energy equations.

The continuity equation is given by
\begin{equation}\label{eq:continuity}
    \frac{\partial\rho}{\partial t}+\nabla\cdot(\rho\mathbf{v})=0,
\end{equation}
the momentum equation is
\begin{equation}\label{eq:momentum}
    \frac{\partial (\rho\mathbf{v})}{\partial t}+\nabla\cdot(\rho\mathbf{v}\otimes\mathbf{v})=\nabla\cdot\mathbf{\boldsymbol{\sigma}}+\rho\mathbf{f},
\end{equation}
and the energy equation is
\begin{equation}\label{eq:energy}
    \frac{\partial (\rho E)}{\partial t}+\nabla\cdot(\rho E\mathbf{v})=\rho\mathbf{f}\cdot\mathbf{v}+\nabla\cdot(\mathbf{\boldsymbol{\sigma}}\cdot\mathbf{v})-\nabla\cdot\mathbf{q}.
\end{equation}
In these equations, $\rho$ is the fluid density, $\mathbf{v}$ is the velocity vector, and $E$ is the specific total energy, defined as the sum of specific internal energy $e$ and kinetic energy:
\begin{equation}\label{eq:total-energy}
    E = e+ \frac{\|\mathbf{v}\|^2}{2}.
\end{equation}
Furthermore, $\mathbf{f}$ represents the specific body force, and $\mathbf{q}$ is the heat flux density, which follows Fourier's law:
\begin{equation}
    \mathbf{q}=-k\nabla T,
\end{equation}
where $k$ is the thermal conductivity and $T$ is the temperature. 
Under the calorically perfect gas assumption, the pressure $p$ and specific internal energy $e$ are given by
\begin{equation}\label{eq:thermodynamic}
p = \rho R T, \qquad
e = c_v T = \frac{R}{\gamma -1}T,
\end{equation}
where $R$ is the specific gas constant, and $\gamma$ is the heat capacity ratio, both of which are assumed to be constant.

Under the Stokes hypothesis, the stress tensor $\mathbf{\boldsymbol{\sigma}}$ is decomposed into an isotropic pressure component and a deviatoric shear stress component $\boldsymbol{\tau}$, as 
\begin{equation}
    \boldsymbol{\sigma} = -p \boldsymbol{}{I} + \boldsymbol{\tau}.
\end{equation}
Here, $p$ is the thermodynamic pressure from the equation of state, and $\mathbf{I}$ is the identity tensor. The shear stress tensor $\boldsymbol{\tau}$ is given by
\begin{equation}
\boldsymbol{\tau} = \mu \left( \nabla \mathbf{v} + (\nabla \mathbf{v})^\mathsf{T} - \frac{2}{3} (\nabla \cdot \mathbf{v}) \boldsymbol{I} \right).
\end{equation}
Here, $\mu$ is the dynamic viscosity. Its dependence on temperature is commonly described by Sutherland’s law:
\begin{equation}\label{eq:sutherland}
\mu = \mu_0 \left(\frac{T}{T_0}\right)^{3/2}\frac{T_0+T_S}{T+T_S}.
\end{equation}

To nondimensionalize these equations, we introduce the following characteristic scales: length $L$, velocity $U$, time $L/U$, density $\rho_\infty$, temperature $T_\infty$, pressure $\rho_\infty U^2$, specific energy $U^2$, speed of sound $c_\infty = \sqrt{\gamma R T_\infty}$, and dynamic viscosity $\mu_\infty$. The dimensionless variables, denoted with a superscript asterisk, are introduced as:
\begin{equation}\label{eq:nondimensionalize}
    \begin{split}
        &\mathbf{x}^* = \frac{\mathbf{x}}{L}, \quad
        \mathbf{v}^* = \frac{\mathbf{v}}{U}, \quad
        t^* = \frac{t}{L/U}, \quad
        \rho^* = \frac{\rho}{\rho_\infty}, \\
        &p^* = \frac{p}{\rho_\infty U^2}, \quad
        T^* = \frac{T}{T_\infty}, \quad
        E^* = \frac{E}{U^2}, \quad
        c^* = \frac{c}{c_\infty}, \quad
        \mu^* = \frac{\mu}{\mu_\infty}
    \end{split}
\end{equation}
Substituting Eqs.~\eqref{eq:nondimensionalize} into the governing equations (Eqs.~\eqref{eq:continuity}--\eqref{eq:energy}) and then omitting the asterisks for clarity yields the dimensionless form of the compressible NSE. In the absence of body forces $\mathbf{f}$, the equations can be written in the conservative form:
\begin{equation}
\frac{\partial W}{\partial t}=\nabla \cdot \left(\mathcal{F}_{\mu}-\mathcal{F}_C\right),
\end{equation}
where the conservative variables $W$, the convective flux $\mathcal{F}_C$, and the viscous flux $\mathcal{F}_\mu$ are defined as:
\begin{equation}\label{eq:conservative-form}
W=
\begin{bmatrix}
\rho \\
\rho \mathbf{v} \\
\rho E
\end{bmatrix},\quad
\mathcal{F}_C=
\begin{bmatrix}
\rho \mathbf{v} \\
\rho \mathbf{v} \otimes \mathbf{v} + p \boldsymbol{I}\\
(\rho E + p)\mathbf{v}
\end{bmatrix},\quad
\mathcal{F}_\mu=\frac{\mu}{\mathrm{Re}}
\begin{bmatrix}
0 \\
\boldsymbol{\tau} \\
 \boldsymbol{\tau} \cdot \boldsymbol{u}
+\frac{1}{\beta} \nabla T
\end{bmatrix}, 
\end{equation}
where $\beta = \mathrm{Pr}\cdot\mathrm{Ma}^2\cdot(\gamma-1)$.
The dimensionless parameters that arise are the Reynolds number 
\begin{equation}
    \mathrm{Re} = \frac{\rho_\infty U L}{\mu_\infty},
\end{equation}
the Mach number 
\begin{equation}
    \mathrm{Ma} = \frac{U}{c_\infty},
\end{equation}
and the Prandtl number 
\begin{equation}
    \mathrm{Pr}=\frac{\mu_\infty c_p}{\kappa} = \frac{\gamma R \mu_\infty}{(\gamma - 1) \kappa},
\end{equation}
where the thermal conductivity $\kappa$ is assumed to be constant.
Consequently, the closure relations, which correspond to Eqs.~\eqref{eq:total-energy}--\eqref{eq:sutherland}, are expressed in their dimensionless form. Omitting the asterisks for clarity, these relations are:
\begin{align}
    p &= (\gamma-1)\rho(E-\frac{1}{2}\|\mathbf{v}\|^2),\\
    T &= \gamma(\gamma-1) \mathrm{Ma}^2 (E-\frac{1}{2}\|\mathbf{v}\|^2),\\
    \mu &= T^{3/2} \frac{1+T_S/T_0}{T+T_S/T_0}. \label{eq:sutherland-dimless}
\end{align}
In Eq.~\eqref{eq:sutherland-dimless}, the characteristic values $T_\infty$ and $\mu_\infty$ have been set to the reference temperature $T_0$ and reference viscosity $\mu_0$ from Eq.~\eqref{eq:sutherland}, respectively.
\subparagraph{Spatial Discretization via Finite Volume Method}
For simplicity, we consider the 2-dimensional structured grid spatial discretization of
\begin{equation}
    N = N_x\times N_y
\end{equation}
cells that can be indexed by two coordinates as
\begin{equation}\label{eq:double_index}
    \begin{bmatrix}
        (1, 1) & (1, 2) & \cdots & (1, N_x)\\
        (2, 1) & (2, 2) & \cdots & (2, N_x)\\
        \vdots & \vdots & \ddots & \vdots\\
        (N_y, 1) & (N_y, 2) & \cdots & (N_x, N_y)
    \end{bmatrix}
\end{equation}
or be indexed by a single number as
\begin{equation}\label{eq:single_index}
    \begin{bmatrix}
        1 & 2 & \cdots & N_x\\
        N_x+1 & N_x+2 & \cdots & 2N_x\\
        \vdots & \vdots & \ddots & \vdots\\
        (N_y-1)N_x+1 & (N_y-1)N_x+2 & \cdots & N_xN_y
    \end{bmatrix}.
\end{equation}

By substituting the velocity vector $\mathbf{v}=(u,v)$ into Eqs.~\eqref{eq:conservative-form}, the governing equations are expressed as
\begin{equation}
    \frac{\partial W}{\partial t} 
    = \frac{\partial \mathcal{F}_\mu}{\partial x} + \frac{\partial \mathcal{G}_\mu}{\partial y} - \frac{\partial \mathcal{F}_C}{\partial x} - \frac{\partial \mathcal{G}_C}{\partial y}.
\end{equation}
The state vector $W$, the convective fluxes $\mathcal{F}_C$ and $\mathcal{G}_C$ are expanded as
\begin{equation}
    W = 
    \begin{bmatrix}
        \rho \\
        \rho u \\
        \rho v \\
        \rho E
    \end{bmatrix},\quad
    \mathcal{F}_C = 
    \begin{bmatrix}
        \rho u \\
        \rho (u^2+p) \\
        \rho uv \\
        (\rho E + p)u
    \end{bmatrix}, \quad
    \mathcal{G}_C = 
    \begin{bmatrix}
        \rho v \\
        \rho uv \\
        \rho (v^2+p) \\
        (\rho E + p)v
    \end{bmatrix},
\end{equation}
the viscous fluxes $\mathcal{F}_\mu$ and $\mathcal{G}_\mu$ are expanded as
\begin{equation}
    \mathcal{F}_\mu = \frac{\mu}{\mathrm{Re}}
    \begin{bmatrix}
        0 \\
        \frac{2}{3}(2\frac{\partial u}{\partial x} -\frac{\partial v}{\partial y})\\
        \frac{\partial u}{\partial y} + \frac{\partial v}{\partial x} \\
        \frac{2}{3}u(2\frac{\partial u}{\partial x} -\frac{\partial v}{\partial y}) + v(\frac{\partial u}{\partial y} + \frac{\partial v}{\partial x})+\frac{1}{\beta}\frac{\partial T}{\partial x}
    \end{bmatrix}, \quad
    \mathcal{G}_C = \frac{\mu}{\mathrm{Re}}
    \begin{bmatrix}
        0 \\
        \frac{\partial u}{\partial y} + \frac{\partial v}{\partial x} \\
        \frac{2}{3}(2\frac{\partial u}{\partial x} -\frac{\partial v}{\partial y})\\
        u(\frac{\partial u}{\partial y} + \frac{2}{3}v(2\frac{\partial u}{\partial x} -\frac{\partial v}{\partial y}) + \frac{\partial v}{\partial x})+\frac{1}{\beta}\frac{\partial T}{\partial y}
    \end{bmatrix}.
\end{equation}

For the flux evaluation, we use piecewise linear approximations of the solution in each cell, which results in a central difference scheme that is second-order accurate in space. The resulting approximations for the flux derivatives in the $x$ and $y$ directions are given as
\begin{align}
    \left(\frac{\partial\mathcal{F}}{\partial x}\right)_I &\approx \frac{\mathcal{F}_{I+\frac{1}{2}} - \mathcal{F}_{I-\frac{1}{2}}}{\Delta x} = \frac{\frac{\mathcal{F}_{I+1}+\mathcal{F}_{I}}{2}-\frac{\mathcal{F}_{I}+\mathcal{F}_{I-1}}{2}}{\Delta x} = \frac{\mathcal{F}_{I+1}-\mathcal{F}_{I-1}}{2 \Delta x}, \\
    \left(\frac{\partial\mathcal{G}}{\partial y}\right)_J &\approx \frac{\mathcal{G}_{J+\frac{1}{2}} - \mathcal{G}_{J-\frac{1}{2}}}{\Delta y} = \frac{\frac{\mathcal{G}_{J+1}+\mathcal{G}_{J}}{2}-\frac{\mathcal{G}_{J}+\mathcal{G}_{J+1}}{2}}{\Delta y} = \frac{\mathcal{G}_{J+1}-\mathcal{G}_{J-1}}{2 \Delta y},
\end{align}
where the total flux in $x$ direction $\mathcal{F}=\mathcal{F}_\mu-\mathcal{F}_C$ and the total flux in $y$ direction $\mathcal{G}=\mathcal{G}_\mu - \mathcal{G}_C$. By substituting these approximations into the governing equation, we obtain the semi-discretized equation for cell $(I,J)$:
\begin{equation}\label{eq:residual}
    \begin{split}
        \frac{\partial W_{I, J}}{\partial t} 
    &= \frac{\mathcal{F}_{I+1, J}-\mathcal{F}_{I-1, J}}{2 \Delta x}+\frac{\mathcal{G}_{I, J+1}-\mathcal{G}_{I, J-1}}{2 \Delta y}\\
    &= \left(\frac{\mathcal{F}_{x+1}-\mathcal{F}_{x-1}}{2 \Delta x}+\frac{\mathcal{G}_{y+1}-\mathcal{G}_{y-1}}{2 \Delta y}\right)(W_{I, J})\\
    &= R(W_{I, J})
    \end{split}
\end{equation}

The right-hand side of the equation is the residual vector $R(W_{I,J})$, which is a vector function of the conservative variables $W_{I,J}$.

\subparagraph{Temporal Discretization via Implicit Euler Method}
The semi-discretized equation is advanced in time using the first-order implicit Euler method, given by:
\begin{equation}
\frac{W^{n+1}-W^{n}}{\Delta t} = R(W^{n+1})
\end{equation}
where $W^n$ is the solution at time step $n$, $\Delta t$ is the time step size, and $R(W)$ is the residual (or spatial operator).

To solve this nonlinear equation for $W^{n+1}$, we linearize the residual term $R(W^{n+1})$ by applying a Taylor series expansion around the known state $W^n$:
\begin{equation}
R(W^{n+1}) \approx R(W^{n})+\frac{\partial R(W^n)}{\partial W} (W^{n+1}-W^{n})
\end{equation}
where higher-order terms are neglected.

Substituting the linearized residual back into the implicit Euler equation and rearranging yields a linear system for the state update $\Delta W = W^{n+1}-W^{n}$:
\begin{equation}
\left[\frac{I}{\Delta t}-\frac{\partial R(W^n)}{\partial W} \right]\Delta W = R(W^n)
\end{equation}
This system must be solved at each time step to find the solution $W^{n+1}$.
The flux Jacobian matrix \begin{equation}
    J=\frac{\partial R}{\partial W}
\end{equation} measures the sensitivity of the residual with respect to the conservative variables. The convective Jacobian block relating cell $(I',J')$ to the residual at cell $(I,J)$ is derived from \eqref{eq:residual} as follows:  
\begin{equation}
    \begin{split}
        &\left.(J_C)_{(I,J),(I',J')} \triangleq \frac{\partial (R_C)_{I,J}}{\partial W_{I',J'}}\right. \\
        =& \left.\frac{\partial }{\partial W_{I', J'}}\middle( \frac{\mathcal{F}_{I, J} + \mathcal{F}_{I+1, J}}{2\Delta x} - \frac{\mathcal{F}_{I-1, J} + \mathcal{F}_{I, J}}{2\Delta x} + \frac{\mathcal{G}_{I, J} + \mathcal{G}_{I, J+1}}{2\Delta y} - \frac{\mathcal{G}_{I, J-1} + \mathcal{G}_{I, J}}{2\Delta y} \right) \\
        =& \frac{1}{2}\left(\frac{1}{\Delta x} \frac{\partial \mathcal{F}_{I+1, J}}{\partial W_{I', J'}} -\frac{1}{\Delta x} \frac{\partial \mathcal{F}_{I-1, J}}{\partial W_{I', J'}} + \frac{1}{\Delta y}\frac{\partial \mathcal{F}_{I, J+1}}{\partial W_{I', J'}} - \frac{1}{\Delta y} \frac{\partial \mathcal{F}_{I, J-1}}{\partial W_{I', J'}}\right)
    \end{split}
\end{equation}
For the 2-dimension case where each cell has at most $4$ neighboring cells (left, right, up, and down), this can be further simplified to be 
\begin{equation}
    \begin{split}
        & \frac{\delta_{I+1, I'}\delta_{J, J'}}{2\Delta x}\frac{\partial \mathcal{F}}{\partial W}\left(W_{I+1, J}\right) - \frac{\delta_{I-1, I'}\delta_{J, J'}}{2\Delta x}\frac{\partial \mathcal{F}}{\partial W}\left(W_{I-1, J}\right)\\
        &+\frac{\delta_{I, I'}\delta_{J+1, J'}}{2\Delta y}\frac{\partial \mathcal{G}}{\partial W}\left(W_{I, J+1}\right) - \frac{\delta_{I, I'}\delta_{J-1, J'}}{2\Delta y}\frac{\partial \mathcal{G}}{\partial W}\left(W_{I, J-1}\right).
    \end{split}
\end{equation}
Therein the Kronecker delta function $\delta_{I, J}$ induces a four-ban structured convective Jacobian matrix of size $4N\times4N$.
\begin{widetext}
    \begin{equation}\label{eq:flux_jacobian_matrix}
        \begin{split}
            J_{C} =
         \begin{pmatrix}
            0 & \left.\frac{\partial \mathcal{F}}{\partial W}\right|_{1,2} & 0& \cdots & \left.\frac{\partial \mathcal{G}}{\partial W}\right|_{2, 1} & 0 & \cdots & 0 \\
            -\left.\frac{\partial \mathcal{F}}{\partial W}\right|_{1,1} & 0 & \left.\frac{\partial \mathcal{F}}{\partial W}\right|_{1,3} & 0 & \cdots & \left.\frac{\partial \mathcal{G}}{\partial W}\right|_{2, 2} & \ddots&\vdots\\
            0 & -\left.\frac{\partial \mathcal{F}}{\partial W}\right|_{1,2} & \ddots & \ddots & \ddots & \ddots & \ddots&0\\
            \vdots & 0 & \ddots &\ddots&\ddots&\ddots&\ddots&\left.\frac{\partial \mathcal{G}}{\partial W}\right|_{N_y,N_x}\\
            -\left.\frac{\partial \mathcal{G}}{\partial W}\right|_{1, 1} & \vdots & \ddots & \ddots & \ddots & \ddots &\ddots & 0\\
            0& -\left.\frac{\partial \mathcal{G}}{\partial W}\right|_{1, 2}&\ddots&\ddots&\ddots&\ddots&\ddots&\vdots\\
            \vdots & \ddots & \ddots & \ddots & \ddots & \ddots & \ddots & \left.\frac{\partial \mathcal{F}}{\partial W}\right|_{N_y,N_x}\\
            0 & \cdots & 0 & -\left.\frac{1}{\Delta y}\frac{\partial \mathcal{G}}{\partial W}\right|_{N_y-1, N_x} & 0 &\cdots&-\left.\frac{\partial \mathcal{F}}{\partial W}\right|_{N_y,N_x-1} & 0
        \end{pmatrix}
        \end{split}
    \end{equation}
\end{widetext}

Each block is a $4\times4$ submatrix, where the spatial scaling factors $1/(2\Delta x)$ for $\mathcal{F}$ and $1/(2\Delta y)$ for $\mathcal{G}$ are omitted for clarity.  
While we showcase the 2-dimension result for simplicity, a generalization to the $D$-dimension case can be easily checked with $2D$-ban $(3+D)N\times(3+D)N$-shape convective flux Jacobian.

Go one step further from block-wise to element-wise level, and note that $\frac{\partial \mathcal{F}}{\partial W}$ and $\frac{\partial \mathcal{G}}{\partial W}$ are the horizontal and vertical flux Jacobian matrices concerning the conservative variables $W$ computed as
\begin{equation}\label{appendix_eq:horizontal_jacobian}
    \frac{\partial \mathcal{F}_C}{\partial W}=
    \begin{pmatrix}
        0 & 1 & 0 & 0\\
        \frac{\gamma-1}{2}(u^2+v^2)-u^2&(3-\gamma)u & (1-\gamma)v&\gamma-1\\
        -uv & v & u & 0\\
        \left(\frac{\gamma-1}{2}(u^2+v^2)-H\right)u & H+(1-\gamma)u^2&(1-\gamma)uv&\gamma u
    \end{pmatrix}
\end{equation}
and
\begin{equation}\label{appendix_eq:vertical_jacobian}
    \frac{\partial \mathcal{G}_C}{\partial W}=
    \begin{pmatrix}
        0 & 0 & 1 & 0\\
        -uv & v & u & 0\\
        \frac{\gamma-1}{2}(u^2+v^2)-v^2&(1-\gamma)u & (3-\gamma)v&\gamma-1\\
        \left(\frac{\gamma-1}{2}(u^2+v^2)-H\right)v &(1-\gamma)uv& H+(1-\gamma)v^2&\gamma v
    \end{pmatrix},
\end{equation}
respectively, where $H = \gamma E - \frac{\gamma-1}{2} (u^2 + v^2)$
Here the subscripts denoting cell index have been omitted for simplicity.
The viscous Jacobian matrix is approximated in a simplified manner by filling each block on the block-diagonal with the spectral radius of the Jacobian, thereby enhancing diagonal dominance. Thus, we obtain
\begin{equation}\label{aeq:A}
    \begin{split}
        A &= \frac{I}{\Delta t} - J_\mu + J_C \\
        &= 
        \left(
         \begin{array}{cccccc}
            I+\left.\frac{\mu}{\rho}\right|_{1,1} & \left.\frac{\partial \mathcal{F}_C}{\partial W}\right|_{1,2} & & \left.\frac{\partial \mathcal{G}_C}{\partial W}\right|_{2, 1} & & -\left.\frac{\partial \mathcal{F}_C}{\partial W}\right|_{N_y,N_x} \\
            -\left.\frac{\partial \mathcal{F}_C}{\partial W}\right|_{1,1} & I+\left.\frac{\mu}{\rho}\right|_{1,2} & \left.\frac{\partial \mathcal{F}_C}{\partial W}\right|_{1,3} & & \ddots & \\
             & -\left.\frac{\partial \mathcal{F}_C}{\partial W}\right|_{1,2} & \ddots & \ddots & & \left.\frac{\partial \mathcal{G}_C}{\partial W}\right|_{N_y,N_x} \\
            -\left.\frac{\partial \mathcal{G}_C}{\partial W}\right|_{1, 1} &  & \ddots & \ddots & \ddots & \\
            & \ddots & & \ddots & \ddots & \left.\frac{\partial \mathcal{F}_C}{\partial W}\right|_{N_y,N_x} \\
            \left.\frac{\partial \mathcal{F}_C}{\partial W}\right|_{1,1} & & -\left.\frac{\partial \mathcal{G}_C}{\partial W}\right|_{N_y-1, N_x} & & -\left.\frac{\partial \mathcal{F}_C}{\partial W}\right|_{N_y,N_x-1} & I+\left.\frac{\mu}{\rho}\right|_{N_y, N_x}
        \end{array}
        \right),
    \end{split}
\end{equation}

\subsection{Block-encoding Preliminaries}

In this work, we utilize the following two lemmas to implement the arithmetic of block-encoding matrices. The first one is the linear combination of block-encoded matrices. Since the original lemma in Ref.~\cite{gilyen2019quantum} can not be directly applied to our case, we state the following modified version for self-consistency:
        \begin{alemma}[Linear combination of block-encoding matrices, modified from Lemma 52 of \cite{gilyen2019quantum}]\label{lem:lcu}
            Let $A=\sum_{j=0}^{m-1}y_jA_j$ be an $n$-qubits operator and $\epsilon>0$. Suppose that $A_j$ can be $(\alpha_j, a, \epsilon_j)$-block-encoded and $\Tilde{\beta} = \sum_{j=0}^{m-1}\lvert y_j\alpha_j\rvert$. Further suppose the pair of unitaries $(P_L, P_R)$ such that $P_L\ket{0}^{\otimes b} = \sum_{j=0}^{2^b-1}c_j\ket{j}$, $P_R\ket{0}^{\otimes b} = \sum_{j=0}^{2^b-1}d_j\ket{j}$, $\sum_{j=0}^{m-1}\left\lvert y_j\alpha_j - \Tilde{\beta} (c_j^*d_j) \right\rvert\leq\delta$, and $c_j^*d_j=0$ for $j\geq m$.
            Then one can implement a $\left( \Tilde{\beta}, a+b, \delta + \sum_{j=0}^{m-1}\lvert y_j \rvert\epsilon_j \right)$-block-encoding of $A$.
        \end{alemma}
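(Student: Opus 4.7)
The plan is to adapt the standard LCU (linear-combination-of-unitaries) construction, accounting for the fact that here the ``prepare'' pair is allowed to be asymmetric ($P_L\neq P_R$) and for the fact that both the inner block-encodings $U_j$ and the preparation amplitudes are only approximate. First I would assemble a multiplexed selector on $b+a+n$ qubits,
\begin{equation}
    \mathrm{SEL} \;=\; \sum_{j=0}^{m-1} \ket{j}\bra{j}\otimes U_j \;+\; \sum_{j=m}^{2^b-1}\ket{j}\bra{j}\otimes V_j,
\end{equation}
where $V_j$ for $j\geq m$ is any unitary extension (e.g.\ identity). The overall block-encoding would be defined as
\begin{equation}
    U \;:=\; \bigl(P_L^\dagger\otimes I_a\otimes I_n\bigr)\,\mathrm{SEL}\,\bigl(P_R\otimes I_a\otimes I_n\bigr).
\end{equation}

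Next I would compute the top-left $(a+b)$-block of $U$ explicitly. Propagating $\ket{0}^{\otimes b}\ket{0}^{\otimes a}\ket{\phi}$ through $P_R\otimes I$ produces $\sum_j d_j\ket{j}\ket{0}^{\otimes a}\ket{\phi}$; applying SEL gives $\sum_j d_j\ket{j}\,U_j(\ket{0}^{\otimes a}\ket{\phi})$ (the terms $j\geq m$ will be killed in a later step); applying $P_L^\dagger\otimes I$ and projecting the first register onto $\ket{0}^{\otimes b}$ yields the amplitude $\sum_j c_j^* d_j\,U_j(\ket{0}^{\otimes a}\ket{\phi})$; and projecting the second register onto $\ket{0}^{\otimes a}$ collapses this to
\begin{equation}
    \widetilde{A} \;:=\; \bigl(\bra{0}^{\otimes(a+b)}\otimes I_n\bigr)\,U\,\bigl(\ket{0}^{\otimes(a+b)}\otimes I_n\bigr) \;=\; \sum_{j=0}^{2^b-1} c_j^* d_j \bigl(\bra{0}^{\otimes a}U_j\ket{0}^{\otimes a}\bigr).
\end{equation}
The hypothesis $c_j^* d_j=0$ for $j\geq m$ is used here to truncate the sum back to $j<m$, which is the crucial place where the asymmetric prepare pair is allowed to be ``over-sized'' relative to $m$.

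Then I would bound $\|A-\tilde\beta\,\widetilde{A}\|$ by a two-term triangle inequality. Writing $\widetilde{A}_j := \alpha_j\bra{0}^{\otimes a}U_j\ket{0}^{\otimes a}$, the block-encoding hypothesis gives $\|A_j-\widetilde{A}_j\|\leq \epsilon_j$ and the obvious norm bound $\|\widetilde{A}_j\|\leq \alpha_j$. A direct algebraic rearrangement
\begin{equation}
    A-\tilde\beta\widetilde{A} \;=\; \sum_{j=0}^{m-1} y_j\,(A_j-\widetilde{A}_j) \;+\; \sum_{j=0}^{m-1} \frac{y_j\alpha_j-\tilde\beta\,c_j^*d_j}{\alpha_j}\,\widetilde{A}_j
\end{equation}
and the triangle inequality then give $\|A-\tilde\beta\widetilde{A}\|\leq \sum_j|y_j|\epsilon_j+\delta$, which is exactly the block-encoding error claim. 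The number of ancillas is $a+b$ by construction, and the subnormalization is $\tilde\beta$.

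The only delicate point is the asymmetric-prepare modification compared to the standard Gil\'en--Su--Low--Wiebe statement: in the original Lemma 52 one uses $P_L=P_R$ so that $|c_j|^2$ plays the role of the LCU weight, whereas here the coefficients $c_j^* d_j$ need not be nonnegative, and the relevant normalization $\tilde\beta=\sum_j|y_j\alpha_j|$ is defined via $|y_j\alpha_j|$ rather than via the squared amplitudes. The main obstacle is simply to verify that the statement still goes through in this generality; concretely, one has to check that the selector is still unitary (clear, by arbitrary extension), that the projection argument still truncates the sum at $j=m$ (guaranteed by the hypothesis $c_j^*d_j=0$ for $j\geq m$), and that the norm bookkeeping does not require $c_j$ or $d_j$ to be real or nonnegative (which the decomposition above makes manifest). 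Everything else is a routine rewrite of the proof of Lemma~52 of Ref.~\cite{gilyen2019quantum}.
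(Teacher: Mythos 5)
Your proposal is correct and follows essentially the same route as the paper's proof: the same asymmetric prepare–select–unprepare construction with identity (or arbitrary) extension of the selector for $j\geq m$, and the same two-term triangle-inequality split of $\lVert A-\Tilde{\beta}\widetilde{A}\rVert$ into the coefficient-mismatch term bounded by $\delta$ (using $\lVert\bra{0}^{\otimes a}U_j\ket{0}^{\otimes a}\rVert\leq 1$) and the per-block error term bounded by $\sum_j\lvert y_j\rvert\epsilon_j$. No gaps.
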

        \noindent Herein, we allow the normalization coefficients to be all different to enable more flexible utilization at a cost of a more complicated error bound. The basic idea is to absorb the normalization constant as a part of the linear combination coefficients:
        \begin{proof}
            Consider the unitary
            \begin{equation}
                \Tilde{W} = \left(P_L^\dagger\otimes I_{a+n}\right)
                    \left( \sum_{j=0}^{m-1}\ket{j}\bra{j}\otimes U_j + \left(\sum_{j=m}^{2^b-1}\ket{j}\bra{j}\right) \otimes I_{a+n}\right)
                    \left(P_R\otimes I_{a+n}\right).
            \end{equation}
            Then a direct computation shows that
            \begin{equation}
                \begin{split}
                    &\left\lVert A-\Tilde{\beta}
                    \left(\bra{0}^{\otimes(a+b)}\otimes I_n\right)
                    \Tilde{W}
                    \left(\ket{0}^{\otimes(a+b)}\otimes I_n\right) \right\rVert\\
                    =&\left\lVert A-\sum_{j=0}^{m-1}\Tilde{\beta}(c_j^*d_j)\left(\bra{0}^{\otimes a}\otimes I_n\right)U_j
                    \left(\ket{0}^{\otimes a}\otimes I_n\right)\right\rVert\\
                    \leq& \delta + \left\lVert A-\sum_{j=0}^{m-1}y_j\alpha_j\left(\bra{0}^{\otimes a}\otimes I_n\right)U_j
                    \left(\ket{0}^{\otimes a}\otimes I_n\right) \right\rVert\\
                    \leq& \delta + \sum_{j=0}^{m-1}\lvert y_j \rvert\left\lVert A_j-\alpha_j\left(\bra{0}^{\otimes a}\otimes I_n\right)U_j
                    \left(\ket{0}^{\otimes a}\otimes I_n\right) \right\rVert\\
                    =& \delta + \sum_{j=0}^{m-1}\lvert y_j \rvert\epsilon_j.\\
                \end{split}
            \end{equation}
        \end{proof}
\noindent The second Lemma is for the product of two block-encoding matrices, which will be heavily utilized in our block-encoding methods:
        \begin{alemma}[Product of block-encoding matrices, Lemma 53 of \cite{gilyen2019quantum}]\label{lem:product}
            If U is an $(\alpha, a, \delta)$-block-encoding of an $s$-qubit operator $A$, and $V$ is an $(\beta, b, \epsilon)$-block-encoding of an $s$-qubit operator $B$, then $(I_b\otimes U)(I_a\otimes V)$ is an $(\alpha\beta, a+b, \alpha\epsilon+\beta\delta)$-block-encoding of $AB$.
        \end{alemma}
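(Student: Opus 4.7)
My plan is to unpack the block-encoding definitions, identify the top-left block of the composite unitary as the product of the two individual top-left blocks (up to normalization), and then extract the advertised error by a one-line triangle inequality.

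I would begin by naming $\tilde A := \alpha\,(\langle 0|^{\otimes a}\otimes I_s)\,U\,(|0\rangle^{\otimes a}\otimes I_s)$ and $\tilde B := \beta\,(\langle 0|^{\otimes b}\otimes I_s)\,V\,(|0\rangle^{\otimes b}\otimes I_s)$, so that $\|A-\tilde A\|\le\delta$ and $\|B-\tilde B\|\le\epsilon$ by hypothesis. Since $U$ and $V$ are unitaries, the operators $\tilde A/\alpha$ and $\tilde B/\beta$ are corner-blocks of unitaries and hence contractions, giving the convenient side bounds $\|\tilde A\|\le\alpha$ and $\|\tilde B\|\le\beta$. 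I will also use $\|A\|\le\alpha$ under the standard block-encoding convention (or, absent that convention, absorb a lower-order $\delta\epsilon$ term into the final error).

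The computational core is to evaluate
\begin{equation*}
\bigl(\langle 0|^{\otimes(a+b)}\otimes I_s\bigr)(I_b\otimes U)(I_a\otimes V)\bigl(|0\rangle^{\otimes(a+b)}\otimes I_s\bigr).
\end{equation*}
Here I would insert a resolution of the identity $\sum_x |x\rangle\langle x|$ on the $b$-qubit $V$-ancilla register immediately after $V$ is applied. Because $I_b\otimes U$ acts trivially on that register, the final projector $\langle 0|^{\otimes b}$ annihilates every branch except $x=0$, and what survives factors cleanly as the $U$-block on the working register composed with the $V$-block on the working register. This yields
\begin{equation*}
\bigl(\langle 0|^{\otimes(a+b)}\otimes I_s\bigr)(I_b\otimes U)(I_a\otimes V)\bigl(|0\rangle^{\otimes(a+b)}\otimes I_s\bigr) \;=\; \frac{\tilde A\,\tilde B}{\alpha\beta},
\end{equation*}
so the composite unitary is an exact $(\alpha\beta, a+b, 0)$-block-encoding of $\tilde A\tilde B$.

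It then remains to compare with the intended target $AB$. I would split
\begin{equation*}
AB - \tilde A\tilde B \;=\; A\,(B-\tilde B) + (A-\tilde A)\,\tilde B,
\end{equation*}
take operator norms, and apply $\|A\|\le\alpha$ together with $\|\tilde B\|\le\beta$ to conclude $\|AB-\tilde A\tilde B\|\le\alpha\epsilon + \beta\delta$, which is exactly the claimed $(\alpha\beta, a+b, \alpha\epsilon+\beta\delta)$-block-encoding. The only delicate step is the bookkeeping of the tensor structure: one must fix the convention that in $I_a\otimes V$ and $I_b\otimes U$, each identity sits on the \emph{other} block-encoding's ancilla while both unitaries share the common $s$-qubit working register. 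Once that is settled, the middle identity-insertion step is essentially forced and the estimate reduces to the stated triangle inequality.
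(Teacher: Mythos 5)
Your proof is correct; note that the paper itself does not prove this lemma but imports it verbatim as Lemma 53 of the cited quantum singular value transformation paper, and your argument is essentially the standard one given there: identify the top-left block of $(I_b\otimes U)(I_a\otimes V)$ as $\tilde A\tilde B/(\alpha\beta)$ and conclude via the splitting $AB-\tilde A\tilde B = A(B-\tilde B)+(A-\tilde A)\tilde B$ with a triangle inequality. Your explicit handling of the $\lVert A\rVert\le\alpha$ convention (or absorbing the lower-order $\delta\epsilon$ term) is a reasonable way to close the one genuinely delicate point.
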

\noindent For the product of two unitaries, the ancillary qubits can be saved as:
        \begin{alemma}[Product of block-encoding unitaries, Lemma 54 of \cite{gilyen2019quantum}]\label{lem:unitary_product}
            If U is an $(1, a, \delta)$-block-encoding of an $s$-qubit unitary operator $A$, and $V$ is an $(1, a, \epsilon)$-block-encoding of an $s$-qubit unitary operator $B$, then $UV$ is an $(1, a, \epsilon+\delta+2\sqrt{\delta\epsilon})$-block-encoding of $AB$.
        \end{alemma}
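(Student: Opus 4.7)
The plan is to work out the product in $2\times 2$ block form with respect to the decomposition of the $(s+a)$-qubit Hilbert space into the ancillary subspace $\mathcal{H}_0 = (\ket{0}\bra{0})^{\otimes a}\otimes\mathbb{C}^{2^s}$ and its orthogonal complement $\mathcal{H}_0^\perp$. Writing $P_0$ and $P_1 = I - P_0$ for the corresponding projectors, I would set $\tilde A = P_0 U P_0$ and $\tilde B = P_0 V P_0$ (identified with $s$-qubit operators on $\mathcal{H}_0$), so that by hypothesis $\|A-\tilde A\|\le \delta$, $\|B-\tilde B\|\le \epsilon$, and $\|\tilde A\|,\|\tilde B\|\le 1$.

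Next I would expand the block actually encoded by $UV$ via the resolution of the identity $I = P_0 + P_1$: the top-left block of $UV$ equals $P_0 UV P_0 = \tilde A\tilde B + (P_0 U P_1)(P_1 V P_0)$. The first term is controlled by a standard triangle-inequality computation, $\|AB-\tilde A\tilde B\|\le \|A\|\,\|B-\tilde B\|+\|A-\tilde A\|\,\|\tilde B\|\le \delta+\epsilon$, using that $A$ is unitary and $\tilde B$ is a subblock of a unitary. It therefore remains to bound the \emph{leakage term} $(P_0 U P_1)(P_1 V P_0)$ in operator norm by $2\sqrt{\delta\epsilon}$, which together with the previous estimate will give the claimed error $\delta+\epsilon+2\sqrt{\delta\epsilon}$ after one final triangle inequality.

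The heart of the argument is the bound on the leakage, and this is where double-unitarity (of both the dilation $U$ and the target $A$, and likewise for $V, B$) becomes essential. Using $UU^\dagger = I$ restricted to $\mathcal{H}_0$, one has $\tilde A\tilde A^\dagger + (P_0 U P_1)(P_0 U P_1)^\dagger = I$ on $\mathcal{H}_0$, so $\|P_0 U P_1\|^2 = \|I-\tilde A\tilde A^\dagger\|$. Since $A$ is unitary, $AA^\dagger = I$, and the triangle inequality $\|I-\tilde A\tilde A^\dagger\|=\|AA^\dagger-\tilde A\tilde A^\dagger\|\le \|A\|\,\|A^\dagger-\tilde A^\dagger\|+\|A-\tilde A\|\,\|\tilde A^\dagger\|\le 2\delta$ yields $\|P_0 U P_1\|\le\sqrt{2\delta}$. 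The analogous argument applied to $V^\dagger V = I$ and $B^\dagger B = I$ gives $\|P_1 V P_0\|\le\sqrt{2\epsilon}$. Multiplying these bounds gives $\|(P_0 U P_1)(P_1 V P_0)\|\le 2\sqrt{\delta\epsilon}$, which combines with the earlier $\delta+\epsilon$ estimate to finish the proof.

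The main obstacle is not algebraic but conceptual: recognizing that the unitarity of $A$ (not merely of $U$) is what upgrades the trivial bound $\|P_0 U P_1\|\le 1$ to the much sharper $\mathcal{O}(\sqrt{\delta})$ bound, because only then does $\tilde A\tilde A^\dagger$ lie within $O(\delta)$ of the identity. Without this square-root leakage estimate one is forced to take the tensor-product route of Lemma~\ref{lem:product}, which doubles the ancillary register from $a$ to $2a$; the price paid here for keeping only $a$ ancillas is exactly the cross term $2\sqrt{\delta\epsilon}$ in the error.
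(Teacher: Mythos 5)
Your proof is correct: the decomposition of the top-left block of $UV$ into $\tilde A\tilde B$ plus the leakage term $(P_0UP_1)(P_1VP_0)$, the bound $\lVert AB-\tilde A\tilde B\rVert\le\delta+\epsilon$, and the key estimates $\lVert P_0UP_1\rVert\le\sqrt{2\delta}$ and $\lVert P_1VP_0\rVert\le\sqrt{2\epsilon}$ obtained from $AA^\dagger=I$ and $B^\dagger B=I$ together give exactly the stated error $\delta+\epsilon+2\sqrt{\delta\epsilon}$. The paper does not prove this lemma itself but imports it as Lemma 54 of Ref.~\cite{gilyen2019quantum}, and your argument is essentially the standard proof given there, correctly identifying that the unitarity of $A$ and $B$ (not merely of the dilations) is what yields the $\mathcal{O}(\sqrt{\delta})$ leakage bound and avoids doubling the ancilla register.
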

\noindent We also utilize the following celebrated result to implement the polynomial transformation of the block-encoding's singular values:
        \begin{atheorem}[Quantum singular value transformation, reformulated from Corollary 18 of \cite{gilyen2019quantum}]\label{thm:qsvt}
            Given unitary $U$, orthogonal projectors $\Pi, \Tilde{\Pi}$, and a degree-$n$ polynomial $P_\mathcal{R}\in \mathbb{R}[x]$ satisfying that $P_\mathcal{R}$ has parity-($n$ mod 2) and  $\lvert P_\mathcal{R}(x) \rvert\leq 1$ for all $-1\leq x\leq1$.
            Then there exists $\Phi\in\mathbb{R}^n$ such that
            \begin{equation}
                P_\mathcal{R}^{SV}(x) = 
                \begin{cases}
                    \left(\bra{+}\otimes\Tilde{\Pi}\right) 
                    \left(\ket{0}\bra{0}\otimes U_\Phi + \ket{1}\bra{1}\otimes U_{-\Phi}\right)
                    \left(\ket{+}\otimes{\Pi}\right) & \text{if }  n  \text{ is odd, and}\\
                    \left(\bra{+}\otimes{\Pi}\right) 
                    \left(\ket{0}\bra{0}\otimes U_\Phi + \ket{1}\bra{1}\otimes U_{-\Phi}\right)
                    \left(\ket{+}\otimes{\Pi}\right) & \text{if }  n  \text{ is even,}
                \end{cases}
            \end{equation}
            where
            \begin{equation}
                U_\Phi = 
                \begin{cases}                    
                    e^{i\phi_{1}(2\Tilde{\Pi}-I)}U\prod_{j=1}^{n/2} e^{i\phi_{2j}(2{\Pi}-I)}U^\dagger e^{i\phi_{2j+1}(2\Tilde{\Pi}-I)}U & \text{if }  n  \text{ is odd, and}\\
                    \prod_{j=1}^{n/2} e^{i\phi_{2j-1}(2{\Pi}-I)}U^\dagger e^{i\phi_{2j}(2\Tilde{\Pi}-I)}U & \text{if }  n  \text{ is even,}
                \end{cases}
            \end{equation}
        \end{atheorem}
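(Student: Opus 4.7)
The plan is to reduce the QSVT claim to quantum signal processing (QSP) by exploiting the two-dimensional invariant subspaces that arise from the singular value decomposition of $\Tilde{\Pi} U \Pi$. First I would decompose the relevant parts of the Hilbert space, spanned by $\mathrm{Im}(\Pi)$ and $\mathrm{Im}(\Tilde{\Pi})$, into orthogonal two-dimensional blocks indexed by the singular values $\sigma_k$ of $\Tilde{\Pi} U \Pi$. In each block, pick a right singular vector $\ket{\psi_k}\in\mathrm{Im}(\Pi)$ paired with the normalized state $U\ket{\psi_k}$ and the corresponding left singular vector $\ket{\Tilde{\psi}_k}\in\mathrm{Im}(\Tilde{\Pi})$. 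A direct calculation shows that these 2D subspaces are preserved both by the reflections $2\Pi-I$ and $2\Tilde{\Pi}-I$ and by $U$ and $U^\dagger$, and that inside each block $U$ is an $SU(2)$ rotation whose angle is determined entirely by $\sigma_k$.

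Next I would expand the product $U_\Phi$ in this block basis. The alternating sequence of reflections $e^{i\phi_{2j-1}(2\Pi-I)}$ and $e^{i\phi_{2j}(2\Tilde{\Pi}-I)}$ interleaved with $U$ and $U^\dagger$ becomes precisely a length-$n$ quantum signal processing sequence parameterized by the phase vector $\Phi=(\phi_1,\dots,\phi_n)$ and the scalar signal $\sigma_k$. Invoking the QSP representation theorem (Haah; Low--Chuang; Gily\'en--Su--Low--Wiebe), for every degree-$n$ real polynomial $P_\mathcal{R}$ of parity $(n \bmod 2)$ with $|P_\mathcal{R}(x)|\le 1$ on $[-1,1]$, there exists $\Phi\in\mathbb{R}^n$ and a companion polynomial $Q_\mathcal{I}$ such that the relevant matrix element of the QSP product equals $P_\mathcal{R}(\sigma_k)+iQ_\mathcal{I}(\sigma_k)$. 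The parity convention fixes whether that matrix element is taken between a $\Pi$--$\Tilde{\Pi}$ pair (odd $n$) or a $\Pi$--$\Pi$ pair (even $n$), which matches the two branches of the statement.

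Finally, to promote the complex polynomial to the real one $P_\mathcal{R}^{SV}(x)$, I would exploit the ancilla gadget $\ket{0}\bra{0}\otimes U_\Phi+\ket{1}\bra{1}\otimes U_{-\Phi}$ sandwiched between $\bra{+}$ and $\ket{+}$. In each 2D block this implements the average $\tfrac{1}{2}(U_\Phi+U_{-\Phi})$, and the QSP symmetry $U_{-\Phi}$ is the complex conjugate of $U_\Phi$ with respect to a computational basis, so the target matrix element reduces to $\mathrm{Re}\bigl(P_\mathcal{R}(\sigma_k)+iQ_\mathcal{I}(\sigma_k)\bigr)=P_\mathcal{R}(\sigma_k)$. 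Summing the direct-sum action over all 2D blocks reproduces the singular value polynomial $P_\mathcal{R}^{SV}$ on the full space.

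The main obstacle is the QSP existence statement itself, which is proven constructively via factorization of Laurent polynomials on the unit circle and is genuinely the technical heart of QSVT; I would invoke this as a black box rather than reprove it. The remaining work, which is where errors tend to creep in, is the bookkeeping that confirms the 2D invariance of the SVD blocks under both kinds of reflection and $U$, together with a careful tracking of the parity conventions to ensure that the $\Pi$ versus $\Tilde{\Pi}$ projections on each side of $U_\Phi$ match the polynomial parity in the statement.
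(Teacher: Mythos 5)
The paper does not prove this statement; it is imported verbatim (as a reformulation of Corollary 18 of Gily\'en--Su--Low--Wiebe) and used as a black box. Your sketch correctly reproduces the standard argument behind that corollary—Jordan-lemma decomposition into $2\times2$ singular-value blocks, reduction to quantum signal processing in each block, and extraction of the real part via the $\tfrac{1}{2}(U_\Phi+U_{-\Phi})$ ancilla gadget—so it matches the canonical proof of the cited result; the only minor looseness is that $U$ does not preserve each 2D block but rather maps the $\Pi$-side block to the corresponding $\tilde\Pi$-side block, a bookkeeping point you already flag.
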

        \noindent In this work, we also use the block-encoding of the linear function $[0, \frac{1}{2^n-1}, \frac{2}{2^n-1}, \frac{3}{2^n-1}, ..., 1]$:
        \begin{alemma}[Block-encoding of linear function, Lemma 8 of \cite{zhuang2024statistics}]\label{lem:lft}
            There is a $(1, \log{n}, 0)$ block-encoding of the $2^n$-dimension diagonal matrix $A = diag(0, \frac{1}{2^n-1}, \frac{2}{2^n-1}, \frac{3}{2^n-1}, ..., 1)$ with $\mathcal{O}(n)$ circuit depth.
        \end{alemma}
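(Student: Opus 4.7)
The plan is to realize $A$ as a linear combination of $n+1$ Pauli operators and then invoke the LCU block-encoding of Lemma~\ref{lem:lcu}. The starting point is the single-qubit identity $|1\rangle\langle 1|_j = (I-Z_j)/2$ together with the binary expansion $k = \sum_{j=0}^{n-1} k_j 2^j$, which gives
\begin{equation*}
    A = \frac{1}{2^n-1}\sum_{j=0}^{n-1} 2^j \,|1\rangle\langle 1|_j = \frac{I}{2} - \frac{1}{2(2^n-1)}\sum_{j=0}^{n-1} 2^j Z_j,
\end{equation*}
where I used $\sum_{j=0}^{n-1} 2^j = 2^n-1$ to merge the identity pieces into a single term. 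The $L^1$-norm of the coefficients is exactly $\tfrac{1}{2}+\tfrac{1}{2}=1$, matching the desired subnormalization $\alpha=1$, and the decomposition is exact so the error will be $0$.

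I would then apply Lemma~\ref{lem:lcu} with the $n+1$ unitary terms $\{I, Z_0, Z_1, \ldots, Z_{n-1}\}$ and the coefficients above, taking $P_L=P_R$ to be a state-preparation unitary on $\lceil \log(n+1)\rceil$ ancilla qubits whose squared amplitudes on indices $0,1,\ldots,n$ reproduce $\tfrac{1}{2}, \tfrac{2^0}{2(2^n-1)},\ldots,\tfrac{2^{n-1}}{2(2^n-1)}$; the minus signs are absorbed by a $Z$ on a leading ancilla. The SELECT operator is the multiplexor $|0\rangle\langle 0|\otimes I + \sum_{j=1}^{n} |j\rangle\langle j|\otimes Z_{j-1}$, and together these realize a $(1, \lceil \log(n+1)\rceil, 0)$-block-encoding of $A$, which matches the claimed $\log n$ scaling up to an additive constant.

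The depth analysis is the subtle part. State preparation on the $O(\log n)$-qubit address register takes $O(\log n)$ depth via standard tree constructions. For SELECT, a naive synthesis expands each multi-controlled $Z_j$ into $O(\log n)$ Toffoli-style gates, giving $O(n\log n)$ total; to obtain the advertised $O(n)$ depth one either (i) exploits the fact that SELECT is itself diagonal, so the $n$ phase-rotations $e^{i\pi |j\rangle\langle j|_a Z_{j-1}}$ commute and can be compiled into a single phase-polynomial circuit of depth $O(n)$, or (ii) first converts the $\log n$-bit address to a unary flag register in $O(\log n)$ depth, after which each controlled-$Z_{j-1}$ becomes $O(1)$-depth and they act on disjoint data qubits. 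The main obstacle I expect is precisely this tightening from $O(n\log n)$ to $O(n)$, which requires a specific SELECT-synthesis recipe rather than a generic LCU black box; both routes above are routine adaptations of existing FTQC techniques, but the choice determines constants and should be spelled out carefully.
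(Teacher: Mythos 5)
The paper does not reprove this lemma; it imports it verbatim as Lemma~8 of Ref.~\cite{zhuang2024statistics}, so there is no in-paper proof to compare against. Judged on its own, your construction is sound and is the canonical route: the identity $A=\tfrac{1}{2^n-1}\sum_{j}2^j\,|1\rangle\langle 1|_j=\tfrac{I}{2}-\tfrac{1}{2(2^n-1)}\sum_j 2^jZ_j$ is exact, the coefficient $\ell_1$-norm is exactly $1$, and LCU via Lemma~\ref{lem:lcu} gives subnormalization $1$ and error $0$ with a $\lceil\log(n+1)\rceil$-qubit address register, matching the stated $(1,\log n,0)$. One pleasant simplification you could add: because the amplitudes $\sqrt{2^j}$ factorize over the bits of $j$ as $\prod_b(\sqrt{2})^{j_b2^b}$, the PREPARE state is a tensor product of single-qubit states, so state preparation is a single layer of $\log n$ unentangled rotations rather than a tree.

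You correctly identify the SELECT synthesis as the only delicate step, but your two proposed fixes each have a flaw as stated. Route (i), generic phase-polynomial compilation, does not obviously give $O(n)$ depth: each term $[\mathrm{address}=j]\cdot k_{j-1}$ is an AND of $\log n+1$ bits, and expanding all $n$ such ANDs into parities costs $\Theta(n^2)$ phase gates unless you compute the ANDs with ancillas, which is just route (ii) in disguise. Route (ii), a full unary (one-hot) flag register, costs $O(n)$ additional workspace qubits; under the paper's Definition of block-encoding, where $a$ counts all ancillas of the extended unitary, this yields a $(1,O(n),0)$-block-encoding rather than the claimed $(1,\log n,0)$, even though the workspace is cleanly uncomputed. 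The standard resolution is unary \emph{iteration} (Babbush et al.), which walks the address tree sequentially, keeps only $O(\log n)$ live flag ancillas, and emits the $n$ singly-controlled $Z_{j-1}$ gates with $O(n)$ total Toffolis and $O(n)$ depth. With that substitution your argument closes completely.
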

    
\subsection{Hierarchy spectral block-encoding}\label{sec:hsbe}
    In this subsection, we develop the following two Lemmas to block-encode the diagonal matrix of an arbitrary variable with sparse spectral structure and to assemble these variables as a hierarchy-structured matrix, respectively. 
    
    To assemble these variables into a structured matrix, conventional block-encoding of a square matrix can be decomposed into a state-preparation unitary to create the superposition of rows and a controlled-state preparation unitary to create the superposition of each element according to the row index. For simplicity, we define these two unitaries as a block-encoding pair.
        \begin{adefinition}[Block-encoding pair]
            Given a $2^m\times2^m$ matrix $A$. Then a pair of $(2m+1)$-qubit unitaries $(U_L, U_R^\dagger)$ is called a block-encoding pair of $A$ if they satisfies
        \begin{equation}
            U_L\ket{0}^{\otimes m}\ket{0}\ket{I}^{\otimes m} = \ket{I}^{\otimes m}\ket{0}
            \left(
                \sum_{I'}\frac{\lvert A_{I'}\rvert}{\lvert A\rvert}\ket{I'}^{\otimes m}
            \right),
        \end{equation}
        and
        \begin{equation}
            U_R^\dagger\ket{I}^{\otimes m}\ket{0}\ket{I'}^{\otimes m} = \frac{A_{I, I'}}{\lvert A_{I'}\rvert} \ket{0}^{\otimes m}\ket{0}\ket{I'}^{\otimes m},
        \end{equation}
        where $\lvert A_I\rvert=\sqrt{\sum_{I'=1}^{2^m} A_{I,I'}^2}$ and $\lvert A\rvert=\sqrt{\sum_{I, I'=1}^{2^m} A_{I,I'}^2}$.
        \end{adefinition}
        Our basic idea is to use a divide-and-conquer method to divide these large matrix into small submatrices. More formally, we have:
        \begin{alemma}[Hierarchy block-encoding]\label{lem:hierarchy}
            A $2^{m+n}\times2^{m+n}$ matrix can be block-encoded within $2^m\mathcal{C}$ depth if its $2^{n}\times2^{n}$-shaped submatrices
            \begin{equation}
                A_{I,I'}=\begin{pmatrix}
                A_{I\cdot2^n, I'\cdot2^n} & A_{I\cdot2^n, I'\cdot2^n+1} & \cdots & A_{I\cdot2^n, I'\cdot2^n+2^n-1}\\
                A_{I\cdot2^n+1, I'\cdot2^n} & A_{I\cdot2^n+1, I'\cdot2^n+1} & \cdots & A_{I\cdot2^n+1, I'\cdot2^n+2^n-1}\\
                \vdots & \vdots & \ddots & \vdots \\
                A_{I\cdot2^n+2^n-1, I'\cdot2^n} & A_{I\cdot2^n+2^n-1, I'\cdot2^n+1} & \cdots & A_{I\cdot2^n+2^n-1, I'\cdot2^n+2^n-1}\\
            \end{pmatrix} \text{ for } 0\leq I,I'\leq2^m-1,
            \end{equation}
            can be block-encoded within $\mathcal{C}$ depth.
        \end{alemma}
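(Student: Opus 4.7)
The approach is a two-level divide-and-conquer block-encoding. Decomposing the $2^{m+n}$-dimensional space as a tensor product of an outer $2^m$-dimensional ``block index'' register and an inner $2^n$-dimensional ``element index'' register, the matrix admits the expansion $A = \sum_{I,I'=0}^{2^m-1}\ket{I}\bra{I'}\otimes A_{I,I'}$. The plan is to build an outer block-encoding pair for the $2^m\times 2^m$ matrix of subnormalization constants $\alpha_{I,I'}$ produced by the given inner block-encodings $U_{I,I'}$, and sandwich a multiplexed inner block-encoding between them.

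Concretely, I would proceed in three stages. First, construct the outer block-encoding pair $(U_L, U_R^\dagger)$ according to the Definition of block-encoding pair, using standard state-preparation techniques to generate the column superposition weighted by $\vert\alpha_{I'}\vert$ and to inject the phases $\alpha_{I,I'}/\vert\alpha_{I'}\vert$ through $U_R^\dagger$. Second, between $U_L$ and $U_R^\dagger$, insert the multiplexed inner block-encoding
\begin{equation}
V = \sum_{I,I'=0}^{2^m-1}\ket{I}\bra{I}\otimes\ket{I'}\bra{I'}\otimes U_{I,I'},
\end{equation}
which applies the correct inner $U_{I,I'}$ conditioned on the outer registers being $\ket{I}\ket{I'}$. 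Third, take the composite $U_A = (U_R^\dagger\otimes I_{a+n})\cdot V\cdot (U_L\otimes I_{a+n})$ as the candidate, and verify by direct computation that $\bra{0}^{\otimes(2m+1+a)}U_A\ket{0}^{\otimes(2m+1+a)} = A/\vert A\vert$, so that $U_A$ is a valid block-encoding. The correctness follows the same pattern as Lemmas~\ref{lem:lcu} and \ref{lem:product}.

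The depth accounting is the subtle part. The outer pair $(U_L, U_R^\dagger)$ contributes only $\mathrm{poly}(m)$ depth and is subdominant. For the multiplexer $V$, a naive sequential implementation of its $4^m$ controlled inner operations gives $4^m\mathcal{C}$ depth, which is too pessimistic. To recover the claimed $2^m\mathcal{C}$ bound, I would group the controlled operations by column index $I'$: within each of the $2^m$ column groups, the $2^m$ operations $\{U_{I,I'}\}_I$ are controlled on mutually orthogonal row projectors $\ket{I}\bra{I}$, so they can be executed in parallel on independent ancilla registers in a single depth-$\mathcal{C}$ layer, after which the parallel branches are collapsed back into the working register. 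Iterating sequentially over the $2^m$ column groups then yields total depth $2^m\mathcal{C}$.

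The main obstacle is precisely this parallelization step: achieving $2^m\mathcal{C}$ rather than $4^m\mathcal{C}$ depth requires a careful allocation of ancillary qubits to carry independent inner block-encodings simultaneously, together with a merge step that is compatible with the outer block-encoding pair's normalization bookkeeping. Once this depth argument is in place, the block-encoding correctness reduces to straightforward verification using the algebraic identities of Lemmas~\ref{lem:lcu} and \ref{lem:product}, and the subnormalization constant of the resulting $U_A$ coincides with the Frobenius-type aggregate $\vert A\vert$ of the outer structure.
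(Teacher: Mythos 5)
Your construction is essentially identical to the paper's: an outer block-encoding pair $(U_L,U_R^\dagger)$ for the $2^m\times 2^m$ matrix of Frobenius norms of the sub-blocks, with the multiplexer $V=\prod_{I,I'}C_{I,I'}\text{-}U_{I,I'}$ sandwiched in between, followed by the same direct amplitude verification. The one place you go beyond the paper is the depth accounting: the paper's proof implements $V$ as a sequential product over all $4^m$ pairs $(I,I')$ and never actually argues how this yields the stated $2^m\mathcal{C}$ depth, so the gap you flag as the ``main obstacle'' is real but is left unaddressed in the paper itself; your fan-out-onto-ancillas parallelization is a plausible (if ancilla-hungry) way to close it, whereas in the paper's application $2^m=D+2$ is a constant and the distinction is absorbed into the prefactor.
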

        \begin{proof}        
        Our key observation is that, for a $2^{m+n}\times2^{m+n}$ matrix of two levels, we can first prepare a block-encoding pair for the high-level $2^{m}\times2^{m}$ matrix
        \begin{equation}
            \Tilde{A} = 
            \begin{pmatrix}
                \sqrt{\sum_{j,j'=0}^{2^n-1}A_{0\cdot2^n+j, 0\cdot2^n+j'}^2} & \sqrt{\sum_{j,j'=0}^{2^n-1}A_{0\cdot2^n+j, 1\cdot2^n+j'}^2} & \cdots & \sqrt{\sum_{j,j'=0}^{2^n-1}A_{0\cdot2^n+j, (2^m-1)\cdot2^n+j'}^2}\\
                \sqrt{\sum_{j,j'=0}^{2^n-1}A_{1\cdot2^n+j, 0\cdot2^n+j'}^2} & \sqrt{\sum_{j,j'=0}^{2^n-1}A_{1\cdot2^n+j, 1\cdot2^n+j'}^2} & \cdots & \sqrt{\sum_{j,j'=0}^{2^n-1}A_{1\cdot2^n+j, (2^m-1)\cdot2^n+j'}^2}\\
                \vdots & \vdots & \ddots & \vdots \\
                \sqrt{\sum_{j,j'=0}^{2^n-1}A_{(2^m-1)\cdot2^n+j, 0\cdot2^n+j'}^2} & \sqrt{\sum_{j,j'=0}^{2^n-1}A_{(2^m-1)\cdot2^n+j, 1\cdot2^n+j'}^2} & \cdots & \sqrt{\sum_{j,j'=0}^{2^n-1}A_{(2^m-1)\cdot2^n+j, (2^m-1)\cdot2^n+j'}^2}\\
            \end{pmatrix},
        \end{equation}
        and then insert a sequence of controlled block-encoding subroutines between them to block-encode each low-level $2^{n}\times2^{n}$ sub-matrix.
        In the following, we use a two-level index $(I,I';J,J')$ to record the elements at $(I\cdot2^n+J, I'\cdot2^n+J')$ for simplicity.      
        By assumption, we have the following block-encoding pairs $U_L$ and $U_R^\dagger$ satisfying
        \begin{equation}\label{aeq:u_l}
            U_L\ket{0}^{\otimes m}\ket{0}\ket{I}^{\otimes m} = \ket{I}^{\otimes m}\ket{0}
            \left(
                \sum_{I'}\frac{\lvert A_{I'}\rvert}{\lvert A\rvert}\ket{I'}^{\otimes m}
            \right),
        \end{equation}
        and
        \begin{equation}\label{aeq:u_r}
            U_R^\dagger\ket{I}^{\otimes m}\ket{0}\ket{I'}^{\otimes m} = \frac{\lvert A_{I, I'}\rvert}{\lvert A_{I'}\rvert} \ket{0}^{\otimes m}\ket{0}\ket{I'}^{\otimes m}.
        \end{equation}
        Since Eq.~\eqref{aeq:u_l} and Eq.~\eqref{aeq:u_r} are (controlled) state preparation operators, they can be implemented by (multi-controlled) rotation gates with angles
        \begin{equation}
            \theta_L = 2\left(\arctan\left(\sqrt{\frac{\sum_{I'=2^{m-1}}^{2^{m}-1}\lvert A_{I'}\rvert^2}{\sum_{I'=0}^{2^{m-1}-1}\lvert A_{I'}\rvert^2}}\right), 
            \arctan\left(\sqrt{\frac{\sum_{I'=2^{m-1}-1}^{2^{m-2}}\lvert A_{I'}\rvert^2}{\sum_{I'=0}^{2^{m-2}-1}\lvert A_{I'}\rvert^2}}\right), 
            \arctan\left(\sqrt{\frac{\sum_{I'=3\times2^{m-2}}^{2^{m}-1}\lvert A_{I'}\rvert^2}{\sum_{I'=2^{m-1}}^{3\times2^{m-2}-1}\lvert A_{I'}\rvert^2}}\right), \cdots\right)
        \end{equation}
        and
        \begin{equation}
            \theta_{R(I)} = 2\left(\arctan\left(\sqrt{\frac{\sum_{I=2^{m-1}}^{2^{m}-1}\lvert A_{I, I'}\rvert^2}{\sum_{I=0}^{2^{m-1}-1}\lvert A_{I, I'}\rvert^2}}\right), 
            \arctan\left(\sqrt{\frac{\sum_{I=2^{m-1}-1}^{2^{m-2}}\lvert A_{I, I'}\rvert^2}{\sum_{I=0}^{2^{m-2}-1}\lvert A_{I, I'}\rvert^2}}\right), 
            \arctan\left(\sqrt{\frac{\sum_{I=3\times2^{m-2}}^{2^{m}-1}\lvert A_{I, I'}\rvert^2}{\sum_{I=2^{m-1}}^{3\times2^{m-2}-1}\lvert A_{I, I'}\rvert^2}}\right), \cdots\right)
        \end{equation}
        By assumption, we also have a sequence of indexed block-encoding unitaries $U_{I, I'}$ so that
        \begin{equation}
            \bra{0}^{\otimes a}\bra{J}^{\otimes n} U_{I, I'} \ket{0}^{\otimes a}\ket{J'}^{\otimes n} = \frac{A_{I, I'; J, J'}}{\lvert A_{I, I'}\rvert}.
        \end{equation}
        Consequently, the product of the control version
        \begin{equation}
            C_{I, I'}-U_{I, I'} = \ket{I, I'}^{\otimes 2m}\bra{I, I'}^{\otimes 2m}\otimes U_{I, I'} + (I_{2m} - \ket{I, I'}^{\otimes 2m}\bra{I, I'}^{\otimes 2m})\otimes I_n
        \end{equation}
        acts as
        \begin{equation}
            \bra{I, I'}^{\otimes 2m}\bra{0}^{\otimes a}\bra{J}^{\otimes n}\left(\prod_{I'', I'''}C_{I'', I'''}-U_{I'', I'''}\right)\ket{I, I'}^{\otimes 2m}\ket{0}^{\otimes a}\ket{J'}^{\otimes n}=\frac{A_{I, I'; J, J'}}{\lvert A_{I, I'}\rvert}.
        \end{equation}
        Now the hierarchy block-encoding unitary
        \begin{equation}
            U_A = (U_R^\dagger\otimes I_n)\left(\prod_{I, I'}C_{I, I'}-U_{I, I'}\otimes I_1\right)(U_L\otimes I_n)
        \end{equation}
        acts on the state $\ket{0}\ket{I}\ket{J}$ as
        \begin{equation}
            \begin{split}
                &(U_R^\dagger\otimes I_{a+n})\left(\prod_{I, I'}C_{I, I'}-U_{I, I'}\otimes I_1\right)(U_L\otimes I_{a+n})\ket{0}^{\otimes(m+1)}\ket{I}^{\otimes m}\ket{0}^{\otimes a}\ket{J}^{\otimes n}\\
                =&(U_R^\dagger\otimes I_{a+n})\left(\prod_{I, I'}C_{I, I'}-U_{I, I'}\otimes I_1\right)\ket{I}^{\otimes m}\ket{0}
                \left(
                    \sum_{I'}\frac{\lvert A_{I'}\rvert}{\lvert A\rvert}\ket{I'}^{\otimes m}
                \right)\ket{0}^{\otimes a}\ket{J}^{\otimes n}\\
                =&(U_R^\dagger\otimes I_{a+n})\ket{I}^{\otimes m}\ket{0}
                \left(
                    \sum_{I'}\frac{\lvert A_{I'}\rvert}{\lvert A\rvert}\ket{I'}^{\otimes m} (U_{I, I'}\ket{0}^{\otimes a}\ket{J}^{\otimes n})
                \right)\\
                =&\sum_{I'} \frac{\lvert A_{I, I'}\rvert}{\lvert A_{I'}\rvert}\cdot\frac{\lvert A_{I'}\rvert}{\lvert A\rvert}\ket{0}^{\otimes(m+1)}\ket{I'}^{\otimes m}\left(U_{I, I'}\ket{0}^{\otimes a}\ket{J}^{\otimes n}\right)\\
                =&\sum_{I'} \frac{\lvert A_{I, I'}\rvert}{\lvert A_{I'}\rvert}\cdot\frac{\lvert A_{I'}\rvert}{\lvert A\rvert}\ket{0}^{\otimes(m+1)}\ket{I'}^{\otimes m}\left(\sum_{J'}\frac{A_{I, I'; J, J'}}{\lvert A_{I, I'}\rvert}\ket{0}^{\otimes a}\ket{J'}^{\otimes n}+\ket{\perp}_{a+n}\right)\\
                =&\sum_{I', J'} \frac{A_{I, I'; J, J'}}{\lvert A\rvert}\ket{0}^{\otimes(m+1)}\ket{I'}^{\otimes m}\ket{0}^{\otimes a}\ket{J'}^{\otimes n} + \ket{\perp'}_{a+2m+n+1}\\
            \end{split}            
        \end{equation}
        Consequently, we have that
        \begin{equation}
            \bra{0}^{\otimes(m+1)}\bra{I'}^{\otimes m}\bra{0}^{\otimes a}\bra{J'}^{\otimes n}U_A\ket{0}^{\otimes(m+1)}\ket{I}^{\otimes m}\ket{0}^{\otimes a}\ket{J}^{\otimes n} = \frac{A_{I, I'; J, J'}}{\lvert A\rvert}.
        \end{equation}
        $U_A$ is an $(m+a+1, A, 0)$-block-encoding of A.
        \end{proof}
    
    To block-encode each variable as a diagonal matrix, we have:
    \begin{figure}
        \centering
        \begin{tikzpicture}
            \node{\includegraphics[width=0.95\textwidth]{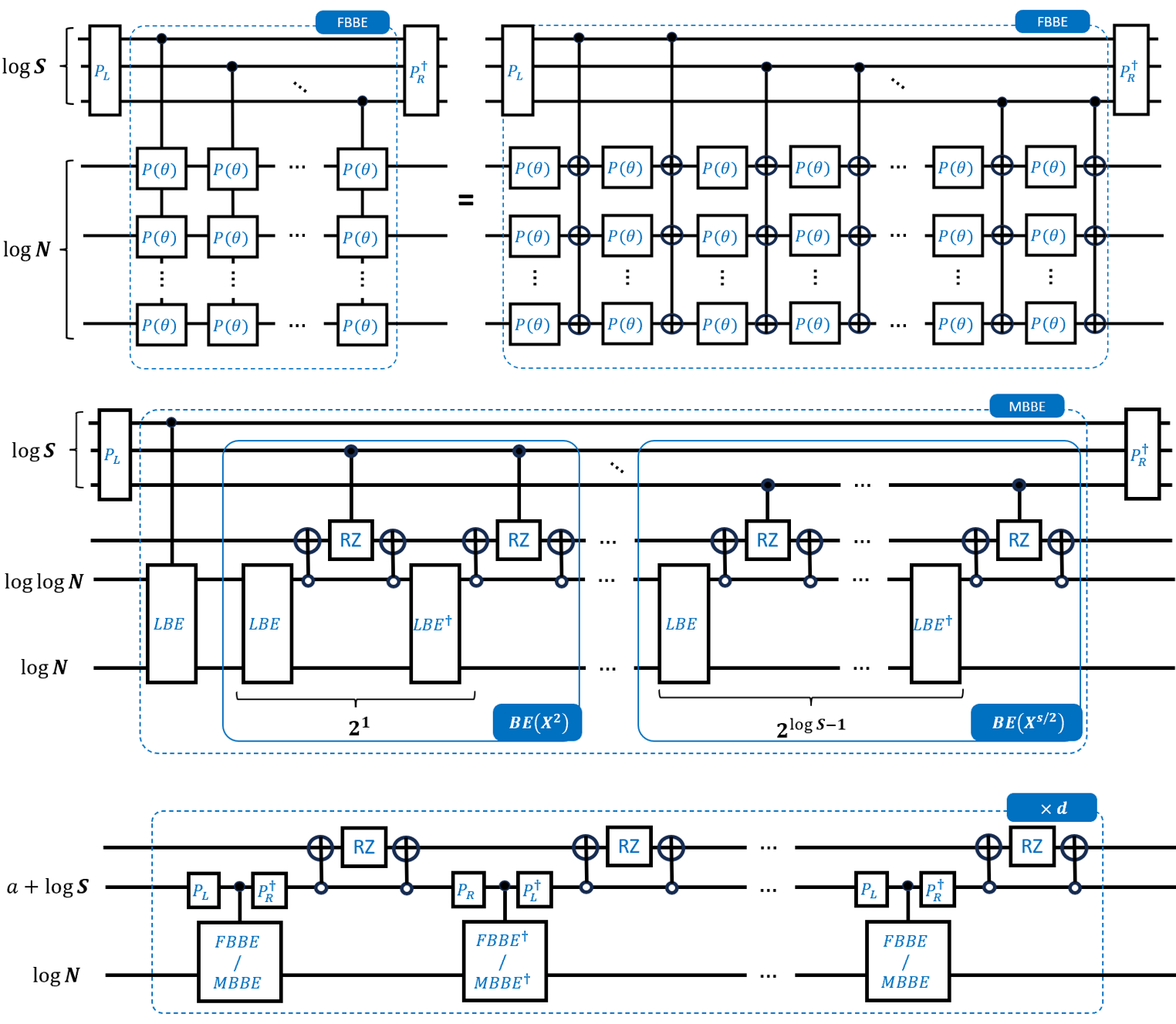}};
            \node[] at (-8.5, 7) {\textbf{(a)}};
            \node[] at (-8.5, 1.5) {\textbf{(b)}};
            \node[] at (-8.5, -4.5) {\textbf{(c)}};
        \end{tikzpicture}
        \caption{Spectral block-encoding circuit.}
        \label{fig:ss-be}
    \end{figure}
        \begin{alemma}[Sparse spectral block-encoding]\label{lem:spectral}
            Suppose $\mathcal{S}$ independent basis defined on $2^n$ grids that can be block-encoded with circuit complexity $\mathcal{O}(\mathcal{C})$, then any function in the spanned space can be block-encoded with $\mathcal{O}(\mathcal{S})$ ancillaries within $\mathcal{O}(\mathcal{S}\mathcal{C})$ depth.
            Specifically: 
            \begin{itemize}
                \item Fourier-based case: $(2\mathcal{S}+2\log\mathcal{S}-2)\cdot\mathrm{ROT}$ T-depth;
                \item Monomial-based case:  $(2(\mathcal{S}-1)(n+1)-2)\cdot\mathrm{ROT} + (2(\mathcal{S}-1)(n+1)(2\log\log n-1)+2n-4\log\log n-2)\cdot\mathrm{TOF}$ T-depth;
                \item Composed case: the composition of a $d$-degree polynomial $P$ with the Fourier-based and monomial-based spectral block-encodings can be implemented within $(2\mathcal{S}+2\log\mathcal{S}-1)d\cdot\mathrm{ROT}+(4\log{\log\mathcal{S}}-2)d\cdot\mathrm{TOF}$ T-depth and $(2(\mathcal{S}-1)(n+1)-1)d\cdot\mathrm{ROT} + (2(\mathcal{S}-1)(n+1)(2\log\log n-1)+2n+4\log\log\mathcal{S} n-4\log\log n-4)d\cdot\mathrm{TOF}$ T-depth, respectively, if $P$ has parity $d\text{ mod }2$ or be doubled if not;
            \end{itemize}
            wherein $ROT$ and $TOF$ are the depths for the rotation and Toffoli gate, respectively. Notably, in the conventional FTQC context, we have $ROT=3\lceil\log\frac{1}{\epsilon}\rceil$ and $TOF = 5$, while in our QEC protocol, this will be implemented within one error correction cycle by magic state injection.
        \end{alemma}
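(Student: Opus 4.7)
The plan is to realize $U_{\mathrm{SS-BE}}^C$ as a Linear Combination of Unitaries (LCU) and then invoke Lemma~\ref{lem:lcu} for correctness, after which the whole task reduces to counting logical gates case by case. Given the target $F = \sum_{k=1}^{\mathcal{S}} c_k F_k$, I would take $(P_L, P_R)$ to be a state-preparation pair on $\lceil \log_2 \mathcal{S}\rceil$ ancillae that encodes the coefficients $c_k$ (with normalization $\mathcal{N}=\sum_k |c_k|$ and signs absorbed into $P_R$), and insert the multiplexed controlled basis block-encodings $\prod_{k=1}^{\mathcal{S}} C_k\text{-}U_{\mathrm{BE}(F_k)}$ between them. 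Correctness and the $(\mathcal{N},\lceil\log_2 \mathcal{S}\rceil + a,\epsilon)$-block-encoding bound follow immediately from Lemma~\ref{lem:lcu} after absorbing the per-basis errors into the aggregate precision. The ancillary count $\mathcal{O}(\mathcal{S})$ is clear since the LCU register has $\lceil\log_2 \mathcal{S}\rceil$ qubits and each $U_{\mathrm{BE}(F_k)}$ reuses a shared $a$-qubit workspace.

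Second, I would carry out the case-specific depth counts. For the Fourier basis, each $F_k(x)=e^{2\pi i \omega_k x / 2^n}$ is itself a diagonal unitary realizable as $n$ controlled phase rotations that all act on disjoint qubits and therefore contribute only one rotation layer per branch; multiplexing $\mathcal{S}$ such branches gives the leading $2\mathcal{S}$ rotation depth (the factor of $2$ coming from $P_L$ and $P_R^\dagger$), and the $2\log\mathcal{S}-2$ overhead arises from the standard uniformly-controlled-rotation decomposition inside the state-preparation pair. For the monomial basis, I would start from the $(1,\log n,0)$-block-encoding of $x/(2^n-1)$ supplied by Lemma~\ref{lem:lft} and iterate Lemma~\ref{lem:product} (or, equivalently, Theorem~\ref{thm:qsvt} with a monomial polynomial) to obtain $x^k$; fixed-point multiplication of $n$-bit operands through carry-save adder trees gives the $2\log\log n - 1$ factor in the Toffoli-depth expression, and the $(\mathcal{S}-1)(n+1)$ count absorbs the cost of iterating over the $\mathcal{S}$ powers. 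For the composed case, I would wrap the already-built $U_{\mathrm{SS-BE}}^C$ in the QSVT machinery of Theorem~\ref{thm:qsvt} with a degree-$d$ polynomial $P$; since QSVT alternates $d$ queries to $U_{\mathrm{SS-BE}}^C$ and its adjoint with $d$ projector-controlled rotations, both rotation and Toffoli depths pick up a multiplicative factor of $d$, and the parity mismatch case is handled by the even/odd linear combination in Theorem~\ref{thm:qsvt} at the cost of a factor of two.

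The main technical obstacle is the tight Toffoli-depth bookkeeping in the monomial and composed cases. Realizing a reversible $n$-bit multiplier within depth $\mathcal{O}(\log\log n)$ requires a Wallace- or carry-save-tree implementation whose ancilla registers must be uncomputed cleanly between successive multiplications; one must check that this uncomputation fits within the stated $2\log\log n - 1$ factor rather than doubling it. A secondary subtlety is ensuring that the LCU subnormalization $\mathcal{N}$ and the QSVT polynomial-approximation error compose consistently with the precision $\epsilon$ inherited from downstream subroutines (the inter-block and diagonal-level block-encodings in Lemma~\ref{lem:hierarchy}); in particular, the phase angles $\phi_{P}$ must be classically precomputed to accuracy sufficient that rotation synthesis does not inflate the quoted rotation depth by more than a constant factor. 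Once these bookkeeping issues are resolved, the three bounds in the lemma follow by collecting the per-stage contributions.
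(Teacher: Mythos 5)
Your high-level skeleton (state-preparation pair sandwiching a multiplexed family of basis block-encodings, correctness via Lemma~\ref{lem:lcu}, QSVT wrapping for the composed case) is the same as the paper's, but the two case-specific depth arguments contain genuine gaps. In the Fourier case you treat the select as $\mathcal{S}$ sequential branches, each a single rotation layer controlled on the full index, and then attribute the $2\mathcal{S}$ term to this select and the $2\log\mathcal{S}-2$ term to the UCR in the state preparation. The paper's accounting is exactly the reverse: the $2(\mathcal{S}-1)$ rotations come from the UCR state-preparation pair (which must load $\mathcal{S}$ arbitrary amplitudes), while the select contributes only $2\log\mathcal{S}$ rotation layers. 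The ingredient you are missing is the binary-power factorization $U_{\mathrm{BE}(F_k)}=\prod_{j=0}^{\log\mathcal{S}}U_{\mathrm{BE}(F_{2^j})}^{c_j}$ with $k=\sum_j c_j 2^j$, which turns the select into $\log\mathcal{S}$ layers each controlled on a \emph{single} index qubit; each such layer is two rotation layers plus two fan-out $X$ gates and requires no Toffolis. A naive $\mathcal{S}$-branch select as you describe would need multi-controlled rotations (hence $\Omega(\mathcal{S})$ Toffolis), contradicting the Toffoli-free count in the lemma, so your total coincides with the stated bound only by accident.

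The monomial case is a larger problem. You propose realizing $x^k$ by reversible fixed-point multiplication with Wallace/carry-save adder trees and claim this yields the $2\log\log n-1$ Toffoli-depth factor. No reversible $n$-bit multiplier has Toffoli depth $\mathcal{O}(\log\log n)$; carry-save trees give depth $\Theta(\log n)$ for the addition tree alone, so this step of your argument fails and cannot recover the stated bound. The paper instead never performs arithmetic on amplitudes: it starts from the $(1,\log n,0)$-block-encoding of the linear ramp $\mathrm{diag}(0,\tfrac{1}{2^n-1},\dots,1)$ (Lemma~\ref{lem:lft}), whose own circuit carries the $2\log\log n-1$ factor, builds $x^{2^j}$ by Theorem~\ref{thm:qsvt}, and again uses the binary decomposition of $k$ together with Lemma~\ref{lem:product} to assemble $x^k$; the $(\mathcal{S}-1)$ prefactor counts the $1+2+\cdots+2^{\log\mathcal{S}-1}$ calls to the linear-function block-encoding. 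Your worry about uncomputing multiplier ancillae is therefore moot for the actual construction. The composed case as you describe it does match the paper.
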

        \begin{proof}
            The basic idea
            Without loss of generality, we consider the uniform $2^n$-grids of the interval $[0, 1]$ as $\{0, \frac{1}{2^n-1}, \frac{2}{2^n-1}, ..., 1 \}$.
            
            \subparagraph{Fourier-based Case.} We first consider the diagonal block-encoding of the Fourier basis $F_k(x)=e^{ikx}$, i.e., 
            \begin{equation}\label{eq:fourier_basis}
                \begin{pmatrix}
                    e^{\frac{i\cdot0}{2^n-1}} &  &  &  \\
                      & e^{\frac{ik}{2^n-1}} & & \\
                      &  & \ddots & \\
                      &  &  &  e^{\frac{ik\cdot (2^n-1)}{2^n-1}}
                \end{pmatrix}
                =                
                \begin{pmatrix}
                    \left( e^{\frac{ik}{2^n-1}} \right)^0 &  &  &  \\
                      & \left( e^{\frac{ik}{2^n-1}} \right)^1 & & \\
                      &  & \ddots & \\
                      &  &  &  \left( e^{\frac{ik}{2^n-1}} \right)^{2^n-1}
                \end{pmatrix}.
            \end{equation}
            Note that Eq.~\eqref{eq:fourier_basis} is indeed a tensor product
            \begin{equation}
                \bigotimes_{j=0}^{n-1}
                \begin{pmatrix}
                    1 & 0\\
                    0 & \left( e^{\frac{ik}{2^n-1}} \right)^{2^j}
                \end{pmatrix},
            \end{equation}
            and hence can be $(1, 0, 0)$-block-encoded by a sequence of parallel Z-rotation gates
            \begin{equation}
                U_{\mathrm{BE}(F_k)} = \bigotimes_{j=0}^{n-1} Z_j(e^{\frac{ik2^j}{2^n-1}}),
            \end{equation}
            wherein $Z_j$ targets on the $j^{th}$ qubit.
            Now we consider the binary representation of $k=\sum_{j=0}^{\log{\mathcal{S}}}c_j\cdot2^j, c_j\in\{0, 1\}$, then the $k^{th}$ Fourier basis function can be $(1, 0, 0)$-block-encoded, by Lemma~\ref{lem:unitary_product}, decomposed as a product
            \begin{equation}
                U_{\mathrm{BE}(F_k)} = \prod_{j=0}^{\log{\mathcal{S}}} U_{\mathrm{BE}(F_{2^j})}^{c_j}.
            \end{equation}
            Consequently, the block-encoding of the Fourier basis $F_k$ controlled by $\ket{k}$ can be decomposed into:
            \begin{equation}
                \begin{split}
                    C_k-U_{\mathrm{BE}(F_k)} &= \ket{k}\bra{k}\otimes U_{\mathrm{BE}(F_k)} + (I_{\log{\mathcal{S}}}-\ket{k}\bra{k})\otimes I_n\\
                    &=\prod_{j=0}^{\log{\mathcal{S}}} C-U_{\mathrm{BE}(F_{2^j})}.
                \end{split}                
            \end{equation}
            By applying Lemma~\ref{lem:lcu}, we can implement the linear combination of $U_{\mathrm{BE}(F_k)}$ as a state-preparation-pair $(P_L, P_R^\dagger)$ and a Fourier basis block-encoding (FBBE) subroutine depicted in Fig.~\ref{fig:ss-be}(a), wherein each layer of controlled Z-rotation gates can be synthesized as two layers of rotation gates and two fan-out X gates, and the rotation depth and count are $2\log\mathcal{S}$ and $2n\log\mathcal{S}$, respectively.
            By applying uniformly controlled rotation technique in Ref.~\cite{mottonen2004transformation}, the $\log\mathcal{S}$-qubit state preparation pair of rotation depth and count are both $2(\mathcal{S}-1)$.
            In summary, the rotation depth and count of the Fourier-based spectral block-encoding are
            \begin{equation}
                \mathcal{RD}_\mathrm{FBSBE} = 2\mathcal{S}+2\log\mathcal{S}-2
            \end{equation} and 
            \begin{equation}
                \mathcal{RC}_\mathrm{FBSBE} = 2\mathcal{S}+2n\log\mathcal{S}-2,
            \end{equation} respectively with no Toffoli gates.
            In the control version, $2\log\mathcal{S}$ Toffoli gates and one ancillary qubit are introduced for controlled FBBE, as well as $2(\mathcal{S}-2)$ Toffoli gates for controlled state preparation pair, and the rotation gates are doubled for the control behavior.
            
            \subparagraph{Monomial-based Case.} We now consider the diagonal block-encoding of the monomials $P_k(x)=x^k$, i.e., 
            \begin{equation}\label{eq:monomial_basis}          
                \begin{pmatrix}
                    0^k &  &  &  \\
                      & \left( {\frac{1}{2^n-1}} \right)^k & & \\
                      &  & \ddots & \\
                      &  &  &  1^k
                \end{pmatrix}.
            \end{equation}
            By combining Theorem~\ref{thm:qsvt} and Lemma~\ref{lem:lft}, Eq.~\eqref{eq:monomial_basis} can be $(1, \log{n}+1, 0)$-block-encoded.
            Similarly, we consider the binary representation of $k=\sum_{j=0}^{\log{\mathcal{S}}}c_j\cdot2^j, c_j\in\{0, 1\}$, then the $k^{th}$ monomial basis function can be $(1, \log{\mathcal{S}}(\log n+1), 0)$-block-encoded, by Lemma~\ref{lem:product}, as a product of matrices
            \begin{equation}
                U_{\mathrm{BE}(P_k)} = \prod_{j=0}^{\log{\mathcal{S}}} U_{\mathrm{BE}(P_{2^j})}^{c_j}.
            \end{equation}
            Consequently, the block-encoding of the monomial basis $P_k$ controlled by $\ket{k}$ can be decomposed into:
            \begin{equation}
                \begin{split}
                    C_k-U_{\mathrm{BE}(P_k)} &= \ket{k}\bra{k}\otimes U_{\mathrm{BE}(P_k)} + (I_{\log{\mathcal{S}}}-\ket{k}\bra{k})\otimes I_n\\
                    &=\prod_{j=0}^{\log{\mathcal{S}}} C-U_{\mathrm{BE}(P_{2^j})}.
                \end{split}                
            \end{equation}
            By applying Lemma~\ref{lem:lcu}, we can implement the linear combination of $U_{\mathrm{BE}(P_k)}$ via a state preparation pair $(P_L, P_R^\dagger)$ and a
             monomial basis block-encoding (MBBE), as illustrated in Fig.~\ref{fig:ss-be}(b). Herein, the $\log\mathcal{S}$-qubit state-preparation-pair is called once with $2(\mathcal{S}-1)$ rotation depth/count, the (controlled) linear function block-encoding subroutine $LBE/{LBE}^\dagger$ defined in Lemma~\ref{lem:lft} is called $1+2+\cdots+2^{\log\mathcal{S}-1} = \mathcal{S}-1$ times with $2(n-1)$ rotation depth/count and $2n(2\log\log{n}-1)$ Toffoli depth/count in each call ($4(n-1)$ and $2n(2\log\log{(2n)}-1)+2n-4$ for the controlled version), and projector-controlled $RZ$ subroutine defined in Theorem~\ref{thm:qsvt} is called $2+\cdots+2^{\log\mathcal{S}-1} = \mathcal{S}-2$ times with $2$ rotation depth/count and $2(2\log\log{n}-1)$ Toffoli depth/count in each call.
            In summary, the rotation depth and count of Monomial-based spectral block-encoding are 
            \begin{equation}
                \mathcal{RD}_\mathrm{MBSBE} = \mathcal{RC}_\mathrm{MBSBE} = 2(\mathcal{S}-1)(n+1)-2,
            \end{equation}
            and the Toffoli depth and count are
            \begin{equation}
                \mathcal{TD}_\mathrm{MBSBE} = \mathcal{TC}_\mathrm{MBSBE} = 2(\mathcal{S}-1)(n+1)(2\log\log{n}-1)+2n-4\log\log{n}-2.
            \end{equation}
            In the control version, only the $\log\mathcal{S}$-qubit state-preparation-pair, the controlled $LBE$ subroutine, and the $RZ$ gates on the ancillary qubit of QSVT need to be controlled, introducing one additional ancillary qubit, $4(\mathcal{S}-2)+2 = 4\mathcal{S}-6$ additional Toffoli gates and $2(\mathcal{S}-1)$ additional rotation gates.

            \subparagraph{Composed Case.} Finally, we consider the general case to compose a (perhaps uncompact) operator $f$ with a variable $x$ of finite spectral sparsity. By applying Theorem~\ref{thm:qsvt} to the subcircuit constructed in the above two cases to block encode the diagonal matrix $diag\left( x_0, x_1, \cdots, x_{2^n-1} \right)$, we can implement a $d$-degree polynomial transformation $P(\cdot)$ to the eigenvalues to block encode the diagonal matrix $diag\left( P(x_0), P(x_1), \cdots, P(x_{2^n-1}) \right)$, as illustrated in Fig.~\ref{fig:ss-be}(c).
            Since the subcircuit and the projector-controlled rotation are both called $d$ times if $P$ has parity $d \text{ mod } 2$, or $2d$ times if $P$ is an arbitrary polynomial, the circuit depth can be checked by a direct computation.
        \end{proof}

\subsection{Block-encoding of flux Jacobian matrix A and residual vector b}\label{sec:Abbe}
    \begin{figure}
        \centering
        \begin{tikzpicture}
            \node{\includegraphics[width=0.95\textwidth]{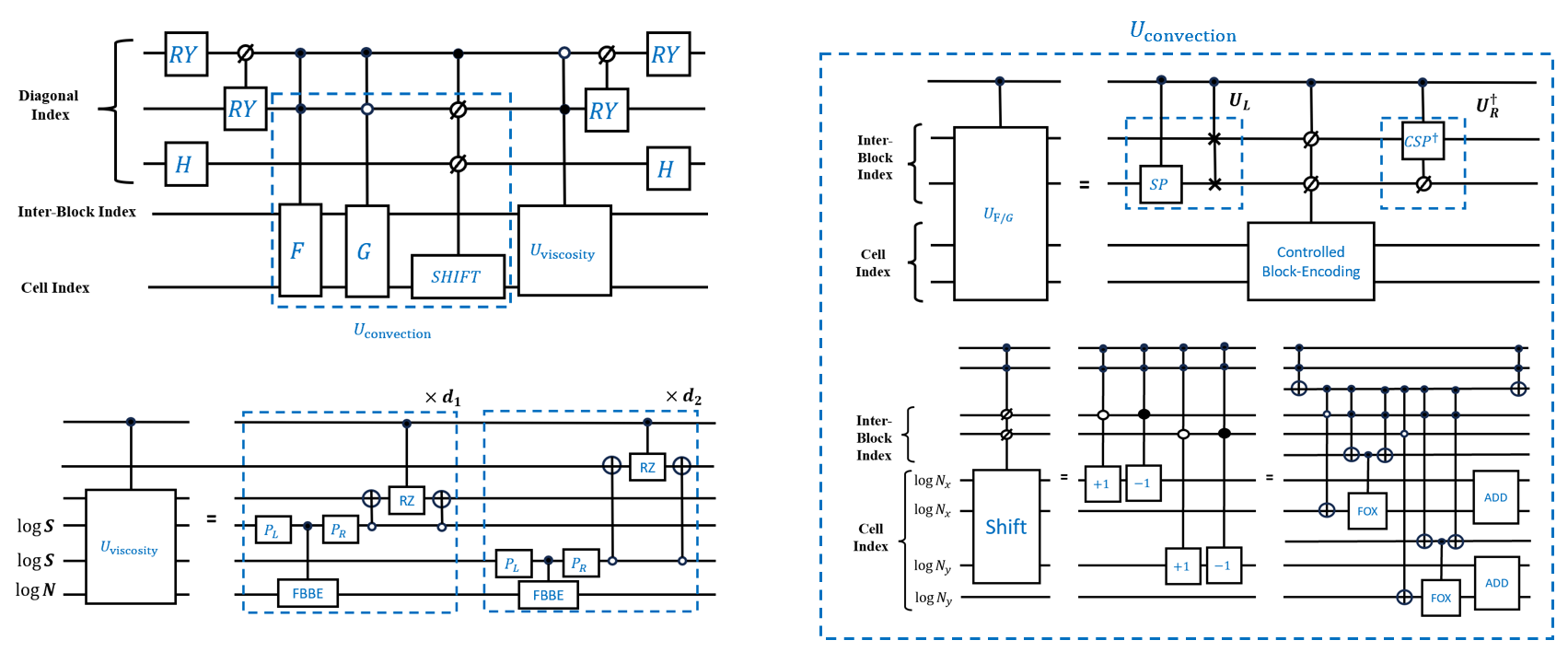}};
            \node[] at (-8.5, 3.25) {\textbf{(a)}};
            \node[] at (0, 3.25) {\textbf{(b)}};
            \node[] at (-8.5, -1) {\textbf{(c)}};
        \end{tikzpicture}
        \caption{Block-encoding circuit of Jacobian matrix $A$. (a) The global level circuit to block encode the Jacobian matrix $A$. (b) The circuit to block encode convective flux. (c) The circuit to block encode numerical dissipation for viscosity.}
        \label{fig:bea}
    \end{figure}
    We first describe the quantum circuit to block encode the flux Jacobian matrix $A$ in Eq.~\eqref{aeq:A}. Recall that $A$ consists of a principal diagonal $A^{(0)}$ for numerical dissipation and $2D$ diagonals $^{(\pm k)}$ of sub-matrices for convective flux at the interfaces of each cell. Consequently, at the diagonal level, we utilize a $(\log D+2)$-qubit quantum register $\ket{0}_\mathrm{Diag}^{\otimes \log D+2}$ to encode the diagonal index $0, \pm1, \pm2$ and create a superposition of these diagonals via the linear combination of $U_\mathrm{convection}$ and $U_\mathrm{viscosity}$, as depicted in Fig.~\ref{fig:bea} (a).
    Herein, we apply the state preparation pair $(P_L, P_R^\dagger)$ with rotation angles
        \begin{equation}\label{aeq:lcu}
            \overrightarrow{\theta}_D = \left(
            2\arcsin(\sqrt{\frac{4\overline{\mu}/\underline{\rho}+1}{\alpha_A}}), 2\arctan(\sqrt{\frac{\alpha_{\frac{\partial \mathcal{F}_C}{\partial W}}}{\alpha_{\frac{\partial \mathcal{G}_C}{\partial W}}}}), 2\arctan(\sqrt{{4\overline{\mu}/\underline{\rho}}}), \frac{\pi}{2}, \frac{\pi}{2}, 0, 0
            \right),
        \end{equation}
    the convective flux diagonals are block-encoded via operators $U_F$ and $U_G$ to encode the horizontal and vertical convective fluxes as a diagonal of sub-matrices defined in Eqs.~\eqref{appendix_eq:horizontal_jacobian} and \eqref{appendix_eq:vertical_jacobian}, as well as a shift operator to allocate these diagonals to the target position, as illustrated in Fig.~\ref{fig:bea} (b). Each $U_{F/G}$ operators consist of $N$ submatrices of $4\times4$ shape
        \begin{equation}
            \begin{pmatrix}
                \left.\frac{\partial \mathcal{F}}{\partial W}\right|_{1, 1} & & &\\
                & \left.\frac{\partial \mathcal{F}}{\partial W}\right|_{1,2} & &\\
                &&\ddots&\\
                &&&\left.\frac{\partial \mathcal{F}}{\partial W}\right|_{N_y,N_x}
            \end{pmatrix}.
        \end{equation}
        By (conceptually) interchanging the inter-block index and cell index, this is transformed into
        \begin{equation}
            \begin{pmatrix}
                diag(0) & diag(1) & diag(0) & diag(0)\\
                diag(\frac{\gamma-3}{2}u^2+\frac{\gamma-1}{2}v^2) & diag((3-\gamma)u) & diag((1-\gamma)v)&diag(\gamma-1)\\
        diag(-uv) & diag(v) & diag(u) & diag(0)\\
        diag\left(\left(\frac{\gamma-1}{2}u^2+\frac{\gamma-1}{2}v^2-H\right)u\right) & diag\left(H+(1-\gamma)u^2\right) & diag\left((1-\gamma)uv\right) & diag(\gamma u)
            \end{pmatrix},
        \end{equation}
        wherein each $diag(\cdot)$ is a diagonal $2^n\times2^n$ matrix of the corresponding formula evaluated on the cells $(1, 1), (1, 2), \cdots, (N_y, N_x)$.
        By combining Lemma~\ref{lem:spectral} and Lemma~\ref{lem:hierarchy}, this can be realized by a block-encoding pair $(U_L, U_R^\dagger)$ with rotation angles
        \begin{equation}\label{aeq:angle_l}
            \theta_L = 2\left(\arctan\left(\sqrt{\frac{\sum_{I'=2^{m-1}}^{2^{m}-1}\lvert A^{(k)}_{I'}\rvert^2}{\sum_{I'=0}^{2^{m-1}-1}\lvert A^{(k)}_{I'}\rvert^2}}\right), 
            \arctan\left(\sqrt{\frac{\sum_{I'=2^{m-1}-1}^{2^{m-2}}\lvert A^{(k)}_{I'}\rvert^2}{\sum_{I'=0}^{2^{m-2}-1}\lvert A^{(k)}_{I'}\rvert^2}}\right), 
            \arctan\left(\sqrt{\frac{\sum_{I'=3\times2^{m-2}}^{2^{m}-1}\lvert A^{(k)}_{I'}\rvert^2}{\sum_{I'=2^{m-1}}^{3\times2^{m-2}-1}\lvert A^{(k)}_{I'}\rvert^2}}\right), \cdots\right)
        \end{equation}
        and
        \begin{equation}\label{aeq:angle_r}
            \theta_{R(I)} = 2\left(\arctan\left(\sqrt{\frac{\sum_{I=2^{m-1}}^{2^{m}-1}\lvert A^{(k)}_{I, I'}\rvert^2}{\sum_{I=0}^{2^{m-1}-1}\lvert A^{(k)}_{I, I'}\rvert^2}}\right), 
            \arctan\left(\sqrt{\frac{\sum_{I=2^{m-1}-1}^{2^{m-2}}\lvert A^{(k)}_{I, I'}\rvert^2}{\sum_{I=0}^{2^{m-2}-1}\lvert A^{(k)}_{I, I'}\rvert^2}}\right), 
            \arctan\left(\sqrt{\frac{\sum_{I=3\times2^{m-2}}^{2^{m}-1}\lvert A^{(k)}_{I, I'}\rvert^2}{\sum_{I=2^{m-1}}^{3\times2^{m-2}-1}\lvert A^{(k)}_{I, I'}\rvert^2}}\right), \cdots\right)
        \end{equation}
        and a sequence of controlled block-encoding subroutines for each element-level formula:
        According to Lemma~\ref{lem:spectral}, $U_\rho$ is an $(\alpha_\rho, \log{\mathcal{S}}, 0)$-block-encoding of $\rho$, where $\alpha_\rho = \left\lVert \mathrm{Spec}(\rho) \right\rVert_1$ equals to the $L_1$-norm of the Fourier spectrum of $\rho$.
        $u$, $v$, and $e$ can be block-encoded in the same way with $\alpha_u = \left\lVert \mathrm{Spec}(u) \right\rVert_1$, $\alpha_v = \left\lVert \mathrm{Spec}(v) \right\rVert_1$, and $\alpha_e = \left\lVert \mathrm{Spec}(e) \right\rVert_1$.
        Utilizing Lemma~\ref{lem:product} and Theorem~\ref{thm:qsvt},  $uv$ and $u^2$ can be $(\alpha_u\alpha_v, 2\log{\mathcal{S}}, 0)$-block-encoded by unitary $U_{uv}$ and $(\alpha_u^2, \log{\mathcal{S}}+1, 0)$-block-encoded by $U_{u^2}$, respectively. 
        Since $H$ is the linear combination of $e$, $u^2$ and $v^2$, it can be $(\alpha_H, \log{\mathcal{S}}+3, 0)$-block-encoded by Lemma~\ref{lem:lcu}, where
        \begin{equation}
            \alpha_H = \gamma\alpha_e+\frac{\gamma-1}{2}(\alpha_u^2+\alpha_v^2)
        \end{equation}
        is the corresponding normalization constant.
        By Lemma~\ref{lem:hierarchy}, the horizontal convection flux Jacobian matrix $\frac{\partial \mathcal{F}_C}{\partial W}$ and the vertical convection flux Jacobian matrix $\frac{\partial \mathcal{G}_C}{\partial W}$ can be $\left(\alpha_{\frac{\partial \mathcal{F}}{\partial W}}, 2\log{\mathcal{S}}+5, 0\right)$- and $\left(\alpha_{\frac{\partial \mathcal{G}}{\partial W}}, 2\log{\mathcal{S}}+5, 0\right)$-block-encoded with
        \begin{equation}
            \begin{split}
                \alpha_{\frac{\partial \mathcal{F}_C}{\partial W}} =& 1 + \left(\frac{3-\gamma}{2}\alpha_u+\frac{\gamma-1}{2}\alpha_v\right) +
            (3-\gamma)\alpha_u+
            (\gamma-1)\alpha_v+
            (\gamma-1)+
            \alpha_u\alpha_v+
            \alpha_u+
            \alpha_v+\\
            &\left[ {(\gamma-1)}(\alpha_u^2+\alpha_v^2) + \gamma\alpha_e\right]\alpha_u+
            \left[\gamma\alpha_e+\frac{\gamma-1}{2}(3\alpha_u^2+\alpha_v^2)\right]+
            (\gamma-1)\alpha_u\alpha_v+
            \gamma\alpha_u\\
            =&(\gamma-1)\alpha_u^3 + (\gamma-1)\alpha_u^2\alpha_v 
            + \frac{3(\gamma-1)}{2}\alpha_u^2 + \frac{\gamma-1}{2}\alpha_v^2
            + \gamma\alpha_u\alpha_v + \gamma\alpha_u\alpha_e 
            + \frac{11-\gamma}{2}\alpha_u +\frac{3\gamma-1}{2}\alpha_v +\gamma\alpha_e+\gamma\\
            =&0.4\alpha_u^3 + 0.4\alpha_u^2\alpha_v 
            + 0.6\alpha_u^2 + 0.2\alpha_v^2
            + 1.4\alpha_u\alpha_v + 1.4\alpha_u\alpha_e 
            + 4.8\alpha_u +1.6\alpha_v +1.4\alpha_e+1.4,
            \end{split}
        \end{equation}
        and
        \begin{equation}
            \begin{split}
                \alpha_{\frac{\partial \mathcal{G}_C}{\partial W}} =& 1 + \left(\frac{3-\gamma}{2}\alpha_v+\frac{\gamma-1}{2}\alpha_u\right) +
            (3-\gamma)\alpha_v+
            (\gamma-1)\alpha_u+
            (\gamma-1)+
            \alpha_u\alpha_v+
            \alpha_u+
            \alpha_v+\\
            &\left[ {(\gamma-1)}(\alpha_u^2+\alpha_v^2)-\gamma\alpha_e \right]\alpha_v+
            \left[\gamma\alpha_e-\frac{\gamma-1}{2}(3\alpha_v^2+\alpha_u^2)\right]+
            (\gamma-1)\alpha_u\alpha_v+
            \gamma\alpha_v\\
            =&(\gamma-1)\alpha_v^3 + (\gamma-1)\alpha_v^2\alpha_u 
            + \frac{3(\gamma-1)}{2}\alpha_v^2 + \frac{\gamma-1}{2}\alpha_u^2
            + \gamma\alpha_u\alpha_v + \gamma\alpha_v\alpha_e 
            + \frac{11-\gamma}{2}\alpha_v +\frac{3\gamma-1}{2}\alpha_u +\gamma\alpha_e+\gamma\\
            =&0.4\alpha_v^3 + 0.4\alpha_v^2\alpha_u 
            + 0.6\alpha_v^2 + 0.2\alpha_u^2
            + 1.4\alpha_u\alpha_v + 1.4\alpha_v\alpha_e 
            + 4.8\alpha_v +1.6\alpha_u +1.4\alpha_e+1.4.
            \end{split}
        \end{equation}
        The shift operator is then used to allocate each diagonal by implementing arithmetic on the corresponding horizontal or vertical cell index as
        \begin{equation}\label{aeq:shift_p1}
            \mathrm{Shift}(+1)
            \begin{pmatrix}
                \left.\frac{\partial \mathcal{F}_C}{\partial W}\right|_{1, 1} & & &\\
                & \left.\frac{\partial \mathcal{F}_C}{\partial W}\right|_{1,2} & &\\
                &&\ddots&\\
                &&&\left.\frac{\partial \mathcal{F}_C}{\partial W}\right|_{N_y,N_x}
            \end{pmatrix}
            =
            \begin{pmatrix}
                &\left.\frac{\partial \mathcal{F}_C}{\partial W}\right|_{1, 1} & & \\
                && \left.\frac{\partial \mathcal{F}_C}{\partial W}\right|_{1,2} & \\
                &&&\ddots\\
                \left.\frac{\partial \mathcal{F}_C}{\partial W}\right|_{N_y,N_x}&&&
            \end{pmatrix},
        \end{equation}
        and
        \begin{equation}\label{aeq:shift_m1}
            \mathrm{Shift}(-1)
            \begin{pmatrix}
                \left.\frac{\partial \mathcal{F}_C}{\partial W}\right|_{1, 1} & & &\\
                & \left.\frac{\partial \mathcal{F}_C}{\partial W}\right|_{1,2} & &\\
                &&\ddots&\\
                &&&\left.\frac{\partial \mathcal{F}_C}{\partial W}\right|_{N_y,N_x}
            \end{pmatrix}
            =
            \begin{pmatrix}
                & & & -\left.\frac{\partial \mathcal{F}_C}{\partial W}\right|_{1, 1}\\
                -\left.\frac{\partial \mathcal{F}_C}{\partial W}\right|_{1,2}&&& \\
                &\ddots&&\\
                &&-\left.\frac{\partial \mathcal{F}_C}{\partial W}\right|_{N_y,N_x}&
            \end{pmatrix}.
        \end{equation}   
        The numerical dissipation term for viscosity can be block-encoded in a similar way as depicted in Fig.~\ref{fig:bea} (c).
        By applying Theorem~\ref{thm:qsvt} to implement the polynomial approximation $\mathbb{P}_\mu(U_T)$ of the block-encoding unitary $U_T$ with truncation error to be $\epsilon_\mu$, we derive an $\left(2\overline{\mu},  \log{\mathcal{S}}+5, \epsilon_\mu \right)$-block-encoding of $\mu$.
        Similarly, one can implement an $\left(2/\underline{\rho},  \log{\mathcal{S}}+1, \epsilon_{1/\rho} \right)$-block-encoding of $1/\rho$.
        Consequently, the diagonal numerical viscosity $\frac{\mu}{\rho}$ can be $\left({4\overline{\mu}}/{\underline{\rho}}, 2\log{\mathcal{S}}+6, 2\overline{\mu}\epsilon_{1/\rho} + 2\epsilon_\mu/\underline{\rho}\right)$-block-encoded. 
        In summary, A can be $(\alpha_A, 2\log{\mathcal{S}}+9, \epsilon_A)$-block-encoded by Algorithm~\ref{alg:bea}, wherein
        \begin{equation}
            \begin{split}
                \alpha_A =& 2\alpha_{\frac{\partial \mathcal{F}_C}{\partial W}} + 2\alpha_{\frac{\partial \mathcal{G}_C}{\partial W}} + 4\overline{\mu}/\underline{\rho}+1\\
                =& 0.8\alpha_u^3 + 0.8\alpha_v^3 
                + 0.8\alpha_u^2\alpha_v
                + 0.8\alpha_v^2\alpha_u
                + 1.6\alpha_u^2
                + 1.6\alpha_v^2
                + 2.8\alpha_u\alpha_e
                + 2.8\alpha_v\alpha_e
                + 5.6\alpha_u\alpha_v\\
                &+ 12.8\alpha_u + 12.8\alpha_v + 5.6\alpha_e + 4\overline{\mu}/\underline{\rho}+6.6,
            \end{split}            
        \end{equation}
        and
        \begin{equation}\label{aeq:epsilon_A}
            \epsilon_A = 2\overline{\mu}\epsilon_{1/\rho} + 2\epsilon_\mu/\underline{\rho}.
        \end{equation}

    \begin{algorithm}[H]
        \caption{Hierarchy Spectral Block-Encoding of Jacobian Matrix (Conventional Synthesis Only)}
        \label{alg:bea}
        \begin{algorithmic}[1]
            \Require Spectrum $\mathrm{Spec}(\rho)$, $\mathrm{Spec}(u)$, $\mathrm{Spec}(v)$, $\mathrm{Spec}(e)$,             
            Viscosity upper bound $\Bar{\mu}$, Density lower bound $\underline{\rho}$, Truncation degrees $d_\mu, d_\rho$,
            
            Quantum registers $\ket{0}_\mathrm{Diag}^{\otimes \log D+2}$, $\ket{0}_\mathrm{Block}^{\otimes 2\log(D+2)}$, $\ket{0}_\mathrm{Cell}^{\otimes n}$, $\ket{0}_\mathrm{Spectral}^{\otimes (2\mathcal{S}+3)}$.
            \Ensure Block-encoding quantum circuit $\mathrm{BE}(A)$.
            \State Classically compute normalization constants $\alpha_\rho, \alpha_u, \alpha_v, \alpha_e, \alpha_H, \alpha_{\frac{\partial \mathcal{F}_C}{\partial W}}, \alpha_{\frac{\partial \mathcal{G}_C}{\partial W}}, \alpha_A$; \textcolor{myblue}{$\triangleright\mathcal{O}(\mathcal{S})$}
            \State Classically compute rotation angles $\overrightarrow{\theta}_D$ in Eq.~\eqref{aeq:lcu}, $(\overrightarrow{\theta}_L^{\pm k}, \overrightarrow{\theta}_R^{\pm k})$ in Eqs.~(\ref{aeq:angle_l}-\ref{aeq:angle_r}), and phase angles $\overrightarrow{\phi}_\mu$, $\overrightarrow{\phi}_{1/\rho}$, $\overrightarrow{\phi}_{x^2}$;
            \textcolor{myblue}{$\triangleright\mathcal{O}(D^2)$}
            \State Apply $P_L(\overrightarrow{\theta}_D)$ on $\ket{0}_\mathrm{Diag}^{\otimes \log D+2}$; \textcolor{mypink}{ $\triangleright\mathcal{RD}=\mathcal{RC}=D+1$}
            \State \textcolor{myred}{/* Block-encoding of convective flux*/}
            \For{$k\in[D]$}
                \For{$Sign \in \{+, -\}$} 
                    \State Apply $U_L(\overrightarrow{\theta}_L^{\pm k})$ in Eq.~\eqref{aeq:u_l} on $\ket{0, I}_\mathrm{Block}^{\otimes 2\log(D+2)}$ controlled by $\ket{0, k, \pm}_\mathrm{Diag}^{\otimes \log D+2}$ and parallel SWAP on $\ket{I', I}_\mathrm{Block}^{\otimes 2\log(D+2)}$;
                    \State \textcolor{mypink}{$\triangleright\mathcal{TD}=2D+\lceil\log{(\lceil\log D\rceil+2)}\rceil, \mathcal{TC}=2D+\lceil\log D\rceil+1$, $\mathcal{RD}=\mathcal{RC}=2D+2$}
                    \For{$I, I' \in [D+2]\times[D+2]$}
                        \State Apply $U_{I,I'}^{\pm k}$ on $\ket{0}_\mathrm{Cell}^{\otimes n}\ket{0}_\mathrm{Spectral}^{\otimes (2\mathcal{S}+3)}$ controlled by $\ket{0, k, \pm}_\mathrm{Diag}^{\otimes \log D+2}\ket{I, I'}_\mathrm{Block}^{\otimes 2\log(D+2)}$:
                        \textcolor{myred}{/* e.g., $U_{I,I'}^{+1}=\mathrm{BE}\left(+\frac{\partial \mathcal{F}_C}{\partial W}\right)$ */}
                        \If{$(I, I') \in \{(1, 1), (2, 2), (3, 2)\}$ for $u$ \textbf{ or } $(I, I') \in \{(1, 2), (2, 1)\}$ for $v$}
                            \State Apply $P_L(u/v)$ on $\ket{0}_\mathrm{Spectral}^{\otimes \mathcal{S}}$ controlled by $\ket{0, k, \pm}_\mathrm{Diag}^{\otimes \log D+2}\ket{I, I'}_\mathrm{Block}^{\otimes 2\log(D+2)}$;
                            \State \textcolor{mypink}{ $\triangleright\mathcal{TD}=2\mathcal{S}+\lceil\log{(2\lceil\log{(D+2)}\rceil+\lceil\log D\rceil+2)}\rceil, \mathcal{TC}=2\mathcal{S}+2\lceil\log{(D+2)}\rceil+\lceil\log D\rceil+1$, $\mathcal{RD}=\mathcal{RC}=2\mathcal{S}+2$}
                            \State Apply $U_\mathrm{FBBE}$ on $\ket{0}_\mathrm{Cell}^{\otimes n}\ket{0}_\mathrm{Spectral}^{\otimes (2\mathcal{S}+3)}$ controlled by $\ket{0, k, \pm}_\mathrm{Diag}^{\otimes \log D+2}\ket{I, I'}_\mathrm{Block}^{\otimes 2\log(D+2)}$;
                            \State \textcolor{mypink}{ $\triangleright\mathcal{TD}=\mathcal{TC}=2\mathcal{S}$, $\mathcal{RD}=2\log\mathcal{S}, \mathcal{RC}=2n\log\mathcal{S}$}
                            \State Apply $P_R^\dagger(u/v)$ on $\ket{0}_\mathrm{Spectral}^{\otimes \mathcal{S}}$ controlled by $\ket{0, k, \pm}_\mathrm{Diag}^{\otimes \log D+2}\ket{I, I'}_\mathrm{Block}^{\otimes 2\log(D+2)}$;
                            \State \textcolor{mypink}{ $\triangleright\mathcal{TD}=2\mathcal{S}+\lceil\log{(2\lceil\log{(D+2)}\rceil+\lceil\log D\rceil+2)}\rceil, \mathcal{TC}=2\mathcal{S}+2\lceil\log{(D+2)}\rceil+\lceil\log D\rceil+1$, $\mathcal{RD}=\mathcal{RC}=2\mathcal{S}+2$}
                        \ElsIf{$(I, I') \in \{(3, 2), (2, 0)\}$ for $uv$} \textcolor{myred}{/* Product of block-encodings of $u$ and $v$ */}
                            \State Apply $P_L(u)\otimes P_L(v)$ on $\ket{0}_\mathrm{Spectral}^{\otimes 2\mathcal{S}}$ controlled by $\ket{0, k, \pm}_\mathrm{Diag}^{\otimes \log D+2}\ket{I, I'}_\mathrm{Block}^{\otimes 2\log(D+2)}$;
                            \State \textcolor{mypink}{ $\triangleright\mathcal{TD}=4\mathcal{S}+2\lceil\log{(2\lceil\log{(D+2)}\rceil+\lceil\log D\rceil+2)}\rceil, \mathcal{TC}=4\mathcal{S}+4\lceil\log{(D+2)}\rceil+2\lceil\log D\rceil+2$, $\mathcal{RD}=\mathcal{RC}=4\mathcal{S}+4$}
                            \State Apply two $U_\mathrm{FBBE}$ on $\ket{0}_\mathrm{Cell}^{\otimes n}\ket{0}_\mathrm{Spectral}^{\otimes (2\mathcal{S}+3)}$ controlled by $\ket{0, k, \pm}_\mathrm{Diag}^{\otimes \log D+2}\ket{I, I'}_\mathrm{Block}^{\otimes 2\log(D+2)}$;
                            \State \textcolor{mypink}{ $\triangleright\mathcal{TD}=\mathcal{TC}=4\mathcal{S}$, $\mathcal{RD}=4\log\mathcal{S}, \mathcal{RC}=4n\log\mathcal{S}$}
                            \State Apply $P_R^\dagger(u)\otimes P_R^\dagger(v)$ on $\ket{0}_\mathrm{Spectral}^{\otimes 2\mathcal{S}}$ controlled by $\ket{0, k, \pm}_\mathrm{Diag}^{\otimes \log D+2}\ket{I, I'}_\mathrm{Block}^{\otimes 2\log(D+2)}$;
                            \State \textcolor{mypink}{ $\triangleright\mathcal{TD}=4\mathcal{S}+2\lceil\log{(2\lceil\log{(D+2)}\rceil+\lceil\log D\rceil+2)}\rceil, \mathcal{TC}=4\mathcal{S}+4\lceil\log{(D+2)}\rceil+2\lceil\log D\rceil+2$, $\mathcal{RD}=\mathcal{RC}=4\mathcal{S}+4$}
                        \ElsIf{$(I, I') = (1, 0) \text{ for } \frac{\gamma-3}{2}u^2+\frac{\gamma-1}{2}v^2 \textbf{ or } (3, 1) \text{ for } H+(1-\gamma)u^2$} \textcolor{myred}{/* Linear combination of $u^2$, $v^2$, and $e$ */}
                            \State Apply $P_L(\cdot)$ on $\ket{0}_\mathrm{Spectral}^{\otimes 2}$ controlled by $\ket{0, k, \pm}_\mathrm{Diag}^{\otimes \log D+2}\ket{I, I'}_\mathrm{Block}^{\otimes 2\log(D+2)}$;
                            \State \textcolor{mypink}{ $\triangleright\mathcal{TD}=\lceil\log{(2\lceil\log{(D+2)}\rceil+\lceil\log D\rceil+2)}\rceil + 8, \mathcal{TC}=2\lceil\log{(D+2)}\rceil+\lceil\log D\rceil+9$, $\mathcal{RD}=\mathcal{RC}=10$}
                            \State Apply $\mathrm{QSVT}(U_u, \overrightarrow{\phi}_{x^2})$ on $\ket{0}_\mathrm{Cell}^{\otimes n}\ket{0}_\mathrm{Spectral}^{\otimes (\mathcal{S}+1)}$ controlled by $\ket{0, k, \pm}_\mathrm{Diag}^{\otimes \log D+2}\ket{I, I'}_\mathrm{Block}^{\otimes 2\log(D+2)}\ket{0}_\mathrm{Spectral}^{\otimes 2}$;
                            \State \textcolor{mypink}{ $\triangleright\mathcal{TD}= 4\lceil\log{(\lceil\log D\rceil+2)}\rceil+4\lceil\log\lceil\log\mathcal{S}\rceil\rceil,\mathcal{TC}=4\lceil\log\mathcal{S}D\rceil$, $\mathcal{RD}=4\mathcal{S}+4\log\mathcal{S}, \mathcal{RC}=4n\log\mathcal{S}+4\mathcal{S}$}
                            \State Apply $\mathrm{QSVT}(U_v, \overrightarrow{\phi}_{x^2})$ on $\ket{0}_\mathrm{Cell}^{\otimes n}\ket{0}_\mathrm{Spectral}^{\otimes (\mathcal{S}+1)}$ controlled by $\ket{0, k, \pm}_\mathrm{Diag}^{\otimes \log D+2}\ket{I, I'}_\mathrm{Block}^{\otimes 2\log(D+2)}\ket{0}_\mathrm{Spectral}^{\otimes 2}$;
                            \State \textcolor{mypink}{ $\triangleright\mathcal{TD}= 4\lceil\log{(\lceil\log D\rceil+2)}\rceil+4\lceil\log\lceil\log\mathcal{S}\rceil\rceil,\mathcal{TC}=4\lceil\log\mathcal{S}D\rceil$, $\mathcal{RD}=4\mathcal{S}+4\log\mathcal{S}, \mathcal{RC}=4n\log\mathcal{S}+4\mathcal{S}$}
                            \State Apply $U_e$ on $\ket{0}_\mathrm{Cell}^{\otimes n}\ket{0}_\mathrm{Spectral}^{\otimes (\mathcal{S})}$ controlled by $\ket{0, k, \pm}_\mathrm{Diag}^{\otimes \log D+2}\ket{I, I'}_\mathrm{Block}^{\otimes 2\log(D+2)}\ket{0}_\mathrm{Spectral}^{\otimes 2}$;
                            \State \textcolor{mypink}{ $\triangleright\mathcal{TD}=6\mathcal{S}+2\lceil\log{(2\lceil\log{(D+2)}\rceil+\lceil\log D\rceil+4)}\rceil, \mathcal{TC}=6\mathcal{S}+4\lceil\log{(D+2)}\rceil+2\lceil\log D\rceil+6$}
                            \State \textcolor{mypink}{ $\triangleright\mathcal{RD}=4\mathcal{S}+2\log\mathcal{S}+4, \mathcal{RC}=2n\log\mathcal{S}+4\mathcal{S}+4$}
                            \State Apply $P_R^\dagger(\cdot)$ on $\ket{0}_\mathrm{Spectral}^{\otimes 2}$ controlled by $\ket{0, k, \pm}_\mathrm{Diag}^{\otimes \log D+2}\ket{I, I'}_\mathrm{Block}^{\otimes 2\log(D+2)}$;
                            \State \textcolor{mypink}{ $\triangleright\mathcal{TD}=\lceil\log{(2\lceil\log{(D+2)}\rceil+\lceil\log D\rceil+2)}\rceil + 8, \mathcal{TC}=2\lceil\log{(D+2)}\rceil+\lceil\log D\rceil+9$, $\mathcal{RD}=\mathcal{RC}=10$}
                        \ElsIf{$(I, I') = (3, 0)$ for $\left(\frac{\gamma-1}{2}u^2+\frac{\gamma-1}{2}v^2-H\right)u$} \textcolor{myred}{/* Product of block-encodings of $u$ and $\left(\frac{\gamma-1}{2}u^2+\frac{\gamma-1}{2}v^2-H\right)$ */}
                            \State Apply $\mathrm{LCU}(u^2, v^2, e)$ on $\ket{0}_\mathrm{Cell}^{\otimes n}\ket{0}_\mathrm{Spectral}^{\otimes (\mathcal{S}+3)}$ controlled by $\ket{0, k, \pm}_\mathrm{Diag}^{\otimes \log D+2}\ket{I, I'}_\mathrm{Block}^{\otimes 2\log(D+2)}$;
                            \State \textcolor{mypink}{ $\triangleright\mathcal{TD}\leq6\mathcal{S}+4\lceil\log{(2\lceil\log{(D+2)}\rceil+\lceil\log D\rceil+4)}\rceil+8\lceil\log{(\lceil\log D\rceil+2)}\rceil+8\lceil\log\lceil\log\mathcal{S}\rceil\rceil + 16$}
                            \State \textcolor{mypink}{ $\triangleright \mathcal{TC}=6\mathcal{S}+8\lceil\log\mathcal{S}D\rceil+8\lceil\log{(D+2)}\rceil+4\lceil\log D\rceil+24,\mathcal{RD}=12\mathcal{S}+10\log\mathcal{S}+24, \mathcal{RC}=10n\log\mathcal{S}+12\mathcal{S}+24$}
                            \State Apply $U_u$ on $\ket{0}_\mathrm{Cell}^{\otimes n}\ket{0}_\mathrm{Spectral}^{\otimes (\mathcal{S})}$ controlled by $\ket{0, k, \pm}_\mathrm{Diag}^{\otimes \log D+2}\ket{I, I'}_\mathrm{Block}^{\otimes 2\log(D+2)}$;
                            \State \textcolor{mypink}{ $\triangleright\mathcal{TD}=6\mathcal{S}+2\lceil\log{(2\lceil\log{(D+2)}\rceil+\lceil\log D\rceil+2)}\rceil, \mathcal{TC}=6\mathcal{S}+4\lceil\log{(D+2)}\rceil+2\lceil\log D\rceil+2$}
                            \State \textcolor{mypink}{ $\triangleright\mathcal{RD}=4\mathcal{S}+2\log\mathcal{S}+4, \mathcal{RC}=2n\log\mathcal{S}+4\mathcal{S}+4$}
                        \EndIf
                    \EndFor
                    \State Apply $U_R^\dagger(\overrightarrow{\theta}_R^{\pm k})$ in Eq.~\eqref{aeq:u_r} on $\ket{I, I'}_\mathrm{Block}^{\otimes 2\log(D+2)}$ controlled by $\ket{0, k, \pm}_\mathrm{Diag}^{\otimes \log D+2}$;
                    \State \textcolor{mypink}{$\triangleright\mathcal{TD}=2D^2+4D+\lceil\log{(\lceil\log\rceil D+2)}\rceil, \mathcal{TC}=2D^2+4D+\lceil\log D\rceil+1$, $\mathcal{RD}=\mathcal{RC}=2D^2+6D+4$}
                    \State Apply Quantum Arithmetic on $\ket{0}_\mathrm{Cell}^{\otimes n}$ controlled by  $\ket{0, k, \pm}_\mathrm{Diag}^{\otimes \log D+2}$; \textcolor{mypink}{$\triangleright\mathcal{TD}=\mathcal{TC}=2\log n+9$}
                \EndFor
            \EndFor
            \State \textcolor{myred}{/* Block-encoding of numerical dissipation for viscosity*/}
            \State Apply $\mathrm{QSVT}(U_e, \overrightarrow{\phi}_{\mu})$ on $\ket{0}_\mathrm{Cell}^{\otimes n}\ket{0}_\mathrm{Spectral}^{\otimes (\mathcal{S}+1)}$ controlled by $\ket{1, 0, 0}_\mathrm{Diag}^{\otimes \log D+2}$;
            \State \textcolor{mypink}{ $\triangleright\mathcal{TD}= 2d_\mu(\lceil\log{(\lceil\log D\rceil+2)}\rceil+\lceil\log\lceil\log\mathcal{S}\rceil\rceil),\mathcal{TC}=2d_\mu\lceil\log\mathcal{S}D\rceil$, $\mathcal{RD}=2d_\mu(\mathcal{S}+\log\mathcal{S}), \mathcal{RC}=2d_\mu(n\log\mathcal{S}+\mathcal{S})$}
            \State Apply $\mathrm{QSVT}(U_\rho, \overrightarrow{\phi}_{1/\rho})$ on $\ket{0}_\mathrm{Cell}^{\otimes n}\ket{0}_\mathrm{Spectral}^{\otimes (\mathcal{S}+1)}$ controlled by $\ket{1, 0, 0}_\mathrm{Diag}^{\otimes \log D+2}$;
            \State \textcolor{mypink}{ $\triangleright\mathcal{TD}= 2d_\rho(\lceil\log{(\lceil\log D\rceil+2)}\rceil+\lceil\log\lceil\log\mathcal{S}\rceil\rceil),\mathcal{TC}=2d_\rho\lceil\log\mathcal{S}D\rceil$, $\mathcal{RD}=2d_\rho(\mathcal{S}+\log\mathcal{S}), \mathcal{RC}=2d_\rho(n\log\mathcal{S}+\mathcal{S})$}
            \State Apply $P_R^\dagger(\overrightarrow{\theta}_D)$ on $\ket{0}_\mathrm{DIAG}^{\otimes 3}$; \textcolor{mypink}{/* $\mathcal{RD}=\mathcal{RC}=D+1$ */}
            \State \Return  A $\left({4\overline{\mu}}/{\underline{\rho}}, 2\log{\mathcal{S}}+6, 2\overline{\mu}\epsilon_{1/\rho} + 2\epsilon_\mu/\underline{\rho}\right)$-block-encoding of Jacobian Matrix $A$.
        \end{algorithmic}
    \end{algorithm}

    \begin{figure}
        \centering
        \begin{tikzpicture}
            \node{\includegraphics[width=0.95\textwidth]{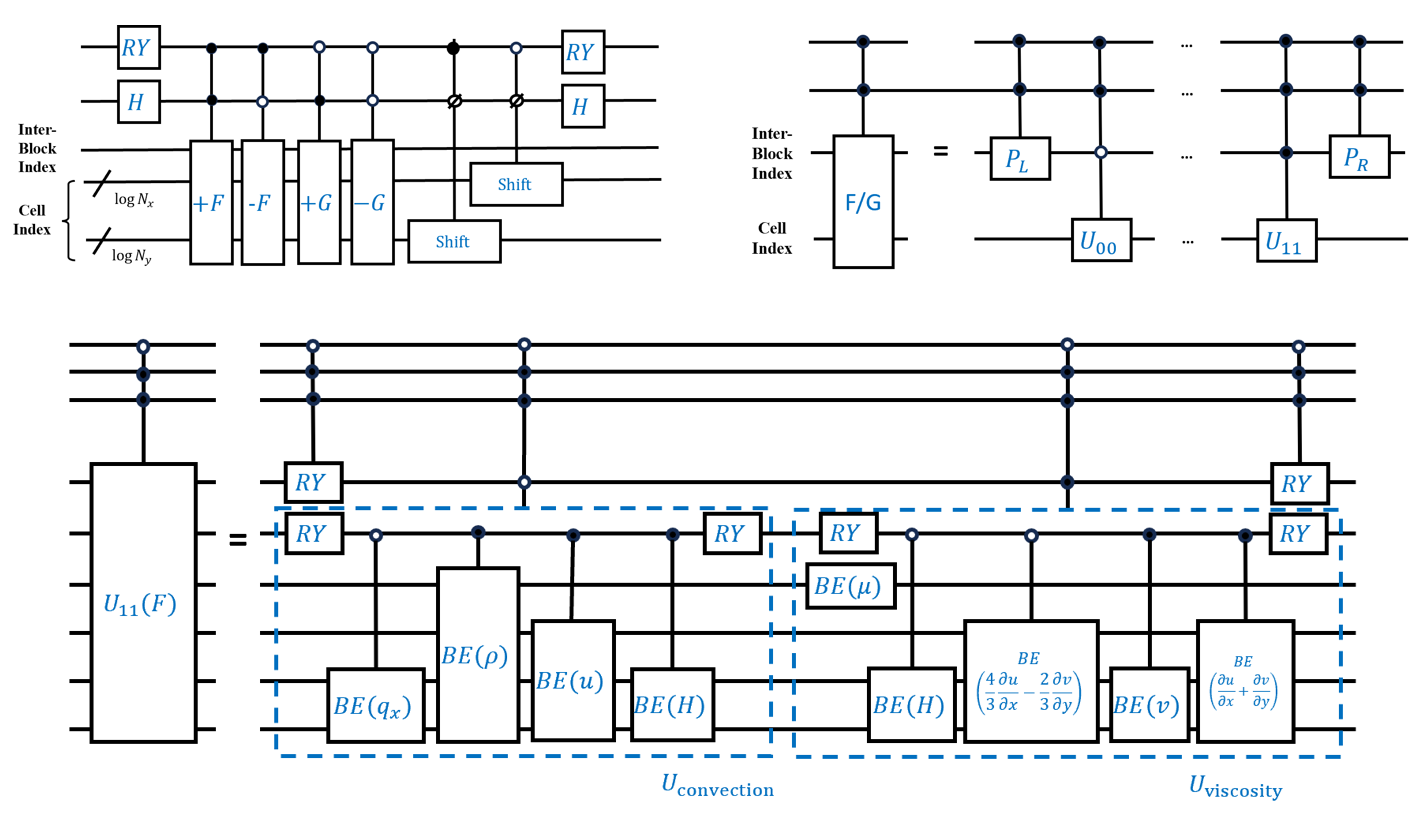}};
            \node[] at (-8.5, 4.5) {\textbf{(a)}};
            \node[] at (0.5, 4.5) {\textbf{(b)}};
            \node[] at (-8.5, 0.5) {\textbf{(c)}};
        \end{tikzpicture}
        \caption{Block-encoding circuit of residual vector $b$. (a) The global level circuit. (b) The horizontal/vertical flux level circuit. (c) The element-level circuit.}
        \label{fig:beb}
    \end{figure}
    We then describe the quantum circuit to block encode the residual vector $b$ in the middle of Fig.~\ref{fig:algorithm}(a). It is easy to check that $b$ can be block-encoded as a linear combination, given state preparation rotation angles
    \begin{equation}\label{aeq:dplus}
        \overrightarrow{\theta}_D'=2(\arctan{\left(\sqrt{\frac{\alpha_\mathcal{G}}{\alpha_\mathcal{F}}}\right)}, \frac{\pi}{4}, \frac{\pi}{4}),
    \end{equation}
    of $2D$ flux operators $\pm F, \pm G, \cdots$ followed by a shift operator, as illustrated in Fig.~\ref{fig:beb}~(a).
    Herein, each flux operator $F/G$ can be decomposed as a state preparation pair $(P_L, P_R^\dagger)$ with rotation angles
    \begin{equation}\label{aeq:angle_spp}
        \begin{split}
            \overrightarrow{\theta}_\mathrm{SPP}^{\pm1}=&\pm2\left(
            \arctan\left(\sqrt{\frac{\alpha_{\mathcal{F}(10)}+\alpha_{\mathcal{F}(11)}}{\alpha_{\mathcal{F}(00)}+\alpha_{\mathcal{F}(01)}}}\right),
            \arctan\left(\sqrt{\frac{\alpha_{\mathcal{F}(01)}}{\alpha_{\mathcal{F}(00)}}}\right),
            \arctan\left(\sqrt{\frac{\alpha_{\mathcal{F}(11)}}{\alpha_{\mathcal{F}(10)}}}\right)
            \right)\\
            \overrightarrow{\theta}_\mathrm{SPP}^{\pm2}=&\pm2\left(
            \arctan\left(\sqrt{\frac{\alpha_{\mathcal{G}(10)}+\alpha_{\mathcal{G}(11)}}{\alpha_{\mathcal{G}(00)}+\alpha_{\mathcal{G}(01)}}}\right),
            \arctan\left(\sqrt{\frac{\alpha_{\mathcal{G}(01)}}{\alpha_{\mathcal{G}(00)}}}\right),
            \arctan\left(\sqrt{\frac{\alpha_{\mathcal{G}(11)}}{\alpha_{\mathcal{G}(10)}}}\right)
            \right)
        \end{split}
    \end{equation}
    on the inter-block index and a sequence of element-level block-encoding unitaries $U_{00}, U_{01}, U_{10}, U_{11}$ on the cell index controlled by the inter-block index, as depicted in Fig.~\ref{fig:beb}~(b).
    Since in $b$ both convection and viscosity fluxes are considered simultaneously, each unitary $U_{11}(\cdot)$ contains two ingredients of $U_{convection}$ and $U_{viscosity}$, illustrated in Fig.~\ref{fig:beb}~(c). 
    A similar computation shows that the four components of the horizontal flux can be $(\alpha_{\mathcal{F}({k})}, 3\log\mathcal{S}+5, \epsilon_{\mathcal{F}({k})})$-block-encoded for $k\in\{00, 01, 10, 11\}$ with 
    \begin{align}
        \alpha_{\mathcal{F}({00})} &= \alpha_\rho\alpha_u \\
        \alpha_{\mathcal{F}({01})} &= \alpha_\rho\left(\alpha_u^2+(\gamma-1)\alpha_e\right) + \frac{2}{3}\overline{\mu}\alpha_{2\frac{\partial u}{\partial x}-\frac{\partial v}{\partial y}}\\
        \alpha_{\mathcal{F}({10})} &= \alpha_\rho\alpha_u\alpha_v + \overline{\mu}\alpha_{\frac{\partial u}{\partial x}+\frac{\partial v}{\partial y}} \\\label{aeq:alpha_flux}
        \alpha_{\mathcal{F}({11})} &=  \alpha_\rho\alpha_u\alpha_H + \alpha_{q_x} + \frac{2}{3}\overline{\mu}\alpha_H\alpha_{2\frac{\partial u}{\partial x}-\frac{\partial v}{\partial y}} + 
        \overline{\mu}\alpha_v\alpha_{\frac{\partial u}{\partial x}+\frac{\partial v}{\partial y}}
    \end{align}
    and 
    \begin{align}
        \epsilon_{\mathcal{F}({00})} &= 0 \\
        \epsilon_{\mathcal{F}({01})} &= \frac{2}{3}\alpha_{2\frac{\partial u}{\partial x}-\frac{\partial v}{\partial y}}\epsilon_{\mu}\\
        \epsilon_{\mathcal{F}({10})} &= \alpha_{\frac{\partial u}{\partial x}+\frac{\partial v}{\partial y}}\epsilon_{\mu} \\
        \epsilon_{\mathcal{F}({11})} &=  \left( \frac{2}{3}\alpha_H\alpha_{2\frac{\partial u}{\partial x}-\frac{\partial v}{\partial y}} + \alpha_v\alpha_{\frac{\partial u}{\partial x}+\frac{\partial v}{\partial y}} \right)\epsilon_\mu,
    \end{align}
    where $\alpha_{2\frac{\partial u}{\partial x}-\frac{\partial v}{\partial y}}$ and $\alpha_{\frac{\partial u}{\partial x}+\frac{\partial v}{\partial y}}$ are the $L_1$-norm of the Fourier spectral of these partial derivetives.
    Notably, these derivatives can be efficiently computed in the Fourier spectral space, and the spectral sparsity is also upper bounded by $\mathcal{S}$.
    Herein, the rotation angles are decided by the normalization constants of each underlying component. For example, the rotation angles for Eq.~\eqref{aeq:alpha_flux} can be computed as:
    \begin{equation}
        \begin{split}
            \overrightarrow{\theta}_{\mathrm{SPP}, I}^{\pm k} =& \mathrm{Amplitude-Encode}(\alpha_{{k}({I})})\\
            =& 2\left( \arctan\sqrt{\frac{\frac{2}{3}\overline{\mu}\alpha_H\alpha_{2\frac{\partial u}{\partial x}-\frac{\partial v}{\partial y}} + 
        \overline{\mu}\alpha_v\alpha_{\frac{\partial u}{\partial x}+\frac{\partial v}{\partial y}}}{\alpha_\rho\alpha_u\alpha_H + \alpha_{q_x}}}, 
        \arctan\sqrt{\frac{\alpha_\rho\alpha_u\alpha_H}{\alpha_{q_x}}}, 
        \arctan\sqrt{\frac{\overline{\mu}\alpha_v\alpha_{\frac{\partial u}{\partial x}+\frac{\partial v}{\partial y}}}{\frac{2}{3}\overline{\mu}\alpha_H\alpha_{2\frac{\partial u}{\partial x}-\frac{\partial v}{\partial y}}}}\right)
        \end{split}
    \end{equation}
    Consequently, the horizontal and vertical fluxes can be $(\alpha_\mathcal{F}, 3\log\mathcal{S}+7, \epsilon_\mathcal{F})$- and $(\alpha_\mathcal{G}, 3\log\mathcal{S}+7, \epsilon_\mathcal{G})$-block-encoded, wherein
    \begin{align}
        \alpha_\mathcal{F} =& \left(\alpha_u\alpha_H + \alpha_u^2 + \alpha_u\alpha_v + \alpha_u + \alpha_{q_x} + (\gamma-1)\alpha_e \right)\alpha_\rho + 
        \frac{2}{3}\overline{\mu}(\alpha_H+1)\alpha_{2\frac{\partial u}{\partial x}-\frac{\partial v}{\partial y}} + 
        \overline{\mu}(\alpha_v+1)\alpha_{\frac{\partial u}{\partial x}+\frac{\partial v}{\partial y}},\\
        \alpha_\mathcal{G} =& \left(\alpha_v\alpha_H + \alpha_v^2 + \alpha_u\alpha_v + \alpha_v + \alpha_{q_y} + (\gamma-1)\alpha_e \right)\alpha_\rho + 
        \frac{2}{3}\overline{\mu}(\alpha_H+1)\alpha_{2\frac{\partial v}{\partial y}-\frac{\partial u}{\partial x}} + 
        \overline{\mu}(\alpha_u+1)\alpha_{\frac{\partial u}{\partial x}+\frac{\partial v}{\partial y}},
    \end{align}
    and
    \begin{align}
        \epsilon_\mathcal{F} =& \left( \frac{4}{3}\alpha_H\alpha_{2\frac{\partial u}{\partial x}-\frac{\partial v}{\partial y}} + 2\alpha_v\alpha_{\frac{\partial u}{\partial x}+\frac{\partial v}{\partial y}} \right)\epsilon_\mu,\\
        \epsilon_\mathcal{G} =& \left( \frac{4}{3}\alpha_H\alpha_{2\frac{\partial v}{\partial y}-\frac{\partial u}{\partial x}} + 2\alpha_v\alpha_{\frac{\partial u}{\partial x}+\frac{\partial v}{\partial y}} \right)\epsilon_\mu.
    \end{align}
    In summary, $b$ can be $(\alpha_b, 3\log\mathcal{S}+9, \epsilon_b)$-block-encoded by Algorithm~\ref{alg:beb} with
    \begin{equation}
        \begin{split}
            \alpha_b =& 2\alpha_\mathcal{F} + 2\alpha_\mathcal{G}\\
        =& 2\left((\alpha_u+\alpha_v)\alpha_H + \alpha_u^2 + \alpha_v^2 + 2\alpha_u\alpha_v + \alpha_u + \alpha_v + \alpha_{q_x}+ \alpha_{q_y}  + 2(\gamma-1)\alpha_e \right)\alpha_\rho\\
        &+ \frac{4}{3}\overline{\mu}(\alpha_H+1)(\alpha_{2\frac{\partial u}{\partial x}-\frac{\partial v}{\partial y}}+\alpha_{2\frac{\partial v}{\partial y}-\frac{\partial u}{\partial x}}) + 2\overline{\mu}(\alpha_u+\alpha_v+2)\alpha_{\frac{\partial u}{\partial x}+\frac{\partial v}{\partial y}}\\
        =& 2\left[(\alpha_u+\alpha_v)(1.4\alpha_e+0.2\alpha_u^2+0.2\alpha_v^2) + \alpha_u^2 + \alpha_v^2 + 2\alpha_u\alpha_v + \alpha_u + \alpha_v + \alpha_{q_x}+ \alpha_{q_y}  + 0.8\alpha_e \right]\alpha_\rho\\
        &+ \frac{4}{3}\overline{\mu}(1.4\alpha_e+0.2\alpha_u^2+0.2\alpha_v^2+1)\left(\alpha_{2\frac{\partial u}{\partial x}-\frac{\partial v}{\partial y}}+\alpha_{2\frac{\partial v}{\partial y}-\frac{\partial u}{\partial x}}\right) + 
        2\overline{\mu}(\alpha_u+\alpha_v+2)\alpha_{\frac{\partial u}{\partial x}+\frac{\partial v}{\partial y}}
        \end{split}
    \end{equation}
    and
    \begin{equation}\label{aeq:epsilon_b}
        \begin{split}
            \epsilon_b =& 2\epsilon_\mathcal{F} + 2\epsilon_\mathcal{G}
            = \left[ \frac{4}{3}(1.4\alpha_e+0.2\alpha_u^2+0.2\alpha_v^2)\left(\alpha_{2\frac{\partial u}{\partial x}-\frac{\partial v}{\partial y}}+\alpha_{2\frac{\partial v}{\partial y}-\frac{\partial u}{\partial x}}\right) + 2(\alpha_u+\alpha_v)\alpha_{\frac{\partial u}{\partial x}+\frac{\partial v}{\partial y}} \right]\epsilon_\mu.
        \end{split}
    \end{equation}

    \begin{algorithm}[H]
        \caption{Hierarchy Spectral Block-Encoding of Residual Vector (Conventional Synthesis Only)}
        \label{alg:beb}
        \begin{algorithmic}[1]
            \Require Spectrum $\mathrm{Spec}(\rho)$, $\mathrm{Spec}(u)$, $\mathrm{Spec}(v)$, $\mathrm{Spec}(e)$,             
            Viscosity upper bound $\Bar{\mu}$, Truncation degrees $d_\mu$,
            
            Quantum registers $\ket{0}_\mathrm{Diag}^{\otimes \log D+1}$, $\ket{0}_\mathrm{Block}^{\otimes \log (D+2)}$, $\ket{0}_\mathrm{Cell}^{\otimes n}$, $\ket{0}_\mathrm{Spectral}^{\otimes (2\mathcal{S}+3)}$.
            \Ensure Block-encoding quantum circuit $\mathrm{BE}(b)$.
            \State Classically compute normalization constants $\alpha_\rho, \alpha_u, \alpha_v, \alpha_e, \alpha_{2\frac{\partial u}{\partial x}-\frac{\partial v}{\partial y}}, \alpha_{2\frac{\partial v}{\partial y}-\frac{\partial u}{\partial x}}, \alpha_{\frac{\partial v}{\partial y}+\frac{\partial u}{\partial x}}, \alpha_H, \alpha_{\mathcal{F}}, \alpha_{\mathcal{G}}, \alpha_b$; \textcolor{myblue}{$\triangleright\mathcal{O}(\mathcal{S})$}
            \State Classically compute rotation angles $\overrightarrow{\theta}_D'$ in Eq.~\eqref{aeq:dplus}, $\overrightarrow{\theta}_\mathrm{SPP}^{\pm k}$ in Eq.~(\ref{aeq:angle_spp}),  and phase angles $\overrightarrow{\phi}_\mu$, $\overrightarrow{\phi}_{x^2}$;
            \textcolor{myblue}{$\triangleright\mathcal{O}(D^2)$}
            \State Apply $P_L(\overrightarrow{\theta}_D')$ on $\ket{0}_\mathrm{Diag}^{\otimes \log D+1}$; \textcolor{mypink}{ $\triangleright\mathcal{RD}=\mathcal{RC}=D+1$}
            \For{$k\in[D]$}
                \For{$Sign \in \{+, -\}$} 
                    \State \textcolor{myred}{/* Block-encoding of convective flux*/}
                    \State Apply $P_L(\overrightarrow{\theta}_\mathrm{SPP}^{\pm k})$ on $\ket{0}_\mathrm{Block}^{\otimes \log (D+2)}$ controlled by $\ket{k, \pm}_\mathrm{Diag}^{\otimes \log D+1}$;
                    \State \textcolor{mypink}{$\triangleright\mathcal{TD}=2D+\lceil\log{(\lceil\log D\rceil+1)}\rceil, \mathcal{TC}=2D+\lceil\log D\rceil$, $\mathcal{RD}=\mathcal{RC}=2D+2$}
                    \For{$I \in [D+2]$}
                        \State Apply block-encoding $U_{I}^{\pm k}$ on $\ket{0}_\mathrm{Cell}^{\otimes n}\ket{0}_\mathrm{Spectral}^{\otimes (2\mathcal{S}+3)}$ controlled by $\ket{k, \pm}_\mathrm{Diag}^{\otimes \log D+1}\ket{I}_\mathrm{Block}^{\otimes \log (D+2)}$:
                        \State Apply block-encoding $P_L(\overrightarrow{\theta}_{\mathrm{SPP}, I}^{\pm k})$ on $\ket{0}_\mathrm{Spectral}^{\otimes 2}$ controlled by $\ket{k, \pm}_\mathrm{Diag}^{\otimes \log D+1}\ket{I}_\mathrm{Block}^{\otimes \log (D+2)}$
                        \State \textcolor{mypink}{$\triangleright\mathcal{TD}=\lceil\log{(\lceil\log{(D+2)}\rceil+\lceil\log D\rceil+1)}\rceil+4, \mathcal{TC}=\lceil\log{(D+2)}\rceil+\lceil\log D\rceil+2$, $\mathcal{RD}=\mathcal{RC}=2D+2$}
                        \State \textcolor{myred}{/* For example, convective term in component $\mathcal{F}(11)$ */}
                        \State Apply $U_\rho$ on $\ket{0}_\mathrm{Spectral}^{\otimes \mathcal{S}}$ controlled by $\ket{k, \pm}_\mathrm{Diag}^{\otimes \log D+1}\ket{I}_\mathrm{Block}^{\otimes \log (D+2)}\ket{00}_\mathrm{Spectral}^{\otimes 2}$;
                        \State \textcolor{mypink}{ $\triangleright\mathcal{TD}=2\mathcal{S}+\lceil\log{(\lceil\log{(D+2)}\rceil+\lceil\log D\rceil+3)}\rceil, \mathcal{TC}=\mathcal{S}+2\lceil\log{(D+2)}\rceil+\lceil\log D\rceil+2$, $\mathcal{RD}=\mathcal{RC}=2\mathcal{S}+2$}
                        \State Apply $U_u$ on $\ket{0}_\mathrm{Spectral}^{\otimes \mathcal{S}}$ controlled by $\ket{k, \pm}_\mathrm{Diag}^{\otimes \log D+1}\ket{I}_\mathrm{Block}^{\otimes \log (D+2)}\ket{00}_\mathrm{Spectral}^{\otimes 2}$;
                        \State \textcolor{mypink}{ $\triangleright\mathcal{TD}=2\mathcal{S}+\lceil\log{(\lceil\log{(D+2)}\rceil+\lceil\log D\rceil+3)}\rceil, \mathcal{TC}=\mathcal{S}+2\lceil\log{(D+2)}\rceil+\lceil\log D\rceil+2$, $\mathcal{RD}=\mathcal{RC}=2\mathcal{S}+2$}
                        \State Apply $U_H$ on $\ket{0}_\mathrm{Cell}^{\otimes n}\ket{0}_\mathrm{Spectral}^{\otimes (\mathcal{S}+3)}$ controlled by $\ket{k, \pm}_\mathrm{Diag}^{\otimes \log D+1}\ket{I}_\mathrm{Block}^{\otimes \log (D+2)}\ket{00}_\mathrm{Spectral}^{\otimes 2}$;
                        \State \textcolor{mypink}{ $\triangleright\mathcal{TD}\leq6\mathcal{S}+4\lceil\log{(\lceil\log{(D+2)}\rceil+\lceil\log D\rceil+3)}\rceil+8\lceil\log{(\lceil\log D\rceil+1)}\rceil+8\lceil\log\lceil\log\mathcal{S}\rceil\rceil + 16$}
                        \State \textcolor{mypink}{ $\triangleright \mathcal{TC}=6\mathcal{S}+8\lceil\log\mathcal{S}D\rceil+4\lceil\log{(D+2)}\rceil+4\lceil\log D\rceil+24,\mathcal{RD}=12\mathcal{S}+10\log\mathcal{S}+24, \mathcal{RC}=10n\log\mathcal{S}+12\mathcal{S}+24$}
                        \State Apply $U_{q_x}$ on $\ket{0}_\mathrm{Spectral}^{\otimes \mathcal{S}}$ controlled by $\ket{k, \pm}_\mathrm{Diag}^{\otimes \log D+1}\ket{I}_\mathrm{Block}^{\otimes \log (D+2)}\ket{01}_\mathrm{Spectral}^{\otimes 2}$;
                        \State \textcolor{mypink}{ $\triangleright\mathcal{TD}=2\mathcal{S}+\lceil\log{(\lceil\log{(D+2)}\rceil+\lceil\log D\rceil+3)}\rceil, \mathcal{TC}=\mathcal{S}+2\lceil\log{(D+2)}\rceil+\lceil\log D\rceil+2$, $\mathcal{RD}=\mathcal{RC}=2\mathcal{S}+2$}
                        \State \textcolor{myred}{/* Viscosity term in component $\mathcal{F}(11)$*/}
                        \State Apply $\mathrm{QSVT}(U_e, \overrightarrow{\phi}_{\mu})$ on $\ket{0}_\mathrm{Cell}^{\otimes n}\ket{0}_\mathrm{Spectral}^{\otimes (\mathcal{S}+1)}$ controlled by $\ket{k, \pm}_\mathrm{Diag}^{\otimes \log D+1}\ket{I}_\mathrm{Block}^{\otimes \log (D+2)}\ket{1}_\mathrm{Spectral}^{\otimes 1}$;
                        \State \textcolor{mypink}{ $\triangleright\mathcal{TD}\leq2d_\mu(\lceil\log{(\lceil2\log (D+2)\rceil+3)}\rceil+\lceil\log\lceil\log\mathcal{S}\rceil\rceil),\mathcal{TC}=2d_\mu\lceil\log\mathcal{S}(D+2)^2\rceil$}
                        \State \textcolor{mypink}{ $\triangleright\mathcal{RD}=2d_\mu(\mathcal{S}+\log\mathcal{S}), \mathcal{RC}=2d_\mu(n\log\mathcal{S}+2\mathcal{S})$}
                        \State Apply $U_{{2\frac{\partial u}{\partial x}-\frac{\partial v}{\partial y}}}$ on $\ket{0}_\mathrm{Spectral}^{\otimes \mathcal{S}}$ controlled by $\ket{k, \pm}_\mathrm{Diag}^{\otimes \log D+1}\ket{I}_\mathrm{Block}^{\otimes \log (D+2)}\ket{10}_\mathrm{Spectral}^{\otimes 2}$;
                        \State \textcolor{mypink}{ $\triangleright\mathcal{TD}=2\mathcal{S}+\lceil\log{(\lceil\log{(D+2)}\rceil+\lceil\log D\rceil+3)}\rceil, \mathcal{TC}=\mathcal{S}+2\lceil\log{(D+2)}\rceil+\lceil\log D\rceil+2$, $\mathcal{RD}=\mathcal{RC}=2\mathcal{S}+2$}
                        \State Apply $U_H$ on $\ket{0}_\mathrm{Cell}^{\otimes n}\ket{0}_\mathrm{Spectral}^{\otimes (\mathcal{S}+3)}$ controlled by $\ket{k, \pm}_\mathrm{Diag}^{\otimes \log D+1}\ket{I}_\mathrm{Block}^{\otimes \log (D+2)}\ket{10}_\mathrm{Spectral}^{\otimes 2}$;
                        \State \textcolor{mypink}{ $\triangleright\mathcal{TD}\leq6\mathcal{S}+4\lceil\log{(\lceil\log{(D+2)}\rceil+\lceil\log D\rceil+3)}\rceil+8\lceil\log{(\lceil\log D\rceil+1)}\rceil+8\lceil\log\lceil\log\mathcal{S}\rceil\rceil + 16$}
                        \State \textcolor{mypink}{ $\triangleright \mathcal{TC}=6\mathcal{S}+8\lceil\log\mathcal{S}D\rceil+4\lceil\log{(D+2)}\rceil+4\lceil\log D\rceil+24,\mathcal{RD}=12\mathcal{S}+10\log\mathcal{S}+24, \mathcal{RC}=10n\log\mathcal{S}+12\mathcal{S}+24$}
                        \State Apply $U_{\frac{\partial v}{\partial y}+\frac{\partial u}{\partial x}}$ on $\ket{0}_\mathrm{Spectral}^{\otimes \mathcal{S}}$ controlled by $\ket{k, \pm}_\mathrm{Diag}^{\otimes \log D+1}\ket{I}_\mathrm{Block}^{\otimes \log (D+2)}\ket{11}_\mathrm{Spectral}^{\otimes 2}$;
                        \State \textcolor{mypink}{ $\triangleright\mathcal{TD}=2\mathcal{S}+\lceil\log{(\lceil\log{(D+2)}\rceil+\lceil\log D\rceil+3)}\rceil, \mathcal{TC}=\mathcal{S}+2\lceil\log{(D+2)}\rceil+\lceil\log D\rceil+2$, $\mathcal{RD}=\mathcal{RC}=2\mathcal{S}+2$}
                        \State Apply $U_v$ on $\ket{0}_\mathrm{Spectral}^{\otimes \mathcal{S}}$ controlled by $\ket{k, \pm}_\mathrm{Diag}^{\otimes \log D+1}\ket{I}_\mathrm{Block}^{\otimes \log (D+2)}\ket{11}_\mathrm{Spectral}^{\otimes 2}$;
                        \State \textcolor{mypink}{ $\triangleright\mathcal{TD}=2\mathcal{S}+\lceil\log{(\lceil\log{(D+2)}\rceil+\lceil\log D\rceil+3)}\rceil, \mathcal{TC}=\mathcal{S}+2\lceil\log{(D+2)}\rceil+\lceil\log D\rceil+2$, $\mathcal{RD}=\mathcal{RC}=2\mathcal{S}+2$}
                        \State Apply block-encoding $P_R^\dagger(\overrightarrow{\theta}_{\mathrm{SPP}, I}^{\pm k})$ on $\ket{0}_\mathrm{Spectral}^{\otimes 2}$ controlled by $\ket{k, \pm}_\mathrm{Diag}^{\otimes \log D+1}\ket{I}_\mathrm{Block}^{\otimes \log (D+2)}$
                        \State \textcolor{mypink}{$\triangleright\mathcal{TD}=\lceil\log{(\lceil\log{(D+2)}\rceil+\lceil\log D\rceil+1)}\rceil+4, \mathcal{TC}=\lceil\log{(D+2)}\rceil+\lceil\log D\rceil+2$, $\mathcal{RD}=\mathcal{RC}=2D+2$}
                    \EndFor
                    \State Apply $P_R^\dagger(\overrightarrow{\theta}_\mathrm{SPP}^{\pm k})$ on $\ket{0}_\mathrm{Block}^{\otimes \log (D+2)}$ controlled by $\ket{k, \pm}_\mathrm{Diag}^{\otimes \log D+1}$;
                    \State \textcolor{mypink}{$\triangleright\mathcal{TD}=2D+\lceil\log{(\lceil\log D\rceil+1)}\rceil, \mathcal{TC}=2D+\lceil\log D\rceil$, $\mathcal{RD}=\mathcal{RC}=2D+2$}
                    \State Apply Quantum Arithmetic on $\ket{0}_\mathrm{Cell}^{\otimes n}$ controlled by  $\ket{0, k, \pm}_\mathrm{Diag}^{\otimes \log D+2}$; \textcolor{mypink}{$\triangleright\mathcal{TD}=\mathcal{TC}=2\log n+9$}
                \EndFor
            \EndFor
            \State Apply $P_R^\dagger(\overrightarrow{\theta}_D')$ on $\ket{0}_\mathrm{DIAG}^{\otimes \log D+1}$; \textcolor{mypink}{/* $\mathcal{RD}=\mathcal{RC}=D+1$ */}
            \State \Return  
        \end{algorithmic}
    \end{algorithm}

\subsection{Sparse spectral decoding}\label{sec:ssd}
In this section, we introduce a sparse spectral method for extracting information from the quantum state to facilitate iteration and final reconstruction of the fluid field.
By applying hierarchy spectral block-encoding  and quantum linear system solver in Theorem~\ref{theorem: qlss}, we can get a $2n$-qubit quantum state $\ket{W}$ representing the conservative variables on $N$ cells. 
However, the state in our algorithm is usually dense, so that the conventional state tomography procedure can be costly to read out the information, especially the signs of each amplitude (linearly dependent on $N$)~\cite{dalzell2023end}.

Since the state tomography complexity to extract amplitudes with signs grows linearly with the dimension, we aim to extract its spectral information instead. More precisely, the spectrum can be extracted once one can efficiently conduct a $D$-dimensional discrete Fourier transformation (DFT) on the original output state. Given a $D$-dimensional array $f[x_1, x_2, ..., x_D]$ of size $N = N_1\times N_2\times ... \times N_D$, its D-dimensional DFT is defined to be 
    \begin{equation}
        \Tilde{f}[k_1, k_2, ..., k_D] = \sum_{x_1=0}^{N_1-1}\sum_{x_2=0}^{N_2-1}\cdots\sum_{x_D=0}^{N_D-1}f[x_1, x_2, ..., x_D]\cdot e^{-2\pi i\left(\sum_{j=1}^D\frac{k_jx_j}{N_j}  \right)}.
    \end{equation}
Assume $f$ to be represented as a flattened vector
    \begin{equation}
        \Vec{f} = \left( f[0, 0, ..., 0], f[0, 0, ..., 1], ..., f[N_1-1, N_2-1, ..., N_D-1] \right)^\mathsf{T}
    \end{equation}
A direct computation shows that the $D$-dimensional DFT operator $F_{N_1, N_2, ..., N_D}$ targeting on $\Vec{f}$, due to the separability of high-dimensional DFT, can be decomposed into a tensor product of $D$ DFT operators as
    \begin{equation}\label{aeq:qft}
        F_{N_1, N_2, ..., N_D}=\bigotimes_{j=1}^D F_{N_j}= \bigotimes_{j=1}^D
        \begin{pmatrix}
            1 & 1 & 1 & \cdots & 1 \\
            1 & \omega_j & \omega_j^2 & \cdots & \omega_j^{N_j-1} \\
            1 & \omega_j^2 & \omega_j^4 & \cdots & \omega_j^{2(N_j-1)} \\
            \vdots & \vdots & \vdots & \ddots & \vdots \\
            1 & \omega_j^{N_j-1} & \omega_j^{2(N_j-1)} & \cdots & \omega_j^{(N_j-1)(N_j-1)}
        \end{pmatrix},
    \end{equation}
where $\omega_j=e^{2\pi i/N_j}$ is a primitive $N_j$-th root of unity.
Consequently, $D$ parallel inverse quantum fourier transformations on $n_j=\log{N_j} (1\leq j\leq D)$ qubits can transform the quantum linear system solver's original output state
    \begin{equation}
        \ket{W} = 
        \frac{1}{\mathcal{N}_W}\sum_{x_1=1}^{N_1}\cdots\sum_{x_D=1}^{N_D}{\underbrace{\left(\rho(\overrightarrow{x})\ket{0\cdots0}+\rho u(\overrightarrow{x})\ket{0\cdots1}+\cdots+\rho E(\overrightarrow{x})\ket{1\cdots1}\right)_\mathrm{inter-block}}_{(D+2)^2}}\ket{x_1, x_2, \cdots, x_D}_\mathrm{cell}
    \end{equation}
into
    \begin{equation}\label{aeq:w_spectral}
        \ket{\Tilde{W}} = \frac{1}{\mathcal{N}_{\Tilde{W}}}\sum_{k_1=0}^{N_1-1}\cdots\sum_{k_D=0}^{N_D-1}\underbrace{\left(\Tilde{\rho}(\overrightarrow{k})\ket{0\cdots0}+\Tilde{\rho}*\Tilde{u}(\overrightarrow{k})\ket{0\cdots1}+\cdots+\Tilde{\rho}*\Tilde{E} (\overrightarrow{k})\ket{1\cdots1}\right)}_{(D+2)^2}\ket{k_1, k_2, \cdots, k_D},
    \end{equation}
wherein $\Tilde{\rho}(\overrightarrow{k})$, $\Tilde{u}(\overrightarrow{k})$, $\Tilde{v}(\overrightarrow{k})$, ..., $\Tilde{\rho}(\overrightarrow{E})$ are the $(k_1, k_2, \cdots, k_D)$-th spectral coefficient of the corresponding variables, $*$ is the $D$-dimensional convolution product, and
\begin{equation}
    \mathcal{N}_W = \sqrt{\sum_{x_1=1}^{N_1}\cdots\sum_{x_D=1}^{N_D}\left(\rho^2(\overrightarrow{x})+(\rho u)^2(\overrightarrow{x})+(\rho v)^2(\overrightarrow{x})+\cdots+(\rho E)^2(\overrightarrow{x})\right)}
\end{equation}
and
\begin{equation}
    \mathcal{N}_{\Tilde{W}} = \sqrt{\sum_{k_1=0}^{N_1-1}\cdots\sum_{k_D=0}^{N_D-1}\left(\Tilde{\rho}^2(\overrightarrow{k})+(\Tilde{\rho}*\Tilde{u})^2(\overrightarrow{k})+(\Tilde{\rho}*\Tilde{v})^2(\overrightarrow{k})+\cdots+(\Tilde{\rho}*\Tilde{E})^2(\overrightarrow{k})\right)}
\end{equation}
are the corresponding normalization constants satisfying
\begin{equation}
    \mathcal{N}_W = \mathcal{N}_{\Tilde{W}}
\end{equation}
as a direct result of Parseval's theorem.
Notably, in our quantum Navier-Stokes solver, we assume that the output $\rho, \rho u, \rho v$, and $\rho E$ stay in the spectral space of sparsity $\mathcal{S}$. Consequently, $\rho u$, ..., $\rho v$'s spectral sparsities are all upper bounded by $\mathcal{S}^2$ as a direct result of convolution, and $\rho E=\rho(e+\frac{1}{2}u^2+\frac{1}{2}v^2)$'s spectral sparsity is upper bounded by $\mathcal{S}^3$. Consequently, the $4N$-dimensional dense state $\ket{W}$ is translated into an $(\mathcal{S}^3+2\mathcal{S}^2+\mathcal{S})$-dimensional state $\ket{\Tilde{W}}$.

While the extraction of these $\mathcal{O}(\mathrm{Poly}(\mathcal{S}))$ complex amplitudes has already been much cheaper than the original ones, the underlying symmetry can further reduce the tomography overhead.
Indeed, since the original output $\rho, \rho u, \rho v$, and $\rho E$ are real-valued, their Fourier spectrum satisfy the conjugate symmetry
\begin{equation}
    \Tilde{\rho}(k_1, k_2, \cdots, k_D) = \Tilde{\rho}^*(N_1-k_1, N_2-k_2, \cdots, N_D-k_D),
\end{equation}
wherein $*$ on the exponent denotes the complex conjugate.
Therefore the real and imaginary parts of $\Tilde{\rho}(k_1, k_2, \cdots, k_D)$ can be derived by
\begin{align}\label{aeq:spectral_re}
    \mathrm{Re}\Tilde{\rho}(k_1, k_2, \cdots, k_D) =& \frac{1}{2}\left( \Tilde{\rho}(k_1, k_2, \cdots, k_D) + \Tilde{\rho}(N_1-k_1, N_2-k_2, \cdots, N_D-k_D) \right),\\\label{aeq:spectral_im}
    \mathrm{Im}\Tilde{\rho}(k_1, k_2, \cdots, k_D) =& \frac{1}{2}\left( \Tilde{\rho}(k_1, k_2, \cdots, k_D) - \Tilde{\rho}(N_1-k_1, N_2-k_2, \cdots, N_D-k_D) \right).
\end{align}
By applying a sequence of parallel quantum inplace subtractor operators $O_\mathrm{SUB}$ on the cell index register in Eq.~\eqref{aeq:w_spectral}, we can derive
\begin{equation}
    \begin{split}
        &\ket{\Tilde{W'}} = \left(\bigotimes_{j=1}^D O_\mathrm{SUB}(N_j)\right) \ket{\Tilde{W}}\\
        =& \frac{1}{\mathcal{N}_{\Tilde{W}}}\sum_{k_1=0}^{N_1-1}\cdots\sum_{k_D=0}^{N_D-1}\left(\Tilde{\rho}(\overrightarrow{k})\ket{00}+\Tilde{\rho}*\Tilde{u}(\overrightarrow{k})\ket{01}+\Tilde{\rho}*\Tilde{v} (\overrightarrow{k})\ket{10}+\cdots+\Tilde{\rho}*\Tilde{E} (\overrightarrow{k})\ket{11}\right)\ket{\overrightarrow{N}-\overrightarrow{k}}\\
        =& \frac{1}{\mathcal{N}_{\Tilde{W}}}\sum_{k_1=0}^{N_1-1}\cdots\sum_{k_D=0}^{N_D-1}\left(\Tilde{\rho}(\overrightarrow{N}-\overrightarrow{k})\ket{00}+\Tilde{\rho}*\Tilde{u}(\overrightarrow{N}-\overrightarrow{k})\ket{01}+\Tilde{\rho}*\Tilde{v} (\overrightarrow{N}-\overrightarrow{k})\ket{10}+\cdots+\Tilde{\rho}*\Tilde{E}(\overrightarrow{N}-\overrightarrow{k})\ket{11}\right) \ket{\overrightarrow{k}}\\
    \end{split},
\end{equation}
where $\overrightarrow{N}-\overrightarrow{k} = (N_1-k_1, N_2-k_2, \cdots, N_D-k_D)$ for simplicity. Consequently, the desired real-valued information defined in Eq.~\eqref{aeq:spectral_re} and Eq.~\eqref{aeq:spectral_im} can be derived by a conditioned implementation of $\bigotimes_{j=1}^D O_\mathrm{SUB}(N_j)$ as
\begin{equation}\label{aeq:con_sub}
    \begin{split}
        &U_\mathrm{Con-Sub}\ket{\Tilde{W}}\\
        =&\left(H\otimes I_{n+2\log(D+2)}\right)\left(\ket{0}\bra{0}\otimes\left(\bigotimes_{j=1}^D O_\mathrm{SUB}(N_j)\right)+\ket{1}\bra{1}\otimes I_{n+2\log(D+2)}\right)\left(H\otimes I_{n+2\log(D+2)}\right)\ket{0}\ket{\Tilde{W}}\\
        =&\left(H\otimes I_{n+2\log(D+2)}\right)\frac{1}{\sqrt{2}}\left(\ket{0}\ket{\Tilde{W}} + \ket{1}\ket{\Tilde{W'}}\right)\\
        =&\frac{1}{\sqrt{2}}\left(\ket{0}\ket{\mathrm{Re}\Tilde{W}} + \ket{1}\ket{\mathrm{Im}\Tilde{W}}\right),
    \end{split}
\end{equation}
wherein
\begin{equation}
    \begin{split}\label{aeq:signed_w_state}
        \ket{\mathrm{Re}\Tilde{W}} &= \frac{1}{\sqrt{2}}\ket{\Tilde{W}} + \frac{1}{\sqrt{2}}\ket{\Tilde{W'}},\\
        \ket{\mathrm{Im}\Tilde{W}} &= \frac{1}{\sqrt{2}}\ket{\Tilde{W}} - \frac{1}{\sqrt{2}}\ket{\Tilde{W'}}.
    \end{split}
\end{equation}
To summarize, one suffices to tomography two $(\mathcal{S}^3+2\mathcal{S}^2+\mathcal{S})$-dimensional states defined in Eq.~\eqref{aeq:signed_w_state} to extract $(2\mathcal{S}^3+4\mathcal{S}^2+2\mathcal{S})$ signed real numbers. By Theorem~\ref{thm:tomography}, this can be implemented by $(115S^3 + 230S^2+115S)\ln{((12S^3 + 24S^2 + 12S)/\delta)}/(\epsilon^2(1-\epsilon^2/4)$ calls of our algorithm, where the error $\epsilon$ is bounded in later sections.
Given this low-dimensional state, we use the celebrated compressed-sensing quantum state tomography protocol:
\begin{atheorem}[Compressed-sensing quantum state tomography, Theorem 1 in \cite{gross2010quantum}]\label{thm:tomography}
    Let $\rho$ be an arbitrary state of rank $r$ and effective dimension $d_\mathrm{eff}$. If $m = Crd_\mathrm{eff}\log^2d_\mathrm{eff}$ randomly chosen Pauli expectations are known, then $\rho$ can be uniquely reconstructed  with probability
    of failure to be exponentially small in $c$.

\end{atheorem}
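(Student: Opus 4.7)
The plan is to recast the problem as low-rank matrix recovery from random Pauli measurements and to invoke the standard dual-certificate machinery of matrix compressed sensing. First, I would set up a reconstruction program: given the $m$ measured Pauli expectations $y_i=\mathrm{tr}(P_i\rho)$, consider the trace-norm minimization
\begin{equation}
\hat\rho \;=\; \argmin_{X\succeq 0}\; \|X\|_* \quad \text{subject to}\quad \mathrm{tr}(P_i X)=y_i,\; i=1,\dots,m.
\end{equation}
The goal is to show that $\hat\rho=\rho$ with the stated probability, which by convex duality reduces to exhibiting a \emph{dual certificate} $Y$ in the range of the measurement map whose projection onto the tangent space $T$ at $\rho$ equals $\mathrm{sgn}(\rho)=UV^\dagger$ (the eigen-sign of $\rho$) and whose projection onto $T^\perp$ has operator norm strictly less than one.

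Second, I would exploit the two structural features of the Pauli ensemble that make the argument go through: (i) the rescaled Pauli operators $\{P_i/\sqrt{d_\mathrm{eff}}\}$ form an orthonormal basis of the Hilbert--Schmidt space of $d_\mathrm{eff}\times d_\mathrm{eff}$ Hermitian matrices, and (ii) each $P_i$ has operator norm one, so the basis is \emph{incoherent}: $\|P_i\|_\mathrm{op}^2/\|P_i\|_\mathrm{HS}^2=1/d_\mathrm{eff}$. Letting $\mathcal{R}$ denote the sampling operator and $\mathcal{P}_T$ the projection onto $T$, incoherence together with a matrix Bernstein / operator Chernoff inequality yields the approximate isometry
\begin{equation}
\bigl\|\mathcal{P}_T\mathcal{R}^\dagger\mathcal{R}\mathcal{P}_T-\mathcal{P}_T\bigr\|\le \tfrac{1}{2},
\end{equation}
with failure probability $\exp(-c\, m/(r d_\mathrm{eff}\log d_\mathrm{eff}))$ whenever $m\gtrsim r d_\mathrm{eff}\log d_\mathrm{eff}$. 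This makes $\mathcal{P}_T\mathcal{R}^\dagger\mathcal{R}\mathcal{P}_T$ invertible on $T$.

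Third, I would construct the dual certificate by the \emph{golfing scheme} of Gross: split the $m$ measurements into $L=\Theta(\log d_\mathrm{eff})$ independent batches of size $m/L$, and iteratively define $Y_\ell = Y_{\ell-1}+\mathcal{R}_\ell^\dagger\mathcal{R}_\ell(UV^\dagger - \mathcal{P}_T Y_{\ell-1})$. Each step geometrically contracts the residual $\|\mathcal{P}_T Y_\ell-UV^\dagger\|_F$ by a factor $1/2$ (using the approximate isometry on the batch), giving $\|\mathcal{P}_T Y_L-UV^\dagger\|_F\le 2^{-L}\sqrt{r}$, while a second Bernstein bound controls the off-tangent operator norm $\|\mathcal{P}_{T^\perp} Y_L\|_\mathrm{op}<1$. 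Setting $L\sim\log d_\mathrm{eff}$ contributes the extra logarithmic factor, bringing the total sample budget to $m=Crd_\mathrm{eff}\log^2 d_\mathrm{eff}$, with failure probability exponentially small in the constant $C$.

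The main obstacle, and where all of the quantitative bookkeeping lives, is the batched concentration argument: one must control both $\|(\mathcal{P}_T\mathcal{R}_\ell^\dagger\mathcal{R}_\ell\mathcal{P}_T-\mathcal{P}_T)\|$ and the off-tangent operator norm $\|\mathcal{P}_{T^\perp}\mathcal{R}_\ell^\dagger\mathcal{R}_\ell(\cdot)\|_\mathrm{op}$ uniformly across $\Theta(\log d_\mathrm{eff})$ rounds, each operating on the random residual produced by the previous round. Handling the statistical dependence between rounds cleanly is what forces the independence-by-batching trick, and getting the sharp $\log^2 d_\mathrm{eff}$ factor (rather than a worse polylog) requires invoking the matrix Bernstein inequality with the incoherence-tuned variance proxy $\sigma^2\lesssim r/d_\mathrm{eff}$ rather than a crude Hoeffding bound.
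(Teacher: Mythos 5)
The paper does not prove this statement at all---it is imported verbatim as Theorem~1 of the cited reference (Gross et al.), so there is no in-paper proof to compare against. Your sketch correctly reproduces the standard argument of that reference (trace-norm minimization, incoherence of the Pauli basis, approximate isometry on the tangent space via matrix Bernstein, and the golfing-scheme construction of the dual certificate yielding the $\log^2 d_\mathrm{eff}$ factor), so it is essentially the same approach as the paper's source.
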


\subsection{End-to-end complexity and resource analysis}
Now, we can describe the end-to-end iterative quantum Navier-Stokes solver given the aforementioned ingredients. 
For the $k$-th iteration, we can update the block-encoding oracles $\mathcal{O}_A^{(k+1)}, \mathcal{O}_b^{(k+1)}$ by $\tilde{W}^{(k)}$.
In Theorem~\ref{theorem: qlss}, the QLSS is implemented in two steps. Firstly, randomly choose a guess of norm $t=\lvert x\rvert$ and apply a repeat-until-success process of \textit{Kernel Reflection} defined as a QSVT of the block-encoding of the operator
\begin{equation}
    G_t = \left(I-\frac{1}{2}(\ket{b}+\ket{e_m})(\bra{b}+\bra{e_m})\right)\left(A+t^{-1}\ket{e_m}\bra{e_n}\right)
\end{equation}
with phase angles determined by the polynomial
\begin{equation}\label{aeq:KR}
    K_{\kappa^{-1}, l}(x) = \frac{2T_l\left( \frac{-2\kappa^2x^2+\kappa^2+1}{\kappa^2-1} \right)+2}{T_l\left( \frac{\kappa^2+1}{\kappa^2-1} \right)+1}-1
\end{equation}
wherein $T_l$ is the $l$-th Chebyshev polynomial and
\begin{equation}\label{aeq:l}
    l = \left\lceil \frac{\mathrm{arccosh(}\eta^{-1}\mathrm{)}}{\mathrm{arccosh}\left(\frac{\kappa^2+1}{\kappa^2-1}\right)} \right\rceil.
\end{equation}
This step successfully outputs an ansatz solution $\ket{a}_\mathrm{ANC}\ket{x_\mathrm{ANS}}_\mathrm{WORK}$ when ancillary qubits are measured to be zero.
Then, apply a repeat-until-success process of \textit{Kernel Projection} defined as a QSVT of the block-encoding of the operator
\begin{equation}
    G = \left(I-\ket{b}\bra{b}\right)A
\end{equation}
with phase angles determined by the polynomial
\begin{equation}\label{aeq:KP}
    F_{\kappa^{-1}, l'}(x) = \frac{T_{l'}\left( \frac{-2\kappa^2x^2+\kappa^2+1}{\kappa^2-1} \right)}{T_{l'}\left( \frac{\kappa^2+1}{\kappa^2-1} \right)}
\end{equation}
where 
\begin{align}\label{aeq:lkp}
    l' =& \left\lceil \frac{\mathrm{arccosh}\left(\eta_\mathrm{KP}^{-1}\right)}{\mathrm{arccosh}\left(\frac{\kappa^2+1}{\kappa^2-1}\right)} \right\rceil,\\\label{aeq:eta}
    \eta_\mathrm{KP}=&\frac{\epsilon}{\sqrt{1-\epsilon^2}}\sqrt{\frac{(1-\eta)^2(3+2\ln\left(\frac{\mathcal{R},^2+\mathcal{L},^2}{2\mathcal{L},^2}\right))}{\eta^2}-1}.
\end{align}
This step successfully outputs a solution $\ket{a}_\mathrm{ANC}\ket{W^{(k+1)}}_\mathrm{WORK}$ when ancillary qubits are measured to be zero.
Then we can apply $F_{N_1, N_2, ..., N_D}^\dagger$ in Eq.~\eqref{aeq:qft} and $U_\mathrm{Con-Sub}$ in Eq.~\eqref{aeq:con_sub} to derive $\ket{\tilde{W}^{(k+1)}}$.
Finally, a random Pauli measurement is applied on $\ket{\tilde{W}^{(k+1)}}$. The state preparation and measurement process is repeated several times to reconstruct $\tilde{W}^{(k+1)}$ for the next iteration.
The end-to-end complexity for this Algorithm~\ref{alg:qnss} can be evaluated as:
\begin{proof}[Proof of Theorem~\ref{thm:2}]

    By Lemma~\ref{lem:hierarchy} and Lemma~\ref{lem:spectral} introduced in subsection~\ref{sec:hsbe}, we derive the block-encoding of flux Jacobian matrix $A$ and the residual vector $b$ with $\mathcal{O}(\mathcal{S}+\log\log N)$ depth and $\mathcal{O}(\mathcal{S}\log N)$ non-Clifford gate count in subsection~\ref{sec:Abbe}. By Theorem~\ref{theorem: qlss}, $\Tilde{\mathcal{O}}(\kappa)$ queries of $\mathcal{O}_A$ and $\mathcal{O}_b$ are required. Then we can apply the parallel inverse quantum Fourier transformation with an additional $\mathcal{O}(\log^2N)$ depth. To summarize, the end-to-end circuit depth is $\Tilde{\mathcal{O}}(\kappa(\mathcal{S}+\log\log N) + \log^2N)$. To also consider the sampling number given in subsection~\ref{sec:ssd}, the end-to-end time complexity is $\Tilde{\mathcal{O}}(\mathcal{S}\epsilon^{-2}(\kappa(\mathcal{S}+\log\log N) + \log^2N))$.
\end{proof}
The end-to-end quantum resource with conventional synthesis is sketched in Algorithm~\ref{alg:bea}, Algorithm~\ref{alg:beb}, and Algorithm~\ref{alg:qnss}.
We also develop a program to numerically compute the resources.
\begin{algorithm}[H]
        \caption{End-to-End Iterative Quantum Navier-Stokes Solver}
        \label{alg:qnss}
        \begin{algorithmic}[1]
            \Require $\tau$, $\kappa$, $\epsilon$, $\eta$, $\mathcal{L}$, $\mathcal{R}$, $\tilde{W}^{(0)}$
            \Ensure $\tilde{W}^{(\tau)}$
            \State Classically compute $l$ in Eq.~\eqref{aeq:l}, $l'$ in Eq.~\eqref{aeq:lkp},  $\eta_\mathrm{KP}$ in Eq.~\eqref{aeq:eta};
            \State Classically compute phase angles $\phi_\mathrm{KR}$ for Eq.~\eqref{aeq:KR} and $\phi_\mathrm{KP}$ for Eq.~\eqref{aeq:KP};
            \For{$k\in[\tau]$}
                \State Update $\mathcal{O}_A^{(k+1)}, \mathcal{O}_b^{(k+1)}$ by $\tilde{W}^{(k)}$;
                \For{$j\in[\mathcal{N}_\mathrm{sample}]$}
                    \State \textcolor{myred}{/* Apply QLSS */}
                    \Repeat
                        \Repeat
                            \State Choose uniformly random $t\in[\ln\mathcal{L}-\frac{1}{2}, \ln\mathcal{R}+\frac{1}{2}]$;
                            \State Apply $K_{\kappa^{-1}, l}(G_t)$) on $\ket{0}$ with output state $\ket{a}_\mathrm{ANC}\ket{x_\mathrm{ANS}}_\mathrm{WORK}$;
                            \State Measure on $\ket{a}_\mathrm{ANC}$;
                        \Until{$a=0$}
                        \State Apply $F_{\kappa^{-1}, l'}(G)$) on $\ket{x_\mathrm{ANS}}_\mathrm{WORK}$ with output state $\ket{a'}_\mathrm{ANC}\ket{W^{(k+1)}}_\mathrm{WORK}$;
                        \State Measure on $\ket{a'}_\mathrm{ANC}$;
                    \Until{$a'=0$}
                    \State \textcolor{myred}{/* Apply $\mathcal{D}$ */}
                    \State Apply $F_{N_1, N_2, ..., N_D}^\dagger$ in Eq.~\eqref{aeq:qft};
                    \State Apply $U_\mathrm{Con-Sub}$ in Eq.~\eqref{aeq:con_sub};
                    \State \textcolor{myred}{/* Random Pauli measurement */}
                    \State Apply $P_j$ and measure;
                \EndFor
                \State Reconstruct $\Tilde{W}^{(k+1)}$;
            \EndFor
            \State \Return  $\tilde{W}^{(\tau)}$
        \end{algorithmic}
    \end{algorithm}

\section{Circuit Synthesis and Logical Quantum Resource Reduction}

While spectral-based block-encoding and state tomography protocols mitigate the I/O bottleneck by reducing asymptotic complexity dependence on problem size, practical quantum advantage in CFD may still be undermined by inefficient circuit implementation. To address this, we develop circuit synthesis methods to minimize logical resource overhead. First, we outline conventional synthesis techniques employed in this work to reduce logical resources within each circuit module. Next, we introduce match-and-merge, an algorithm-inspired synthesis method that reduces logical resources by leveraging recurrent circuit modules. Finally, we propose mask-and-merge, a process that further optimizes resource usage by adaptively merging similar circuit substructures via a dynamically generated mask.
    \begin{figure}
        \centering
        \begin{tikzpicture}
            \node[]  at (-5, 0) {\includegraphics[width=0.46\textwidth]{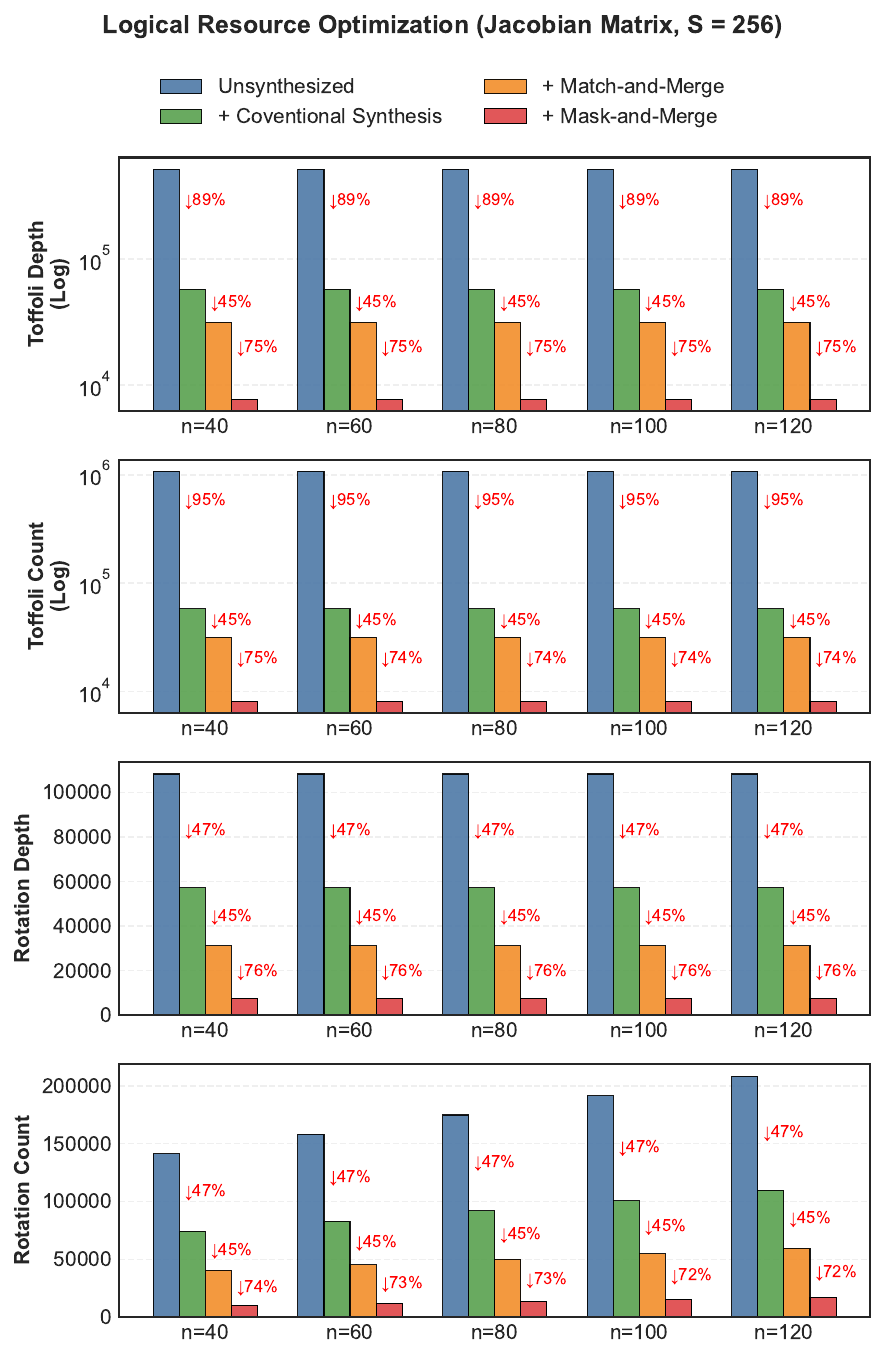}};
            \node[] at (4, 0) {\includegraphics[width=0.46\textwidth]{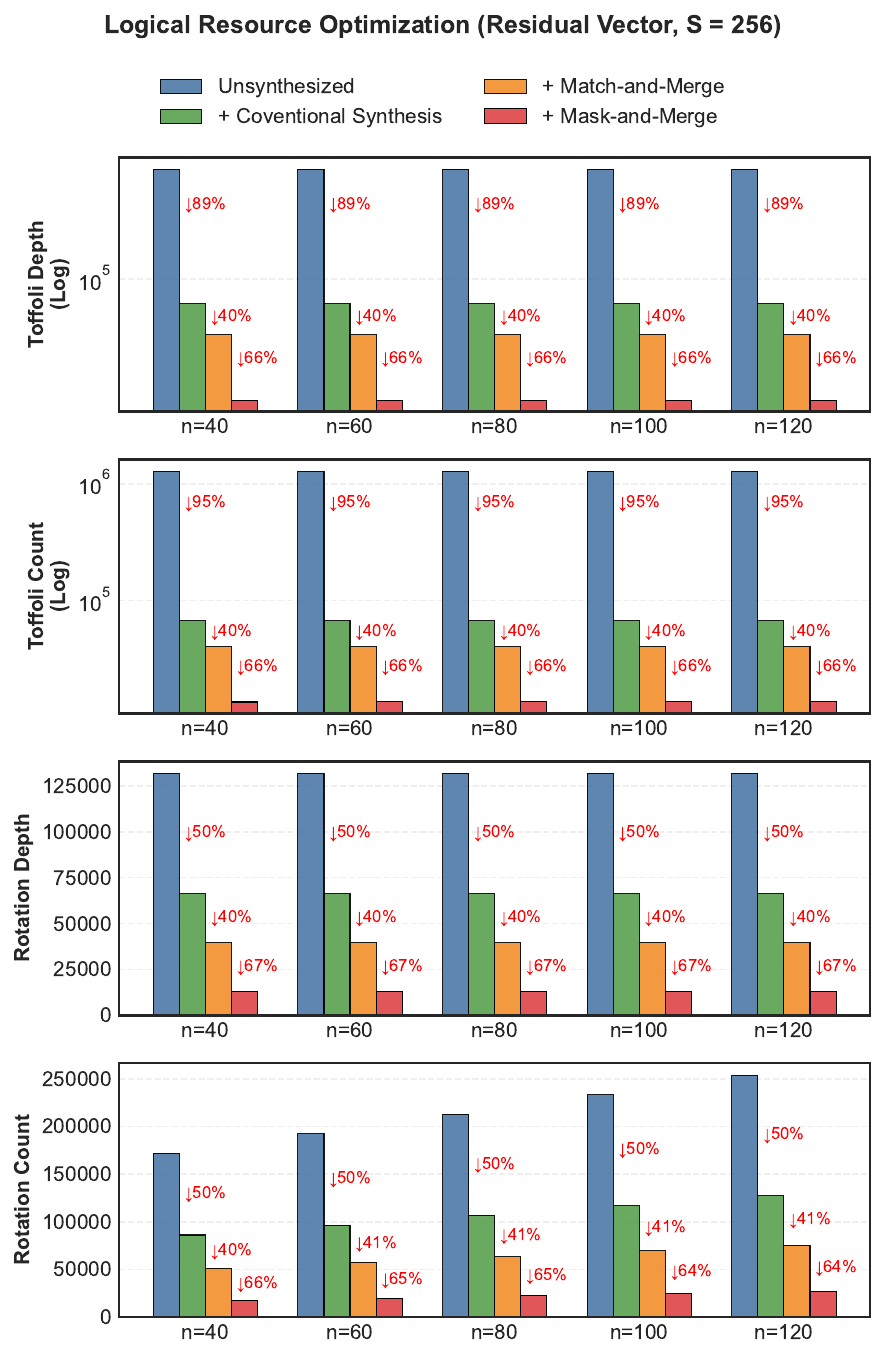}};
            \node[] at (-8.5, 6.1) {\textbf{a}};
            \node[] at (0.5, 6.1) {\textbf{b}};
        \end{tikzpicture}
        \caption{Logical resource optimization by circuit synthesis (a) Jacobian matrix: Toffoli depth is reduced by $98.5\%$, Toffoli count is reduced by $99.3\%$, Rotation depth and count are both reduced by approximately $92\%-93\%$. (b) Residual vector: Toffoli depth is reduced by $98\%$, Toffoli count is reduced by $99\%$, Rotation depth and count are both reduced by $90\%$.}
        \label{fig:synthesis_A}
    \end{figure}

    \begin{figure}
        \centering
        \begin{tikzpicture}
            \node{\includegraphics[width=0.8\textwidth]{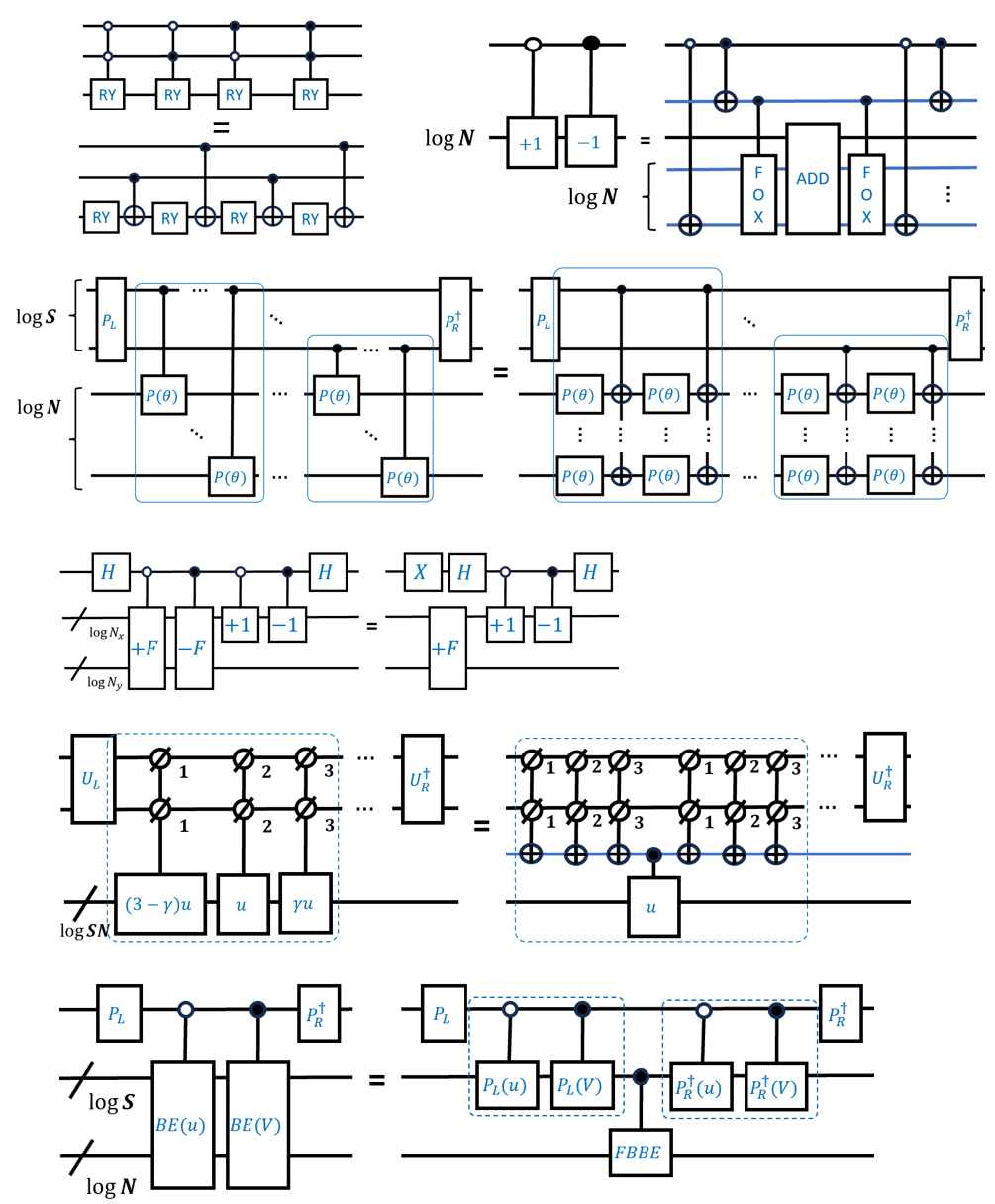}};
            \node[] at (-7, 8) {\textbf{(a1)}};
            \node[] at (-1, 8) {\textbf{(a3)}};
            \node[] at (-7, 4.75) {\textbf{(a2)}};
            \node[] at (-7, 0.5) {\textbf{(b1)}};
            \node[] at (-7, -2.25) {\textbf{(b2)}};
            \node[] at (-7, -5.75) {\textbf{(b3)}};
        \end{tikzpicture}
        \caption{Circuit synthesis: conventional techniques    match-and-merge patterns. (a1) Uniformly controlled rotation synthesis in the state preparation subroutine. (a2) Fan out $X$ gate in the Fourier basis block-encoding subroutine. 
        (a3) Controlled quantum  arithmetic in the shift operator.
        (b1) Match by signal pattern.
        (b2) Match by similar group pattern.
        (b3) Match by spectral structure pattern.}
        \label{fig:synthesis}
    \end{figure}
\subsection{Conventional Synthesis}
The hierarchy spectral quantum Navier-Stokes solver heavily utilizes state preparation modules to load spectrum information, Fourier basis block-encoding modules to load basis function information, and shift operators to compute and allocate matrix structure information. Henceforth, we first consider conventional quantum circuit synthesis methods to reduce logical resources within each subroutine as follows. 
\subparagraph{State preparation subroutine.} In each state preparation subroutine, $(\mathcal{S}-1)$ independent complex numbers (after normalization) are encoded in the amplitudes of a $\log\mathcal{S}$-qubit state. Conventionally, this state is prepared by $(\mathcal{S}-1)$ (multi-controlled) rotation gates: one rotation gate is implemented on the first qubit, and then $2^k$ rotation gates controlled by the first $k$ qubits are conducted on the $k+1$-th qubit for $1\leq \log\mathcal{S}-1$. 
While it is easy to check that at least $(\mathcal{S}-1)$ rotation gates are required to load $(\mathcal{S}-1)$ independent classical information, those costly (multi-controlled) Toffoli gates in their multi-control version can indeed be removed by the celebrated \textit{Uniformly Controlled Rotation} technique~\cite{mottonen2004transformation}.
Herein, those $2^k$ rotation gates on the same target qubit and controlled by strings from $\ket{00\cdots0}$ to $\ket{11\cdots1}$ can be transformed into $2^k$ rotation gates and $2^k$ $CX$ gates, as depicted in Fig.~\ref{fig:synthesis}~(a1).
In this way, $(\mathcal{S}-1)$ rotation depth and $(2\mathcal{S}-8)$ Toffoli depth can be reduced.
\subparagraph{Basis block-encoding subroutine.} In our FBBE subroutine, $n\log\mathcal{S}$ controlled rotation gates are conducted to block encode the $\log\mathcal{S}$ basis function on the $n$-qubit cell index register, where each controlled rotation gate can be decomposed into two single-qubit rotation gates and two $CX$ gates. Here, the key observation is that those $CX$ gates with a common control qubit can be regarded as a \textit{Fan Out X} ($FOX$) gate, and can be implemented by lattice surgery technique in surface code with low overhead~\cite{fowler2018low}. Consequently, the FBBE is transformed into $2\log\mathcal{S}$ layers of parallel rotation gates and $2\log\mathcal{S}$ $FOX$ gates, as illustrated in Fig.~\ref{fig:synthesis}~(a2), removing the circuit depth dependence on the problem size $2^n$.
\subparagraph{Shift subroutine.} In the shift subroutine, the control version of two quantum arithmetic operators on $n_x$ qubits and of two quantum arithmetic operators on $n_y$ qubits are required to allocate the horizontal and vertical convection flux sub-matrices $\frac{\partial \mathcal{F}_C}{\partial W}$ and $\frac{\partial \mathcal{G}_C}{\partial W}$, respectively. We apply an \textit{optimal Toffoli-depth quantum adder} to realize a $(2\log n+3)$-TOF depth quantum adder on the cell index register and an $n$-qubit ancillary register~\cite{wang2024optimal}. Since the control version of the quantum adder can consume much more Toffoli depth, we instead inject an auxiliary state on the ancillary register to further reduce the overhead, as in Fig.~\ref{fig:synthesis}~(a3). More specifically, we use a $CX$ gate to prepare $\ket{00\cdot1}$ on the ancillary register for $Shift(+1)$ and another $CX$ gate following the quantum adder to uncompute this register. Similarly, we use a $FOX$ gate to prepare $\ket{11\cdot1}$ on the ancillary register for $Shift(-1)$ and another $FOX$ to uncompute.

We design a program to quantify the logical resource reduction. And our numerical tests suggest that more than $10\times$ Toffoli depth reduction and approximate $2\times$ rotation depth reduction are realized for varying problem size $N$ and spectral sparsity $\mathcal{S}$, and the numerical result for $\mathcal{S}=256$ is shown in Fig.~\ref{fig:synthesis_A}.

    \begin{figure}
        \centering
        \begin{tikzpicture}
            \node at (0, 10.5) {\includegraphics[width=\textwidth]{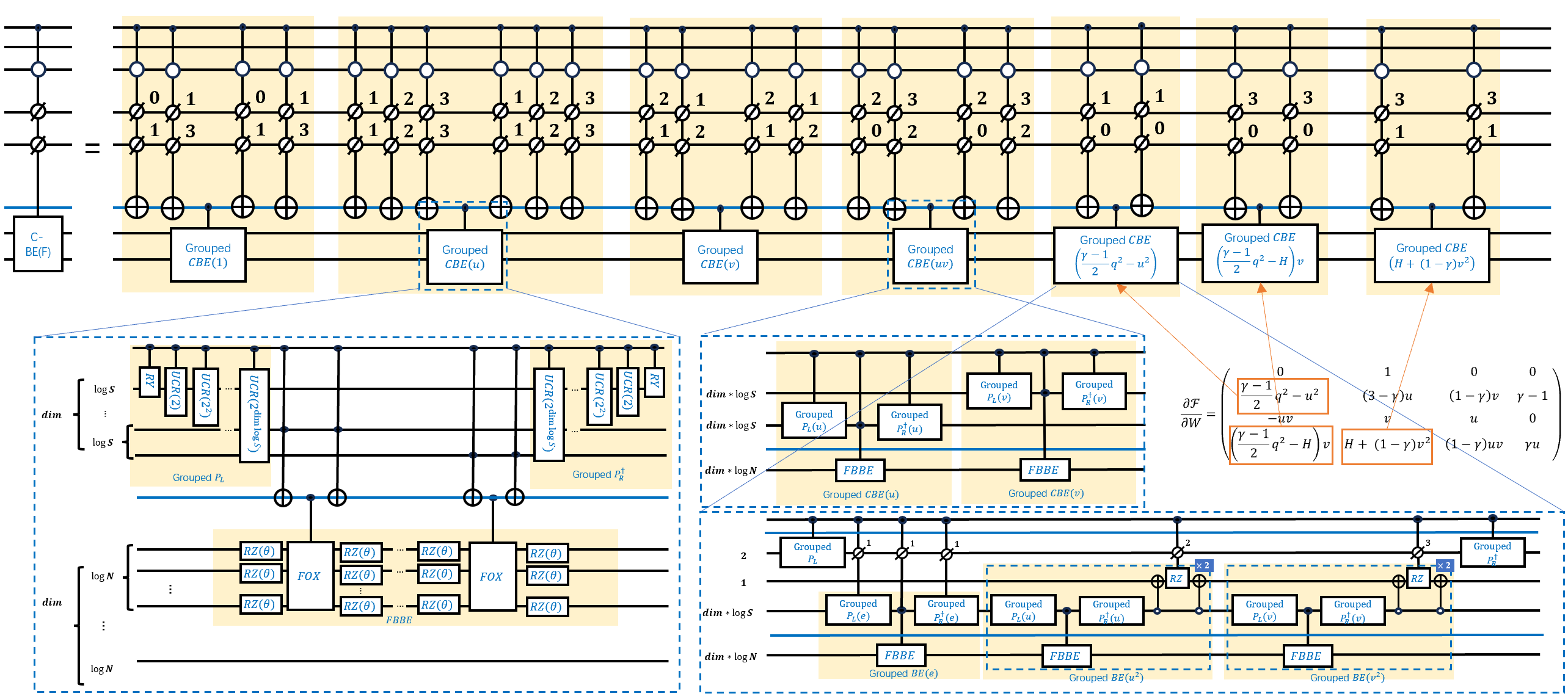}};
        \end{tikzpicture}
        \caption{Synthesized circuit after conventional techniques and the match-and-merge method.}
        \label{fig:synthesized_a}
    \end{figure}

\subsection{Match and Merge}
Besides conventional algorithm-agnostic synthesis techniques, we further reduce the logical resource by utilizing algorithm-aware methods to match redundant subroutines and merge them together. Herein, we study three matching patterns appearing in the hierarchy spectral quantum Navier-Stokes solver for search and merge.
\subparagraph{Match by signal.} We first observe that the subroutines to block encode convective flux Jacobian matrices $\pm\frac{\partial \mathcal{F}_C}{\partial W}$ and $\pm\frac{\partial \mathcal{G}_C}{\partial W}$ given in Eqs.~\eqref{aeq:shift_p1} and \eqref{aeq:shift_p1} share an identical circuit structure, differing only in their signals and locations. 
A similar pattern occurs when block-encoding different flux components at the faces in the residual vector $b$.
By matching these subroutines based on their signals, we can reuse the majority of the circuit structure, achieving an approximate $2\times$ reduction on the logical resources, as illustrated in Fig.~\ref{fig:synthesis}~(b1). 

\subparagraph{Match by similar group.} Our second fundamental observation is that those variables in Eqs.~\eqref{appendix_eq:horizontal_jacobian} and \eqref{appendix_eq:vertical_jacobian} can be divided into several similar groups, where those variables within the same group are proportional to each other. For example, the variables $(3-\gamma)u$ indexed by $(1, 1)$, $u$ indexed by $(2, 2)$, and $\gamma u$ indexed by $(3, 3)$ can be block-encoded by a single spectral block-encoding subroutine $BE(u)$ controlled by an ancillary group flag qubit, as well as three paris of multi-controlled Toffoli gates acting on the original inter-block index register, as depicted in Fig.~\ref{fig:synthesis}~(b2).
Herein, the coefficients $3-\gamma, 1, \gamma$ can be absorbed into the amplitudes and then encoded by the adapted state preparation pair $(P_L, P_R^\dagger)$ on the inter-block index register.
In this way, we can further reduce the number of calls for block-encoding subroutines of $u, v, uv$, etc. in the $A$-block-encoding circuit.

\subparagraph{Match by spectral structure.} We can go one step further, observing that the circuits of block-encoding subroutines with similar spectral structure are indeed the same except for those rotation angles in the state preparation pair. For example, $u$ and $v$ are both of spectral sparsity $\mathcal{S}$ with no product or QSVT architectures.
When implementing a linear combination of such subroutines, we can interchange state preparation and FBBE subroutines with different control strings to reduce circuit depth, as illustrated in Fig.~\ref{fig:synthesis}~(b3).
Herein, the controlled and zero-controlled FBBE subroutines are merged to remove not only a redundant FBBE module but also many costly multi-controlled Toffoli gates.
Also, the controlled and zero-controlled state preparation subroutines with different rotation angles can be merged by applying the \textit{Uniformly Controlled Rotation} technique to transform $\mathcal{O}(\mathcal{S})$ Toffoli gates into cheap $CX$ gates.

We design a process searching for the above three patterns to merge. The synthesized circuit is depicted in Fig.~\ref{fig:synthesized_a}. Our numerical tests suggest that approximately $2\times$ Toffoli and rotation depth reduction are realized for varying problem size $N$ and spectral sparsity $\mathcal{S}$, and the numerical result for $\mathcal{S}=256$ is shown in Fig.~\ref{fig:synthesis_A}.

\subsection{Mask and Merge}

Inspired by the match-and-merge technique to synthesize subroutines with similar patterns, we design an even more efficient synthesis process to merge subroutines with very different patterns, further compressing the circuit. The basic idea is to utilize a template block-encoding circuit for the most complicated spectral structure and then adaptively mask subcircuits within it to block encode variables of different spectral structures. Since the mask stage is implemented through a rotation angle configuration on an identical template circuit, we can merge these masked circuits to reduce the logical resources. More precisely, we have the following two-stage mask-and-merge process:

\subparagraph{Mask stage.} 
Given a template circuit as a linear combination of many subroutines, we define three types of masks. 
The first one is the \textit{turn-on} status, which keeps one subroutine in its current state, and consequently, this subroutine will be successfully executed as in the template circuit.
The second one is the \textit{turn-off} status, which can shut down one subroutine by changing the rotation angles in the state preparation pair so that the underlying subroutine will not be executed.
The third one is the \textit{partial-mask} status, which modifies the rotation angles inside the underlying subroutine to load a different spectrum.
For example, we consider the block-encoding circuit of the horizontal flux circuit in $b$. Among these four flux components in the bottom of Fig.~\ref{fig:algorithm}(a), we choose the most complicated term $\rho uH + q_x - \mu\left[ \left(\frac{4}{3}\frac{\partial u}{\partial x} - \frac{2}{3}\frac{\partial v}{\partial y}\right)H + \left(\frac{\partial u}{\partial x} +\frac{\partial v}{\partial y}\right)v \right]$ as our template circuit, as depicted in Fig.~\ref{fig:mask}~(a).
To block encode the first term $\rho u$, we just need to turn off the subroutines $BE(H), BE(q_x), BE(\mu), BE(\frac{4}{3}\frac{\partial u}{\partial x} - \frac{2}{3}\frac{\partial v}{\partial y}), BE(v)$, and $BE(\frac{\partial u}{\partial x} +\frac{\partial v}{\partial y})$. This mask can be implemented by setting the rotation angles on the first two qubits and in the first state preparation pair $(P_L(H), P_R^\dagger(H))$ to zero. 
To block encode the second term $\rho u^2 + p - \mu\left(\frac{4}{3}\frac{\partial u}{\partial x} - \frac{2}{3}\frac{\partial v}{\partial y}\right)$, we observe that $\rho u^2 + p = \rho\left[ (\gamma-1)e + \frac{3-\gamma}{2}u^2 + \frac{1-\gamma}{2}v^2 \right]$ is indeed the product of $\rho$ and a linear combination of $e, u^2$, and $v^2$. Consequently, we can modify those rotation angles in the first state preparation pair $(P_L(H), P_R^\dagger(H))$ to implement a partial mask on the first $BE(H)$, as well as turn off $BE(u)$, $BE(q_x)$, the second $BE(H)$, $BE(\frac{4}{3}\frac{\partial u}{\partial x} - \frac{2}{3}\frac{\partial v}{\partial y})$, $BE(v)$, and $BE(\frac{\partial u}{\partial x} +\frac{\partial v}{\partial y})$.
To block encode the third term $\rho uv -\mu(\frac{\partial u}{\partial x} +\frac{\partial v}{\partial y})$, we can partial-mask $BE(H)$, turn on $BE(\rho)$, $BE(u)$, and $BE(\frac{\partial u}{\partial x} +\frac{\partial v}{\partial y})$, and turn off the other subroutines.
To summarize, we have derived the block-encoding circuit for all four of these components as a single template circuit with four masks, as depicted in Fig.~\ref{fig:mask}~(b). Herein, the grey colored blocks are turned off, the grey hatched blocks are partially masked, and the remained blocks are turned on.

\subparagraph{Merge stage.}
We emphasize that all of these masks are indeed an alternative configuration of rotation angles, leaving the template circuit's structure unchanged. Consequently, we can merge four (masked) copies of the controlled template circuit, as depicted in Fig.~\ref{fig:mask}~(c).
For those subroutines that are partially masked, such as the first $BE(H)$ block in Fig.~\ref{fig:mask}~(d1) and the $BE(\frac{\partial u}{\partial x} +\frac{\partial v}{\partial y})$ block in Fig.~\ref{fig:mask}~(d4), the modified rotation angles are implemented in a control version and the unchanged part's control architecture is removed.
As for those subroutines that are only turned on or off with no modification, such as theose $BE(\rho)$, $BE(u)$, and $BE(q)$ blocks in Fig.~\ref{fig:mask}~(d2) and those $BE(H)$,  $BE(\frac{4}{3}\frac{\partial u}{\partial x} -\frac{2}{3}\frac{\partial v}{\partial y})$, and $BE(v)$ blocks in Fig.~\ref{fig:mask}~(d3), the control architecture can be directly removed.
In summary, we derive the synthesized circuit from the four copies of masked circuits with a roughly $4\times$ depth reduction, as in Fig.~\ref{fig:mask}~(e).

We design a process searching for the above three patterns to merge. Our numerical tests suggest that approximately $3\times-4\times$ Toffoli and rotation depth reduction are realized for varying problem size $N$ and spectral sparsity $\mathcal{S}$, aand the numerical result for $\mathcal{S}=256$ is shown in Fig.~\ref{fig:synthesis_A}. In summary, we reduce approximately $98\%$ of Toffoli resources and more than $90\%$ rotation resources.
    \begin{figure}
        \centering
        \begin{tikzpicture}
            \node at (0, 10.5) {\includegraphics[width=0.65\textwidth]{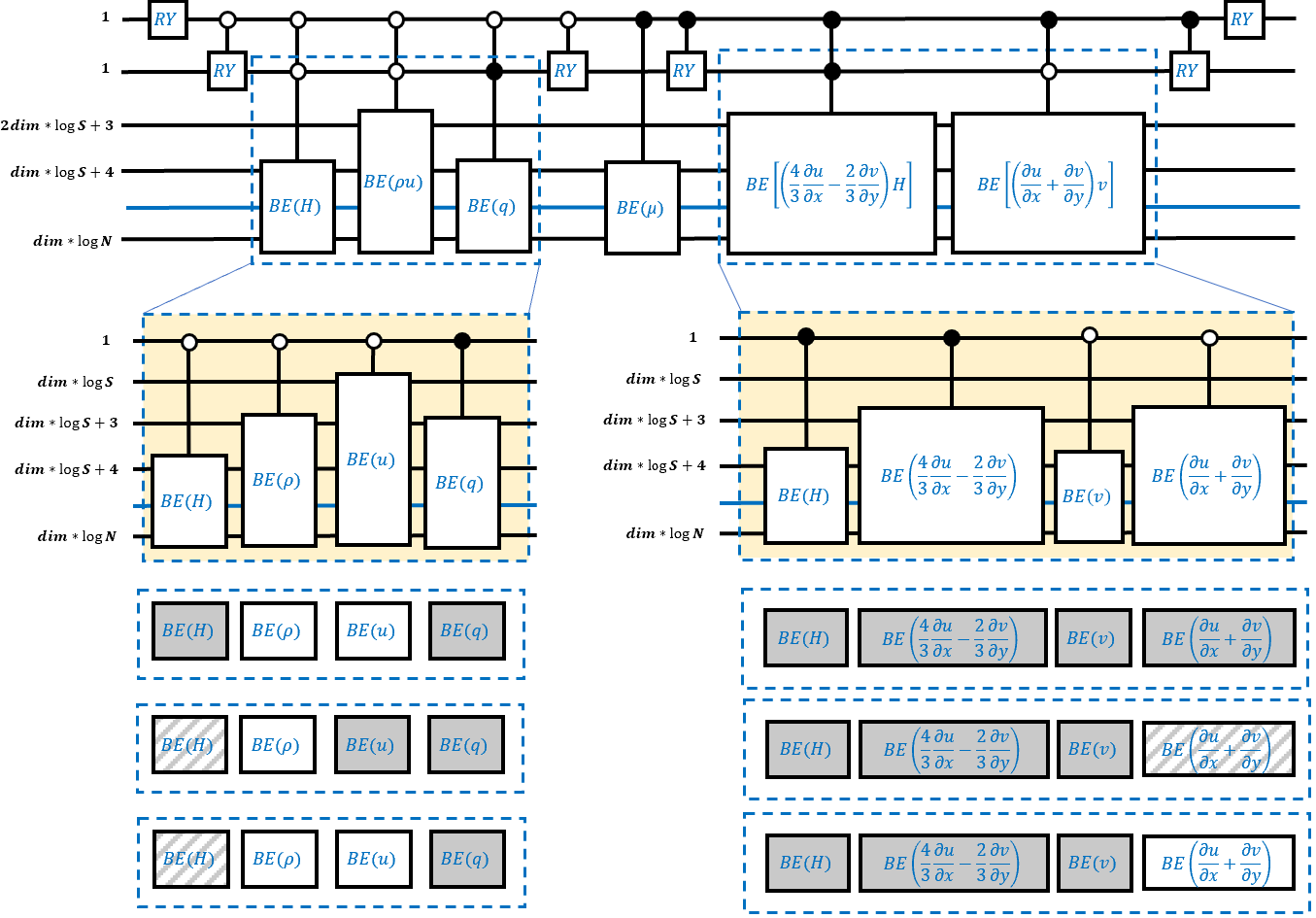}};
            \node{\includegraphics[width=0.65\textwidth]{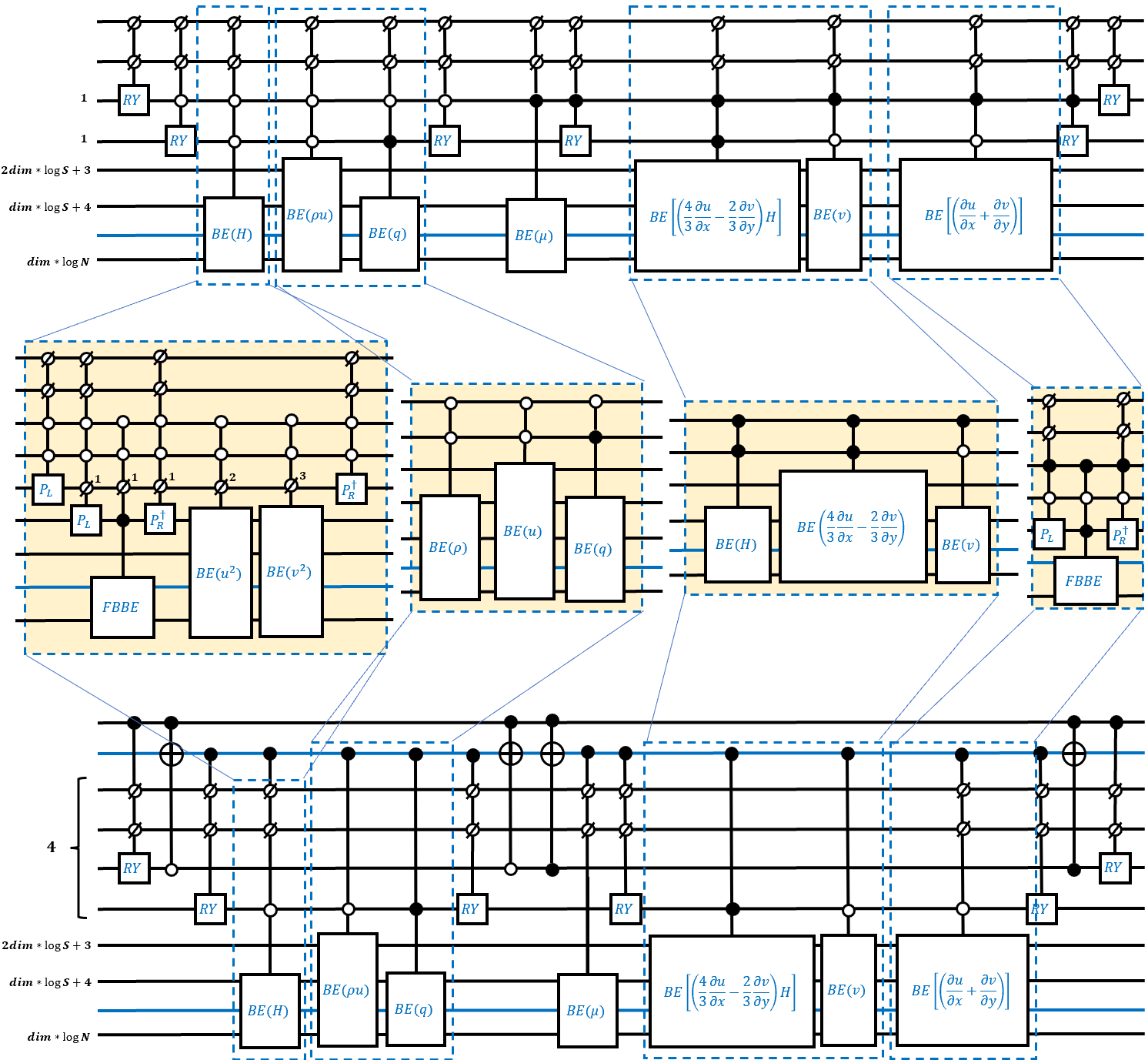}};
            \node[] at (-6, 14.5) {\textbf{(a)}};
            \node[] at (-6, 9) {\textbf{(b)}};
            \node[] at (-6, 5.5) {\textbf{(c)}};
            \node[] at (-6, -2.25) {\textbf{(e)}};
            \node[] at (-6, 2.25) {\textbf{(d1)}};
            \node[] at (-1.5, 1.75) {\textbf{(d2)}};
            \node[] at (1.5, 1.65) {\textbf{(d3)}};
            \node[] at (5.25, 1.75) {\textbf{(d4)}};
        \end{tikzpicture}
        \caption{Mask-and-merge circuit synthesis: 
        (a) Template circuit to block encode the most complicated term $\rho u^2 + p - \mu\left(\frac{4}{3}\frac{\partial u}{\partial x} - \frac{2}{3}\frac{\partial v}{\partial y}\right)$ by hierarchy spectral block-encoding. 
        (b) Masks to block encode other components: white colored blocks for subcircuits turned on, grey colored blocks for subcircuits masked, and grey hatched blocks for subcircuits partially masked. 
        (c) Masked template circuits to be merged, the double-controlled subcircuits can be further synthesized. 
        (d1-d4) Merged (masked) subcircuits: (d1) and (d3) are partially masked, and the rotation angles that need to be modified are double-controlled by the first two qubits. (d2) and (d4) are not partially masked and can be turned on or off by the state-preparation-pair at the beginning and end of the circuit. Consequently, the control qubits can be directly removed.
        (e) The masked and merged circuit with a roughly $4\times$ depth reduction.}
        \label{fig:mask}
    \end{figure}

\section{Quantum Error Correction and Physical Quantum Resource Reduction}

\subsection{Surface Code Implementation}

To enable the large-scale quantum computation, quantum error correction (QEC) is the crucial technique that protects information against the noise by encoding physical qubits into logical qubits of a certain type of QEC code. Among these QEC codes, surface codes are well-studied and generally considered to be one of the leading candidates for practical quantum computing~\cite{fowler2012surface}.

Here we focus the rotated surface code, which employs $d^2$ data qubits and $d^2-1$ syndrome qubits to encode a logical qubits, where $d$ is the code distance. Thus, the total number of physical qubits for each logical qubit is 
\begin{equation}
    n_\mathrm{phy} = 2d^2 - 1
\end{equation}

For each logical qubit of surface code qubit, the logical error rate per QEC cycle $p_\mathrm{L}$ can be estimated by the celebrated Fowler-Devitt-Jones formula
\begin{equation}
    P_\mathrm{L} = c{\left( 
\frac{p_\mathrm{phy}}{p_\mathrm{th}} \right)}^{(d+1)/{2}},
\end{equation}
where $p_\mathrm{phy}$ is the physical error rate, $p_\mathrm{th} \simeq 0.01$ is the threshold, and $c = 0.1$ is a constant, under the circuit-level noise model.~\cite{fowler2012surface, moussa2016transversal, litinski2019magic}.

Using lattice surgery~\cite{fowler2018low}, Clifford operations are relatively easy to implement, such as $H$~\cite{geher2024error} and $CX$~\cite{fowler2018low}. Another commonly used example is the multi-target controlled-not gate $CX_n$, which can be implemented in $2d$ QEC cycles using one ancilla logical qubit~\cite{fowler2018low}.

Non-Clifford gates on the surface code require significantly more resource overhead. Implementing these logic gates typically involves preparing corresponding ancilla states and applying gate teleportation circuits. The non-Clifford gates discussed in this work include the Toffoli gate and single-qubit rotation gates. The gate teleportation circuits for their implementation can be found in Ref.~\cite{gidney2019flexible} and Ref.~\cite{litinski2019game}, respectively.

\subsection{Magic State Preparationy}
The ancilla states required for implementing Toffoli gates and rotation gates are also known as magic states. To prepare these magic states, we consider the following preparation protocols.
\begin{alemma}
    \textbf{(T States from Magic State Cultivation,~\cite{gidney2024magic}.)} The state
    $\ket{T}=(\ket{0}+e^{i\pi/4}\ket{1})/\sqrt{2}$ can be prepared with volume $v=10^5$ to reach the logical error rate $\epsilon_L = 2\times 10^{-9}$ given physical error rate $p_\mathrm{phys}=10^{-3}$, and to reach the logical error rate $\epsilon_L = 4\times 10^{-11}$ given physical error rate $p_\mathrm{phys}=5\times10^{-4}$.
\end{alemma}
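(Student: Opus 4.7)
The plan is to treat this statement as a direct specialization of the magic state cultivation construction of Ref.~\cite{gidney2024magic}, so the work is to adapt their protocol and noise model to the two physical error rates of interest and verify that the spacetime volume $v=10^{5}$ suffices. First I would recall the three-stage structure of cultivation: (i) inject a noisy $\ket{T}$ into a small color/surface code patch via a single-qubit rotation followed by a state-injection circuit; (ii) \emph{cultivate} the state by repeated rounds of post-selected syndrome extraction, discarding any shot in which a nontrivial detector fires, thereby exponentially suppressing the residual logical error at the cost of a modest acceptance overhead; (iii) \emph{escape} by grafting the cultivated patch into a rotated surface code of the target distance $d$, expanding the code in a way that preserves the logical content while promoting the protection from post-selection to active correction.

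Next I would set up the noise model used by Ref.~\cite{gidney2024magic}, namely the standard circuit-level depolarizing model with a single parameter $p_\mathrm{phys}$, and substitute $p_\mathrm{phys}\in\{10^{-3},5\times10^{-4}\}$ into their parameter sweep. The output of stage (ii) has a logical infidelity that scales roughly as $(p_\mathrm{phys}/p_\mathrm{th,inj})^{m}$ where $m$ is the cultivation depth, while the acceptance probability decays only polynomially; the output of stage (iii) inherits the cultivated error plus a contribution suppressed as $(p_\mathrm{phys}/p_\mathrm{th})^{(d_\mathrm{esc}+1)/2}$ per QEC cycle of escape. I would choose the cultivation depth $m$ and escape distance $d_\mathrm{esc}$ so that both contributions are dominated by the target $\epsilon_L$, and then compute the spacetime volume as
\begin{equation}
v \;=\; \frac{v_\mathrm{inj}+v_\mathrm{cult}+v_\mathrm{esc}}{p_\mathrm{accept}},
\end{equation}
where each numerator term is the product of the qubit footprint and the number of QEC cycles for that stage, and $p_\mathrm{accept}$ is the end-to-end post-selection survival probability. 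Plugging in the parameter tables from Ref.~\cite{gidney2024magic} for $p_\mathrm{phys}=10^{-3}$ yields $\epsilon_L\approx 2\times10^{-9}$ at $v\approx 10^{5}$, and for $p_\mathrm{phys}=5\times10^{-4}$ the same volume budget gives $\epsilon_L\approx 4\times 10^{-11}$ because the per-round suppression factor $p_\mathrm{phys}/p_\mathrm{th}$ roughly halves.

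The main obstacle is the bookkeeping of $p_\mathrm{accept}$: cultivation is a post-selected protocol, so its effective spacetime cost is the raw cost divided by the acceptance rate, and both the raw cost and the acceptance rate depend nontrivially on $m$, $d_\mathrm{esc}$, and $p_\mathrm{phys}$. A naive bound on $p_\mathrm{accept}$ can blow up the volume by an order of magnitude and spoil the claim; the tight estimate requires either the Monte-Carlo simulations of Ref.~\cite{gidney2024magic} or an analytic saddle-point optimization of $m$ balancing suppression gain against acceptance loss. I would therefore lean on the numerical curves in that reference to pin down the optimal $(m,d_\mathrm{esc})$ at each $p_\mathrm{phys}$, and restrict my contribution to checking that the quoted $(v,\epsilon_L)$ pairs land on (or below) the Pareto frontier reported there, rather than re-deriving the frontier from scratch.
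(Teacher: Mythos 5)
The paper offers no proof of this lemma at all: it is imported verbatim as a numerical result from Ref.~\cite{gidney2024magic}, so the only available verification is exactly what you describe in your last paragraph, namely checking that the quoted $(v,\epsilon_L)$ pairs lie on the Pareto frontier reported by that reference's Monte-Carlo simulations. Your sketch of the injection–cultivation–escape pipeline and the acceptance-probability bookkeeping is a faithful account of where those numbers come from, and your conclusion that the claim cannot be re-derived analytically but must be read off the reference's numerics matches the paper's (implicit) treatment.
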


\begin{alemma}
    \textbf{(Toffoli states from $8\ket{T}\xrightarrow{28\epsilon^2}\ket{CCZ}$ distillation, \cite{gidney2019efficient}.)}
    The $\ket{CCZ}$ can be prepared in $5.5d$ rounds with output error rate bounded by $\epsilon_L \simeq 28\epsilon^2$ given 8 $\ket{T}$ states with input error rate $\epsilon$.
\end{alemma}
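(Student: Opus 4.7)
The plan is to establish the lemma by reproducing the spacetime-optimized $8\to 1$ CCZ distillation of Ref.~\cite{gidney2019efficient}, which is built on the $[[8,3,2]]$ color code where the transversal physical $T$ gate implements a logical $CCZ$ (up to Clifford corrections) on the three encoded qubits. First I would write the input as a logical $\ket{+}^{\otimes 3}$ encoded in the color code, and then show that injecting a $T$ state through a gate-teleportation measurement on each of the eight data positions realizes $T^{\otimes 8}$ on the code block; by the algebraic identity for this code, this is equivalent to applying $CCZ$ on the three logicals and therefore produces $\ket{\mathrm{CCZ}}$ after decoding.

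Next, I would carry out the error analysis. Model each injected $T$ state as carrying an independent $Z$-type stochastic error with probability $\epsilon$ (the $X$ component is absorbed by the teleportation Pauli frame). A single faulty $T$ state propagates to a weight-one logical fault on the eight-qubit block, which is detected by one of the distance-$2$ stabilizer checks and causes the round to be rejected. Surviving outputs therefore require at least two faulty inputs, and counting the weight-two fault patterns that commute with all checks gives the leading contribution $\binom{8}{2}=28$ at order $\epsilon^2$, yielding $\epsilon_L \simeq 28\epsilon^2$.

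Then I would track the physical schedule to obtain the round count. Each $T$ injection requires a $d$-round $ZZ$ lattice-surgery measurement plus a classically conditioned $S$ fixup, and the eight injections plus the color-code syndrome extraction and decoding must be arranged on the physical patch layout of Ref.~\cite{gidney2019efficient}. I would verify that the injections can be pipelined with the syndrome extraction and that the final $ZZ$ measurement can overlap with the preceding surgery round, so that the critical path collapses from the naive $\gtrsim 6d$ to the claimed $5.5d$ cycles.

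The main obstacle will be the half-integer factor in the round count: justifying $5.5d$ rather than $6d$ requires checking that the Pauli-frame commutation of the last surgery step permits overlap with the penultimate syndrome cycle, and that the spatial routing of the patches on the surface code lattice admits the required parallelism without spacelike congestion. The combinatorial factor of $28$ is comparatively routine once one confirms that every weight-one injection error maps to a nontrivial syndrome of the $[[8,3,2]]$ code.
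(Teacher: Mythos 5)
The paper does not prove this lemma; it is imported verbatim as a black-box result from Ref.~\cite{gidney2019efficient}, so there is no in-paper argument to compare against. Your reconstruction is, however, an essentially faithful sketch of how the cited work establishes it: the $[[8,3,2]]$ color-code identification of $T^{\otimes 8}$ with a logical $CCZ$, teleported $T$ injections, post-selection on the parity check, and a second-order error count. Two points are worth tightening. First, the detection step rests on a \emph{single} $X$-type check, the weight-8 operator $X^{\otimes 8}$ (equivalently the parity of the eight teleportation measurement outcomes); every odd-weight $Z$ pattern anticommutes with it and is rejected, every even-weight pattern passes, and since the $Z$-stabilizer group of the $[[8,3,2]]$ code contains no weight-2 elements, \emph{all} $\binom{8}{2}=28$ two-fault patterns are undetected \emph{and} logically malignant — that is the full justification of the prefactor $28$, not merely a count of patterns "commuting with all checks." Second, the $5.5d$ round count is not derivable from the code structure at all; it is a property of the specific lattice-surgery layout and pipelining in Ref.~\cite{gidney2019efficient}, so your plan to "verify the critical path collapses to $5.5d$" amounts to reproducing that engineering schedule rather than proving anything new. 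With those caveats your argument is sound and matches the source.
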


\begin{alemma}
    \textbf{(T states from catalyzed factory, \cite{gidney2019efficient}.)}
    Two $\ket{T}$ states can be converted from one $\ket{CCZ}$ state in $d$ rounds, along with one catalytic T-state, yielding an output error rate comparable to that of the input $\ket{CCZ}$ state.
\end{alemma}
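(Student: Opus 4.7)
The plan is to exhibit an explicit catalyzed circuit that consumes one $\ket{CCZ}$ magic state and one ancillary $\ket{T}$ catalyst, produces two fresh $\ket{T}$ states, and returns the catalyst unchanged. The starting point is the observation that $T$ is diagonal in the computational basis and therefore commutes with $CCZ$. Consequently, if we initialize three logical qubits in $\ket{+}_a\ket{+}_b\ket{T}_c = T_c \ket{+}_a\ket{+}_b\ket{+}_c$ and apply $CCZ_{abc}$, the resulting state is $T_c \cdot \ket{CCZ}_{abc}$, i.e.\ the ideal $CCZ$ magic state with the catalyst phase pulled out onto qubit $c$. The next step is to perform $X$-basis measurements on the output together with a small Clifford fix-up (controlled-$S$ feedback from the measurement record) so that qubits $a$ and $b$ are left in $T\ket{+} = \ket{T}$, while qubit $c$ is returned to $\ket{T}$. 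This is precisely the catalyzed $CCZ \to 2T$ identity used in Ref.~\cite{gidney2019efficient}, which I would verify by a direct stabilizer calculation: write $\ket{CCZ}$ in its canonical form $\tfrac{1}{2\sqrt{2}}\sum_{a,b,c}(-1)^{abc}\ket{abc}$, absorb the $T_c$ factor, and check that the post-measurement state on $ab$ is $\ket{T}\ket{T}$ up to a Clifford determined by the outcomes on $c$.

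Next, I would map this logical circuit onto the surface code. The only non-Clifford gate is the implicit $CCZ$, which is realized by gate teleportation consuming the input $\ket{CCZ}$ resource state through three $CX$-style lattice-surgery merges followed by destructive measurements. All remaining steps (Hadamards on $a,b$ to create $\ket{+}$, the catalyst routing, the terminal $X$-basis measurements, and the classically conditioned $S$/$CZ$ fix-ups) are Clifford operations that under lattice surgery can be scheduled in parallel with the teleportation merges. Since a single multi-qubit lattice-surgery merge requires $d$ QEC cycles, and the teleportation plus Clifford fix-ups can be packed into one such merge window, the whole catalyzed conversion completes in $d$ rounds, as claimed.

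For the error budget, I would argue that the output infidelity is dominated by the imperfection of the injected $\ket{CCZ}$. Let $\epsilon_{CCZ}$ denote the logical error of the input magic state and let $p_L$ denote the per-cycle logical error per tile. The Clifford portion contributes only $O(d \cdot p_L)$ additional error, which is negligible at the code distances used throughout the paper (chosen so that $p_L \cdot (\text{circuit volume}) \ll \epsilon_{CCZ}$). The catalyst itself is a pure eigenphase and is therefore only acted upon by commuting diagonal operations plus a terminal $X$-measurement with Clifford correction, so its outgoing error is identical to its incoming error up to the same $O(d\cdot p_L)$ surface-code background; in particular the catalyst can be recycled without amplification. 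Hence the two produced $\ket{T}$ states and the returned catalyst each carry an error $\epsilon_{\rm out}$ that matches $\epsilon_{CCZ}$ up to sub-leading additive corrections, which is the comparability statement of the lemma.

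The main obstacle I anticipate is the error-propagation bookkeeping: one must check that the classically conditioned Clifford fix-ups do not convert a benign $Z$-type error on the consumed $\ket{CCZ}$ into a logical error on the surviving catalyst, and that measurement-induced Pauli frame updates do not inject new stochastic errors into the two produced $\ket{T}$ states. I would handle this by a Pauli-frame commutation argument, tracking every possible single-fault location through the teleportation circuit and verifying that each either (i) is detected by the surface-code syndromes, (ii) is absorbed into the Pauli frame, or (iii) maps onto an error of weight at most one on a single output magic state, so that the composite error is bounded by the sum of the input magic-state error and the (parametrically smaller) surface-code background.
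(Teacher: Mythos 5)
The paper does not prove this lemma at all: it is imported verbatim from Ref.~\cite{gidney2019efficient} (Gidney--Fowler's catalyzed $\ket{CCZ}\to2\ket{T}$ factory) and used as a black box in the resource count, so there is no in-paper argument to compare against. Judged on its own terms, your reconstruction gets the surrounding framing right (Clifford fix-ups absorbed into the Pauli frame, the conversion packed into a single $d$-cycle lattice-surgery window, output error dominated by the input $\ket{CCZ}$ infidelity), but the central circuit identity in your first paragraph is wrong. Starting from $\ket{+}_a\ket{+}_b\ket{T}_c$ and applying $CCZ_{abc}$ gives $T_c\ket{CCZ}$; projecting qubit $c$ onto $\ket{+}$ leaves the unnormalized state $\tfrac14\sum_{ab}\bigl(1+(-1)^{ab}e^{i\pi/4}\bigr)\ket{ab}$, whose $\ket{11}$ amplitude has magnitude $|1-e^{i\pi/4}|\approx0.77$ while the others have magnitude $|1+e^{i\pi/4}|\approx1.85$. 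Since $\ket{T}\ket{T}$ has amplitudes of equal magnitude and your proposed fix-up (a conditional controlled-$S$) is diagonal and magnitude-preserving, no such correction can produce $\ket{T}_a\ket{T}_b$. You have also destroyed the catalyst: qubit $c$ is measured out, so it cannot simultaneously be ``returned to $\ket{T}$.''

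The actual mechanism you need is the parity identity $e^{i\pi(n_a+n_b)/4}=e^{i\pi(a\oplus b)/4}\,e^{i\pi n_an_b/2}$, i.e.\ $T_a\otimes T_b = CS_{ab}\cdot T_{a\oplus b}$. The factory first converts $\ket{CCZ}$ into a $\ket{CS}$ resource (an $X$-basis measurement of the third $CCZ$ qubit with a $CZ$ fix-up), supplying the $CS_{ab}$ factor; it then computes the parity $a\oplus b$ into the separate catalyst qubit with CNOTs and teleports a $T$ onto that parity using the catalyst, the point being that the teleportation correction is the Clifford $S$ and the catalyst state is restored rather than consumed. Your error-propagation and timing discussion would survive essentially unchanged once the circuit is replaced by this one, but as written the construction at the heart of the proof does not produce two $\ket{T}$ states.
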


\begin{alemma}
    \textbf{(Rotation states from MEK distillation, \cite{mlti, campbell2016efficient}.)}
    Rotation states $\ket{\theta_l}=e^{i\pi Z/2^{l}}\ket{+})$ in low Clifford hierarchy level $l$ can be prepared by iteratively executing MEK distillation protocol.
\end{alemma}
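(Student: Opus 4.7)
The plan is to establish the lemma by first recalling the structure of the Clifford hierarchy and identifying where the rotation states $\ket{\theta_l}=e^{i\pi Z/2^{l}}\ket{+}$ sit, then constructing an inductive distillation scheme that climbs the hierarchy one level at a time. Concretely, I would fix the conventions that the Pauli group is level $1$, the Clifford group is level $2$, and the $\pi/2^{l-1}$-rotation gate $R_Z(\pi/2^{l-1})$ lies at level $l$ of the semi-Clifford hierarchy. The rotation state $\ket{\theta_l}$ is then exactly the output of $R_Z(\pi/2^{l-1})$ on $\ket{+}$, and it is injectable via the standard gate-teleportation identity: consuming one copy of $\ket{\theta_l}$, together with a $CX$ and a measurement, implements $R_Z(\pi/2^{l-1})$ up to a level-$(l{-}1)$ correction. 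This is the single gadget that drives the whole construction.

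The next step is to recall the MEK distillation subroutine (Campbell--Howard form): given $n$ noisy copies of $\ket{\theta_l}$ with error probability $p$, the protocol applies a small level-$(l{-}1)$ Clifford-hierarchy circuit — which itself uses ancillary $\ket{\theta_{l-1}}$ states that can be supplied by the inductive hypothesis — followed by syndrome measurements in suitable stabilizer bases, and postselects on a trivial outcome to emit one distilled copy whose error is suppressed to $\tilde{p}\in\mathcal{O}(p^{\,k})$ for some $k\ge 2$. I would write down this subroutine explicitly for small $l$ (e.g.\ the $l=4$ case bootstrapping from $\ket{T}=\ket{\theta_3}$), verify it by direct stabilizer-formalism calculation, and then fold the inductive correction step (which requires the already-constructed $\ket{\theta_{l-1}}$) into the resource count.

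With the gadget in hand, I would set up a two-tier induction. The outer induction is on the hierarchy level $l$: the base case $l=3$ is supplied by the preceding magic-state-cultivation lemma for $\ket{T}$; the inductive step uses the MEK subroutine to produce $\ket{\theta_l}$ from $\ket{\theta_{l-1}}$, assuming access to an injection oracle that produces raw $\ket{\theta_l}$ at initial error $p_0$ comparable to the physical error rate via state injection on a surface-code patch. The inner induction is on the number $r$ of distillation rounds at fixed $l$, iterating the suppression $p_{r+1}\leq c\,p_r^{\,k}$ until the target logical error $\epsilon_L$ is reached. Standard analysis gives $r=\mathcal{O}(\log\log(1/\epsilon_L))$ rounds and a total count of raw-injected states polylogarithmic in $1/\epsilon_L$ per emitted $\ket{\theta_l}$.

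The main obstacle I anticipate is the interleaving of the two inductions: each MEK round at level $l$ itself consumes $\ket{\theta_{l-1}}$ states whose errors must be made small enough that they do not dominate the output error at level $l$. I would address this by a book-keeping argument that allocates an error budget $\epsilon_L/2$ to the level-$l$ distillation and $\epsilon_L/(2L)$ to each lower-level contribution, so that the cascading costs remain controlled and the total qubit-time overhead per $\ket{\theta_l}$ stays within a factor polynomial in $l$ and polylogarithmic in $1/\epsilon_L$. A secondary subtlety is ensuring that the Clifford corrections applied after injection are themselves done fault-tolerantly by lattice surgery on the surface code at the chosen distance $d$; this is immediate from the Clifford implementability discussion earlier in this section, but must be invoked explicitly so that the physical error rate feeding the MEK analysis is indeed $p_{\mathrm{phys}}$ rather than a larger effective value.
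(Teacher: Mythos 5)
You should first note that the paper does not actually prove this lemma: it is stated as an imported result, with the ``proof'' consisting entirely of the citations to Refs.~\cite{mlti, campbell2016efficient}, and the surrounding text only records how the protocol is used (T states from cultivation/distillation feed the T-pumping stages of the MEK rounds). Your proposal is therefore not comparable to a proof in the paper but is rather a reconstruction of the cited literature. As such, it is broadly faithful: the identification of $\ket{\theta_l}$ with the level-$l$ diagonal gate $R_Z(\pi/2^{l-1})$, the teleportation gadget whose correction lives at level $l-1$, the MEK-style postselected syndrome check suppressing error from $p$ to $\mathcal{O}(p^k)$, and the two-tier recursion (outer on the hierarchy level with base case $\ket{T}=\ket{\theta_3}$ from the cultivation lemma, inner on distillation rounds) all match the structure of the protocol the paper invokes and the way the paper consumes it.

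Two caveats on your sketch. First, your ``injection oracle'' producing raw $\ket{\theta_l}$ at error $\sim p_{\mathrm{phys}}$ is doing real work: in the paper this role is filled by separate, explicitly cited primitives (cultivation for $\ket{T}$, and the multi-level transversal injection lemma for high-$l$ rotation states), so if you were writing this out you would need to either cite those or construct the raw injection yourself rather than assume it. Second, the quantitative content that the paper actually uses downstream — the spacetime volume $V_\mathrm{ROT}$ in Eq.~\eqref{aeq:volume_rot} and the target infidelity $\delta_\mathrm{ROT}\leq 3.0\times10^{-12}$ — does not follow from your qualitative ``$r=\mathcal{O}(\log\log(1/\epsilon_L))$ rounds'' argument; it requires the concrete suppression exponents and footprints of the specific MEK variant in \cite{mlti}, which your plan verifies only ``for small $l$.'' For the lemma as literally stated (existence of a preparation procedure), your outline suffices; for the role the lemma plays in the paper's resource analysis, it does not.
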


\begin{alemma}
    \textbf{(Rotation states from multi-level transversal injection, \cite{mlti}.)}
    Rotation states $\ket{\theta_l}$ in high Clifford hierarchy level $l$ can be prepared by multi-level transversal injection protocol.
\end{alemma}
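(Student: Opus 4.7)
The plan is to construct the rotation state $\ket{\theta_l} = e^{i\pi Z/2^l}\ket{+}$ by iteratively climbing the Clifford hierarchy, using transversal injection at each level to convert lower-level rotation states into higher-level ones within a surface-code encoding. The starting point will be level $l=3$ (the $\ket{T}$ state), which is already supplied by cultivation at a sufficiently small logical error rate. The inductive step will then show how to produce $\ket{\theta_{l+1}}$ from several copies of $\ket{\theta_l}$ through a gadget whose resource cost grows only mildly with $l$.

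First I would describe the transversal injection primitive precisely. For the rotated surface code, a transversal application of a physical-level rotation on every data qubit implements the logical rotation with the same angle, but with an error inherited from each physical gate. The injection gadget prepares a distance-$d'$ code state corresponding to $\ket{\theta_l}$ via transversal physical rotations by angle $\pi/2^l$, then performs a fault-tolerant growth (or lattice-surgery expansion) up to the target code distance $d$, with syndrome extraction verifying the state. Second, I would describe the multi-level recursion: given a supply of $\ket{\theta_l}$ at logical error $\epsilon_l$, the gadget of Ref.~\cite{mlti} consumes $O(1)$ copies to perform a gate-teleportation-based phase correction that realizes $e^{i\pi Z/2^{l+1}}$ up to a measurement-dependent Clifford correction that itself only needs $\ket{\theta_{l'}}$ states with $l' \le l$. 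This is where the ``multi-level'' structure comes in: the correction hierarchy is closed because lower-level rotations already reside in the Clifford group once conjugated by the level-$l$ gadget.

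Third, I would run an inductive error analysis. At level $l$, the output error satisfies a recursion of the schematic form $\epsilon_{l+1} \le c_1 \epsilon_l + c_2 p_L(d)$, where $p_L(d)$ is the surface-code logical error per QEC cycle and $c_1, c_2$ are small constants coming from the teleportation circuit and from the feedback Cliffords. Iterating from the cultivated $\ket{T}$ with $\epsilon_3 \simeq 4\times 10^{-11}$ at $p_\mathrm{phys}=5\times10^{-4}$, and choosing the code distance $d$ large enough that $p_L(d)$ stays below the tolerance budget allocated to rotation gates in the quantum N-S solver, one obtains $\epsilon_l$ for arbitrary $l$ up to the precision required by the block-encoding circuit. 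The space--time overhead per level is bounded by the constant-depth gadget times $d$ QEC cycles, giving a total cost that is linear in $l$ rather than exponential as in conventional $T$-count compilation of rotations.

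The hard part will be the tight accounting of the constants $c_1$ and $c_2$, because each level-$l$ injection inherits residual errors not only from the input $\ket{\theta_l}$ copies but also from the Clifford feedback circuit whose depth depends on the measurement outcome pattern; a loose bound would give an $\epsilon_l$ that grows too quickly with $l$ to beat the T-decomposition alternative. I would handle this by separating coherent and stochastic contributions, using the twirling argument of Ref.~\cite{campbell2016efficient} to reduce the coherent part to a stochastic Pauli channel, and then applying a standard subadditive accumulation over the $O(l)$ recursive calls; the remaining technical task is verifying that the verification (post-selection) step in the gadget indeed suppresses the dominant first-order error, so that the recursion is contracting ($c_1 < 1$) rather than merely bounded, which is what enables the constant-per-level blow-up claimed in the lemma.
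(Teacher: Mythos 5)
The paper does not actually prove this lemma: it is imported verbatim from Ref.~\cite{mlti} and used as a black box, with the only downstream consequence being the spacetime-volume figure $V_\mathrm{ROT}=7.62\times10^7$ quoted in Eq.~\eqref{aeq:volume_rot}. So there is no in-paper argument to compare yours against; what can be assessed is whether your reconstruction of the protocol is internally sound.

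It is not, on one central point. You assert that ``a transversal application of a physical-level rotation on every data qubit implements the logical rotation with the same angle, but with an error inherited from each physical gate.'' For the surface code this is false: the code has no transversal non-Clifford gates, and applying $R_Z(\pi/2^l)$ to every data qubit does not realize the logical $R_Z(\pi/2^l)$. What transversal injection actually does is prepare each physical qubit in a rotated product state \emph{before} the first round of stabilizer measurements; the subsequent syndrome extraction projects this product state onto the code space, producing a logical state whose effective rotation angle is a nonlinear, outcome-dependent function of the physical angle. The protocol is therefore inherently probabilistic and heralded, and the ``multi-level'' structure together with the T-pumping stages (which the paper explicitly mentions consume cultivated/distilled $\ket{T}$ states) exists precisely to steer the heralded, outcome-dependent angle toward the target $\pi/2^l$. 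Your inductive gadget, which consumes $O(1)$ copies of $\ket{\theta_l}$ per level and assumes a deterministic same-angle transversal primitive, is built on a foundation that does not hold for this code family, so the recursion $\epsilon_{l+1}\le c_1\epsilon_l + c_2 p_L(d)$ and the contraction argument that follows cannot be salvaged as stated. A second, smaller concern: even granting the recursion, a merely contracting linear map only drives $\epsilon_l$ down to a fixed point of order $c_2 p_L(d)/(1-c_1)$, whereas magic-state protocols of this type achieve their advantage through quadratic (or higher-order) suppression of the input infidelity under post-selection; your analysis should make explicit where that higher-order suppression enters, since without it the comparison against T-decomposition in the final sentence is not established.
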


In our design, cultivation protocols are employed to generate T-states with an error rate higher than $10^{-10}$. These T states serve as input states for distillation protocols to produce T states and CCZ states of sufficiently high fidelity. Furthermore, the T-states generated through both cultivation and distillation are consumed in the T-pumping stages of the multi-level transversal injection protocol and the MEK distillation protocol for preparing high-fidelity rotation states.

Using the above protocols, Toffoli states with a target infidelity below $2.8\times 10^{-17}$ and rotation states with a target infidelity below $5\times 10^{-12}$ for $l<30$ can be prepared. When implementing rotation gates with rotation angle precision up to $l=30$ using the circuit from Ref.~\cite{litinski2019game}, it requires all rotation states with $l<30$. Here, we use the spacetime volume, defined as the number of qubits multiplied by the number of QEC cycles, as the metric for overhead. In summary, the overhead of magic states required for each Toffoli gate or rotation gate is as follows:

\begin{equation}\label{aeq:volume_rot}
    \begin{split}
        V_\mathrm{TOF} &= 2.29\times10^6,\\
        V_\mathrm{ROT} &= \frac{1}{27}\sum_{3\leq L<30}\sum_{3\leq l\leq L} V(\ket{{\theta_l}})= 7.62\times10^7.
    \end{split}
\end{equation}

\section{Characterizing Quantum Advantage}
\subsection{Error and Resource Model}
\begin{figure}
    \centering
    \includegraphics[width=0.95\textwidth]{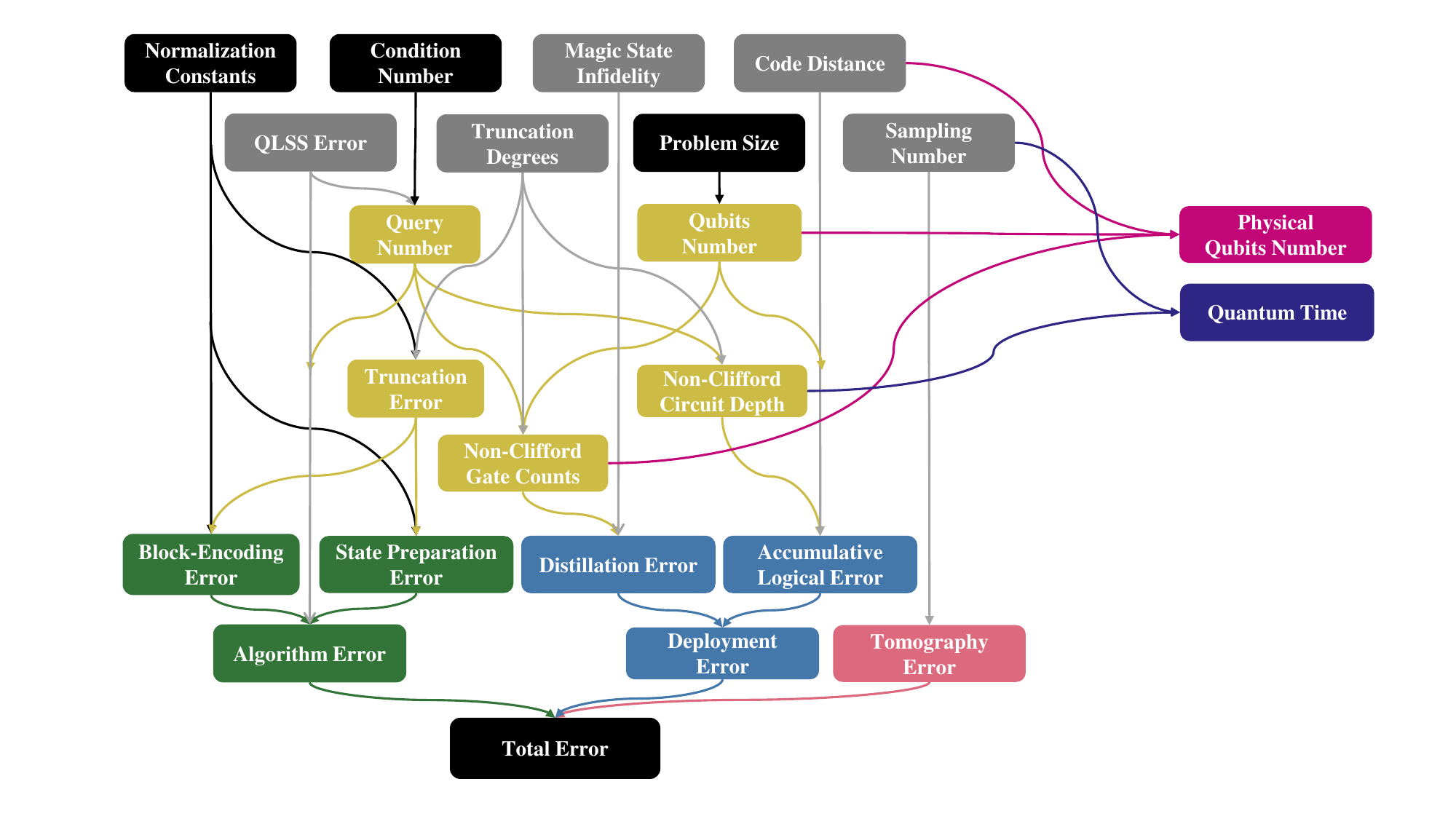}
    \caption{End-to-End Error Analysis Model.}
    \label{fig:error_model}
\end{figure}
In our error model shown in Fig.~\ref{fig:error_model}, there are two types of hyperparameters: the \textit{problem-relevant hyperparameters}  to be characterized by numerical tests (black colored), such as problem size, block-encoding normalization constants, and linear system condition number upper bound $\Bar{\kappa}$, and the algorithm-dependent \textit{free hyperparameters} to be decided by optimization (grey colored), including truncation degrees $d_\mu, d_\rho$, QLSS target error $\epsilon_\mathrm{QLSS}$, code distance $d$, magic state infidelity $\epsilon_T, \epsilon_R$, and tomography sampling number $\mathcal{N}_\mathrm{sample}$.
Given these hyperparameters, we can compute algorithm key parameters (golden colored) of the logical qubits number $\mathcal{QN}_L$, the query number $\mathcal{Q}$, truncation errors $\epsilon_\mu(d_\mu)$, $\epsilon_\rho(d_\rho)$, logical circuit rotation/Toffoli depth $\mathcal{TD}+\mathcal{RD}$, and rotation and Toffoli counts $\mathcal{TC}, \mathcal{RC}$.
Following that, we can evaluate the three sources of error:

In the algorithm phase (green colored),  errors arise from the imperfect block-encoding and state preparation subroutines, as well as the QLSS configuration. Intuitively, this algorithmic error originates in the quantum simulation of non-linear behaviour to approximate $\mu$, $1/\rho$, and ${1}/{A}$.
Given truncation degrees $d_\mu, d_\rho$ for the polynomial approximation of non-linear terms, the truncation errors $\epsilon_\mu$ and $\epsilon_\rho$ decay exponentially quickly.
Then these truncation errors are multiplied by a factor determined by block-encoding normalization constants to derive $\epsilon_A$ and $\epsilon_b$ as given in Eq.~\eqref{aeq:epsilon_A} and Eq.~\eqref{aeq:epsilon_b}.
Since each of these subroutines is called many times, we can bound the error of the imperfect block-encoding of $A$ and $b$ as
\begin{equation}
    \epsilon_\mathrm{BE/SP} \leq \mathcal{Q}\cdot(\epsilon_A + 2\epsilon_b).
\end{equation}
To also consider the pre-set QLSS error, the algorithm's algorithmic error can be bounded by
\begin{equation}
    \epsilon_\mathrm{alg} \leq \mathcal{Q}\cdot(\epsilon_A + 2\epsilon_b) + \epsilon_\mathrm{QLSS}.
\end{equation}
Then, in the deployment phase (blue colored), errors consist of the cumulative logical error and the distillation error.
Given the physical gate error $p_\mathrm{phys}$, surface code error threshold $p_\mathrm{th}=0.01$, and code distance $d$, the logical error rate can be estimated by
\begin{equation}
    P_L = 0.1(p_\mathrm{phys}/p_\mathrm{th})^{{d+1}/{2}}.
\end{equation}
Considering the logical qubit number $\mathcal{QN}_L$, the circuit Toffoli/rotation depth $(\mathcal{RD}+\mathcal{TD})$, and the $d$ error correction cycles for each Toffoli/rotation layer,  the accumulative logical error can be evaluated as
\begin{equation}
    \epsilon_\mathrm{a.l.e.} \leq \sqrt{2}P_L\cdot\mathcal{QN}_L\cdot(\mathcal{RD}+\mathcal{TD})\cdot d.
\end{equation}
And the distillation error in the magic state factories when implementing Toffoli / rotation gates can be evaluated as
\begin{equation}
    \epsilon_\mathrm{distillation} \leq \sqrt{2}(\mathcal{TC}\cdot\delta_\mathrm{TOF}+\mathcal{RC}\cdot\delta_\mathrm{ROT}).
\end{equation}
In summary, the deployment error is bounded by
\begin{equation}
    \epsilon_\mathrm{deploy} = \epsilon_\mathrm{accumulative} + \epsilon_\mathrm{distillation}.
\end{equation}
Finally, in the measurement phase (pink colored), the total error given sampling number $\mathcal{N}_\mathrm{sample}$ can be bounded by
\begin{equation}
    \epsilon_\mathrm{tomography}\leq\sqrt{\frac{cd_\mathrm{eff}\log^2d_\mathrm{eff}}{\mathcal{N}_\mathrm{sample}}}.
\end{equation}

In principle, given the end-to-end quantum Navier-Stokes error in each iteration bounded by 
\begin{equation}
    \epsilon_\mathrm{quantum} \leq \epsilon_\mathrm{alg} + \epsilon_\mathrm{deploy} + \epsilon_\mathrm{tomography}.
\end{equation}
and the problem-relevant hyperparameters determined by numerical tests, the Pareto frontier of  the end-to-end computing time overhead
\begin{equation}
    \mathcal{T}_\mathrm{quantum} = \mathcal{N}_\mathrm{sample}\times(\mathcal{RD}+\mathcal{TD})\times d\times\mathcal{T_\mathrm{cycle}}
\end{equation}
\begin{equation}
    \mathcal{QN}_\mathrm{physical} = v_\mathrm{TOF}\cdot V_\mathrm{TOF} + v_\mathrm{ROT}\cdot V_\mathrm{ROT}
\end{equation}
and physical qubits can be solved by a multi-objective constrained optimization of those free hyperparameters.
However, the dependency on those \textit{free hyperparameters} is often discontinuous and hard to evaluate.
For example, for each choice of magic state infidelity, a sequence of numerical experiments and optimizations is required to determine the magic state factory's configuration of footprint size and distillation times.
Consequently, we consider a more practical method to determine these free hyperparameters by a heuristic procedure described as follows:

\subsection{Characterizing Problem-Specific and Free Hyperparameters}\label{}
We first characterize those problem-specific hyperparameters as follows:
\subparagraph{Iteration-tolerant Error Bound.}
To model the impact of quantum error, we introduce a controlled, single-step noise in each iteration. Specifically, each component of the solution $x$ (the residual vector) is multiplied by a random factor drawn from the uniform distribution $\mathcal{U}(1 - \epsilon_\mathrm{noise}, 1 + \epsilon_\mathrm{noise})$. This per-iteration error accumulates throughout the simulation, leading to a deviation in the solution field at the final time. As detailed in Figure~\ref{fig:error-noise-n}, we find that for a per-iteration error 
\begin{equation}    
    \epsilon_\mathrm{noise} \leq = \epsilon_{\mathrm{threshold}} =  0.05,
\end{equation}

\noindent the final accumulated error is nearly indistinguishable from the baseline discretization error of the noiseless case ($\epsilon_\mathrm{total}=0$). This implies that noise within this iteration-tolerant error threshold  $\epsilon_{\mathrm{threshold}}$has a negligible impact on the overall accuracy, thus ensuring high-fidelity results.

In the noiseless case ($\epsilon_\mathrm{noise}=0$), the error systematically decreases as problem size $N$ increases, confirming the expected second-order spatial accuracy of our numerical scheme. When noise is introduced, the final error grows at an accelerating rate with the noise level $\epsilon_\mathrm{noise}$ for all problem sizes $N$, clearly illustrating the compounding effect of per-iteration inaccuracies. For the density and energy, however, an error floor is observed, which persists even for large $N$ in the noiseless case. This is an expected discrepancy, as the analytical solution assumes a perfectly uniform temperature, whereas our simulation correctly captures minor physical density and temperature fluctuations arising from compressibility effects.
\begin{figure}
    \centering
    \includegraphics[width=\linewidth]{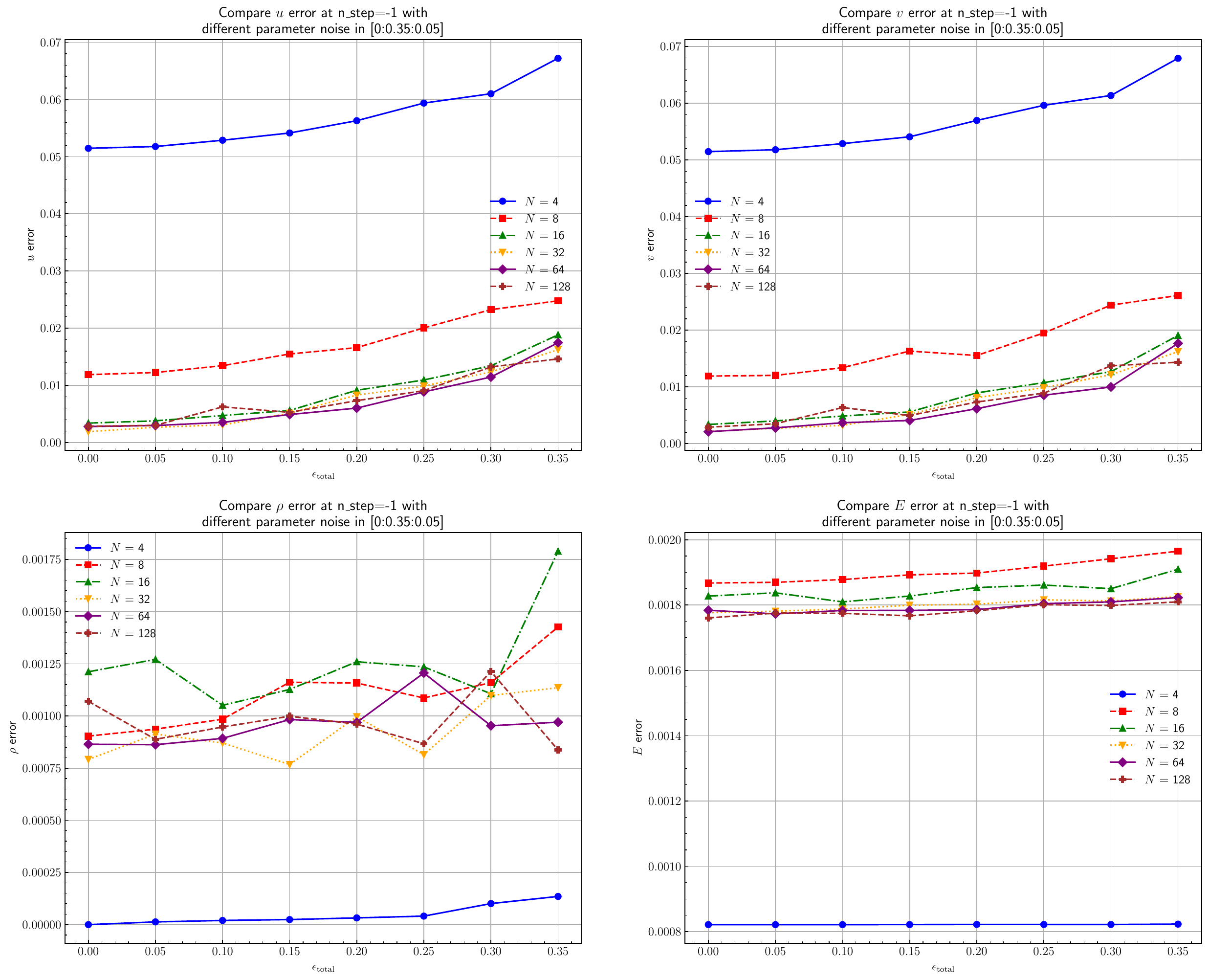}
    \caption{Final relative error as a function of noise magnitude $\epsilon_\mathrm{noise}$ for various problem sizes $N$.}
    \label{fig:error-noise-n}
\end{figure}

\subparagraph{Spectral Sparsity.}
We model the impact of spectral sparsity by retaining only the $\mathcal{S}$ largest spectral coefficients of a given field. This filtering is applied to the residual field at each iteration to mimic sparse spectral tomography, and it is also applied to the initial condition to represent sparse input data. As shown in Figure~\ref{fig:qty-ft-contour}, the residual field exhibits a spectrum that is naturally sparse.

\begin{figure}
    \centering
    \includegraphics[width=0.6\linewidth]{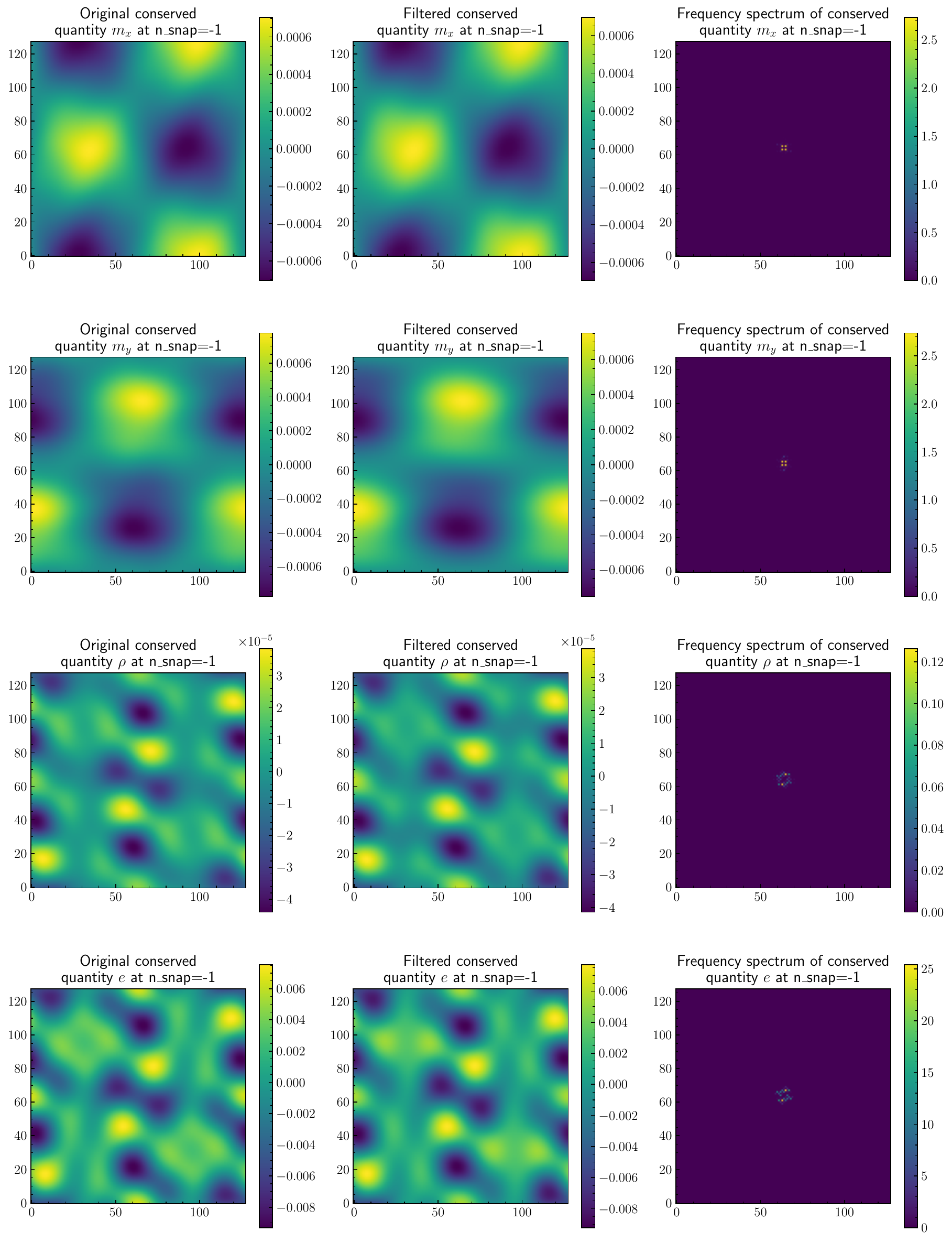}
    \caption{Residual fields of the four conserved quantities before and after spectral sparsification. The frequency spectra (right) demonstrate the sparsity structure.}
    \label{fig:qty-ft-contour}
\end{figure}

To quantify the effect of sparsity, we introduce a metric defined as the ratio of the $L^2$-norm of the field after spectral filtering to that of the original field. Figure~\ref{fig:norm-ratio-sparsity-case-n} illustrates how this norm ratio varies with the sparsity level $\mathcal{S}$ across different test cases and problem sizes $N$. The results show that the norm ratio is largely independent of the problem size $N$. Additionally, for test cases that simultaneously involve a broader range of frequencies, a higher sparsity level $\mathcal{S}$ is required to maintain accuracy. Notably, once the sparsity level reaches $\mathcal{S} \geq 64$, the norm ratio approaches unity. This indicates that the filtered field preserves the essential characteristics of the original field, confirming that this level of sparsity is sufficient to maintain accuracy for the problems under consideration.

\begin{figure}
    \centering
    \includegraphics[width=0.6\linewidth]{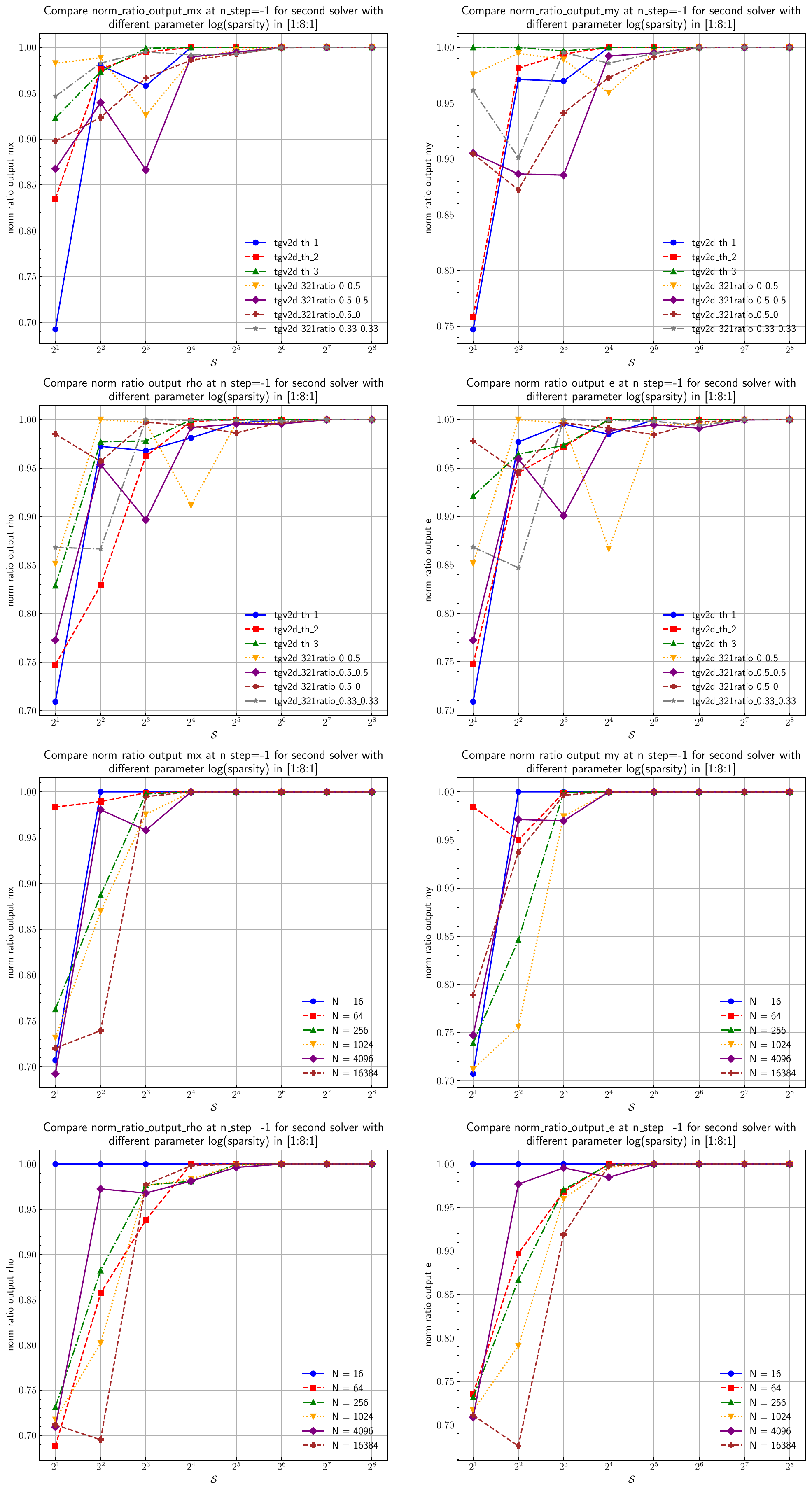}
    \caption{The norm ratio, obtained via spectral filtering, plotted as a function of the sparsity level $\mathcal{S}$. The results demonstrate robustness across various test cases and problem sizes. A sparsity level of $\mathcal{S} \geq 64$ is shown to effectively preserve accuracy, as indicated by the norm ratio approaching 1.0.}
    \label{fig:norm-ratio-sparsity-case-n}
\end{figure}

\subparagraph{Problem Size.}
We consider a grid of size $2^{25}\times2^{25}$ that corresponds to a linear system of $2^{52}$-dimension, i.e., a Jacobian matrix of size $2^{52}\times2^{52}$ and a residual vector of length $2^{52}$.
The corresponding logical qubits number in the quantum circuit is
\begin{equation}\label{aeq:numerical_logical_qubits_number}
    \mathcal{QN}_\mathrm{logical}=181.
\end{equation}

\subparagraph{Condition Number.}
We numerically characterize the condition number $\kappa$ by tracking its behavior under varying problem size $N$ and bandwidth $\mathcal{S}$ in Fig.~\ref{fig:kappa}.
For each $\mathcal{S}$, we generate a sequence of random initial conditions with spectral sparsity $\mathcal{S}$, and evaluate the condition number of the underlying flux Jacobian matrix with varying $N$.
From the numerical tests, we have two interesting observations:
Firstly, the condition number $\kappa$ grows with the problem size $N$ until it exceeds a critical level and saturates.
Secondly, the saturation level grows with the iteration-tolerant bandwidth (efficient spectral sparsity)  $\mathcal{S}$.
Our first key observation, the \textbf{condition number saturation}, reveals that for sufficiently large problem sizes, the condition number 
$\kappa$ remains bounded when $\mathcal{S}$ is fixed. This suggests that $\kappa$ can be reliably extrapolated from numerical tests performed at smaller scales.
In particular, we suppose that
\begin{equation}
    \kappa(\mathcal{S}\leq64)\leq550.
\end{equation}
The second observation, \textbf{sensitivity to spectral sparsity}, indicates that the spectral sparsity of the underlying variable can be a dominant factor influencing the condition number. While $\kappa$ is a well-known determinant of computational complexity in both classical and quantum linear system-related algorithms, its relationship with spectral sparsity has—to the best of our knowledge—received little attention in the CFD community. This insight opens a new quantum-inspired perspective on linear system complexity for conventional scientific computing.
\begin{figure}
    \centering
    \includegraphics[width=\linewidth]{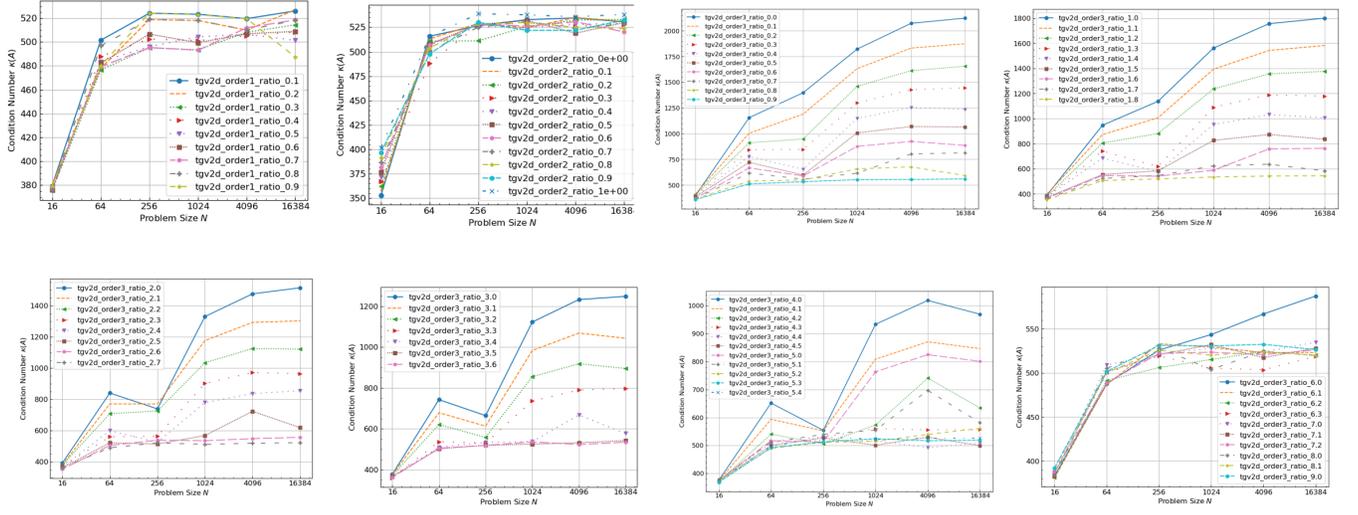}
    \caption{Condition number $\kappa$ with varying problem size $N$ and iteration-tolerant bandwidth $\mathcal{S}$.}
    \label{fig:kappa}
\end{figure}

\subparagraph{Normalization Constants.} We numerically monitor those normalization constants related to the algorithmic error terms in Eq.~\eqref{aeq:epsilon_A} and Eq.~\eqref{aeq:epsilon_b} with varying problem size and spectral sparsity as illustrated in Fig.~\ref{fig:normalization}.
The numerical result suggests stable bounds for each normalization constant as
\begin{equation}\label{aeq:error_ratios}
    \begin{split}
        r_b&\leq0.98<1,\\
        r_\mu&<0.001,\\
        r_\rho&<0.001.
    \end{split}
\end{equation}
\begin{figure}
    \centering
    \includegraphics[width=\linewidth]{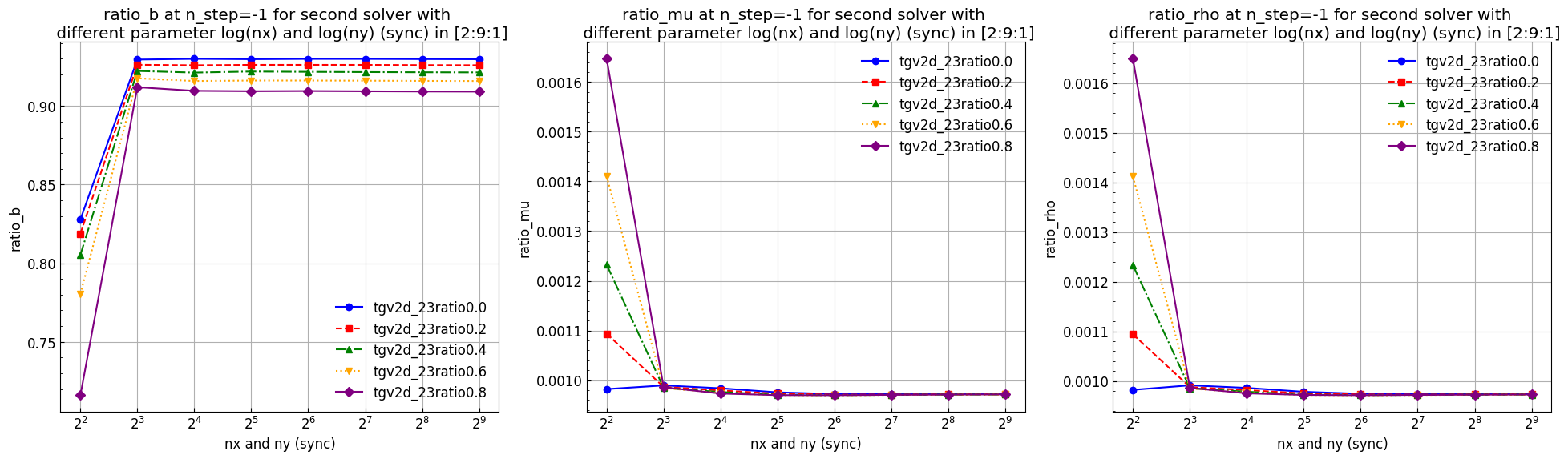}
    \caption{Normalization constants with varying problem size $N$ and iteration-tolerant bandwidth $\mathcal{S}$.}
    \label{fig:normalization}
\end{figure}

\subparagraph{QLSS Error.}
It is reported in Table~1 of Ref.~\cite{dalzell2024shortcut} that an augmented linear system kernel reflection with norm search methods of exhaustive search, Grover search, and adiabetic search.
We evaluate the precise query time of these three methods as depicted in Fig.~\ref{fig:query_time}.
The numerical result suggests that, while the adiabetic search method admits a tighter complexity upper bound, the first two methods present a much lower query complexity with $\kappa\leq5000$ since their prefactors are much smaller.
In particular, we choose the naive exhaustive norm search method with the shortest circuit depth.
Our numerical result shows that, in our case of $\kappa\leq550$ and $\epsilon_\mathrm{QLSS}\leq10^{-3}$, it is sufficient to suppose the query complexity 
\begin{equation}\label{aeq:Q_estimation}
    \mathcal{Q}=28301.
\end{equation}
\begin{figure}
    \centering
    \includegraphics[width=0.5\linewidth]{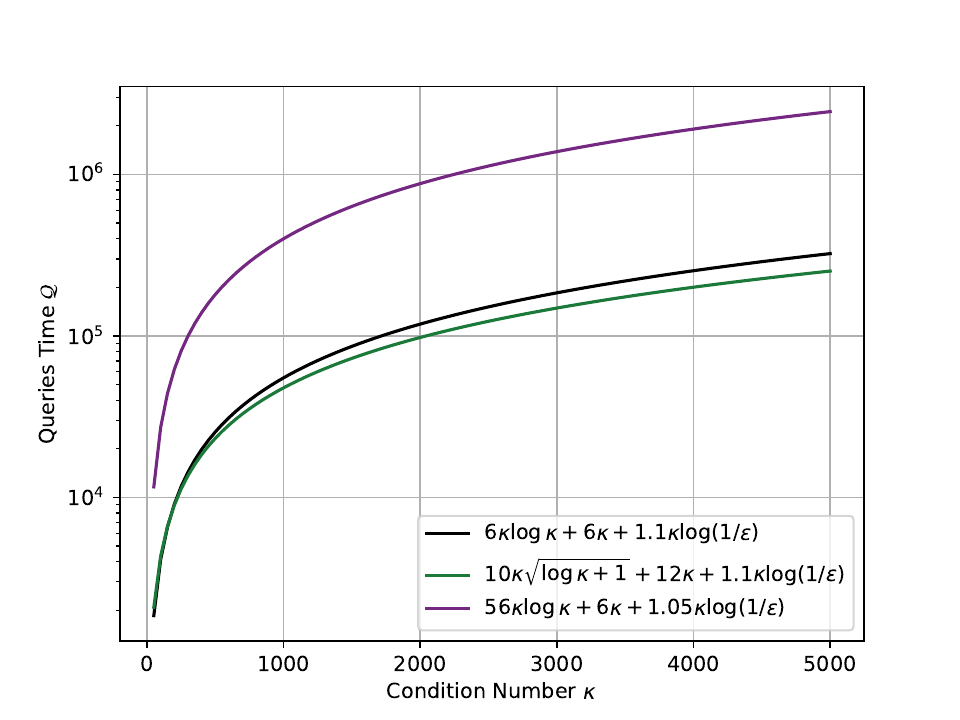}
    \caption{Quantum Linear System Solver Query Time $\mathcal{Q}$ with Error Bound $\epsilon_\mathrm{QLSS}=1\times10^{-3}$.}
    \label{fig:query_time}
\end{figure}

\subparagraph{Truncation Degree.}
We monitor the values of $T\propto e$ and $\rho$ across varying problem sizes and observe that, for a Mach number $\mathrm{Ma}=0.1$, they remain robustly constrained within the ranges $0.9997<T<1.0006$ and $0.991<\rho\leq 1$.
To implement the nonlinear functions $\mu$ and $1/\rho$, we employ Chebyshev polynomial approximations over the extended intervals $(0.999,1.002)$ and $(0.991,1)$, respectively. Numerical experiments confirm that
We track the value of $T\propto e$ and $\rho$ with varying problem size and find that $0.9997<T<1.0006$ and $0.991<\rho\leq 1$ with Mach speed $\mathrm{Ma}=0.1$ robustly.
We approximate the nonlinear function $\mu$ and $1/\rho$ by Chebyshev polynomials on wider intervals $\mu$ and $1/\rho$ , respectively.
The numerical experiment indicates that
\begin{equation}\label{aeq:truncation_degree}
    d_\mu = d_{\rho^{}} = 2
\end{equation}
suffice to suppress the truncation errors into
\begin{equation}\label{aeq:truncation_error}
    \begin{split}
        \epsilon_\mu &\leq 5.88\times10^{-11},\\
        \epsilon_\rho &\leq 3.20\times10^{-8}.
    \end{split}
\end{equation}
Insert Eqs.~\eqref{aeq:truncation_error} and Eqs.~\eqref{aeq:error_ratios} into Eq.~\eqref{aeq:epsilon_A} and Eq.~\eqref{aeq:epsilon_b} can bound the single-call block-encoding and state-preparation errors by
\begin{equation}
    \begin{split}
        \epsilon_A &\leq r_\rho\cdot\epsilon_\rho + r_\mu\cdot\epsilon_\mu\leq4.82\times10^{-11},\\
        \epsilon_b &\leq r_b\cdot\epsilon_\mu\leq5.88\times10^{-11}.
    \end{split}
\end{equation}
To also  consider the query time estimated in Eq.~\eqref{aeq:Q_estimation}, the accumulative block-encoding and state preparation error is bounded by
\begin{equation}
    \begin{split}
        \epsilon_{\mathrm{BE/SP}}\leq4.69\times10^{-6}.
    \end{split}
\end{equation}

\subparagraph{Magic State Infidelity.}
Given the logical qubits number in Eq.~\eqref{aeq:numerical_logical_qubits_number}, the query number in Eq.~\eqref{aeq:Q_estimation}, and the truncation degree in Eqs.~\eqref{aeq:truncation_degree}, the non-Clifford gate count is evaluated to be
\begin{equation}
    \begin{split}
        \mathcal{TC} &=9.41\times 10^7,\\ 
        \mathcal{RC} &=3.94\times 10^8.
    \end{split}
\end{equation}
Consequently, the infidelities of a single Toffoli gate and a rotation gate bounded by
\begin{equation}
    \begin{split}
        \delta_\mathrm{TOF} &\leq 2.8\times10^{-17},\\
        \delta_\mathrm{ROT} &\leq 3.0\times10^{-12}
    \end{split}
\end{equation}
are sufficient to suppress the distillation error into
\begin{equation}
    \begin{split}
        \delta_\mathrm{distillation}&\leq\mathcal{TC}\cdot\delta_\mathrm{TOF}+\mathcal{RC}\cdot\delta_\mathrm{ROT}<1.18\times10^{-3},\\
        \epsilon_\mathrm{distillation}&\leq\sqrt{2}\cdot\delta_\mathrm{distillation}<1.67\times10^{-3}.
    \end{split}
\end{equation}
Herein, we suppress the Toffoli gate infidelity much smaller than the rotation gate since its magic state factory's spacetime overhead is significantly lower.

\subparagraph{Code Distance.}
Given the logical qubits number in Eq.~\eqref{aeq:numerical_logical_qubits_number}, the query number in Eq.~\eqref{aeq:Q_estimation}, and the truncation degree in Eqs.~\eqref{aeq:truncation_degree}, the non-Clifford gate depth is evaluated to be 
\begin{equation}
    \mathcal{TD}+\mathcal{RD}<1.48\times10^8.
\end{equation}
Consequently, the code distance
\begin{equation}\label{aeq:d}
    d = 25
\end{equation}
is sufficient to suppress the accumulative logical error bounded by
\begin{equation}
    \begin{split}
        \delta_\mathrm{a.l.e.}&\leq0.1(P_\mathrm{physics}/P_\mathrm{threshold})^\frac{d+1}{2}\cdot\mathcal{QN}_L\cdot(\mathcal{RD}+\mathcal{TD})\cdot d<8.13\times10^{-7}\\
        \epsilon_\mathrm{a.l.e.}&\leq\sqrt{2}\cdot\delta_\mathrm{a.l.e.}<1.15\times10^{-6}.
    \end{split}
\end{equation}
Herein, we suppose the physical error rate $P_\mathrm{physics}=0.0005$.

\subparagraph{Sampling Number.}
We numerically test the required sampling numbers for compressed-sensing quantum state tomography as depicted in Fig.~\ref{fig:cs-qst}. For each effective dimension ranging from $4$ to $32$, we test the average infidelity between the reconstructed state and the target state with a varying number of measurements ranging from $500$ to $20000$, with each data point averaged over $10$ independent trials to ensure statistical robustness. The numerical result indicates that CS-QST can learn the target state with high fidelity, satisfying our requirement. In particular, only
\begin{equation}\label{aeq:N_estimation}
    \mathcal{N}_\mathrm{sample} = 1000
\end{equation}
shots can reconstruct the state with tomography error bounded by
\begin{equation}
    \begin{split}
        \delta_\mathrm{tomography}&\leq6\times10^{-5}\\
        \epsilon_\mathrm{tomography}&\leq\sqrt{2(1-\sqrt{1-\delta_\mathrm{tomography}})}<7.74\times10^{-3}.
    \end{split}
\end{equation}
To benchmark the constant $C$ in Theorem~\ref{thm:tomography}, we also perform a linear regression of the sampling number on the explanatory variable $rd_\mathrm{eff}\log^2d_\mathrm{eff}$. The fitting result shows a strong linear dependency as $d_\mathrm{eff}$ grows.
\begin{figure}
    \centering
    \includegraphics[width=0.95\textwidth]{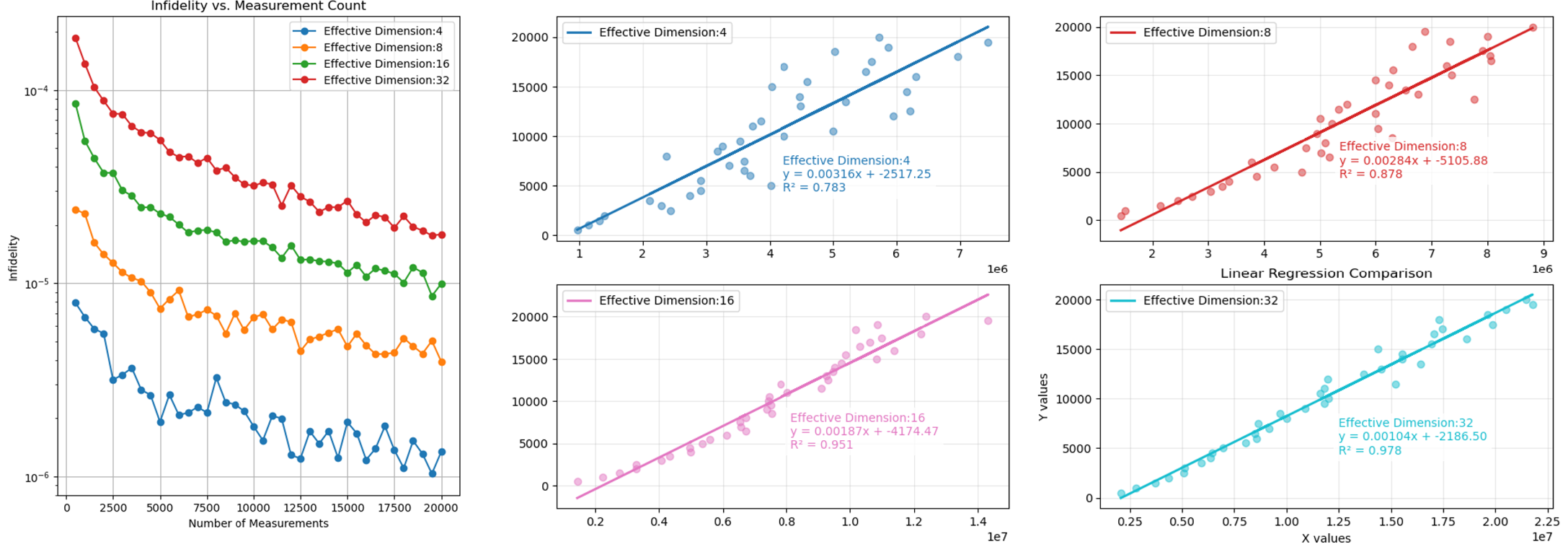}
    \caption{Compressed Sensing Quantum State Tomography.}\label{fig:cs-qst}
\end{figure}

\subsection{Quantum Computation Overhead}
Given the sample time in Eq.~\eqref{aeq:N_estimation}, the quantum error correction cycle number
\begin{equation}
    \mathcal{N}_\mathrm{cycle} = (\mathcal{RD} + \mathcal{TD})\times d < 3.70\times10^9, 
\end{equation}
and the standard QEC per-cycle time  on a superconducting quantum processor
\begin{equation}
    \mathcal{T}_\mathrm{cycle} = 10^{-6}\text{ s},
\end{equation}
the total quantum time can be evaluated as
\begin{equation}
    \begin{split}
        \mathcal{T}_\mathrm{quantum} = \mathcal{N}_\mathrm{sample}\times(\mathcal{RD}+\mathcal{TD})\times d\times\mathcal{T_\mathrm{cycle}}\leq3.68\times10^6 \text{ s}\simeq42.6\text{ days}.
    \end{split}
\end{equation}
By Eq.~\eqref{aeq:numerical_logical_qubits_number} and Eq.~\eqref{aeq:d}, the quantum circuit's physical qubits number is
\begin{equation}
    \mathcal{QN}_\mathrm{circuit} = 181\times2\times25^2=226250.
\end{equation}
The non-Clifford gate consumption rates are
\begin{equation}\label{aeq:consumption}
    \begin{split}
        v_\mathrm{TOF} &= 2.56\times10^{-2} \mathrm{ cycle}^{-1} \\
        v_\mathrm{ROT} &= 1.07\times10^{-1} \mathrm{ cycle}^{-1}
    \end{split}
\end{equation}
By Eqs.~\eqref{aeq:volume_rot} and Eqs.~\eqref{aeq:consumption}, the magic state factory's physical qubits number is
\begin{equation}
    \mathcal{QN}_\mathrm{factory} = v_\mathrm{TOF}\cdot V_\mathrm{TOF} + v_\mathrm{ROT}\cdot V_\mathrm{ROT} = 8.16\times10^6.
\end{equation}
Assuming a pessimistic estimation of routing qubits to be as many as circuit qubits, the total physical qubits is
\begin{equation}
    \mathcal{QN}_\mathrm{physics} = \mathcal{QN}_\mathrm{circuit} + \mathcal{QN}_\mathrm{routing} + \mathcal{QN}_\mathrm{factory} =8.71\times10^6.
\end{equation}

\subsection{Classical Computation Overhead}
We follow the analysis as in Ref.~\cite{tu2025towards} that, given the linear system dimension and sparsity to be
\begin{equation}
    \begin{split}
        s&=21,\\
        N_\mathrm{LS}&=2^{82},
    \end{split}
\end{equation}
the total floating-point operations (FLOPs) in solving an $N_\mathrm{LS}$-dimensional linear system by Conjugate Gradient and Cholesky Decomposition methods are respectively to be
\begin{equation}
    \mathcal{C}_\mathrm{CG} = (2s+7)N_\mathrm{LS}\cdot\kappa\log\left(\frac{2}{\epsilon}\right) \geq (2\times21+7)\times2^{82}\times 550 \times\log(200) > 9.96\times10^{29} \text{ FLOPs},
\end{equation}
and
\begin{equation}
    \mathcal{C}_\mathrm{CD} = N_\mathrm{LS}\cdot\left(3s^2+7s+5\right) \geq 2^{82}\times 1475 > 7.13\times10^{27} \text{ FLOPs}.
\end{equation}
The most powerful supercomputer achieves a maximal LINPACK performance of
\begin{equation}
    R_\mathrm{max} = 1.74\times10^{18}\text{ FLOP}\cdot \text{s}^{-1}
\end{equation}
and a theoretical peak performance 
\begin{equation}
    R_\mathrm{peak} = 2.75\times10^{18}\text{ FLOP}\cdot \text{s}^{-1},
\end{equation}
as reported in Ref.~\cite{top500}.
The corresponding classical computing time can be estimated as
\begin{equation}
    \mathcal{T}_\mathrm{El Capitan64} = \mathcal{C}_\mathrm{classical}/ R_\mathrm{max} \geq 4.10\times10^9 \text{ seconds}\simeq129.85 \text{ years},
\end{equation}
Even considering a single-precision floating-point computing, which is often insufficient for CFD problems, the classical time is
\begin{equation}
    \mathcal{T}_\mathrm{El Capitan32} = \mathcal{C}_\mathrm{classical} /R_\mathrm{max32} \geq 2.05\times10^9 \text{ seconds}\simeq 64.92 \text{ years}
\end{equation}
Alternatively, assuming an ideal parallel execution where we use the theoretical peak performance $R_\mathrm{peak}$ and neglect the law of diminishing marginal returns caused by communications among all high performance clusters, solving the same-sized problem within an equal time still requires
\begin{equation}
    \mathcal{T}_\mathrm{quantum}/\mathcal{T}_\mathrm{El Capitan, peak}\simeq720
\end{equation}
copies of the El Capitan supercomputers.

\section{Further Discussion}
\subsection{Further Discussion on Boundary Condition}
We briefly discuss how to handle boundary conditions beyond the periodic boundary condition in  two steps.
Firstly, we utilize the boundary operator $U_\mathrm{boundary}$ acting on the cell index register and the boundary flag register to mark whether it is on the boundary.
A standard rectangle boundary shape can be marked by
\begin{equation}\label{eq:boundary_operator}
    U_\mathrm{boundary}\ket{I_\mathrm{row}, I_\mathrm{col}}^{\otimes(n_x+n_y)}_\mathrm{CI}\ket{0}^{\otimes 4}_\mathrm{BF} = \ket{I_\mathrm{row}, I_\mathrm{col}}^{\otimes(n_x+n_y)}_\mathrm{CI}\left(\bigotimes_{\sigma \in \{\text{left,right,top,bottom}\}} \ket{\chi_\sigma(I_\mathrm{row}, I_\mathrm{col})}\right)_\mathrm{BF},
\end{equation}
as shown in Fig.~\ref{fig:boundary_operator}. Here the indicator function $\chi_\sigma$ is defined as

\begin{equation}
\chi_\sigma(I_\mathrm{row}, I_\mathrm{col}) =
\begin{cases}
1, & \text{if } (I_\mathrm{row}, I_\mathrm{col}) \text{ lies on the $\sigma$ boundary}, \\
0, & \text{otherwise}.
\end{cases}
\end{equation}

A more complicated boundary shape that can be efficiently computed by an algebraic function, such as a circular boundary, can be marked by either a quantum arithmetic or a quantum singular value transformation.
Secondly, we can apply the following process according to its boundary conditions:
\begin{itemize}
    \item \textbf{Wall Boundary Condition}: 
    In the standard wall boundary condition with no slip, such as in an airfoil simulation, the fluid sticks to the wall so that the velocity is forced to zero.
    For this condition, consider the following extended block-encoding with $4$ additional ancillary qubits (the cell index register):
    \begin{equation}
        \bra{0}^{\otimes4}_\mathrm{BF}\bra{0}^{\otimes a}_\mathrm{ANC}\bra{0}^{\otimes n}_\mathrm{WORK}U_\mathrm{boundary}U_A\ket{0}^{\otimes4}_\mathrm{BF}\ket{0}^{\otimes a}_\mathrm{ANC}\ket{0}^{\otimes n}_\mathrm{WORK}=\frac{1}{\alpha'}\Tilde{A},
    \end{equation}
    where $U_A$ is an $(\alpha, a, 0)$-block-encoding of the original Jacobian matrix $A$, and $\Tilde{A}$ is the Jacobian matrix with wall boundary condition. To discard the amplitudes for the basis not vanish on the boundary flag register, we succeed in including the wall boundary condition at a cost of (usually at a constant ratio of) smaller success probability.
    \item \textbf{Inlet Boundary Condition:}
    In the inlet boundary condition, such as in a wind tunnel simulation, the incoming flow properties are specified by $u=u_\mathrm{inlet}$. 
    The corresponding block encoding can be derived from Eq.~\eqref{eq:boundary_operator} by substituting the multi-controlled not gates with multi-controlled rotation gates conditioned on the cell index register. By modifying the rotation angle, we can adjust the amplitudes on each boundary adaptively for flexible inlet boundary conditions.
    \item \textbf{Outlet Boundary Condition:}
    In the outlet boundary condition, such as in  exhaust flows, the outlet pressure is specified by $p=p_\mathrm{outlet}$. In the conservative form, this can be implemented by a corrected energy
    \begin{equation}\label{aeq:corrected_energy}
        E_\mathrm{corrected} = \frac{p_\mathrm{outlet}}{\gamma-1} + \frac{1}{2}\rho\Vec{v}^2.
    \end{equation}
    We can encode the corrected energy in Eq.~\eqref{aeq:corrected_energy} by a unitary $U_\mathrm{corrected}$ which is similar to the construction of $U_\mathrm{BE}(H)$ in the main text, and then consider the linear combination of $U_\mathrm{corrected}$ and $U_\mathrm{boundary}$.
\end{itemize}
We leave the details and many other boundary conditions as our future work.
\begin{figure}
    \centering
    \begin{tikzpicture}
\node {
    \begin{quantikz}
        \lstick[2,label style={xshift=-0.5cm, yshift=1cm,rotate=90}]{Cell Index}\midstick[1,brackets=none]{$\ket{I_\mathrm{row}}_\mathrm{row}$}&\qwbundle{n_y}&\gate[6]{U_\mathrm{boundary}}&\midstick[6,brackets=none]{=}&\ctrl[open]{2} \gategroup[wires=5,steps=2,background,style={dashed,rounded corners, inner xsep=-2pt}]{} &&\ctrl{3}\gategroup[wires=6,steps=2,background,style={dashed,rounded corners, inner xsep=-2pt}]{}&&\\
        \midstick[1,brackets=none]{$\ket{I_\mathrm{col}\text{ }}_\mathrm{col}$}&\qwbundle{n_x}&&&&\ctrl[open]{3}&&\ctrl{4}&\\
        \lstick[4,label style={xshift=-0.5cm, yshift=1.25cm, rotate=90}]{Boundary Flag}\midstick[1,brackets=none]{$\ket{0}_\mathrm{top}$}\text{\ \ \ \ }&&&&\targ{}&&&&\\
        \midstick[1,brackets=none]{$\ket{0}_\mathrm{bottom}$}&&&&&&\targ&&&\\
        \midstick[1,brackets=none]{$\ket{0}_\mathrm{left}$\text{\ \ \ }}&&&&&\targ&&&&\\
        \midstick[1,brackets=none]{$\ket{0}_\mathrm{right}$\text{\ \ }}&&&&&&&\targ&&
    \end{quantikz}
};
\end{tikzpicture}
   \caption{Quantum circuit for a standard rectangle boundary operator: The top and bottom boundary flags are $X$ gates controlled by states $\ket{00...0}_\mathrm{row}$ and $\ket{11...1}_\mathrm{row}$, respectively. The left and right boundary flags are $X$ gates controlled by states $\ket{00...0}_\mathrm{col}$ and $\ket{11...1}_\mathrm{col}$, respectively. The gates in each dashed circled group can be implemented parallel.}
    \label{fig:boundary_operator}
\end{figure}
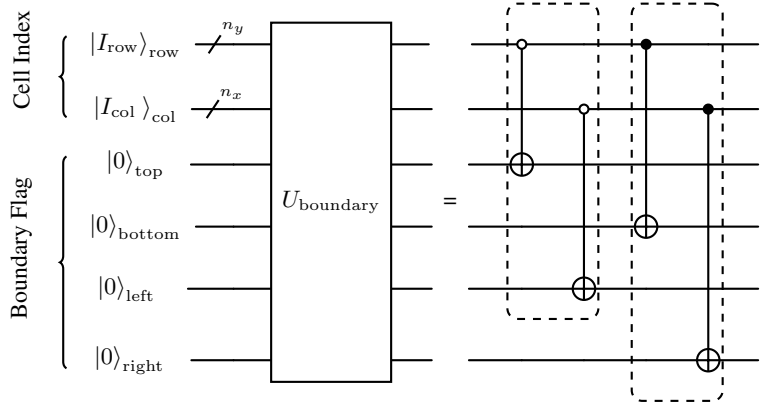

\subsection{Further Discussion on Riemann Problem}
In this subsection, we show how to design efficient quantum Riemann solvers with classic Lax-Friedrichs and Rusanov  schemes.
In the simplified introductory case of the classic Lax-Friedrichs Riemann solver, we illustrate the basic idea to solve the Riemann problem with boundary conditions.
Then, in the more complicated case of the Rusanov Riemann solver, we apply quantum signal processing to enable a more practical and flexible numerical scheme.
In principle, these techniques can be applied to solve many other Riemann solvers with interesting structures, which we leave as future work.

\subparagraph{Quantum Lax-Friedrichs Riemann Solver.}
We first consider the classic Lax-Friedrichs Riemann solver to derive the following residual vector:
\begin{equation}\label{eq:lf_residual}
    \begin{split}
        R^\mathrm{(L-F)}_{I,J}
        = &\mathcal{F}^\mathrm{(L-F)}_{I+\frac{1}{2},J} - \mathcal{F}^\mathrm{(L-F)}_{I-\frac{1}{2},J} + \mathcal{F}^\mathrm{(L-F)}_{I, J+\frac{1}{2}} - \mathcal{F}^\mathrm{(L-F)}_{I, J-\frac{1}{2}}\\
        = &\frac{1}{2}(\mathcal{F}_{I+1,J}-\mathcal{F}_{I-1,J}+\mathcal{F}_{I,J+1}-\mathcal{F}_{I,J-1})-\frac{\Delta x}{2\Delta t}(W_{I+1,J}+W_{I-1,J}+W_{I,J+1}+W_{I,J-1}-4W_{I,J}),
    \end{split}
\end{equation}
wherein the numerical fluxes are defined to be
\begin{equation}\label{eq:lf_scheme}
    \begin{split}
        \mathcal{F}^\mathrm{(L-F)}_{I+\frac{1}{2},J} &= \frac{1}{2}(\mathcal{F}_L+\mathcal{F}_R)-\frac{\Delta x}{2\Delta t}(W_R-W_L)=\frac{1}{2}(\mathcal{F}_{I,J}+\mathcal{F}_{I+1,J})-\frac{\Delta x}{2\Delta t}(W_{I+1,J}-W_{I,J}),\\
        \mathcal{F}^\mathrm{(L-F)}_{I-\frac{1}{2},J} &= \frac{1}{2}(\mathcal{F}_L+\mathcal{F}_R)-\frac{\Delta x}{2\Delta t}(W_R-W_L)=\frac{1}{2}(\mathcal{F}_{I-1,J}+\mathcal{F}_{I,J})-\frac{\Delta x}{2\Delta t}(W_{I,J}-W_{I-1,J}),\\
        \mathcal{F}^\mathrm{(L-F)}_{I,J+\frac{1}{2}} &= \frac{1}{2}(\mathcal{F}_U+\mathcal{F}_D)-\frac{\Delta x}{2\Delta t}(W_D-W_U)=\frac{1}{2}(\mathcal{F}_{I,J}+\mathcal{F}_{I,J+1})-\frac{\Delta x}{2\Delta t}(W_{I,J+1}-W_{I,J}),\\
        \mathcal{F}^\mathrm{(L-F)}_{I,J-\frac{1}{2}} &= \frac{1}{2}(\mathcal{F}_U+\mathcal{F}_D)-\frac{\Delta x}{2\Delta t}(W_D-W_U)=\frac{1}{2}(\mathcal{F}_{I,J-1}+\mathcal{F}_{I,J})-\frac{\Delta x}{2\Delta t}(W_{I,J}-W_{I,J-1}).\\
    \end{split}
\end{equation}
Note that Eq.~\eqref{eq:lf_residual} is indeed a linear combination of components with fixed coefficients and computable indices, and it can be efficiently implemented via Lemma~\ref{lem:lcu}.
In our case, the state-preparation-pair to prepare the fixed vector $y=(\frac{1}{2}, -\frac{1}{2}, \frac{1}{2}, -\frac{1}{2}, -\frac{\Delta x}{2\Delta t}, -\frac{\Delta x}{2\Delta t}, -\frac{\Delta x}{2\Delta t}, -\frac{\Delta x}{2\Delta t}, \frac{2\Delta x}{\Delta t})^\mathsf{T}$ is easy to construct, and the unitaries $\mathcal{F}$ and $W$ are implemented utilizing the sparse spectral block-encoding in Lemma~\ref{lem:spectral}.

The remaining problem is to address the neighboring cell's location, i.e., those operator subscripts.
Note that the classical information is encoded by rotation angles computed from a binary tree; hence, direct arithmetic of indices will not work.
We instead implement arithmetic operators on the basis after the state preparation subroutine, as when block-encoding $b$.

\subparagraph{Quantum Rusanov Riemann Solver.}
The Rusanov Riemann solver gives the following residual vector:
\begin{equation}\label{eq:rusanov_residual}
    \begin{split}
        R^\mathrm{(Rusanov)}_{I,J}
        = &\mathcal{F}^\mathrm{(Rusanov)}_{I+\frac{1}{2},J} - \mathcal{F}^\mathrm{(Rusanov)}_{I-\frac{1}{2},J} + \mathcal{F}^\mathrm{(Rusanov)}_{I, J+\frac{1}{2}} - \mathcal{F}^\mathrm{(Rusanov)}_{I, J-\frac{1}{2}}\\
        = &\frac{1}{2}(\mathcal{F}_{I+1,J}-\mathcal{F}_{I-1,J})-\frac{1}{2}S^+_{I+\frac{1}{2}, J}(W_{I+1,J}-W_{I,J})+\frac{1}{2}S^+_{I-\frac{1}{2}, J}(W_{I,J}-W_{I-1,J})\\
        &+\frac{1}{2}(\mathcal{F}_{I,J+1}-\mathcal{F}_{I,J-1})-\frac{1}{2}S^+_{I, J+\frac{1}{2}}(W_{I,J+1}-W_{I,J})+\frac{1}{2}S^+_{I, J-\frac{1}{2}}(W_{I,J}-W_{I,J-1}),
    \end{split}    
\end{equation}
with the numerical fluxes defined to be
\begin{equation}\label{eq:rusanov_scheme}
    \begin{split}
        \mathcal{F}^\mathrm{(Rusanov)}_{I+\frac{1}{2},J} &= \frac{1}{2}(\mathcal{F}_L+\mathcal{F}_R)-\frac{1}{2}S^+_{I+\frac{1}{2}, J}(W_R-W_L)=\frac{1}{2}(\mathcal{F}_{I,J}+\mathcal{F}_{I+1,J})-\frac{1}{2}S^+_{I+\frac{1}{2}, J}(W_{I+1,J}-W_{I,J}),\\
        \mathcal{F}^\mathrm{(Rusanov)}_{I-\frac{1}{2},J} &= \frac{1}{2}(\mathcal{F}_L+\mathcal{F}_R)-\frac{1}{2}S^+_{I-\frac{1}{2}, J}(W_R-W_L)=\frac{1}{2}(\mathcal{F}_{I-1,J}+\mathcal{F}_{I,J})-\frac{1}{2}S^+_{I-\frac{1}{2}, J}(W_{I,J}-W_{I-1,J}),\\
        \mathcal{F}^\mathrm{(Rusanov)}_{I,J+\frac{1}{2}} &= \frac{1}{2}(\mathcal{F}_U+\mathcal{F}_D)-\frac{1}{2}S^+_{I, J+\frac{1}{2}}(W_D-W_U)=\frac{1}{2}(\mathcal{F}_{I,J}+\mathcal{F}_{I,J+1})-\frac{1}{2}S^+_{I, J+\frac{1}{2}}(W_{I,J+1}-W_{I,J}),\\
        \mathcal{F}^\mathrm{(Rusanov)}_{I,J-\frac{1}{2}} &= \frac{1}{2}(\mathcal{F}_U+\mathcal{F}_D)-\frac{1}{2}S^+_{I, J-\frac{1}{2}}(W_D-W_U)=\frac{1}{2}(\mathcal{F}_{I,J-1}+\mathcal{F}_{I,J})-\frac{1}{2}S^+_{I, J-\frac{1}{2}}(W_{I,J}-W_{I,J-1}),
    \end{split}
\end{equation}
where the characteristic speeds $S^+$ are defined as
\begin{equation}
    \begin{split}
        S^+_{I\pm\frac{1}{2},J} &= \max\left\{ 
|u_{I,J}|, |u_{I,J}-c_{I,J}|, |u_{I,J}+c_{I,J}|, |u_{I\pm1,J}|, |u_{I\pm1,J}-c_{I\pm1,J}|, |u_{I\pm1,J}+c_{I\pm1,J}|\right\}\\
        S^+_{I,J\pm\frac{1}{2}} &= \max\left\{ 
|u_{I\pm1,J}|, |u_{I,J}-c_{I,J}|, |u_{I,J}+c_{I,J}|, |u_{I,J}|, |u_{I, J\pm1}-c_{I, J\pm1}|, |u_{I, J\pm1}+c_{I, J\pm1}|\right\},
    \end{split}
\end{equation}
and the local speed of sound $c$ is evaluated as
\begin{equation}\label{aeq:sound}
    c = \sqrt{\frac{\gamma p}{\rho}} = \sqrt{\gamma(\gamma-1)e}.
\end{equation}
The linear combination of those four intercell numerical fluxes is used to derive the Rusanov scheme residual vector.
Notably, the linear combination weights of form $S^+$ in Eq.~\eqref{eq:rusanov_residual} are no longer constants as those in Eq.~\eqref{eq:lf_residual}, and an efficient implementation of index-informed state-preparation-pairs is required.
Herein, we design the index-informed state-preparation-pair $(P_L', P_R')$ to recover the normalization information
\begin{equation}
    \left(\sqrt{1+S^+_{I-\frac{1}{2}, J}}, \sqrt{1+S^+_{I+\frac{1}{2}, J}}, \sqrt{1+S^+_{I, J-\frac{1}{2}}}, \sqrt{1+S^+_{I, J+\frac{1}{2}}}\right)^\mathsf{T}.
\end{equation}
in four steps:
(1) Firstly, we can apply QSVT on $U_u$ a polynomial approximation of $f(x)=\mathrm{abs}(x)$ to $(\alpha_1, a_1, \epsilon_1)$-block-encode $\lvert u\rvert$ and on $U_e$ a polynomial approximation of $g(x)=\sqrt{x}$ to $(\alpha_2, a_2, \epsilon_2)$-block-encode $c\propto\sqrt{e}$.
(2) Secondly, we apply an $(\alpha_1+\alpha_2, a_1+a_2+1, \alpha_2\epsilon_1+\alpha_1\epsilon_2)$-block encoding of
\begin{equation}
    |u|+c = \max\{|u|, |u-c|, |u+c|\}    
\end{equation}
by Lemma~\ref{lem:lcu}, and then apply a conditioned arithmetic to derive $|u_{I,J}|+c_{I,J}$ and $|u_{I+1,J}|+c_{I+1,J}$, respectively.
(3) Thirdly, we note that
\begin{equation}\label{eq:S+}
    \begin{split}
        S^+_{I+\frac{1}{2},J} &= \max\{|u_{I,J}|+c_{I,J}, |u_{I+1,J}|+c_{I+1,J}\}\\
        &=\frac{|u_{I,J}|+c_{I,J}}{2} + \frac{ |u_{I+1,J}|+c_{I+1,J}}{2} + \left\lvert \frac{|u_{I,J}|+c_{I,J}}{2} - \frac{ |u_{I+1,J}|+c_{I+1,J}}{2} \right\rvert
    \end{split}
\end{equation}
is a linear combination of $|u_{I,J}|+c_{I,J}$, $|u_{I+1,J}|+c_{I+1,J}$, and $\left\lvert \frac{|u_{I,J}|+c_{I,J}}{2} - \frac{ |u_{I+1,J}|+c_{I+1,J}}{2} \right\rvert$, and the last term can be block-encoded by a QSVT on $\frac{|u_{I,J}|+c_{I,J}}{2} - \frac{ |u_{I+1,J}|+c_{I+1,J}}{2}$ via the polynomial approximation of $f(x)=\mathrm{abs}(x)$.
(4) Finally, we apply a QSVT on Eq.~\eqref{eq:S+} via the polynomial approximation of $h(x)=\sqrt{\frac{x}{1+x}}$.
Consequently, the desired state-preparation-pair for each (normalized) intercell numerical flux
\begin{align}
    &P_\mathrm{L}\ket{0}^{\otimes 2} = \sqrt{\frac{1}{1+S^+}}\ket{00} + \sqrt{\frac{1}{1+S^+}}\ket{01} + \sqrt{\frac{S^+}{1+S^+}}\ket{10} + \sqrt{\frac{S^+}{1+S^+}}\ket{11},\\
    &P_\mathrm{R}\ket{0}^{\otimes 2} = \sqrt{\frac{1}{1+S^+}}\ket{00} + \sqrt{\frac{1}{1+S^+}}\ket{01} + \sqrt{\frac{S^+}{1+S^+}}\ket{10} - \sqrt{\frac{S^+}{1+S^+}}\ket{11}
\end{align}
can be constructed via
\begin{equation}
    \begin{split}
        P_\mathrm{L} &= U_{\sqrt{\frac{1}{1+S^+}}}\otimes H,\\
        P_\mathrm{R} &= (\ket{0}\bra{0}\otimes H + \ket{1}\bra{1}\otimes Z)U_{\sqrt{\frac{1}{1+S^+}}}.
    \end{split}
\end{equation}

\subsection{Further Discussion on Effect of Higher Mach Number}
The numerical experiments presented earlier were conducted at a low Mach number ($\mathrm{Ma} = 0.1$), a regime where the flow is nearly incompressible. As the Mach number increases, compressibility effects become dominant, and the fluctuations in density $\rho$ and temperature $T$ become much larger than in the $\mathrm{Ma} = 0.1$ case. Consequently, the numerical solution for a higher mach number flow diverges considerably from the analytical solution for incompressible flow.

These large fluctuations have a direct and significant impact on the normalization constants (one of the problem-specific hyperparameters), the variation of these parameters with increasing Mach number is illustrated in Figure~\ref{fig:factor-n-mach}.

\begin{figure}
    \centering
    \includegraphics[width=0.95\textwidth]{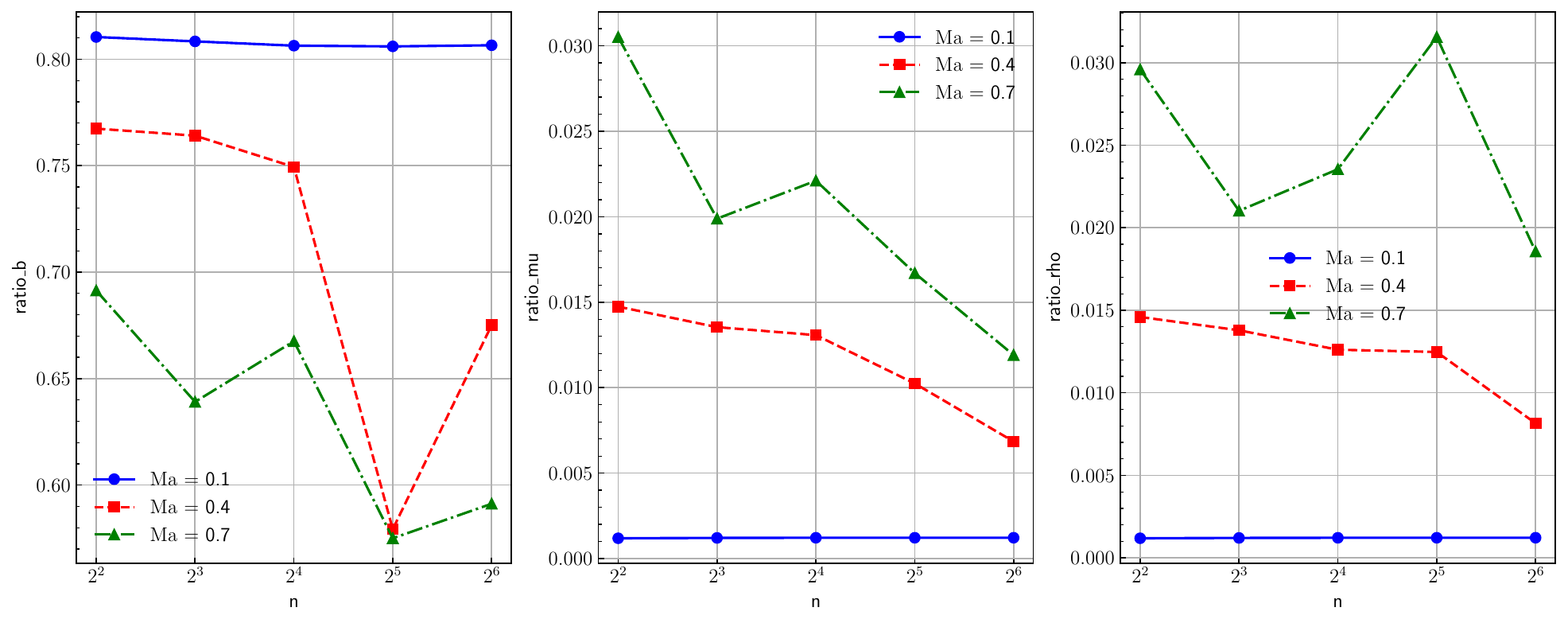}
    \caption{Variation of the normalization constants as a function of $n$ (the number of grid points in each dimension) for several Mach numbers. The plot illustrates that as the Mach number increases, the constants must be adjusted to account for the growing influence of compressibility effects in the flow.}
    \label{fig:factor-n-mach}
\end{figure}

From a quantum computing perspective, this presents a notable challenge. At higher Mach numbers, the spectra of density and temperature become richer. This increased complexity directly affects the implementation of quantum subroutines that rely on non-linear functions of these variables, which are typically realized using QSVT. For instance, calculating the reciprocal of the density ($1/\rho$) or implementing Sutherland's law for viscosity, which involves division by a function of temperature, becomes more demanding. To accurately approximate these functions over a wider and more complex range of values, QSVT requires higher-degree polynomials. This, in turn, translates to a greater demand on quantum resources, including increased circuit depth and potentially more ancilla qubits, making the simulation more challenging to execute.

\bibliographystyle{apsrev4-1}
\bibliography{main.bib}
\clearpage


\newcommand{\beginsupplement}{%
	\setcounter{table}{0}%
	\renewcommand{\thetable}{S\arabic{table}}%
	\setcounter{figure}{0}%
	\renewcommand{\thefigure}{S\arabic{figure}}%
	\setcounter{equation}{1}%
	\renewcommand{\theequation}{S\arabic{equation}}%
	\setcounter{section}{0}%
	\renewcommand{\thesection}{\arabic{section}}%
        \setcounter{page}{1}%
        \renewcommand{\thepage}{S\arabic{page}}%
}

\let\oldaddcontentsline\addcontentsline
\renewcommand{\addcontentsline}[3]{}

\let\addcontentsline\oldaddcontentsline
\resetlinenumber
\clearpage
\onecolumngrid

\beginsupplement
\renewcommand{\citenumfont}[1]{S#1}%
\renewcommand{\bibnumfmt}[1]{[S#1]}%

\NewDocumentCommand{\citesm}{>{\SplitList{,}} m }{%
  \def\temp{}%
  \ProcessList{#1}{\addSMprefix}%
  \expandafter\cite\expandafter{\temp}%
}
\newcommand{\addSMprefix}[1]{%
  \ifdefempty{\temp}%
    {\def\temp{SM_#1}}
    {\edef\temp{\temp,SM_#1}}%
}



\let\oldbibitem\bibitem
\renewcommand{\bibitem}[2][]{
  \ifstrempty{#1}{\oldbibitem{SM_#2}}{\oldbibitem[#1]{SM_#2}}
}

\end{document}